\let\oldvec\vec
\let\oldcaption\caption
\let\vec\oldvec
\let\caption\oldcaption
\newcommand{\comment}[1]{}
\newcommand{\parless}{\mathrel{\prec}}
\newcommand{\parleq}{\mathrel{\preceq}}
\newcommand{\pargreat}{\mathrel{\succ}}
\theoremstyle{definition}
\newtheorem{theorem}{Theorem}[section]
\newtheorem{definition}[theorem]{Definition}
\newtheorem{example}[theorem]{Example}
\newtheorem{lemma}[theorem]{Lemma}
\newtheorem{proposition}[theorem]{Proposition}
\newtheorem{remark}[theorem]{Remark}
\newcommand{\eg}{{\em e.g.~}}
\newcommand{\ie}{{\em i.e.~}}
\newcommand{\cf}{{\em cf.~}}
\newcommand{\Coloneq}{\ensuremath{::=}}
\newcommand{\dbckslash}{\ensuremath{\setminus\!\!\setminus}}
\newcommand{\eqdef}{\mathrel{\raisebox{-1pt}{\ensuremath{\stackrel{\textit{\tiny{def}}}{=}}}}}
\newcommand{\eqalpha}{\ensuremath{=_{\alpha}}}
\newcommand{\emphdef}[1]{\textbf{\emph{#1}}}
\newcommand{\SysBang}{\ensuremath{\mathcal{U}}}
\newcommand{\SysCBN}{\ensuremath{\mathcal{N}}}
\newcommand{\SysCBV}{\ensuremath{\mathcal{V}}}
\newcommand{\SysTight}{\ensuremath{\mathcal{E}}}
\newcommand{\ctxt}[1]{\ensuremath{\mathtt{#1}}}
\newcommand{\ctxtapp}[2]{\ensuremath{{#1}\langle{#2}\rangle}}
\newcommand{\ctxtleq}{\ensuremath{\subseteq}}
\newcommand{\TermExplicit}{\ensuremath{\mathcal{T}}}
\newcommand{\TermLambda}{\ensuremath{\mathcal{T_{\lambda}}}}
\newcommand{\TermVariable}{\ensuremath{\mathcal{X}}}
\newcommand{\TypeVariable}{\ensuremath{\mathcal{T}\mathcal{V}}}
\newcommand{\exsubs}[2]{\ensuremath{[{#1}\backslash {#2}]}}
\newcommand{\termapp}[2]{\ensuremath{{#1}\,{#2}}}
\newcommand{\termabs}[2]{\ensuremath{\lambda{#1}.{#2}}}
\newcommand{\termbang}[1]{\ensuremath{\mathop{!}{#1}}}
\newcommand{\termder}[1]{\ensuremath{\mathop{\textnormal{der}}{#1}}}
\newcommand{\termsubs}[3]{\ensuremath{{#3}\exsubs{#1}{#2}}}
\newcommand{\functtype}[2]{\ensuremath{{#1}\to{#2}}}
\newcommand{\intertype}[2]{\ensuremath{\multiset{#1}_{#2}}}
\newcommand{\M}{\ensuremath{\mathtt{M}}}
\newcommand{\sequ}[2]{\ensuremath{{#1}\vdash{#2}}}
\newcommand{\sequB}[2]{\ensuremath{{#1}\vdash{#2}}}
\newcommand{\sequN}[2]{\ensuremath{{#1}\vdash{#2}}}
\newcommand{\sequV}[2]{\ensuremath{{#1}\vdash{#2}}}
\newcommand{\assign}[2]{\ensuremath{{#1}:{#2}}}
\newcommand{\derivable}[3]{\ensuremath{{#1}\rhd_{#3}{#2}}}
\newcommand{\ctxtsum}[3]{\ensuremath{{#1}\mathrel{+_{#3}}{#2}}}
\newcommand{\ctxtres}[3]{\ensuremath{{#1}\mathrel{\dbckslash_{#3}}{#2}}}
\newcommand{\dom}[1]{\ensuremath{\mathtt{supp}}(#1)}
\newcommand{\bv}[1]{\ensuremath{\mathtt{bv}({#1})}}
\newcommand{\fv}[1]{\ensuremath{\mathtt{fv}({#1})}}
\newcommand{\size}[1]{\ensuremath{\mathtt{sz}\left({#1}\right)}}
\newcommand{\sizen}[1]{\ensuremath{\mathtt{sz}_{\callbyname}\left({#1}\right)}}
\newcommand{\sizev}[1]{\ensuremath{\mathtt{sz}_{\callbyvalue}\left({#1}\right)}}
\newcommand{\rGen}{\mathtt{a}}
\newcommand{\rrule}[1]{\ensuremath{\mathrel{\mapsto_{#1}}}}
\newcommand{\rewrite}[1]{\rightarrow_{#1}}
\newcommand{\rewriten}[1]{\twoheadrightarrow_{#1}}
\newcommand{\rewritenp}[1]{\rightarrow_{#1}^{\! +}}
\newcommand{\bangweak}{\ensuremath{\mathtt{w}}}
\newcommand{\callbyname}{\ensuremath{\mathtt{n}}}
\newcommand{\callbyvalue}{\ensuremath{\mathtt{v}}}
\newcommand{\weakstg}{\ensuremath{\mathtt{dw}}}
\newcommand{\lored}{\rewrite{\weakstg}}
\newcommand{\loredn}{\rewriten{\weakstg}}
\newcommand{\dB}{\ensuremath{\mathtt{dB}}}
\newcommand{\dBeta}{\ensuremath{\mathtt{dB}}}
\newcommand{\dBang}{\ensuremath{\mathtt{d!}}}
\newcommand{\sBang}{\ensuremath{\mathtt{s!}}}
\newcommand{\sTerm}{\ensuremath{\mathtt{s}}}
\newcommand{\sVal}{\ensuremath{\mathtt{sv}}}
\newcommand{\mStep}{\ensuremath{\mathtt{m}}}
\newcommand{\eStep}{\ensuremath{\mathtt{e}}}
\newcommand{\set}[1]{\ensuremath{\{{#1}\}}}
\newcommand{\multiset}[1]{\ensuremath{[#1]}}
\newcommand{\bsubstitute}[2]{\ensuremath{{\left\{{#1} \backslash{#2}\right\}}}}
\newcommand{\substitute}[3]{\ensuremath{{{#3}\left\{{#1}\backslash{#2}\right\}}}}
\newcommand{\many}[2]{\ensuremath{\left({#1}\right)_{#2}}}
\newcommand{\cbn}[1]{\ensuremath{{#1}^{\mathsf{cbn}}}}
\newcommand{\cbv}[1]{\ensuremath{{#1}^{\mathsf{cbv}}}}
\newcommand{\ruleName}[1]{\ensuremath{\mathtt{({#1})}}}
\newcommand{\ruleBAxiom}{\ruleName{ax}}
\newcommand{\ruleBArrowE}{\ruleName{app}}
\newcommand{\ruleBArrowI}{\ruleName{abs}}
\newcommand{\ruleBBang}{\ruleName{bg}}
\newcommand{\ruleBDer}{\ruleName{dr}}
\newcommand{\ruleBESubs}{\ruleName{es}}
\newcommand{\ruleDAxiom}{\ruleName{ax_{c}}}
\newcommand{\ruleDArrowE}{\ruleName{ae_{c1}}}
\newcommand{\ruleDArrowI}{\ruleName{ai_{c}}}
\newcommand{\ruleBTApp}{\ruleName{ae_{c2}}}
\newcommand{\ruleDBang}{\ruleName{bg_{c}}}
\newcommand{\ruleDDer}{\ruleName{dr_{c}}}
\newcommand{\ruleDESubs}{\ruleName{es_{c}}}
\newcommand{\ruleNAxiom}{\ruleName{ax_n}}
\newcommand{\ruleNArrowE}{\ruleName{app_n}}
\newcommand{\ruleNArrowI}{\ruleName{abs_n}}
\newcommand{\ruleNESubs}{\ruleName{es_n}}
\newcommand{\ruleTArrowE}{\ruleName{ae_{p}}}
\newcommand{\ruleTArrowI}{\ruleName{ai_{p}}}
\newcommand{\ruleTBang}{\ruleName{bg_{p}}}
\newcommand{\ruleTDer}{\ruleName{dr_{p}}}
\newcommand{\ruleTESubs}{\ruleName{es_{p}}}
\newcommand{\ruleVAxiom}{\ruleName{ax_v}}
\newcommand{\ruleVArrowE}{\ruleName{app_v}}
\newcommand{\ruleVArrowI}{\ruleName{abs_v}}
\newcommand{\ruleVESubs}{\ruleName{es_v}}
\newcommand{\Rule}[3]{
    \prooftree
         #1
    \justifies  
         #2
    \thickness=0.05em
    \using
         #3
    \endprooftree}
\newcommand{\ih}{{\em i.h.~}}
\newcommand{\wsize}[1]{\ensuremath{|{#1}|_\ensuremath{\mathtt{w}}}}
\newcommand{\simplesize}[1]{\ensuremath{|\!|{#1}|\!|}}
\newcommand{\nsize}[1]{\ensuremath{|{#1}|_\ensuremath{\mathtt{n}}}}   
\newcommand{\valsize}[1]{\ensuremath{|{#1}|_\callbyvalue}}
\newcommand{\sequT}[5]{\ensuremath{{#1}\vdash^{({#3},{#4},{#5})}{#2}}}
\newcommand{\typeabs}{\ensuremath{\mathtt{a}}}
\newcommand{\typebang}{\ensuremath{\mathtt{b}}}
\newcommand{\typeneutral}{\ensuremath{\mathtt{n}}}
\newcommand{\typetight}{\ensuremath{\mathtt{tt}}}
\newcommand{\iptight}{\ensuremath{\mathsf{tight}}}
\newcommand{\ptight}[1]{\ensuremath{\iptight(#1)}}
\newcommand{\ipabs}{\ensuremath{\mathsf{abs}}}
\newcommand{\ipvar}{\ensuremath{\mathsf{var}}}
\newcommand{\ipapp}{\ensuremath{\mathsf{app}}}
\newcommand{\ipbang}{\ensuremath{\mathsf{bang}}}
\newcommand{\ipnrml}{\ensuremath{\mathsf{no}_{\bangweak}}}
\newcommand{\ipntrl}{\ensuremath{\mathsf{ne}_{\bangweak}}}
\newcommand{\ipnbang}{\ensuremath{\mathsf{nb}_{\bangweak}}}
\newcommand{\ipnabs}{\ensuremath{\mathsf{na}_{\bangweak}}}
\newcommand{\pabs}[1]{\ensuremath{\ipabs(#1)}}
\newcommand{\pvar}[1]{\ensuremath{\ipvar(#1)}}
\newcommand{\papp}[1]{\ensuremath{\ipapp(#1)}}
\newcommand{\pbang}[1]{\ensuremath{\ipbang(#1)}}
\newcommand{\pnrml}[1]{\ensuremath{{#1}\in\ipnrml}}
\newcommand{\pntrl}[1]{\ensuremath{{#1}\in\ipntrl}}
\newcommand{\pnbang}[1]{\ensuremath{{#1}\in\ipnbang}}
\newcommand{\pnabs}[1]{\ensuremath{{#1}\in\ipnabs}}
\newcommand{\icfnrml}{\ensuremath{\mathsf{no}_{\cfz}}}
\newcommand{\icfntrl}{\ensuremath{\mathsf{ne}_{\cfz}}}
\newcommand{\icfnbang}{\ensuremath{\mathsf{nb}_{\cfz}}}
\newcommand{\icfnabs}{\ensuremath{\mathsf{na}_{\cfz}}}
\newcommand{\cfnrml}[1]{\ensuremath{{#1}\in\icfnrml}}
\newcommand{\cfntrl}[1]{\ensuremath{{#1}\in\icfntrl}}
\newcommand{\cfnbang}[1]{\ensuremath{{#1}\in\icfnbang}}
\newcommand{\cfnabs}[1]{\ensuremath{{#1}\in\icfnabs}}
\newcommand{\cbeta}{\ensuremath{\mathit{b}}}
\newcommand{\cexp}{\ensuremath{\mathit{e}}}
\newcommand{\csize}{\ensuremath{\mathit{s}}}
\newcommand{\sL}{\mathds{L}}
\newcommand{\iI}{i \in I}
\newcommand{\jJ}{j \in J}
\newcommand{\sig}{\sigma}
\newcommand{\cfz}{{\tt wcf}}
\newcommand{\emul}{\intertype{\, }{}}
\newcommand{\Gam}{\Gamma}
\newcommand{\CBNNF}{\ensuremath{\mathsf{no}_{\callbyname}}}
\newcommand{\CBVNF}{\ensuremath{\mathsf{no}_{\callbyvalue}}}
\newcommand{\HCBNNF}{\ensuremath{\mathsf{ne}_{\callbyname}}}
\newcommand{\HCBVNF}{\ensuremath{\mathsf{ne}_{\callbyvalue}}}
\newcommand{\VarHCBVNF}{\ensuremath{\mathsf{vr}_{\callbyvalue}}}
\newcommand{\BangRev}{\lambda !}
\newcommand{\Kterm}{{\tt K}}
\newcommand{\id}{{\tt I}}
\newcommand{\Kterml}{{\tt K}_\lambda}
\newcommand{\idl}{{\tt I}_\lambda}
\newcommand{\Deltal}{{\Delta}_\lambda}
\newcommand{\Omegal}{{\Omega}_\lambda}
\begin{document}

\begin{frontmatter}
\title{The Bang Calculus Revisited}

\author[1]{Antonio Bucciarelli}
\ead{buccia@irif.fr}

\author[1,2]{Delia Kesner\corref{cor1}}
\ead{kesner@irif.fr}

\author[3]{Alejandro R\'ios}
\ead{rios@dc.uba.ar}

\author[4]{Andr\'es Viso\corref{cor1}}
\ead{andres-ezequiel.viso@inria.fr}

\affiliation[1]{
  organization={Université Paris Cité, CNRS, IRIF},
  country={France}
}
\affiliation[2]{
  organization={Institut Universitaire de France},
  country={France}
}
\affiliation[3]{
  organization={Universidad de Buenos Aires},
  country={Argentina}
}
\affiliation[4]{
  organization={Inria},
  country={France}
}

\cortext[cor1]{Corresponding author}

\begin{abstract}
Call-by-Push-Value (CBPV) is a programming paradigm subsuming both Call-by-Name
(CBN) and Call-by-Value (CBV) semantics. The essence of this paradigm is
captured by the Bang Calculus, a (concise) term language connecting CBPV and
Linear Logic.

This paper presents a revisited version of the Bang Calculus, called
$\BangRev$, enjoying some important properties missing in the original
formulation. Indeed, the new calculus integrates permutative conversions to
unblock value redexes while being confluent at the same time. A second
contribution is related to non-idempotent types. We provide a quantitative type
system for our $\BangRev$-calculus, and we show that the length of the (weak)
reduction of a typed term to its normal form \emph{plus} the size of this
normal form is bounded by the size of its type derivation. We also explore the
properties of this type system with respect to CBN/CBV translations. We keep
the original CBN translation from $\lambda$-calculus to the Bang Calculus,
which preserves normal forms and is sound and complete with respect to the
(quantitative) type system for CBN. However, in the case of CBV, we reformulate
both the translation and the type system to restore two main properties:
preservation of normal forms and completeness.
Last but not least, the
quantitative system is refined to a \emph{tight} one, which transforms the
previous upper bound on the length of reduction to normal form plus its size
into two independent \emph{exact} measures for them.


\end{abstract}

\begin{keyword}
Call-by-Push-Value \sep Bang Calculus \sep Intersection Types
\end{keyword}

\end{frontmatter}

\section{Introduction}

{\bf Call-by-Push-Value.}
The Call-by-Push-Value (CBPV) paradigm, introduced by
P.B.~Levy~\cite{Levy04,Levy06}, distinguishes between values and computations
under the slogan \emph{``a value is, a computation does''}. It subsumes the
$\lambda$-calculus by adding some primitives that allow to capture both the
Call-by-Name (CBN) and Call-by-Value (CBV) semantics. CBN is a lazy strategy
that consumes arguments without any preliminary evaluation, potentially
duplicating work, while CBV is greedy, always computing arguments disregarding
whether they are used or not, which may prevent a
normalising term from terminating, \eg $\termapp{(\termabs{x}{I})}{\Omega}$,
where $I = \termabs{x}{x}$ and $\Omega =
\termapp{(\termabs{x}{\termapp{x}{x}})}{(\termabs{x}{\termapp{x}{x}})}$.

Essentially, CBPV introduces unary primitives $\mathtt{thunk}$ and
$\mathtt{force}$. The former freezes the execution of a term (\ie it is not
allowed to compute under a $\mathtt{thunk}$) while the latter fires again a
frozen term. Informally,
$\termapp{\mathtt{force}}{(\termapp{\mathtt{thunk}}{t})}$ is semantically
equivalent to $t$. Resorting to the paradigm slogan, $\mathtt{thunk}$ turns a
computation into a value, while $\mathtt{force}$ does the opposite. Thus, CBN
and CBV are captured by conveniently labelling a $\lambda$-term using
$\mathtt{force}$ and $\mathtt{thunk}$ to pause/resume the evaluation of a
subterm depending on whether it is an argument (CBN) or a function (CBV). In
doing so, CBPV provides a unique formalism capturing two distinct
$\lambda$-calculi strategies, thus allowing to study operational and
denotational semantics of CBN and CBV in a unified framework.

{\bf Bang calculus.}  T.~Ehrhard~\cite{Ehrhard16} introduced a typed
calculus, that can be seen as a variation of CBPV, to establish a
relation between this paradigm and Linear Logic (LL).  A simplified
version of this formalism is later dubbed Bang calculus~\cite{EhrhardG16},
showing in particular how CBPV captures the CBN and CBV semantics of
$\lambda$-calculus via Girard's translations of intuitionistic logic into
LL. A further step in this direction~\cite{ChouquetT20} uses Taylor
expansion~\cite{EhrhardR08} in the Bang Calculus to approximate terms in
CBPV. The Bang calculus is essentially an extension of $\lambda$-calculus
with two new constructors, namely \emph{bang} ($\termbang{\!}$) and
\emph{dereliction} ($\termder{\!}$), together with the reduction rule
$\termder{(\termbang{t})} \rrule{} t$. There are two notions of
reduction for the Bang calculus, depending on whether it is allowed to
reduce under a bang constructor or not. They are called \emph{strong}
and \emph{weak reduction} respectively. Indeed, it is weak reduction
that makes bang/dereliction play the role of the primitives
$\mathtt{thunk}$/$\mathtt{force}$. Hence, these modalities are
essential to capture the essence behind the CBN--CBV duality. A
similar approach appears in~\cite{SantoPU19}, studying (simply typed)
CBN and CBV translations into a fragment of IS4, recast as a very
simple $\lambda$-calculus equipped with an indeterminate lax monoidal
comonad.

{\bf Non-Idempotent Types.}
Intersection types, pioneered by~\cite{CoppoD80,CoppoDV81}, can be seen as a
syntactical tool to denote programs. They are invariant under the equality
generated by the evaluation rules, and type all and only all normalising terms.
They were originally defined as \emph{idempotent} types, so that the equation
$\sigma \cap \sigma = \sigma$ holds, thus preventing any use of the
intersection constructor to count resources. On the other hand,
\emph{non-idempotent} types, pioneered by~\cite{Gardner94}, are inspired
by LL and can be seen as a syntactical formulation of its relational
model~\cite{Girard88,BucciarelliE01}. This connection suggests a
\emph{quantitative} typing tool, being able to specify properties related to
the consumption of resources, a remarkable
investigation pioneered by  de Carvalho's seminal PhD
thesis~\cite{Carvalho:thesis} (see also~\cite{Carvalho18}). Non-idempotent
types have also been used to provide characterisations of complexity
classes~\cite{BenedettiR16}. Several papers explore the qualitative and
quantitative aspects of non-idempotent types for different higher order
languages, as for example Call-by-Name, Call-by-Need and Call-by-Value
$\lambda$-calculi, as well as extensions to Classical Logic. Some references
are~\cite{BucciarelliKV17,Ehrhard12,AccattoliGL19,AccattoliG18,KesnerV17}.
Other relational models were directly defined in the more general context of
LL, rather than in the
$\lambda$-calculus~\cite{Carvalho16,GuerrieriPF16,CarvalhoPF11,CarvalhoF16}.

An interesting recent research topic concerns the use of non-idempotent types
to provide \emph{bounds} of reduction lengths. More precisely, the size of type
derivations has often been used as an \emph{upper bound} to the length of
different evaluation
strategies~\cite{PaganiR10,Ehrhard12,Kesner16,BucciarelliKV17,KesnerV14,KesnerV17}.
A key notion behind these works is that when $t$ evaluates to $t'$, then the
size of the type derivation of $t'$ is smaller than the one of $t$, thus the
size of type derivations provides an  upper bound for the \emph{length} of the
reduction to a normal form as well as for the \emph{size} of this normal form.

A crucial point to obtain \emph{exact bounds}, instead of upper bounds, is to
consider only \emph{minimal} type derivations, as the ones
in~\cite{Carvalho:thesis,BernadetL13,CarvalhoPF11}. Another approach was taken
in~\cite{AccattoliGK18}, which uses an appropriate notion of \emph{tightness}
to implement minimality, a technical tool adapted to
Call-by-Value~\cite{Guerrieri18,AccattoliG18,KesnerV22},
Call-by-Need~\cite{AccattoliGL19}, pattern-matching languages~\cite{AlvesKV19},
and control operators~\cite{KesnerV20}.

\subsection{Contributions and Related Works}

This article presents a reformulation of the untyped Bang calculus, and
proposes a quantitative study of it by means of non-idempotent types.

{\bf The Untyped Reduction.}
The Bang calculus in~\cite{Ehrhard16} suffers from the absence of
\emph{permutative conversions}~\cite{Regnier94,CarraroG14}, making some redexes
syntactically blocked when open terms are considered. A consequence of
this approach is that there are some normal forms that are semantically
equivalent to non-terminating programs, a situation which is clearly unsound.
This is repaired in~\cite{EhrhardG16} by adding permutative conversions
specified by means of $\sigma$-reduction rules, which are crucial to unveil
hidden (value) redexes. However, this approach presents a  major drawback since
the resulting combined reduction relation is not confluent (Page 6 in~\cite{EhrhardG16} or Example~\ref{ex:not-confluent} below).

Our revisited Bang calculus, called $\BangRev$, fixes these two problems at the
same time. Indeed, the syntax is enriched with explicit substitutions, and
$\sigma$-equivalence is integrated in the primary reduction system by using the
\emph{distance} paradigm~\cite{AccattoliK10}, without any need to unveil hidden
redexes by means of an independent relation. This approach restores
confluence.

{\bf The Untyped CBN and CBV Encodings. }
CBN and CBV (untyped) translations are extensively studied
in~\cite{GuerrieriM18,Danos90,MaraistOTW99}, where the authors establish two
encodings $cbn$ and $cbv$, from untyped $\lambda$-terms into untyped terms of
the Bang calculus, such that when $t$ reduces to $u$ in CBN (resp. CBV),
$cbn(t)$ reduces to $cbn(u)$ (resp. $cbv(t)$ reduces to $cbv(u)$) in the Bang
calculus. However, CBV normal forms in $\lambda$-calculus are not necessarily
translated to normal forms in the Bang calculus.

We extend to explicit substitutions the original CBN translation from
$\lambda$-calculus to the Bang calculus, which preserves normal forms,
and we reformulate the CBV one in such a way that, in contrast
to~\cite{GuerrieriM18}, our CBV translation does preserve normal
forms. In order to achieve the preservation of normal forms,
we use a non-compositional CBV translation based on
superdevelopments~\cite{Hyland78,vanoostrom21}, which performs
reduction of created redexes during the translation.  Our revisited notion of
reduction inside the Bang calculus naturally encodes \emph{head} CBN,
\ie reduction does not take place in arguments of applications, as
well as \emph{open} CBV, \ie reduction does not take place inside
abstractions. More precisely, the $\BangRev$-calculus encodes head CBN
and open CBV specified by means of explicit substitutions (see for
example~\cite{AccattoliP12}). These two notions are dual: head CBN
forbids reduction inside arguments, which are translated to bang
terms, while open CBV forbids reduction under $\lambda$-abstractions,
also translated to bang terms.

{\bf The Typed System.}
Starting from the relational model for the Bang calculus proposed
in~\cite{GuerrieriM18}, we propose a type system for the $\BangRev$-calculus,
called $\SysBang$, based on non-idempotent intersection types. System
$\SysBang$ is able to fully \emph{characterise} normalisation, in the sense
that a term $t$ is $\SysBang$-typable if and only if $t$ is normalising. More
interestingly, we show that system $\SysBang$ has also a quantitative flavour,
in the sense that the length of any reduction sequence from $t$ to normal form
\emph{plus} the size of this normal form is \emph{bounded} by the size of the
type derivation of $t$. We show that system $\SysBang$ also captures the
standard non-idempotent intersection type system $\SysCBN$ for CBN, in the
sense that a $\lambda$-term $t$ is typable in $\SysCBN$ if and only if its
translation $cbn(t)$ is typable in $\SysBang$ with the same type and context.
Concerning CBV, we define a new type system $\SysCBV$ and we show that our CBV
translation enjoys the same property. System $\SysCBV$ characterises
termination of open CBV, in the sense that $t$ is typable in $\SysCBV$ if and
only if $t$ is terminating in open CBV. This can be seen as another
(collateral) contribution of this article. Moreover, the CBV embedding
in~\cite{GuerrieriM18} is not complete with respect to their type system for
CBV. System $\SysCBV$ recovers completeness (left as an open question
in~\cite{GuerrieriM18}). Finally, an alternative CBV encoding of typed terms is
proposed. This encoding is not only sound and complete, but now enjoys
preservation of normal-forms.

{\bf A Refinement of the Type System Based on Tightness.}
A major observation concerning $\beta$-reduction in $\lambda$-calculus (and
therefore in the Bang calculus) is that the size of normal forms can be
exponentially bigger than the number of steps to these normal forms. This means
that bounding the sum of these two integers \emph{at the same} time is too
rough, not very relevant from a quantitative point of view. Following ideas
in~\cite{Carvalho:thesis,BernadetL13,AccattoliGK18}, we go beyond upper bounds.
Indeed, another major contribution of this article is the refinement of the
non-idempotent type system $\SysBang$ to another type system $\SysTight$,
equipped with constants and counters, together with an appropriate notion of
\emph{tightness} (\ie minimality). This new formulation fully exploits the
quantitative aspect of the system, in such a way that \emph{upper bounds}
provided by system $\SysBang$ are refined now into \emph{independent exact
  bounds} for time and space. More precisely, we show that a term $t$ admits 
  a tight type derivation  with counters $(\cbeta,\cexp,\csize)$  if and
  only if $t$ is normalisable in $(\cbeta+\cexp)$-steps and its normal form has
\emph{size} $\csize$. Therefore, exact
measures concerning the \emph{dynamic} behaviour of $t$, are extracted from a
\emph{static} (tight) typing property of $t$.

This is a revised and extended version of the authors' article
  \cite{BucciarelliKRV20}. \\

{\bf Road-map.}
Sec.~\ref{s:bang} introduces the $\BangRev$-calculus. Sec.~\ref{s:system}
presents the sound and complete type system $\SysBang$.
Sec.~\ref{s:cbname-cbvalue} discusses (untyped and typed) CBN/CBV translations.
In Sec.~\ref{s:tight} we refine system $\SysBang$ into system $\SysTight$, and
we prove soundness and completeness. Conclusions and future work are discussed
in Sec.~\ref{s:conclusion}.


\section{The Bang Calculus Revisited}
\label{s:bang}

This  section  presents a revisited (conservative) extension  of the original
Bang calculi~\cite{Ehrhard16,EhrhardG16}, called $\BangRev$. From a syntactical
point of view, we just add explicit substitution operators. From an
operational point of view, we use \emph{reduction at a
distance}~\cite{AccattoliK10}, thus integrating permutative conversions without
jeopardising confluence (see the discussion below).

Given a countably infinite set $\TermVariable$ of variables $x, y, z, \ldots$
we consider the following grammar for terms, denoted by $\TermExplicit$,  and
contexts:
\begin{center}
\begin{tabular}{rrcll}
\textbf{(Terms)}         & $t,u$      & $\Coloneq$ & $x \in \TermVariable \mid \termapp{t}{u} \mid \termabs{x}{t} \mid \termbang{t} \mid \termder{t} \mid \termsubs{x}{u}{t}$ \\
\textbf{(List Contexts)} & $\ctxt{L}$ & $\Coloneq$ & $\Box \mid \termsubs{x}{t}{\ctxt{L}}$ \\ 
\textbf{(Contexts)}      & $\ctxt{C}$ & $\Coloneq$ & $\Box \mid \termapp{\ctxt{C}}{t} \mid \termapp{t}{\ctxt{C}} \mid \termabs{x}{\ctxt{C}} \mid \termbang{\ctxt{C}} \mid \termder{\ctxt{C}} \mid \termsubs{x}{u}{\ctxt{C}} \mid \termsubs{x}{\ctxt{C}}{t}$ \\
\textbf{(Weak Contexts)} & $\ctxt{W}$ & $\Coloneq$ & $\Box \mid \termapp{\ctxt{W}}{t} \mid \termapp{t}{\ctxt{W}} \mid \termabs{x}{\ctxt{W}} \mid \termder{\ctxt{W}} \mid \termsubs{x}{u}{\ctxt{W}} \mid \termsubs{x}{\ctxt{W}}{t}$
\end{tabular}
\end{center}
Terms of the form $\termsubs{x}{u}{t}$ are \emphdef{closures}, and
$\exsubs{x}{u}$ is called an \emphdef{explicit substitution}
(ES). Special terms are $\id = \termabs{z}{z}$, $\Kterm =
\termabs{x}{\termabs{y}{x}}$, $\Delta =
\termabs{x}{\termapp{x}{\termbang{x}}}$, and $\Omega =
\termapp{\Delta}{\termbang{\Delta}}$. Weak contexts do not allow the
symbol $\Box$ to occur inside the bang construct. This is similar to
weak contexts in $\lambda$-calculus, where $\Box$ cannot occur inside
$\lambda$-abstractions.  We will see in Sec.~\ref{s:cbname-cbvalue}
that weak reduction in the $\BangRev$-calculus perfectly captures head
reduction in CBN, disallowing reduction inside arguments, as well as
open CBV, disallowing reduction inside abstractions. We use
$\ctxtapp{\ctxt{C}}{t}$ (resp. $\ctxtapp{\ctxt{W}}{t}$ and
$\ctxtapp{\ctxt{L}}{t}$) for the term obtained by replacing the hole
$\Box$ of $\ctxt{C}$ (resp. $\ctxt{W}$ and $\ctxt{L}$) by $t$.  In
order to increase readability we use the following notational
conventions: the application is left associative and has higher
priority than the $\lambda$-abstraction, so that for instance we may
write $\termabs{x}{\termapp{\termapp{t}{u}}{r}}$ for
$\termabs{x}{(\termapp{(\termapp{t}{u})}{r})}$. The unary operators
have higher priority than the binary ones, so that for instance
$\termapp{\termbang{t}}{u}$ reads $\termapp{(\termbang{t})}{u}$. Also,
the explicit substitution operator has higher priority than the other
binary operators. Nevertheless, we use these notations with parcimony
and we add parenthesis only whenever they could be misleading.  The
notions of \emphdef{free} and \emphdef{bound} variables are defined as
expected, in particular, $\fv{\termsubs{x}{u}{t}} \eqdef \fv{t}
\setminus \set{x} \cup \fv{u}$, $\fv{\termabs{x}{t}} \eqdef \fv{t}
\setminus \set{x}$, $\bv{\termsubs{x}{u}{t}} \eqdef \bv{t} \cup
\set{x} \cup \bv{u}$ and $\bv{\termabs{x}{t}} \eqdef \bv{t} \cup
\set{x}$. We extend the standard notion of
\emphdef{$\alpha$-conversion}~\cite{Barendregt84} to ES, as expected,
so that bound variables can always be renamed.  Thus
  \eg\ $\termabs{x}{x} \eqalpha \termabs{y}{y}$ and
  $\termsubs{x}{z}{x} \eqalpha \termsubs{y}{z}{y}$.  We use
$\substitute{x}{u}{t}$ to denote the \emphdef{meta-level} substitution
operation, \ie all the free occurrences of the variable $x$ in the
term $t$ are replaced by $u$. This operation is defined, as usual,
modulo $\alpha$-conversion. We use two special predicates to
distinguish abstractions and bang terms possibly affected by a list of
explicit substitutions. Indeed, $\pabs{t}$ holds iff $t =
\ctxtapp{\ctxt{L}}{\termabs{x}{t'}}$ for some $\ctxt{L}$ and
$\pbang{t}$ holds iff $t = \ctxtapp{\ctxt{L}}{\termbang{t'}}$
for some $\ctxt{L}$. Finally, we define the following
  notion of size for terms of the $\BangRev$-calculus:
  \begin{definition}
    The \emphdef{$\bangweak$-size} of $t\in \TermExplicit$ is inductively
    defined as follows:
    \[ \begin{array}{lll@{\qquad\qquad}lll}
         \wsize{x} & \eqdef & 0 &
        \wsize{\termbang{t}} & \eqdef & 0 \\
         \wsize{\termapp{t}{u}} & \eqdef & 1 + \wsize{t} + \wsize{u} &
         \wsize{\termder{t}}  & \eqdef & 1 + \wsize{t} \\
         \wsize{\termabs{x}{t}} & \eqdef & 1 + \wsize{t} &
         \wsize{\termsubs{x}{u}{t}} & \eqdef & 1 + \wsize{t} + \wsize{u}\\
       \end{array} \]
   \end{definition}

The \emphdef{$\BangRev$-calculus} is given by the set of terms $\TermExplicit$
and the \emphdef{(weak) reduction relation} $\rewrite{\bangweak}$, which is
defined as the \emph{union} of $\rewrite{\dBeta}$ (\texttt{d}istant
\texttt{B}eta), $\rewrite{\sBang}$ (\texttt{s}ubstitute bang) and
$\rewrite{\dBang}$ (\texttt{d}istant bang), defined  respectively as the
closure by weak contexts $\ctxt{W}$ of the following rewriting rules: \[
\begin{array}{rcl}
\termapp{\ctxtapp{\ctxt{L}}{\termabs{x}{t}}}{u}   & \rrule{\dBeta} & \ctxtapp{\ctxt{L}}{\termsubs{x}{u}{t}} \\
\termsubs{x}{\ctxtapp{\ctxt{L}}{\termbang{u}}}{t} & \rrule{\sBang} & \ctxtapp{\ctxt{L}}{\substitute{x}{u}{t}} \\
\termder{(\ctxtapp{\ctxt{L}}{\termbang{t}})}      & \rrule{\dBang} & \ctxtapp{\ctxt{L}}{t}
\end{array} \]
We assume that all these rules avoid capture of free variables.

  \begin{example}
\label{example:t0}
Let $t_0 =
\termapp{\termapp{\termder{(\termbang{\Kterm})}}{(\termbang{\id})}}{(\termbang{\Omega})}$.
Then, \[t_0 \rewrite{\dBang}
\termapp{\termapp{\Kterm}{(\termbang{\id})}}{(\termbang{\Omega})}
\rewrite{\dBeta}
\termapp{\termsubs{x}{\termbang{\id}}{(\termabs{y}{x})}}{(\termbang{\Omega})}
\rewrite{\dBeta}
\termsubs{x}{\termbang{\id}}{\termsubs{y}{\termbang{\Omega}}{x}}
\rewrite{\sBang}
\termsubs{x}{\termbang{\id}}{x} \rewrite{\sBang} \id \]
Observe that the second $\dBeta$-step uses action at a distance, where
$\ctxt{L}$ is $\termsubs{x}{\termbang{\id}}{\Box}$.
\end{example}

Given the translation of
the Bang Calculus into LL proof-nets~\cite{Ehrhard16}, we refer to
$\dBeta$-steps as $\mStep$-steps ($\mStep$ultiplicative) and
$\{\sBang,\dBang\}$-steps
as $\eStep$-steps ($\eStep$xponential).

Observe that reduction is \emph{at a distance}, in the sense that the
list context $\ctxt{L}$ in the rewriting rules allows the main
constructors involved in these rules to be separated by an arbitrary
finite list of substitutions.  This new formulation integrates
permutative conversions inside the main (logical) reduction rules of
the calculus, in contrast to~\cite{EhrhardG16} which treats these
conversions by means of a set of independent $\sigma$-rewriting rules,
thus inheriting many drawbacks. More precisely, in the first
formulation of the Bang calculus~\cite{Ehrhard16}, there are hidden
(value) redexes that block reduction, thus creating a mismatch between
normal terms that are semantically non-terminating. The second
formulation in~\cite{EhrhardG16} recovers soundness, by integrating a
notion of $\sigma$-equivalence which is crucial to unveil hidden
redexes and ill-formed terms (called \emph{clashes})\footnote{Indeed,
  there exist clash-free terms in normal form that are
  $\sigma$-reducible to normal terms with clashes, see the
    definition of clash term at the end of Section~\ref{s:bang}, \eg
  $R =
  \termder{(\termapp{(\termabs{y}{\termabs{x}{z}})}{(\termapp{\termder{(y)}}{y})})}
  \equiv_{\sigma}
  \termder{(\termabs{x}{\termapp{(\termabs{y}{z})}{(\termapp{\termder{(y)}}{y})}})}$.}.
However, adding $\sigma$-reduction to the logical reduction rules does
not preserve confluence. Our notion of reduction addresses
these two issues at the same time\footnote{In particular, the
  term $R$ is not in normal form in our framework, and it reduces to a
  clash term in normal form which is filtered by the type system,
  see Lemma \ref{l:clashes-do-not-type-Bang}.}: it
integrates permutative conversions and is confluent
(Theorem~\ref{t:confluence}). \comment{A consequence of our
  formulation is that redexes in the original
  calculus~\cite{EhrhardG16} are still redexes in our framework, but
  not necessarily the other way around.}

  \begin{example}
    \label{ex:not-confluent}
    The following example is a simplified version of the
    one in \cite{EhrhardG16} showing that their calculus is not
    confluent.  There are two permutative conversions involved in this example: 
    $$ \begin{array}{llll}
         \termapp{(\termapp{(\termabs{x}{s})}{r})}{u} & \mapsto_{\sigma_1} &
    \termapp{(\termabs{x}{\termapp{s}{u}})}{r} & x\not\in\fv{u} \\
    \termapp{(\termabs{y}{\termabs{x}{s}})}{r} & \mapsto_{\sigma_2}  &
    \termabs{x}{(\termapp{\termabs{y}{s})}{r}} & x\not\in\fv{r}\cup\{y\}
    \end{array} $$
    The term 
    $t=\termapp{\termapp{(\termabs{x}{\termabs{y}{z}})}{(\termder{x'})}}
    {(\termder{y'})}$ contains both a  $\sigma_1$ and a $\sigma_2$
    redex: $$t\rightarrow_{\sigma_1}\termapp{(\termabs{x}({\termapp{
    \termabs{y}{z})    }{(\termder{y'})}})}{(\termder{x'})}=t_1
    $$ 
    $$t\rightarrow_{\sigma_2}\termapp{(\termabs{y}({\termapp{
    \termabs{x}{z})    }{(\termder{x'}}}))}{(\termder{y'})}=t_2
    $$ 
    where $\rightarrow_{\sigma_i}$ is the closure by weak contexts of
    $\mapsto_{\sigma_i}$ for $i=1,2$.  The terms $t_1$ and $t_2$, which are different,  are normal forms
    in~\cite{EhrhardG16}, so that confluence is lost.
    In the $\BangRev$-calculus, however,  we
    have:
    $$t\rewrite{\dBeta}\termapp{\termsubs{x}{\termder{x'}}{(\termabs{y}{z})}}{(\termder{y'})}\rewrite{\dBeta}
    \termsubs{x}{\termder{x'}}{\termsubs{y}{\termder{y'}}{z}} = t_0
    $$
    where  $t_0$       is the (unique) normal form of $t$.
  \end{example}

We write $\rewriten{\bangweak}$ (resp. $\rewritenp{\bangweak}$) for the reflexive-transitive (resp. transitive) closure of
$\rewrite{\bangweak}$. We write $t \rewriten{\bangweak}^{(b,e)} u$ if $t
\rewriten{\bangweak} u$ using $b$ $\dBeta$ steps and $e$
$\{\sBang, \dBang\}$-steps.

The reduction relation $\rewrite{\bangweak}$ enjoys a kind of
(weak) diamond property,
\ie one-step divergence can be closed in one step  if the diverging  terms
are different, since $\rewrite{\bangweak}$ is not reflexive. Otherwise stated,
the reflexive closure of $\rewrite{\bangweak}$ enjoys the strong diamond
property.

\begin{lemma}
\label{l:diamond}
If $t \rewrite{p_1} t_1$ and $t \rewrite{p_2} t_2$ where $t_1 \neq
t_2$ and $p_1, p_2 \in \{\dBeta, \sBang, \dBang\}$, then there exists $t_3$
such that $t_1 \rewrite{p_2} t_3$ and $t_2 \rewrite{p_1} t_3$. 
\end{lemma}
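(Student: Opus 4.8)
The plan is to contract two distinct redexes $R_1$ (giving $t_1$) and $R_2$ (giving $t_2$) and to show that each survives the contraction of the other as a unique \emph{residual}, so that the two contractions commute and the diagram closes in a single step on each side. First I would record the one fact about the shapes of the three rules that really matters: inspecting the right-hand sides, contraction of a $\dBeta$- or a $\dBang$-redex neither duplicates nor erases any proper subterm (the consumed constructors are the application/$\lambda$ pair, resp. the $\termder{}$/$\termbang{}$ pair, while the list context $\ctxt{L}$, the body and the argument all reappear exactly once), whereas an $\sBang$-step $\termsubs{x}{\ctxtapp{\ctxt{L}}{\termbang{u}}}{t} \rrule{\sBang} \ctxtapp{\ctxt{L}}{\substitute{x}{u}{t}}$ may duplicate or erase $u$. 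The crucial observation is that this $u$ sits under a bang in the source, and weak contexts $\ctxt{W}$ never enter a bang; hence $u$ carries no weak redex, and in particular the principal constructor of $R_2$ can never lie inside it. So the only material a step can copy or delete contains no redex we need to track, which is exactly why one step suffices on each side.

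Next I would establish a substitution lemma: for a weak redex $R$ and any $u$, meta-substitution commutes with contraction, i.e. contracting the residual of $R$ inside $\substitute{x}{u}{t}$ yields $\substitute{x}{u}{(\text{contraction of } R \text{ in } t)}$. This holds because substituting a variable does not alter the spine that identifies a redex: the spine of each left-hand side consists of substitution nodes ending in a $\lambda$ or a $\termbang{}$, reached from an application, a substitution or a $\termder{}$, and none of these is a bare variable, so $\substitute{x}{u}{}$ preserves the pattern. This lemma is what closes every configuration in which one redex lies inside the body, or along the spine $\ctxt{L}$, of the $\sBang$-redex fired by the other.

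With these two facts I proceed by case analysis on the relative position of the principal constructors $n_1$ of $R_1$ and $n_2$ of $R_2$; note first that two distinct redexes always have distinct principal constructors, since the anchoring node (application, substitution, or $\termder{}$) determines both the rule and, by uniqueness of the spine decomposition, the redex itself. If $n_1$ and $n_2$ are disjoint (neither a prefix of the other), the two redexes live in regions left untouched by each other's contraction, each has an evident unique residual, and both orders reach the same $t_3$. If one is nested inside the other, I split according to which part of the outer redex hosts the inner one: inside the list context $\ctxt{L}$ (the genuine \emph{action at a distance} overlaps, e.g. an $\sBang$-redex on one of the substitutions along the spine of a surrounding $\dBeta$-, $\sBang$- or $\dBang$-redex), inside the body $t$, or inside the argument $u$ of a $\dBeta$-redex. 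In each configuration the inner principal constructor is preserved (appears exactly once) by the outer contraction and conversely, invoking the substitution lemma precisely in the $\sBang$ subcases, so the residuals are uniquely defined and the two single-step contractions converge to a common $t_3$.

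The main obstacle is the family of \emph{at a distance} overlaps created by the list context $\ctxt{L}$ together with the $\sBang$ case: one must check carefully that performing a meta-substitution, or pulling a list context $\ctxt{L}$ out of a redex, neither destroys nor multiplies the other redex and leaves its pattern intact. This is exactly where the substitution lemma and the ``only bang-guarded material is copied or erased'' observation do the work; once they are in place, each overlap closes in one step on each side, establishing the weak diamond under the hypothesis $t_1 \neq t_2$, with the strong diamond for the reflexive closure following as remarked.
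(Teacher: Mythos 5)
Your proposal is correct and follows essentially the same route as the paper's proof: in both, the crux is that the only genuine overlaps are the action-at-a-distance critical pairs where an $\sBang$-redex sits on the list context of a $\dBeta$-, $\sBang$- or $\dBang$-redex, and these close in one step because meta-substitution preserves the spine of the other redex, while bang-guarded material (the only thing an $\sBang$-step can duplicate or erase) hosts no weak redex. Your residual/position-based case analysis is a repackaging of the paper's induction on $t$, whose non-trivial content is exactly the two root critical-pair computations you describe.
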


\begin{proof}
The proof is by induction on $t$. To make the notations
easier to read, we write $u\sL$
instead of $\ctxtapp{\ctxt{L}}{u}$, where $\sL$ is the list of substitutions of
the context $\ctxt{L}$. 

We only show the two interesting cases (root reduction with superposition of redexes), all the other ones
being straightforward:
\begin{itemize}
  \item $t = ((\lambda x'. t')\sL_1 \exsubs{x}{(\termbang{u})\sL_2} \sL_3) u'
  \rewrite{\dBeta} \termsubs{x'}{u'}{t'}\sL_1 \exsubs{x}{(\termbang{u})\sL_2} \sL_3
  = t _1$ and $t \rewrite{\sBang} (((\lambda x'. t')\sL_1)\bsubstitute{x}{u} \sL_2 \sL_3) u'
  = ((\lambda x'. t'\bsubstitute{x}{u}) (\sL_1 \bsubstitute{x}{u}) \sL_2 \sL_3) u' = t_2$.
  By $\alpha$-conversion we can assume $x \notin \fv{u'}$ so that
  by defining \[
  t_3 = \termsubs{x'}{u'}{t\bsubstitute{x}{u}} (\sL_1 \bsubstitute{x}{u})\sL_2 \sL_3 = \termsubs{x'}{u'}{t}\bsubstitute{x}{u} (\sL_1 \bsubstitute{x}{u})\sL_2 \sL_3
  \] we can close the diagram as follows: $t_1
  \rewrite{\sBang} t_3$ and $t_2 \rewrite{\dBeta} t_3$.
  
  \item $t = \termder{((\termbang{u'})\sL_1\exsubs{x}{(\termbang{u})\sL_2} \sL_3)}
  \rewrite{\dBang} u' \sL_1\exsubs{x}{(\termbang{u})\sL_2} \sL_3 = t_1$ and
  also $t \rewrite{\sBang} \termder{((\termbang{u'})\sL_1\bsubstitute{x}{u}\sL_2 \sL_3)}
  = t_2$. We close the diagram with $t_1 \rewrite{\sBang} t_3$ and $t_2
  \rewrite{\dBang} t_3$, where $t_3 = (u' \sL_1)\bsubstitute{x}{u} \sL_2 \sL_3 =
  u' \bsubstitute{x}{u} \sL_1 \bsubstitute{x}{u} \sL_2 \sL_3$. 
\end{itemize}
\end{proof}

The result above does not hold if reductions are allowed inside arbitrary
contexts. Consider for instance the term $t =
\termsubs{x}{\termbang{(\termapp{\id}{\termbang{\id}}})}{(\termapp{x}{\termbang{x}})}$.
We have $t \rewrite{\sBang}
\termapp{(\termapp{\id}{\termbang{\id}})}{\termbang{(\termapp{\id}{\termbang{\id}})}}$
and, if we allow the reduction in $t$ of the $\dBeta$-redex
$\termapp{\id}{\termbang{\id}}$ appearing banged inside the explicit
substitution, we get $t \rewrite{\dBeta}
\termsubs{x}{\termbang{\termsubs{z}{\termbang \id}{z}}}{(\termapp{x}{\termbang{x}})}$.
Now, the term
$\termsubs{x}{\termbang{\termsubs{z}{\termbang \id}{z}}}{(\termapp{x}{\termbang{x}})}$
${\sBang}$-reduces to
$\termapp{\termsubs{z}{\termbang{\id}}{z}}{\termbang{(\termsubs{z}{\termbang{\id}}{z}})}$,
whereas two $\dBeta$-reductions are needed in order to close the diamond, \ie
to rewrite
$\termapp{(\termapp{\id}{\termbang{\id}})}{\termbang{(\termapp{\id}{\termbang{\id}})}}$
into
$\termapp{\termsubs{z}{\termbang \id}{z}}{\termbang{(\termsubs{z}{\termbang \id}{z}})}$.

It is possible to $\bangweak$-reduce two different redexes of a term in such a
way that the same reduct (modulo $\alpha$-conversion) is obtained. For
instance, if $t = \termsubs{z}{\termbang{u}}{\termsubs{y}{\termbang{u}}{x}}$
then $t \rewrite{\bangweak} \termsubs{y}{\termbang{u}}{x}$ for the
$\ctxt{W}$-context $\Box$, and $t \rewrite{\bangweak}
\termsubs{z}{\termbang{u}}{x}$ for the $\ctxt{W}$-context
$\termsubs{z}{\termbang{u}}{\Box}$. Nevertheless, in such a case the reduction
rules must be the same, as we shall establish in Lemma~\ref{l:pasos-iguales}. As a
preliminary result, we prove that no term can $\bangweak$-reduce to itself in
one $\bangweak$-step.

\begin{lemma}
\label{l:loop}
For all $ t\in \TermExplicit$,  $t\not\rewrite{\bangweak}t$.
\end{lemma}

\begin{proof}
    We  prove  by induction on $t$ that if $t \rewrite{\bangweak} t'$ then $t\neq t'$. 
\begin{itemize}
  \item If $t = x$ then the statement holds vacuously.
  \item If $t = \termabs{x}{r}$ or $t = \termbang{r}$, then the reduction
  $t \rewrite{\bangweak} t'$  takes place inside $r$ and we conclude by the \ih 
  \item If $t = \termder{r}$ and the reduction $ t \rewrite{\bangweak} t'$ takes place inside $r$ then we conclude by
the \ih  If the  reduction $t
  \rewrite{\bangweak} t'$ takes place at the root then it must be the  case that  $r = \ctxtapp{\ctxt{L}}{\termbang{s}}$ and $t'= \ctxtapp{\ctxt{L}}{s}$, so that $t \neq t'$.
  \item If $t = \termapp{r}{s}$ then we reason as in the previous case.
  \item If $t = \termsubs{x}{s}{u}$, then let us write $t =
  \termsubs{x_n}{s_n}{\termsubs{x_1}{s_1}{r}\ldots}$ in such a way that:
  \begin{itemize}
    \item $n\geq 1$
    \item $r$ is not an explicit substitution.
    \item for all $1 \leq i \leq n$, $x_i \notin \fv{s_i} \cup \bv{s_i}$.
  \end{itemize}
  As in the previous cases, the \ih settles the cases in which the reduction $t
  \rewrite{\bangweak} t'$ takes place inside $r$ or inside $s_i$, for $1 \leq
  i \leq n$. The only remaining cases to consider are the $\sBang$-steps of the
  form \[
\begin{array}{rcl}
t & \rewrite{\sBang}  & \termsubs{x_n}{s_n}{\termsubs{x_{j+1}}{s_{j+1}}{\ctxtapp{\ctxt{L}}{\substitute{x_j}{s'}{\termsubs{x_{j-1}}{s_{j-1}}{\termsubs{x_1}{s_1}{r}\ldots}}}}\ldots} \\
  & =                 & \termsubs{y_m}{u_m}{\termsubs{y_1}{u_1}{r'}\ldots} =                 t'
\end{array} \]
  where $s_j = \ctxtapp{\ctxt{L}}{\termbang{s'}}$, for some $\ctxt{L}$, $s'$
  and $1 \leq j \leq n$, and where $r'$ is not an explicit substitution. Let
  $l$ be the length of the list of explicit substitutions $\ctxt{L}$. If $l > 1$
  then $m > n$, so that $t' \neq t$.
  \begin{itemize}
    \item If $l = 0$ and $\substitute{x_j}{s'}{r}$ is not an explicit
    substitution, then $m = n-1$, so that $t' \neq t$. If $l = 0$ and
    $\substitute{x_j}{s'}{r}$ is an explicit substitution, then it must be the
    case that $r = x_j$. In this case, $t$ starts with a free occurrence of
    $x_j$ whereas $t'$ starts with $s'$, and by hypothesis $s'$ does not
    contain free occurrences of $x_j$, being $s'$ a subterm of $s_j$. Hence
    $t'\neq t$.
    \item If $l = 1$, then $s_j = \termsubs{z}{u}{(\termbang{s'})}$ for some
    $z, u$. If $m \neq n$ then $t' \neq t$ and we are done, otherwise $u_j =
    u$, and, in order to have $t=t'$ it should be the case that $u=s_j$. But
    $u$ is a proper subterm of $s_j$, so that $t\neq t'$.
  \end{itemize}
\end{itemize}
\end{proof}


As a matter of fact, a weaker version of Lemma~\ref{l:loop} stating that
one-step cycles are impossible using $\dBeta$ or $\dBang$ only, is sufficient to
prove Lemma~\ref{l:pasos-iguales}. However, the stronger version presented here
provides more insight on the $\BangRev$-calculus.
  
\begin{lemma}
\label{l:pasos-iguales}
If $t \rewrite{p_1} t'$ and $t \rewrite{p_2} t'$, then $p_1 = p_2$, where
$p_1, p_2 \in \{\dBeta, \sBang, \dBang\}$.
\end{lemma}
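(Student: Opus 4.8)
The plan is to prove the statement by induction on the structure of $t$, exploiting the fact that every step $t \rewrite{p} t'$ factors as $t = \ctxtapp{\ctxt{W}}{r}$ and $t' = \ctxtapp{\ctxt{W}}{r'}$ for a weak context $\ctxt{W}$ and a root redex $r \rrule{p} r'$. Two observations drive the argument. First, the three root rules have pairwise distinct top constructors --- application for $\dBeta$, explicit substitution for $\sBang$, and dereliction for $\dBang$ --- and for a fixed top node the list-context decomposition (peeling explicit substitutions until reaching the relevant abstraction or bang) is forced; hence a term is a root redex for at most one rule, and that rule is determined together with its contractum. Consequently, whenever both steps contract a redex at the root of $t$ we immediately get $p_1 = p_2$. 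Second, since $\ctxt{W}$ never descends under a $\termbang{s}$, any such subterm is inert (a normal form), which makes the cases $t = x$ and $t = \termbang{s}$ vacuous; I will also use that a redex is never syntactically equal to its contractum, checked rule by rule by comparing top constructors and using that the bound variable of the created or destroyed substitution is fresh.

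For the inductive step I case on the top constructor of $t$ and on where each of the two holes sits. When $t = \termabs{x}{s}$ both steps must act inside $s$, and equality of the two reducts forces a common target for $s$, so the induction hypothesis applies. When $t = \termapp{t_1}{t_2}$ or $t = \termsubs{x}{u}{s}$ the steps may be at the root or inside either immediate subterm; if both are inside the same subterm, equality of reducts again yields a common target there and the induction hypothesis gives $p_1 = p_2$, whereas if they are inside two different subterms then one subterm would have to coincide with its own contractum, which is impossible. The case $t = \termder{s}$ is analogous, with $s$ as the only internal position. The remaining configurations pit a root step against an internal step. For $t = \termapp{t_1}{t_2}$, the root $\dBeta$ contractum $\ctxtapp{\ctxt{L}}{\termsubs{x}{t_2}{s}}$ has an explicit substitution (or whatever heads $s$) at its top, whereas any internal reduct keeps the application at the top, so the two targets cannot coincide; the $\termder{s}$ case is settled the same way, additionally noting that when the list context is empty the argument of $\termder{s}$ is a bang and hence offers no internal redex.

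The hard case, which I expect to be the main obstacle, is $t = \termsubs{x}{u}{s}$ with a root $\sBang$ step competing against a step inside the body $s$ (or inside $u$): here both targets may legitimately carry an explicit substitution at the top, so comparing top constructors alone does not suffice. To resolve it I use $\alpha$-renaming to assume $x \notin \fv{u}$, so that writing $u = \ctxtapp{\ctxt{L}}{\termbang{u'}}$ the root contractum $\ctxtapp{\ctxt{L}}{\substitute{x}{u'}{s}}$ contains no free occurrence of $x$, and its head constructor is either the head constructor of $s$ or an explicit substitution of $\ctxt{L}$ --- in the latter case binding a variable distinct from $x$. In no situation is this head an explicit substitution binding $x$ whose content still carries the bang $\termbang{u'}$, which is exactly the shape of any internal reduct $\termsubs{x}{u}{s'}$; the two targets therefore cannot be equal. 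The same freshness-and-binder bookkeeping disposes of the root-$\sBang$-versus-inside-$u$ subcase, using once more that an empty list context would make $u$ a bang with no internal redex. This exhausts the cases and completes the induction.
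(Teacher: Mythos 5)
Your high-level decomposition (induction on $t$; root/root settled by the disjointness of the three root patterns; same-subterm steps by the induction hypothesis; disjoint subterms via impossibility of self-reduction; root versus internal steps by comparing the shapes of the two reducts) is reasonable, and several of your cases are handled correctly. But the case you yourself single out as the hard one is resolved by a claim that is false, so there is a genuine gap. Take $z \in \TermVariable$ and consider
\[
t \;=\; \termsubs{x}{\termbang{\id}}{\termsubs{y}{\termbang{x}}{z}} .
\]
The root $\sBang$-step contracts the outer substitution and yields $\substitute{x}{\id}{(\termsubs{y}{\termbang{x}}{z})} = \termsubs{y}{\termbang{\id}}{z}$, because the meta-level substitution turns the inner argument $\termbang{x}$ into $\termbang{\id}$. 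The internal $\sBang$-step, performed on the inner substitution inside the weak context $\termsubs{x}{\termbang{\id}}{\Box}$, yields $\termsubs{x}{\termbang{\id}}{z}$. These two reducts are $\alpha$-equivalent (each binds a variable absent from the body $z$), so a root $\sBang$-contractum \emph{can} coincide with an internal reduct. Your argument that ``in no situation is this head an explicit substitution binding $x$ whose content still carries the bang $\termbang{u'}$'' fails precisely here: the substitution $\bsubstitute{x}{u'}$ can recreate the forbidden shape by instantiating an inner $\termbang{x}$ to $\termbang{u'}$, and distinguishing the reducts by the name of the bound variable ($y$ versus $x$) is not $\alpha$-invariant. (Incidentally, the same example refutes the stronger property asserted in the paper's own one-line proof, namely that two \emph{different} redexes never reach the same term; so the paper's sketch is itself defective, but that does not repair yours.)

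The lemma is not contradicted by this example, since both colliding steps are $\sBang$-steps, and that is exactly what a correct proof must exploit: in the configuration root-$\sBang$ versus internal step with equal reducts, you cannot show the configuration is vacuous; you must show the internal step is forced to be an $\sBang$-step. Concretely, writing $t = \termsubs{x}{\ctxtapp{\ctxt{L}}{\termbang{u'}}}{s}$, a collision forces $\ctxt{L} = \Box$ (size of the arguments of the topmost substitutions), and then matching $\substitute{x}{u'}{s}$ against an internal reduct $\termsubs{x}{\termbang{u'}}{s'}$ forces $s = \termsubs{y}{\termbang{w}}{s_1}$ with $w \in \set{x, u'}$, i.e.\ $s$ is itself a root $\sBang$-redex; iterating this analysis down the spine of $s$ shows that the internal step must occur at one of these nested $\sBang$-redexes, hence is an $\sBang$-step. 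None of this appears in your proof. A smaller, related weakness: your justification that no term reduces to itself (used in the disjoint-subterm case) does not suffice for $\sBang$, whose steps need not change the size of a term because of duplication; that case, however, is easy to repair without the self-reduction claim, since $\dBeta$- and $\dBang$-steps strictly decrease size, so two self-reducing steps would both have to be $\sBang$-steps and $p_1 = p_2$ follows anyway.
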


\begin{proof}
We prove by
induction on $t$ that if $t \rewrite{p_1} t_1$, $t \rewrite{p_2} t_2$
and  $p_1 \neq p_2$ then $t_1\neq t_2$. 
\begin{itemize}
  \item If $t = x$ then the statement holds vacuously.
  \item If $t = \termabs{x}{r}$ or $t = \termbang{r}$ then both reductions must
  take place inside $r$, and the \ih allows us to conclude.
  \item If $t = \termder{r}$ and both reductions take place in $r$, then the
  \ih allows us to conclude. Otherwise it must be the case that $r =
  \ctxtapp{\ctxt{L}}{\termbang{r'}}$, so that $t \rewrite{p_1} t_1$ is, say, $t
  \rewrite{\dBang} \ctxtapp{\ctxt{L}}{r'}$. If $\ctxtapp{\ctxt{L}}{\termbang{r'}}$
  is not an explicit substitution, then no reduction $t \rewrite{p_2} t_2$ with
  $p_1 \neq p_2$ is possible since $\bangweak$-reductions do not take place
  inside a bang. Otherwise $t_1$ is an explicit substitution whereas 
  $t_2 = \termder{\ctxtapp{\ctxt{L'}}{\termbang{r'}}}$, for some
  $\ctxtapp{\ctxt{L}}{\termbang{r'}}\rewrite{p2}\ctxtapp{\ctxt{L'}}{\termbang{r'}}$,
  is a dereliction so that $t_1\neq t_2$. 
  \item If $t = \termapp{r}{s}$ and the reductions take place either both in
  $r$ or both in $s$, then the \ih allows us to conclude. If $\termapp{r}{s}
  \rewrite{p_1} \termapp{r'}{s}$ and $\termapp{r}{s} \rewrite{p_2}
  \termapp{r}{s'}$, then Lemma~\ref{l:loop} allows us to conclude. Otherwise it
  must be the case that $r = \ctxtapp{\ctxt{L}}{\termabs{x}{r'}}$, so that $t
  \rewrite{p_1} t_1$ is, say, $t \rewrite{\dBeta}
  \ctxtapp{\ctxt{L}}{\termsubs{x}{s}{r'}}$, and $t_1$ is an explicit
  substitution. Now, the $p_2$-step must take place either in $r$ or in $s$,
  hence  $t_2$ is an application, so that $t_1\neq t_2$. 
  \item If $t = \termsubs{x}{s}{r}$ and the reductions take place either both
  in $r$ or both in $s$, then the \ih allows us to conclude. If
  $\termsubs{x}{s}{r} \rewrite{p_1} \termsubs{x}{s'}{r}$ and
  $\termsubs{x}{s}{r} \rewrite{p_2} \termsubs{x}{s}{r'}$, then
  Lemma~\ref{l:loop} allows us to conclude.
  Otherwise it must be the case that $s =
  \ctxtapp{\ctxt{L}}{\termbang{s'}}$, so that $t \rewrite{p_1} t_1$ is, say, $t
  \rewrite{\sBang} \ctxtapp{\ctxt{L}}{\substitute{x}{s'}{r}}$. Let us write
  $t = \termsubs{x}{s}{\termsubs{x_n}{s_n}{\termsubs{x_1}{s_1}{r'}\ldots}}$ in
  such a way that:
  \begin{itemize}
    \item $n \geq 0$
    \item $r'$ is not an explicit substitution.
    \item for all $1 \leq i \leq n$, $x_i \notin \fv{s_i} \cup \bv{s_i}$.
  \end{itemize}
  Thus, $t_1 =
  \ctxtapp{\ctxt{L}}{\substitute{x}{s'}{\termsubs{x_n}{s_n}{\termsubs{x_1}{s_1}{r'}\ldots}}}$.
  In the rest of the proof, we note $t \rewrite{p_2} t_2$ the second
  $\bangweak$-reduction of $t$. Since $p_1 = \sBang$
  and $p_1 \neq p_2$, $t \rewrite{p_2} t_2$ is a $\dBeta$-step or a
  $\dBang$-step taking place either in $r'$, or in one of the $s_i$, $1 \leq i
  \leq n$, or in $s$. We observe first that if  $r' = x$
  then  $t_2$ starts with a free occurrence of $x$ and $t_1$ starts
  with $s'$, where $x$ cannot occur free, so that $t_1\neq t_2$.
  It remains to consider the case $r'\neq x$. As a matter of terminology, let
  us say that a term $u$ has exactly $n$ explicit substitutions if $u =
  \termsubs{z_n}{u_n}{\termsubs{z_1}{u_1}{u'}\ldots}$ and $u'$ is not an
  explicit substitution. In the case $r' \neq x$ the term $t_1$ has exactly
  $n + m$ explicit substitutions, where $m$ is the length of $\ctxt{L}$, and
  the term $t_2$ has at least $n + 1$ explicit substitutions. Hence, if
  $\ctxt{L}$ is empty, then $t_1 \neq t_2$. It remains to consider the case
  $\ctxt{L} = {\ctxt{L'}}{\exsubs{w}{u}}$ for some list of explicit
  substitutions $\ctxt{L'}$, variable $w$ and term $u$. In this case the
  outermost explicit substitution of $t_1$ is ${\exsubs{w}{u}}$. We conclude
  the proof by showing that in this case the outermost explicit substitution
  ${\exsubs{w'}{u'}}$ of $t_2$ is such that $\simplesize{u}<\simplesize{u'}$,
  where $\simplesize{t}$ is the size  of the syntactic tree of $t$ (all nodes
  count one\footnote{Here is the definition: $\simplesize{x} = 1$,
  $\simplesize{\termbang{t}} = \simplesize{\termder{t}} =
  \simplesize{\termabs{x}{t}} = \simplesize{t} + 1$,
  $\simplesize{\termapp{t}[u]} = \simplesize{\termsubs{x}{u}{t}} =
  \simplesize{t} + \simplesize{u} + 1$.}). If the reduction $t \rewrite{p_2}
  t_2$ takes place outside the leftmost occurrence of $u$ then $u'$ is of the
  form $\ctxtapp{\ctxt{L_1}}{\termbang s'}$ with $\ctxt{L_1} =
  {\ctxt{L_2}}{\exsubs{w}{u}}$, hence $\simplesize{u} < \simplesize{u'}$. If
  the reduction $t \rewrite{p_2} t_2$ takes place in the leftmost occurrence of
  $u$, then $u'$ is of the form $\ctxtapp{\ctxt{L_1}}{\termbang s'}$ and
  $\ctxt{L_1} = {\ctxt{L'}}{\exsubs{w}{u''}}$, with $u \rewrite{\dBeta} u''$ or
  $u \rewrite{\dBang} u''$. By remarking that $\simplesize{u''} + 2 \geq
  \simplesize{u}$ (simple inspection of the rules), and that $\simplesize{u'}
  \geq \simplesize{u''} + 3$ we get $\simplesize{u} < \simplesize{u'}$, in this
  case too, and we are done.
\end{itemize}
\end{proof}

In the proof of the following lemma, pairs of natural numbers
  are used to decorate reduction sequences.  More precisely, by
  writing $t \rewriten{\bangweak}^{(b,e)} u$ we mean that $t$
  $\bangweak$-reduces to $u$ using $b$ multiplicative steps and $e$
  exponential steps.
We use the
following order on pairs: $(a,b) \parless (a',b')$ iff $a < a'$ and $b \leq b'$,
or $a \leq a'$ and $b < b'$. Moreover,  we use the operation $+$ on pairs  to denote the pairwise addition.

\begin{lemma}
\label{l:reduction-length}
If $t \rewriten{\bangweak}^{c_1} u_1$ and $t \rewriten{\bangweak}^{c_2} u_2$,
then there exists a term $t'$ and pairs $\overline{c_1}, \overline{c_2}$
such that $u_1 \rewriten{\bangweak}^{\overline{c_2}} t'$, and $u_2
\rewriten{\bangweak}^{\overline{c_1}} t'$, where $\overline{c_i} \parleq c_i$
with $i = 1,2$ and $c_1 + \overline{c_2} = c_2 + \overline{c_1}$.
\end{lemma}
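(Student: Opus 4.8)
The plan is to derive the statement from a single \emph{local} closure property (a ``tile''), obtained by merging Lemmas~\ref{l:diamond} and~\ref{l:pasos-iguales}, and then to tile the entire confluence diagram by induction on the total number of steps. Throughout, counters live in $\mathbb{N}^2$ with componentwise addition, so $+$ is commutative and associative, and $\le$ is the componentwise order, which is precisely the reflexive closure of the $<$ defined above and is monotone under $+$. For a single step of kind $p$ I write $\langle p\rangle=(1,0)$ if $p=\dBeta$ and $\langle p\rangle=(0,1)$ if $p\in\{\sBang,\dBang\}$, so that $t\rewrite{p}u$ means $t\rewriten{\bangweak}^{\langle p\rangle}u$.

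First I would establish the tile: if $t\rewrite{p_1}t_1$ and $t\rewrite{p_2}t_2$, then there is $r$ with $t_1\rewriten{\bangweak}^{d_2}r$ and $t_2\rewriten{\bangweak}^{d_1}r$, where $d_i\le\langle p_i\rangle$ and $\langle p_1\rangle+d_2=\langle p_2\rangle+d_1$. There are two cases. If $t_1\neq t_2$, Lemma~\ref{l:diamond} closes the peak in one step on each side, and \emph{crucially} it records the kinds: the edge out of $t_1$ has kind $p_2$ and the edge out of $t_2$ has kind $p_1$; hence $d_2=\langle p_2\rangle$, $d_1=\langle p_1\rangle$, and both sides of the counter identity equal $\langle p_1\rangle+\langle p_2\rangle$. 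If $t_1=t_2$, Lemma~\ref{l:pasos-iguales} forces $p_1=p_2$; take $r=t_1$ with $d_1=d_2=(0,0)$, so the identity reads $\langle p_1\rangle=\langle p_2\rangle$ and the bounds $d_i\le\langle p_i\rangle$ hold trivially. This ``collapse'' case is the only place where a completing reduction is strictly shorter than the opposing one, and it is exactly what makes the inequalities $\overline{c_i}\le c_i$ (rather than equalities) necessary.

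Then I would prove the lemma by induction on $b_1+e_1+b_2+e_2$, the total length of the two given reductions. If either reduction is empty the closure is immediate (take $t'$ to be the other endpoint, with one completing counter equal to the full opposite counter and the other equal to $(0,0)$; the identity is then trivial). Otherwise I split off the first steps, $t\rewrite{p_1}s_1\rewriten{\bangweak}^{c_1'}u_1$ and $t\rewrite{p_2}s_2\rewriten{\bangweak}^{c_2'}u_2$, apply the tile to the two leading steps to obtain $r$, and then invoke the induction hypothesis three times to fill the remaining region: once on the peak at $s_1$ (closing $c_1'$ against $d_2$), once on the peak at $s_2$ (closing $c_2'$ against $d_1$), and once on the two resulting reductions out of $r$. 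Each invocation is on strictly fewer total steps, since each $c_i'$ has one step fewer than $c_i$ while each $d_i$ has at most one step. Composing the completing reductions along the two outer borders yields $t'$, and the required $\overline{c_2},\overline{c_1}$ are the componentwise sums of the pieces along each border.

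Finally the three conditions reduce to pure arithmetic in $\mathbb{N}^2$. Monotonicity of $+$ together with the tile bound $d_i\le\langle p_i\rangle$ and the three inductive bounds gives $\overline{c_2}\le\langle p_2\rangle+c_2'=c_2$ and $\overline{c_1}\le\langle p_1\rangle+c_1'=c_1$. The identity $c_1+\overline{c_2}=c_2+\overline{c_1}$ follows by rewriting one border into the other using, in turn, the two inductive identities for the $s_i$-peaks, the tile identity, and the inductive identity for the $r$-peak; because $+$ is commutative and associative this is a mechanical rearrangement with no residual terms. The one genuinely delicate point, entirely discharged by the ``records the kinds'' content of Lemma~\ref{l:diamond}, is that every tile is \emph{type-preserving} ($\mStep$-steps complete $\mStep$-steps and $\eStep$-steps complete $\eStep$-steps); without this, the $\dBeta$- and $\{\sBang,\dBang\}$-components could fail to balance and the counter identity would break. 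Everything else is routine bookkeeping.
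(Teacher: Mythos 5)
Your proof is correct, and while it shares the paper's broad strategy (peel off the leading steps, close them with the weak diamond, fill in the rest by induction), the decomposition of the diagram is genuinely different. The paper case-splits at top level: when the leading steps collapse ($t_1=t_2$) it invokes Lem.~\ref{l:pasos-iguales} and a single inductive call, and otherwise it invokes Lem.~\ref{l:diamond} and closes the region with only \emph{two} inductive calls, the second of which is applied to the composite reduction $t_1 \rewriten{\bangweak}^{q_2+\overline{c'_2}} t_4$; that composite is what forces the paper's explicit measure check $c'_1+q_2+\overline{c'_2} < c_1+c_2$ (which itself depends on the bound $\overline{c'_2}\le c'_2$ returned by the first call), and the induction runs on $c_1+c_2$ under the ad hoc order on counters defined before the lemma. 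You instead fold both the collapse case and the diamond into one uniform kind-preserving tile --- correctly observing that Lem.~\ref{l:diamond} records that the completing steps have kinds $p_2$ and $p_1$, which is exactly what keeps the $\dBeta$- and $\{\sBang,\dBang\}$-components balanced --- and you tile the region symmetrically with \emph{three} inductive calls (two thin strips and one residual square) under the simpler scalar measure given by the total number of steps. Both routes are sound: the paper's saves one appeal to the induction hypothesis, while yours buys a cleaner well-founded measure, a uniform treatment of the degenerate case, and end-game arithmetic that is purely mechanical. I checked the details that your sketch leaves implicit and they go through: the strips have measure at most $|c_1|$ and $|c_2|$ respectively (strictly below the total since both given reductions are nonempty), the residual square has measure at most $|c_1|+|c_2|-2$, and chaining the two strip identities, the tile identity, and the square identity indeed rewrites $c_1+\overline{c_2}$ into $c_2+\overline{c_1}$ with no residual terms.
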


\begin{proof} 
In this proof we use the notation $q_i$ for pairs of the form $(1,0)$ or
$(0,1)$. 

The proof of the lemma is by induction on $c_1+ c_2$. The cases $c_1 = (0,0)$
or $c_2 = (0,0)$ (including the base case) are all trivial. So let us suppose
$c_1 \pargreat (0,0)$ and $c_2 \pargreat  (0,0)$. Then $t \rewrite{\bangweak}^{q_1} t_1
\rewriten{\bangweak}^{c'_1} u_1$ and $t \rewrite{\bangweak}^{q_2} t_2
\rewriten{\bangweak}^{c'_2} u_2$, where $c'_i + q_i = c_i$ with $i = 1,2$.

If $t_1 = t_2$, then $q_1 = q_2$ by Lemma~\ref{l:pasos-iguales}. We have $t_1
\rewriten{\bangweak}^{c'_1} u_1$ and $t_1 \rewriten{\bangweak}^{c'_2} u_2$.
Since $c'_1+c'_2 \parless  c_1 + c_2$, then by the \ih there exists $t'$ such that
$u_1 \rewriten{\bangweak}^{\overline{c'_2}} t'$, $u_2
\rewriten{\bangweak}^{\overline{c'_1}} t'$ where $c'_1 + \overline{c'_2} = c'_2
+ \overline{c'_1}$, and $\overline{c'_i} \parleq  c'_i$ with $i = 1,2$. Let us set
$\overline{c_i} = \overline{c'_i}$ with $i = 1,2$.
We conclude since $c_1 + \overline{c_2} = c'_1 + q_1 + \overline{c'_2} = q_1 +
c'_2 + \overline{c'_1} = c_2 + \overline{c_1}$ and $\overline{c_i} =
\overline{c'_i} \parleq  c'_i \parless c_i$ with $i = 1,2$.

If $t_1 \neq t_2$, then by Lemma~\ref{l:diamond} there exists $t_3$ such that
$t_1 \rewrite{\bangweak}^{q_2} t_3$ and $t_2 \rewrite{\bangweak}^{q_1} t_3$. We
now have $t_2 \rewrite{\bangweak}^{q_1} t_3$ and $t_2
\rewriten{\bangweak}^{c'_2} u_2$. Since $c'_2+q_1 \parless  c_1+c_2$, then the \ih
gives $t_4$ such that $u_2 \rewriten{\bangweak}^{\overline{q_1}} t_4$,
$t_3 \rewriten{\bangweak}^{\overline{c'_2}} t_4$, $c'_2 + \overline{q_1} = q_1
+ \overline{c'_2}$, $\overline{q_1} \parleq q_1$ and $\overline{c'_2} \parleq c'_2$.
We now have $t_1 \rewriten{\bangweak}^{c'_1} u_1$ and $t_1
\rewriten{\bangweak}^{q_2 + \overline{c'_2}} t_4$.  In order to apply the \ih
we need $c'_1 + q_2 + \overline{c'_2} \parless  c_1 + c_2$. Indeed, $c'_1 + q_2 +
\overline{c'_2} \parless  c_1 + c_2 = q_1 + c'_1 + q_2 + c'_2$ iff $\overline{c'_2} \parless 
q_1 + c'_2$ iff $\overline{c'_2} \parleq c'_2$. Then the \ih gives $t'$ such that
$u_1 \rewriten{\bangweak}^{\overline{\overline{c'_2}+q_2}} t'$ and $t_4
\rewriten{\bangweak}^{\overline{c'_1}} t'$, where $c'_1 +
\overline{\overline{c'_2} + q_2} = q_2 + \overline{c'_2} + \overline{c'_1}$,
$\overline{c'_1} \parleq c'_1$ and $\overline{\overline{c'_2} + q_2} \parleq
\overline{c'_2} + q_2$. We can then conclude since $u_2
\rewriten{\bangweak}^{\overline{q_1} + \overline{c'_1}} t'$. We have
\begin{itemize}
  \item $\overline{\overline{c'_2} + q_2}\parleq  \overline{c'_2} + q_2 \parleq c'_2
  + q_2$.
  \item $\overline{q_1} + \overline{c'_1} \parleq  q_1 + c'_1$.
  \item $q_1 + c'_1 + \overline{\overline{c'_2} + q_2} = q_1 + q_2 +
  \overline{c'_2} + \overline{c'_1} = q_2 + \overline{c'_1} + c'_2 +
  \overline{q_1}$. 
\end{itemize}
\end{proof}

This gives the following two major results.

\begin{theorem}\mbox{}
\label{t:confluence}
\begin{itemize}
  \item The reduction relation $\rewrite{\bangweak}$ is confluent.
  \item Any two different reduction paths to $\bangweak$-normal form have the same length.
\end{itemize}
\end{theorem}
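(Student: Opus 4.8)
The plan is to read off both statements as direct consequences of Lemma~\ref{l:reduction-length}, which already packages the confluence diagram together with a quantitative bookkeeping of the counters.

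For the first item, confluence is immediate. Given two coinitial reductions $t \rewriten{\bangweak}^{c_1} u_1$ and $t \rewriten{\bangweak}^{c_2} u_2$, Lemma~\ref{l:reduction-length} hands us a term $t'$ together with $u_1 \rewriten{\bangweak}^{\overline{c_2}} t'$ and $u_2 \rewriten{\bangweak}^{\overline{c_1}} t'$, which is exactly the confluence diagram. No further work is needed, since that lemma was established for arbitrary multi-step reductions rather than just for single steps.

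For the second item, I would specialise the same lemma to the case where $u_1$ and $u_2$ are both $\rewrite{\bangweak}$-normal forms. From $u_1 \rewriten{\bangweak}^{\overline{c_2}} t'$ and the fact that no step departs from the normal form $u_1$, the residual counter $\overline{c_2}$ must be $(0,0)$ and $t' = u_1$; symmetrically $\overline{c_1} = (0,0)$ and $t' = u_2$ (which, as a byproduct, reconfirms uniqueness of normal forms, $u_1 = u_2$). Substituting these values into the conservation identity $c_1 + \overline{c_2} = c_2 + \overline{c_1}$ collapses it to $c_1 = c_2$. Hence the two paths agree not merely on their total length but componentwise on the number of multiplicative ($\mStep$) steps and of exponential ($\eStep$) steps.

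The argument is essentially mechanical once Lemma~\ref{l:reduction-length} is available; the only point deserving care is the observation that a normal form admits solely the empty reduction, which forces the residual counters $\overline{c_1}, \overline{c_2}$ to vanish. The genuine difficulty has already been discharged in the inductive proof of Lemma~\ref{l:reduction-length}, where the diamond of Lemma~\ref{l:diamond} and the single-redex uniqueness of Lemma~\ref{l:pasos-iguales} are interleaved while tracking counters. I therefore expect no obstacle in the present theorem beyond correctly reading off these two specialisations of that lemma.
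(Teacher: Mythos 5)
Your proposal is correct and follows essentially the same route as the paper, whose proof simply states that both items follow from Lem.~\ref{l:reduction-length}; you have spelled out exactly the intended specialisations, including the key observation that normal forms force the residual counters to vanish, which collapses the conservation identity $c_1 + \overline{c_2} = c_2 + \overline{c_1}$ to $c_1 = c_2$.
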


\begin{proof}
The two statements follow from Lemma~\ref{l:reduction-length}.
\end{proof}

\begin{example}
\label{ex:reduction}
Let us illustrate the previous result with $t =
\termsubs{x}{\termapp{\id'}{(\termapp{\id'}{\termbang{\id}})}}{(\termapp{x}{\termbang{x}})}$,
where $\id' = \termabs{x}{\termbang{x}}$. For that, let us first consider the
following reduction \[\termapp{\id'}{\termbang{\id}} \rewrite{\dBeta}
\termsubs{x}{\termbang{\id}}{(\termbang{x})} \rewrite{\sBang} \termbang{\id}\]
Observe that the second step is a weak step, since the reduction takes place at
the root of the term. On the contrary, notice that the step
$\termbang{(\termsubs{x}{\termbang{\id}}{x})} \rewrite{\sBang} \termbang{\id}$
is not a weak step since the reduction takes place under a bang.

Similarly, \[\termapp{\id}{\termbang{\id}} \rewrite{\dBeta}
\termsubs{x}{\termbang{\id}}{x} \rewrite{\sBang} \id\] Thus,
$\termapp{\id'}{\termbang{\id}} \rewriten{\bangweak}^{(1,1)} \termbang \id$ and
$\termapp{\id}{\termbang{\id}} \rewriten{\bangweak}^{(1,1)} \id$.

Coming back now to the original term $t$, the following essentially different
weak reductions from $t$ to its normal form $\id$ have both length $7$: \[
\begin{array}{lll}
t & \rewriten{\bangweak}^{(1,1)} &
\termsubs{x}{\termapp{\id'}{\termbang{\id}}}{(\termapp{x}{\termbang{x}})} \rewriten{\bangweak}^{(1,1)}
\termsubs{x}{\termbang{\id}}{(\termapp{x}{\termbang{x}})} \rewrite{\sBang}
\termapp{\id}{\termbang{\id}} \rewriten{\bangweak}^{(1,1)}
\id \\
\\
t & \rewrite{\dBeta} &
\termsubs{x}{\termsubs{z}{\termapp{\id'}{\termbang{\id}}}{(\termbang{z})}}{(\termapp{x}{\termbang{x}})} \rewrite{\sBang}
\termsubs{z}{\termapp{\id'}{\termbang{\id}}}{(\termapp{z}{\termbang{z}})} \\
  & \rewriten{\bangweak}^{(1,1)} &
\termsubs{z}{\termbang{\id}}{(\termapp{z}{\termbang{z}})} \rewrite{\sBang}
\termapp{\id}{\termbang \id} \rewriten{\bangweak}^{(1,1)}
\id
\end{array} \]
\end{example}

As explained above, the strong property expressed in the second item of
Theorem~\ref{t:confluence} and illustrated in Example \ref{ex:reduction} relies
essentially on the fact that reductions are disallowed under bangs. Observe
the important role of the $\mathtt L$-context
$\termsubs{z}{\termapp{\id'}{\termbang I}}{\Box}$ in the second step of the
last reduction sequence.

\paragraph{{\bf Normal forms and neutral terms}}
A term is said to be \emphdef{$\bangweak$-normal} if there is no $t'$ such that
$t \rewrite{\bangweak} t'$, in which case we write $t \not\rewrite{\bangweak}$.
This notion can be characterised by means of the following inductive grammars:
\begin{center}
\begin{tabular}{rrcll}
\textbf{(Neutral)}      & $\ipntrl$  & $\Coloneq$ & $x \in \TermVariable \mid \termapp{\ipnabs}{\ipnrml} \mid \termder{(\ipnbang)} \mid \termsubs{x}{\ipnbang}{\ipntrl}$ \\
\textbf{(Neutral-Abs)}  & $\ipnabs$  & $\Coloneq$ & $\termbang{t} \mid \ipntrl \mid \termsubs{x}{\ipnbang}{\ipnabs}$ \\ 
\textbf{(Neutral-Bang)} & $\ipnbang$ & $\Coloneq$ & $\ipntrl \mid \termabs{x}{\ipnrml} \mid \termsubs{x}{\ipnbang}{\ipnbang}$ \\
\textbf{(Normal)}       & $\ipnrml$  & $\Coloneq$ & $\ipnabs \mid \ipnbang$
\end{tabular}
\end{center}
As we shall see (\cf Proposition~\ref{prop:normal}), all these terms
are $\bangweak$-normal. Moreover, \emphdef{neutral} terms do not produce any
kind of redexes when inserted into a context, while \emphdef{neutral-abs}
terms (resp. \emphdef{neutral-bang})  may only produce $\sBang$ or $\dBang$
redexes (resp. $\dBeta$ redexes) when inserted into a context.

\begin{remark}
\label{r:normal}
Some immediate  properties of the sets of terms defined above are:
\begin{itemize}
  \item $\ipntrl\ =\ \ipnabs \cap \ipnbang$ 
  \item  $\ipnrml\ =\  \ipnabs \cup \ipnbang$
  \item for all terms $t$, $\pnabs{t}$ implies $t\in\ipntrl$ or $\pbang{t}$.
  \item for all terms $t$, $\pnbang{t}$ implies $t\in\ipntrl$ or $\pabs{t}$.
  \item for all terms $t$, $\pnabs{t}$ and $t \notin \ipnbang$ implies $\pbang{t}$.
  \item for all terms $t$, $\pnbang{t}$ and $t \notin \ipnabs$ implies $\pabs{t}$.
\end{itemize}
\end{remark}

\begin{proposition}[Normal Forms]
\label{prop:normal}
For all $t \in \TermExplicit$, $t\not\rewrite{\bangweak}$ iff
$\pnrml{t}$.
\end{proposition}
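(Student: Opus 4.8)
The plan is to prove the two implications by the natural inductions, leaning on the intended reading of the four grammars recorded after the statement: $\ipntrl$ collects the normal terms that are \emph{neither} $\pabs{\cdot}$ \emph{nor} $\pbang{\cdot}$, while $\ipnabs$ collects the normal terms that are \emph{not} $\pabs{\cdot}$ and $\ipnbang$ the normal terms that are \emph{not} $\pbang{\cdot}$. A preliminary observation, used throughout, is that the predicates $\ipabs$ and $\ipbang$ only inspect the head of a term through a list of substitutions, so a closure inherits them from its body: $\pabs{\termsubs{x}{u}{s}}$ iff $\pabs{s}$, and $\pbang{\termsubs{x}{u}{s}}$ iff $\pbang{s}$; moreover applications, derelictions and variables are structurally never $\pabs{\cdot}$ nor $\pbang{\cdot}$. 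I also record the trigger conditions of the three root rules: $\termapp{s}{u}$ is a root $\dBeta$-redex iff $\pabs{s}$, a closure $\termsubs{x}{u}{s}$ is a root $\sBang$-redex iff $\pbang{u}$, and $\termder{s}$ is a root $\dBang$-redex iff $\pbang{s}$.

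For the right-to-left implication I would prove, by a single simultaneous induction on the grammar derivation, the strengthened statement that every $\ipntrl$-term is $\bangweak$-normal and satisfies $\neg\pabs{\cdot}$ and $\neg\pbang{\cdot}$, every $\ipnabs$-term is normal and satisfies $\neg\pabs{\cdot}$, every $\ipnbang$-term is normal and satisfies $\neg\pbang{\cdot}$, and hence, since $\ipnrml = \ipnabs \cup \ipnbang$, every $\ipnrml$-term is normal. For each production one checks that the head shape is not a root redex and that every subterm sitting in a weak-context position is normal by the induction hypothesis. The trigger conditions above make the first check immediate: in $\termapp{\ipnabs}{\ipnrml}$ the left subterm is $\neg\pabs{\cdot}$, and in $\termder{\ipnbang}$ as well as $\termsubs{x}{\ipnbang}{\cdot}$ the relevant subterm is $\neg\pbang{\cdot}$; the $\neg\pabs{\cdot}/\neg\pbang{\cdot}$ bookkeeping then propagates via the closure-inheritance observation. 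The key special case is $\termbang{t} \in \ipnabs$: since $\ctxt{W}$ never places the hole under a bang, $\termbang{t}$ is $\bangweak$-normal for \emph{arbitrary} $t$, and it is plainly $\neg\pabs{\cdot}$.

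For the left-to-right implication I would argue by structural induction on $t$, splitting on its outermost constructor. In every case except the bang, each immediate subterm occupies a weak-context position, so normality of $t$ forces normality of that subterm and the induction hypothesis places it in $\ipnrml$; the bang case gives $\termbang{t} \in \ipnabs$ directly. Absence of a root redex then pins down the missing predicate: for $t = \termapp{s}{u}$ we get $\neg\pabs{s}$, for $t = \termder{s}$ we get $\neg\pbang{s}$, and for $t = \termsubs{x}{u}{s}$ we get $\neg\pbang{u}$. At this point Remark~\ref{r:normal} performs the decisive refinement: a normal $s \in \ipnrml$ with $\neg\pabs{s}$ must lie in $\ipnabs$, since otherwise $s \in \ipnbang \setminus \ipnabs$ would give $\pabs{s}$, and dually $\neg\pbang{s}$ forces $s \in \ipnbang$. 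This yields $\termapp{s}{u} \in \ipntrl$ and $\termder{s} \in \ipntrl$; for the closure the body $s$ may be in either $\ipnabs$ or $\ipnbang$, and the two productions $\termsubs{x}{\ipnbang}{\ipnabs}$ and $\termsubs{x}{\ipnbang}{\ipnbang}$ absorb both possibilities, placing $\termsubs{x}{u}{s}$ in $\ipnrml$ via $\ipnrml = \ipnabs \cup \ipnbang$.

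The main obstacle is purely organisational: keeping the mutual recursion of the four classes in sync and threading the $\ipabs/\ipbang$ predicates correctly through explicit substitutions, so that a redex is recognised \emph{exactly} when the relevant subterm is $\pabs{\cdot}$ or $\pbang{\cdot}$ \emph{at a distance}. The closure-inheritance observation and Remark~\ref{r:normal} are what make this bookkeeping painless; without the Remark one would be forced to strengthen the left-to-right induction to track the precise class of each subterm, which is exactly the information the Remark already packages.
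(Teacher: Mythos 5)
Your proposal is correct, and its grammar-to-normality half coincides with the paper's: the same simultaneous induction on the grammars showing that $\ipntrl$-terms are normal with $\neg\pabs{\cdot}$ and $\neg\pbang{\cdot}$, $\ipnabs$-terms normal with $\neg\pabs{\cdot}$, and $\ipnbang$-terms normal with $\neg\pbang{\cdot}$. Where you diverge is in the normality-to-grammar direction. The paper proves all four statements as \emph{bi-implications} in one simultaneous induction ($\pntrl{t}$ iff $t\not\rewrite{\bangweak}$ with $\neg\pabs{t}$ and $\neg\pbang{t}$, and so on), so when it analyses, say, a normal application $\termapp{s}{u}$ with $\neg\pabs{s}$, the induction hypothesis for the class $\ipnabs$ directly yields $\pnabs{s}$. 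You instead keep the unstrengthened invariant ``normal implies $\pnrml{\cdot}$'' as the inductive statement and obtain the needed refinement from Rem.~\ref{r:normal}: $\pnrml{s}$ together with $\neg\pabs{s}$ forces $\pnabs{s}$, and dually $\neg\pbang{s}$ forces $\pnbang{s}$. The two organisations are equivalent in substance --- the Remark packages exactly the information that the paper's strengthened induction hypotheses carry --- so your route buys a single, weaker induction statement at the cost of leaning on the Remark, which the paper states without proof and whose verification is itself an induction on the same mutually defined grammars. The underlying case analysis is identical in both: root redexes fire exactly when the relevant subterm satisfies $\ipabs$ or $\ipbang$ at a distance, subterms of every non-bang constructor sit in weak positions, and closures inherit the two predicates from their bodies.
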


\begin{proof}
We prove simultaneously the following statements:
\begin{enumerate}
\item[(a)] $\pntrl{t}  \quad\Leftrightarrow\quad  t \not\rewrite{\bangweak}$
  and
  $\neg\pabs{t}$ and $\neg\pbang{t}$.
  \item[(b)] $\pnabs{t}  \quad\Leftrightarrow\quad t \not\rewrite{\bangweak}$ and
  $\neg\pabs{t}$.
  \item[(c)] $\pnbang{t}  \quad\Leftrightarrow\quad   t \not\rewrite{\bangweak}$ and
  $\neg\pbang{t}$.
  \item[(d)] $\pnrml{t}  \quad\Leftrightarrow\quad   t \not\rewrite{\bangweak}$.
\end{enumerate}
The implication $\Rightarrow$ is proved by induction on the predicate
$\pnrml{t}$.  In all cases, it is sufficient to reason by a simple
case analysis of the grammars defining the sets $\ipntrl$, $\ipnabs$,
$\ipnbang$ and $\ipnrml$, using the suitable induction hypothesis.

The implication $\Leftarrow$ is proved by induction on $t$.
As for $\Leftarrow$ (a): $\bangweak$-normal terms neither of the form
$\ctxtapp{\ctxt{L}}{\termbang{t'}}$ nor $\ctxtapp{\ctxt{L}}{\termabs{x}{t'}}$
have necessarily one of the following shapes:
\begin{itemize}
  \item $x$
  \item $\termder{t}$, where $t$ is normal and $\neg\pbang{t}$.
  \item $\termapp{t}{u}$, where $t$, $u$ are normal and $\neg\pabs{t}$.
  \item $\termsubs{x}{u}{t}$, where $t$, $u$ are normal, $\neg\pabs{t}$,
  $\neg\pbang{t}$ and $\neg\pbang{u}$.
\end{itemize} 
Each case is settled by using the corresponding case in the definition of
$\ipntrl$ and the suitable induction hypothesis. The same holds for
$\Leftarrow$ (b), $\Leftarrow$ (c) and  $\Leftarrow$ (d).
\end{proof}

\paragraph{{\bf Clashes}}\label{??}
Some ill-formed terms are not redexes but they don't represent a desired result
for a computation either. They are called \emphdef{clashes} (meta-variable
$c$), and defined as follows: \[
\termapp{\ctxtapp{\ctxt{L}}{\termbang{t}}}{u} \qquad 
\termsubs{y}{\ctxtapp{\ctxt{L}}{\termabs{x}{u}}}{t} \qquad 
\termder{(\ctxtapp{\ctxt{L}}{\termabs{x}{u}})} \qquad
\termapp{t}{({\ctxtapp{\ctxt{L}}{\termabs{x}{u}}})}
\] Observe that in the three first kind of clashes, replacing $\termabs{x}{\!}$
by $\termbang{\!}$, and inversely, creates a (root) redex, namely
$\termapp{(\ctxtapp{\ctxt{L}}{\termabs{x}{t}})}{u}$,
$\termsubs{x}{\ctxtapp{\ctxt{L}}{\termbang{t}}}{t}$ and 
$\termder{(\ctxtapp{\ctxt{L}}{\termbang{t}})}$, respectively. In the fourth
kind of clash, however, this is not the case since
$\termapp{t}{({\ctxtapp{\ctxt{L}}{\termbang{u}}})}$ is not a redex in general.

A term is \emphdef{clash-free} if it does not reduce to a term containing a
clash, it is \emphdef{weakly clash-free}, written $\cfz$, if it does not reduce
to a term containing a clash outside the scope of any constructor $!$. In other
words, $t$ is not $\cfz$ if and only if there exist a weak context $\ctxt{W}$
and a clash $c$ such that $t \rewriten{\bangweak} \ctxtapp{\ctxt{W}}{c}$.

Weakly clash-free normal terms can be characterised as follows:
\begin{center}
\begin{tabular}{rrcll}
\textbf{(Neutral $\cfz$)}      & $\icfntrl$  & $\Coloneq$ & $x \in \TermVariable \mid \termapp{\icfntrl}{\icfnabs} \mid \termder{(\icfntrl)} \mid \termsubs{x}{\icfntrl}{\icfntrl}$ \\
\textbf{(Neutral-Abs $\cfz$)}  & $\icfnabs$  & $\Coloneq$ & $\termbang{t} \mid \icfntrl \mid \termsubs{x}{\icfntrl}{\icfnabs}$ \\ 
\textbf{(Neutral-Bang $\cfz$)} & $\icfnbang$ & $\Coloneq$ & $\icfntrl \mid \termabs{x}{\icfnrml} \mid \termsubs{x}{\icfntrl}{\icfnbang}$ \\
\textbf{(Normal $\cfz$)}       & $\icfnrml$  & $\Coloneq$ & $\icfnabs \mid \icfnbang$
\end{tabular}
\end{center}
Intuitively, $\icfnrml$ denotes $\ipnrml \cap \cfz$ (respectively for
$\icfntrl$, $\icfnabs$ and $\icfnbang$).

\begin{proposition}[{\bf Clash-free normal forms}]
\label{p:clashfree}
Let $t \in \TermExplicit$. Then $t$ is a weakly clash-free normal form iff
$\cfnrml{t}$.
\end{proposition}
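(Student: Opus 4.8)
The plan is to follow the pattern already used for Prop.~\ref{prop:normal}, proving simultaneously the four biconditionals below, of which the last is exactly the statement. Abbreviating by $\mathsf{wcfnf}(t)$ the conjunction ``$t \not\rewrite{\bangweak}$ and $t$ is $\cfz$'', I would establish:
\begin{enumerate}
  \item[(a)] $\cfntrl{t} \Leftrightarrow \mathsf{wcfnf}(t)$, $\neg\pabs{t}$ and $\neg\pbang{t}$;
  \item[(b)] $\cfnabs{t} \Leftrightarrow \mathsf{wcfnf}(t)$ and $\neg\pabs{t}$;
  \item[(c)] $\cfnbang{t} \Leftrightarrow \mathsf{wcfnf}(t)$ and $\neg\pbang{t}$;
  \item[(d)] $\cfnrml{t} \Leftrightarrow \mathsf{wcfnf}(t)$.
\end{enumerate}
The argument goes by structural induction on $t$, reading the four clash-free grammars as a single mutually recursive definition. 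Since each clash-free grammar is a sub-grammar of the corresponding normal one, I can invoke Prop.~\ref{prop:normal} to discharge the normality half of $\mathsf{wcfnf}$ and concentrate on the clash-free layer, using Rem.~\ref{r:normal} for the bookkeeping between $\ipabs$ and $\ipbang$.

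For $\Rightarrow$ I would inspect each production and check, for the term it builds, $\bangweak$-normality, weak clash freeness, and the stated $\neg\pabs$/$\neg\pbang$ side-conditions. Normality and the side-conditions are immediate from the sub-grammar inclusion and the \ih; the content is in weak clash freeness, which is where the shape of the productions is used. In the application production $\termapp{\icfntrl}{\icfnabs}$, for instance, the head lies in $\icfntrl$, so by \ih (a) it is not bang-headed, excluding the first kind of clash, while the argument lies in $\icfnabs$, which contains no abstraction-headed term, excluding the fourth kind; the immediate subterms are $\cfz$ by \ih, and any clash reachable by a weak context of $t$ must sit in one of them. The production deserving a separate word is $\termbang{t}$ in $\icfnabs$: weak contexts never enter a bang, so $\termbang{t}$ is vacuously $\cfz$ for arbitrary $t$, which is exactly why this production leaves $t$ unconstrained.

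For $\Leftarrow$ I would argue by case on the shape of $t$. Prop.~\ref{prop:normal} already places $t$ in the matching normal-form grammar, so it remains to see that weak clash freeness and the side-conditions force the matching clash-free production. Taking the application as representative: if $\mathsf{wcfnf}(t)$ with $t = \termapp{t_1}{t_2}$, then normality forbids a root $\dBeta$-redex and hence gives $\neg\pabs{t_1}$ while making $t_1,t_2$ normal; weak clash freeness forbids the first- and fourth-kind clashes, giving $\neg\pbang{t_1}$ and $\neg\pabs{t_2}$; and any weak-context clash inside $t_1$ or $t_2$ would be one inside $t$, so both are $\cfz$. The \ih via (a) then yields $\cfntrl{t_1}$ and via (b) yields $\cfnabs{t_2}$, whence $\cfntrl{t}$ and a fortiori $\cfnrml{t}$. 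The $\termder{\cdot}$, closure, abstraction and bang cases are handled in the same way, each invoking its own clash shape and the matching clause of the \ih.

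The step I expect to be the main obstacle is precisely this clash bookkeeping in the $\Leftarrow$ direction: for every syntactic form of $t$ one must read off, from the absence of a \emph{root} clash, exactly which $\neg\pabs$/$\neg\pbang$ constraints are imposed on the immediate subterms, and separately propagate weak clash freeness only to those subterms lying in weak positions. The asymmetry introduced by the bang --- that $\termbang{t}$ shields its content from every weak context --- is what makes $\icfnrml$ diverge from a naive component-wise intersection $\ipnrml \cap \cfz$, and it is the detail most apt to cause slips unless productions and side-conditions are kept carefully aligned.
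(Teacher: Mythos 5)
Your proposal is correct and matches the paper's intent exactly: the paper's proof of Prop.~\ref{p:clashfree} is literally ``Similar to Prop.~\ref{prop:normal}'', i.e.\ the same simultaneous four-statement induction you spell out, with weak clash freeness added to the characterising conjunction and the $\cfz$-grammars in place of the plain normal-form ones. Your detailed accounting of which root clash excludes which $\neg\pabs$/$\neg\pbang$ condition, and of the bang production shielding its body from weak contexts, is precisely the bookkeeping the paper leaves implicit.
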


\begin{proof}
Similar to Proposition~\ref{prop:normal}.
\end{proof}


\section{The Type System $\SysBang$}
\label{s:system}

This section introduces a first type system $\SysBang$ for our revisited
version of the Bang calculus, which extends the one in~\cite{GuerrieriM18} to
explicit substitutions.  We show in this paper that $\SysBang$ does not only
qualitatively characterise normalisation, but is also \emph{quantitative}, in
the sense that the length of the (weak) reduction of a typed term to its normal
form plus the size of this normal form is bounded by the size of its type
derivation. We also explore in Sec.~\ref{s:cbname-cbvalue} the properties of
this type system with respect to the CBN and CBV translations.

Given a countable infinite set $\TypeVariable$ of base types $\alpha,
\beta, \gamma, \ldots$, we define the following sets of types:
\begin{center}
\begin{tabular}{rrcll}
\textbf{(Types)}          & $\sigma, \tau$ & $\Coloneq$ & $\alpha \in \TypeVariable \mid \M \mid \functtype{\M}{\sigma}$ \\
\textbf{(Multiset Types)} & $\M$           & $\Coloneq$ & $\intertype{\sigma_i}{i \in I}$  where $I$ is a finite set
\end{tabular}
\end{center}

Multiset types will be indistinctly written 
  as $\M$ or  $\intertype{\sigma_i}{i \in I}$, in both cases
  they denote a finite multiset of types. 
The empty multiset type is denoted by $\emul$.  
Also, $|\M|$ denotes the size of the multiset, thus
if $\M = \intertype{\sigma_i}{i \in I}$ then $|\M| = \#(I)$.

Two multiset types
  are identified if they are related by the relation $\equiv$,
  defined by induction on types as follows:
  \begin{itemize}
    \item $\alpha \equiv \alpha$,
     \item $\functtype{\M_1}{\sigma_1} \equiv \functtype{\M_2}{\sigma_2} $ if
  $\M_1 \equiv \M_2$ and $\sigma_1 \equiv \sigma_2$,
     \item $\intertype{\sigma_i}{\iI} \equiv \intertype{\tau_j}{\jJ}$ if
  $|I| = |J|$ and there is a bijection function $f$ between $I$ and $J$
       such that $\sigma_i\equiv \tau_{f(i)}$ for all $\iI$.
\end{itemize} 

\emphdef{Typing contexts} (or just \emphdef{contexts}), written $\Gamma,
\Delta$, are functions from variables to multiset types, assigning the empty
multiset to all but a finite set of variables. The support of $\Gamma$ is given
by $\dom{\Gamma} \eqdef \set{x \mid \Gamma(x) \neq \intertype{}{}}$. The
\emphdef{empty context} is the context with an empty support. The
\emphdef{union of contexts}, written $\ctxtsum{\Gamma}{\Delta}{}$, is defined
by $(\ctxtsum{\Gamma}{\Delta}{})(x) \eqdef \Gamma(x) \sqcup \Delta(x)$, where
$\sqcup$ denotes multiset union. An example is
$\ctxtsum{(\assign{x}{\intertype{\sigma}{}},
\assign{y}{\intertype{\tau}{}})}{(\assign{x}{\intertype{\sigma}{}},
\assign{z}{\intertype{\tau}{}})}{} = (\assign{x}{\intertype{\sigma, \sigma}{}},
\assign{y}{\intertype{\tau}{}}, \assign{z}{\intertype{\tau}{}})$. This notion
is  extended to several contexts as expected, so that $\ctxtsum{}{}{i \in I}
\Gamma_i$ denotes a finite union of contexts (when $I = \emptyset$ the notation
is to be understood as the empty context). We write $\ctxtres{\Gamma}{x}{}$ for
the context $(\ctxtres{\Gamma}{x}{})(x) = \emul$ and
$(\ctxtres{\Gamma}{x}{})(y) = \Gamma(y)$ if $y \neq x$. Contexts can be
compared as follows $\Gamma \ctxtleq \Delta$ iff $\Gamma(x) \sqsubseteq
\Delta(x)$ for every variable $x$, where $\sqsubseteq$ is multiset inclusion.

\emphdef{Type judgements} have the form $\sequ{\Gamma}{\assign{t}{\sigma}}$,
where $\Gamma$ is a typing context, $t$ is a term and $\sigma$ is a type.
The type system $\SysBang$ for the $\BangRev$-calculus is given in
Figure~\ref{fig:typingSchemesBang}.

\begin{figure}
\centering $
\begin{array}{c}
\Rule{\vphantom{\Gamma}}
     {\sequ{\assign{x}{\intertype{\sigma}{}}}{\assign{x}{\sigma}}}
     {\ruleBAxiom}
\qquad
\Rule{\sequ{\Gamma}{\assign{t}{\sigma}}
      \quad
      \sequ{\Delta}{\assign{u}{\Gamma(x)}}
     }
     {\sequ{\ctxtsum{(\ctxtres{\Gamma}{x}{})}{\Delta}{}}{\assign{\termsubs{x}{u}{t}}{\sigma}}}
     {\ruleBESubs}
     \qquad
\Rule{\sequ{\Gamma}{\assign{t}{\tau}}}
     {\sequ{\ctxtres{\Gamma}{x}{}}{\assign{\termabs{x}{t}}{\functtype{\Gamma(x)}{\tau}}}}
     {\ruleBArrowI} \\ \\
\Rule{\sequ{\Gamma}{\assign{t}{\functtype{\M}{\tau}}}
      \quad
      \sequ{\Delta}{\assign{u}{\M}}
     }
     {\sequ{\ctxtsum{\Gamma}{\Delta}{}}{\assign{\termapp{t}{u}}{\tau}}}
     {\ruleBArrowE}
\qquad 
\Rule{\many{\sequ{\Gamma_i}{\assign{t}{\sigma_i}}}{i \in I}}
     {\sequ{\ctxtsum{}{\Gamma_i}{i \in I}}{\assign{\termbang{t}}{\intertype{\sigma_i}{i \in I}}}}
     {\ruleBBang}
\qquad
\Rule{\sequ{\Gamma}{\assign{t}{\intertype{\sigma}{}}}}
     {\sequ{\Gamma}{\assign{\termder{t}}{\sigma}}}
     {\ruleBDer}
\end{array} $
\caption{System $\SysBang$ for the $\BangRev$-calculus.}
\label{fig:typingSchemesBang}
\end{figure}

The axiom $\ruleBAxiom$ is relevant (there is no weakening) and the rules
$\ruleBArrowE$ and $\ruleBESubs$ are  multiplicative. Note that the argument of
a bang is typed $\#(I)$ times by the premises of rule $\ruleBBang$. A
particular case is when $I = \emptyset$: the subterm $t$ occurring in the typed
term $\termbang{t}$ turns out not to be typed (we often say that $t$ is \emph{untyped} in this case). 

A \emphdef{(type) derivation} is a tree obtained by applying the (inductive)
typing rules of system $\SysBang$. The notation
$\derivable{}{\sequ{\Gamma}{\assign{t}{\sigma}}}{\SysBang}$ means there is a
derivation of the judgement $\sequ{\Gamma}{\assign{t}{\sigma}}$ in system
$\SysBang$. The term $t$ is \emphdef{typable} in system $\SysBang$, or
$\SysBang$-typable, iff there are $\Gamma$ and $\sigma$ such that
$\derivable{}{\sequ{\Gamma}{\assign{t}{\sigma}}}{\SysBang}$. We use the capital
Greek letters $\Phi, \Psi, \ldots$ to name type derivations, by writing for
example $\derivable{\Phi}{\sequ{\Gamma}{\assign{t}{\sigma}}}{\SysBang}$.
The \emphdef{size of the derivation} $\Phi$, denoted by $\size{\Phi}$, is
defined as  the number of rules  in the type derivation $\Phi$ except rule $\ruleBBang$, which
does not count. Note in particular that,  given a derivation $\Phi_t$ for a term
$t$,  we always have $\size{\Phi_t} \geq \wsize{t}$, as $\wsize{\_}$
does not count in turn subterms prefixed by a bang constructor.

\begin{example}
\label{example:t0-typed}
The following tree $\Phi_0$ is a type derivation for term $t_0$ of
Example~\ref{example:t0}.
\begin{center}
{\small$
\Rule{
  \Rule{
    \Rule{
      \Rule{
        \Rule{
          \Rule{
            \Rule{}{
              \sequB{\assign{x}{\multiset{\functtype{\multiset{\tau}}{\tau}}}}{\assign{x}{\functtype{\multiset{\tau}}{\tau}}}
            }{\ruleBAxiom}
          }{
            \sequB{\assign{x}{\multiset{\functtype{\multiset{\tau}}{\tau}}}}{\assign{\termabs{y}{x}}{\functtype{\emul}{\functtype{\multiset{\tau}}{\tau}}}}
          }{\ruleBArrowI}
        }{
          \sequB{}{\assign{\termabs{x}{\termabs{y}{x}}}{\functtype{\multiset{\functtype{\multiset{\tau}}{\tau}}}{\functtype{\emul}{\functtype{\multiset{\tau}}{\tau}}}}}
        }{\ruleBArrowI}
      }{
        \sequB{}{\assign{\termbang{\Kterm}}{\multiset{\functtype{\multiset{\functtype{\multiset{\tau}}{\tau}}}{\functtype{\emul}{\functtype{\multiset{\tau}}{\tau}}}}}}
      }{\ruleBBang}
    }{
      \sequB{}{\assign{\termder{(\termbang{\Kterm})}}{\functtype{\multiset{\functtype{\multiset{\tau}}{\tau}}}{\functtype{\emul}{\functtype{\multiset{\tau}}{\tau}}}}}
    }{\ruleBDer}
    \ 
    \Rule{
      \Rule{
        \Rule{}{
          \sequB{\assign{x}{\multiset{\tau}}}{\assign{x}{\tau}}
        }{\ruleBAxiom}
      }{
        \sequB{}{\assign{\termabs{x}{x}}{\functtype{\multiset{\tau}}{\tau}}}
      }{\ruleBArrowI}
    }{
      \sequB{}{\assign{\termbang{\id}}{\multiset{\functtype{\multiset{\tau}}{\tau}}}}
    }{\ruleBBang}
  }{
    \sequB{}{\assign{\termapp{\termder{(\termbang{\Kterm})}}{(\termbang{\id})}}{\functtype{\emul}{\functtype{\multiset{\tau}}{\tau}}}}
  }{\ruleBArrowE}
  \Rule{}{
    \sequB{}{\assign{\termbang{\Omega}}{\emul}}
  }{\ruleBBang}
}{
  \sequB{}{\assign{\termapp{\termapp{\termder{(\termbang{\Kterm})}}{(\termbang{\id})}}{(\termbang{\Omega})}}{\functtype{\multiset{\tau}}{\tau}}}
}{\ruleBArrowE}
$}
\end{center}
Note that $\size{\Phi_0} = 8 \geq 3 = \wsize{t_0}$, the normal form of $t_0$ is $\id$ with size
$1$, and $t_0$ reduces to $\id$ in $5$ steps. We will see in
Theorem~\ref{t:correctness-bang} that the size of a derivation
$\derivable{\Phi}{\sequ{\Gamma}{\assign{t}{\sigma}}}{\SysBang}$
is always an upper bound of the $\bangweak$-size of the $\bangweak$-normal form
of $t$ plus the length of the reduction of $t$ to its $\bangweak$-normal form.
\end{example}

\bigskip
The typability of a term may provide additional information about the
neutrality/normality of its subterms:

\begin{lemma}\mbox{}
  \label{l:clashfree-bang}
  Let $u\in \TermExplicit$:
\begin{enumerate}
  \item If $\pnabs{t}$ and $\termapp{t}{u}$ is $\SysBang$-typable, then $\pntrl{t}$.
  \item If $\pnbang{t}$ and $\termsubs{x}{t}{u}$ is $\SysBang$-typable, then $\pntrl{t}$.
  \item If $\pnbang{t}$ and $\termder{t}$ is $\SysBang$-typable, then $\pntrl{t}$.
  \item If $\pnbang{t}$ and $\termapp{u}{t}$ is $\SysBang$-typable, then $\pntrl{t}$.
  \item If $\pnrml{t}$ and $\termapp{u}{t}$ is $\SysBang$-typable, then $\pnabs{t}$.
\end{enumerate}
\end{lemma}

\begin{proof}
Straightforward case analysis using the characterisation of $\bangweak$-normal forms in
Proposition~\ref{prop:normal}.
\end{proof}

The quantitative aspect of system $\SysBang$ is materialised in the following
weighted subject reduction (WSR) and expansion (WSE) properties. As usual, a
substitution lemma must be proved.

\begin{lemma}[Substitution]
\label{l:substitution-bang}
If
$\derivable{\Phi_{t}}{\sequB{\Gamma; \assign{x}{\intertype{\sigma_i}{i \in I}}}{\assign{t}{\tau}}}{\SysBang}$
and
$\many{\derivable{\Phi^{i}_{u}}{\sequB{\Delta_i}{\assign{u}{\sigma_i}}}{\SysBang}}{i \in I}$,
then there exists
$\derivable{\Phi_{\substitute{x}{u}{t}}}{\sequB{\ctxtsum{\Gamma}{\Delta_i}{i \in I}}{\assign{\substitute{x}{u}{t}}{\tau}}}{\SysBang}$
such that $\size{\Phi_{\substitute{x}{u}{t}}} = \size{\Phi_{t}}  +_{i \in I}{\size{\Phi^i_{u}}} - |I|$.
\end{lemma}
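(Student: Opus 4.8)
The plan is to proceed by induction on the type derivation $\Phi_t$, performing a case analysis on its last rule, which is determined by the shape of $t$. The central observation driving both the construction and the size computation is the \emph{relevance} of system $\SysBang$: since $\ruleBAxiom$ is the only rule typing a variable and carries no weakening, the multiset $\intertype{\sigma_i}{i \in I}$ assigned to $x$ records exactly the family of axiom leaves of $\Phi_t$ typing an occurrence of $x$, one leaf per index $i \in I$. Substituting $u$ for $x$ then amounts to grafting the derivation $\Phi_u^i$ in place of the $i$-th such leaf.

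First I would treat the base cases. If $t = x$, the last rule is $\ruleBAxiom$, which forces $\Gamma$ to be empty and $I$ to be a singleton $\set{i_0}$ with $\sigma_{i_0} = \tau$; then $\substitute{x}{u}{t} = u$ and I take $\Phi_{\substitute{x}{u}{t}} \coloneq \Phi_u^{i_0}$, whose context is exactly $\Delta_{i_0} = \ctxtsum{\Gamma}{\Delta_{i_0}}{}$. The size equation reads $\size{\Phi_u^{i_0}} = 1 + \size{\Phi_u^{i_0}} - 1$, which holds. If $t = y \neq x$, then $x$ receives the empty multiset, so $I = \emptyset$, the term-level substitution is trivial on the typed portion, and $\Phi_{\substitute{x}{u}{t}} \coloneq \Phi_t$ works, the equation degenerating to $\size{\Phi_t} = \size{\Phi_t}$.

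For the inductive step I would handle each remaining rule: $\ruleBArrowI$ and $\ruleBDer$ have a single premise, while $\ruleBArrowE$, $\ruleBESubs$ and $\ruleBBang$ have several. The key manipulation for a multi-premise rule is to split the index set according to how the conclusion's multiset for $x$ arises as a multiset union of the premises' multisets for $x$: if premise $k$ assigns $x$ the multiset $\intertype{\sigma_i}{i \in I_k}$ with $I = \biguplus_k I_k$, I apply the induction hypothesis to premise $k$ together with the subfamily $\set{\Phi_u^i}_{i \in I_k}$, and then reassemble the conclusion with the same rule. For the single-premise rules the induction hypothesis applies directly on the full family $\set{\Phi_u^i}_{i \in I}$, taking care of the $x$-binding side conditions in $\ruleBArrowI$ and $\ruleBESubs$, which are handled by $\alpha$-conversion so that the bound variable differs from $x$ and avoids $\fv{u}$.

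The size bookkeeping then falls out uniformly: a non-$\ruleBBang$ rule contributes $1$ to the size of both source and target derivation, so these contributions cancel, while $\ruleBBang$ contributes $0$ to each. Summing the induction hypotheses over the premises, $\sum_k \big(\sum_{i \in I_k} \size{\Phi_u^i} - |I_k|\big) = \sum_{i \in I} \size{\Phi_u^i} - |I|$ by additivity over the partition, which yields exactly the claimed equation $\size{\Phi_{\substitute{x}{u}{t}}} = \size{\Phi_t} +_{i \in I} \size{\Phi_u^i} - |I|$. I expect the main obstacle to be purely combinatorial: keeping the partition of $I$, the splitting of the contexts $\Gamma$ and of the family $\set{\Delta_i}_{i \in I}$, and the distribution of the derivations $\set{\Phi_u^i}$ mutually consistent across the multi-premise rules. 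Once a coherent partition is fixed, the context equation $\ctxtsum{\Gamma}{\Delta_i}{i \in I}$ and the size equation are both additive and compose without difficulty.
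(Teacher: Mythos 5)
Your proposal is correct and follows exactly the route the paper intends: the paper's own proof is stated as a straightforward induction on $\Phi_t$, and your case analysis on the last rule, the grafting of the derivations $\Phi_u^i$ at the axiom leaves for $x$, the partition of $I$ across multi-premise rules, and the size bookkeeping (non-$\ruleBBang$ rules contributing $1$ to both sides, $\ruleBBang$ contributing $0$) fill in precisely the details that the paper leaves implicit. The size equation checks out in all cases, including the axiom base case where $\size{\Phi_t}=1$ cancels against $|I|=1$.
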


\begin{proof}
By induction on $\Phi_t$. If $\Phi_t$ is $\ruleBAxiom$ and $t = x$, then
$\substitute{x}{u}{t} = u$ and $\Phi_t$ is of the form
$\sequ{\assign{x}{\intertype{\sigma}{}}}{\assign{x}{\sigma}}$, so that $\Gamma = \emptyset$ and $I = \{ i_0\} $ and $\tau = \sigma$. We let
$\Phi_{\substitute{x}{u}{t}} = \Phi^{i_0}_u$. We conclude since $\size{\Phi_t} =
1$ and $|I|=1$. If $\Phi_t$ is $\ruleBAxiom$ and $t = y \neq x$, then
$\substitute{x}{u}{t} = y$, $\Gamma = \assign{y}{\intertype{\tau}{}}$,
and $I= \emptyset$. We let
$\Phi_{\substitute{x}{u}{t}} = \Phi_y$. We conclude since $|I|=0$.

If $\Phi_t$ ends with $\ruleBArrowE$, then $t=t_1t_2$, $\Gamma=\Gamma_1+\Gamma_2$
  and there exist a type $\M$ and two derivations $\derivable{\Phi_{t_1}}{\sequ{\Gamma_1; \assign{x}{\intertype{\sigma_i}{\iI_1}}}{\assign{t_1}{\M\rightarrow \tau}}}{\SysBang}$ and
  $\derivable{\Phi_{t_2}}{\sequ{\Gamma_2;  \assign{x}{\intertype{\sigma_i}{\iI_2}}}{\assign{t_2}{\M}}}{\SysBang}$ such that
  $I = I_1 \uplus I_2$.
  Using the \ih on $\Phi_{t_i}$ and $(\Phi^j_{u})_{j \in I_i}$, for $i=1,2$, we get two derivations
  $\derivable{\Phi_{\substitute{x}{u}{t_1}}}{\sequ{\ctxtsum{\Gamma_1}{\Delta_i}{\iI_1}}{\assign{\substitute{x}{u}{t_1}}{\M\rightarrow \tau}}}{\SysBang}$ and
  $\derivable{\Phi_{\substitute{x}{u}{t_2}}}{\sequ{\ctxtsum{\Gamma_2}{\Delta_i}{\iI_2}}{\assign{\substitute{x}{u}{t_2}}{\M}}}{\SysBang}$ such that
  $\size{\Phi_{\substitute{x}{u}{t_i}}} =\size{\Phi_{t_i}} +_{j \in I_i} \size{\Phi^j_{u}} - |I_i|$, for  $i=1,2$. By observing that $\substitute x u t=\termapp{\substitute x u {t_1}}{\substitute x u {t_2}}$
  and by using $\ruleBArrowE$, we get a derivation $\derivable{\Phi_{\substitute{x}{u}{t}}}{\sequV{\ctxtsum{\Gamma}{\Delta_i}{\iI}}{\assign{\substitute{x}{u}{t}}{\tau}}}{\SysBang}$ such that
  $\size{\Phi_{\substitute{x}{u}{t}}} = (\size{\Phi_{t_1}} +_{j \in I_1} \size{\Phi^j_{u}} - |I_1|)+(\size{\Phi_{t_2}} +_{j \in I_2} \size{\Phi^j_{u}} - |I_2|)+1=
  (\size{\Phi_{t_1}} + \size{\Phi_{t_2}} +1)+_{\iI} \size{\Phi^i_{u}} -(|I_1|)
  +|I_2|)=  
  \size{\Phi_{t}} +_{\iI} \size{\Phi^i_{u}} - |I|$.

  All the other cases proceed similarly by the \ih 
\end{proof}

\begin{lemma}[Weighted Subject Reduction]
\label{l:wsr-bang}
Let $\derivable{\Phi}{\sequB{\Gamma}{\assign{t}{\tau}}}{\SysBang}$. If
$t \rewrite{\bangweak} t'$, then there is
$\derivable{\Phi'}{\sequB{\Gamma}{\assign{t'}{\tau}}}{\SysBang}$
such that $\size{\Phi} > \size{\Phi'}$.
\end{lemma}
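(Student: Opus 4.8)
The plan is to prove Weighted Subject Reduction by induction on the weak context $\ctxt{W}$ witnessing the reduction $t \rewrite{\bangweak} t'$. Since $\rewrite{\bangweak}$ is the contextual closure (by weak contexts) of the three root rules $\rrule{\dBeta}$, $\rrule{\sBang}$, $\rrule{\dBang}$, the induction naturally splits into the base case (the reduction occurs at the root, $\ctxt{W} = \Box$) and the inductive cases (the redex lies strictly inside one of the constructors $\termapp{\cdot}{\cdot}$, $\termabs{x}{\cdot}$, $\termder{\cdot}$, or $\termsubs{x}{\cdot}{\cdot}$). In each inductive case I would invert the last typing rule of $\Phi$ (which is forced by the shape of $t$), apply the \ih to the immediate subderivation on the reduced subterm, and reassemble the same rule; since the size of a derivation is additive over premises and the rule $\ruleBBang$ is not counted, the strict decrease $\size{\Phi} > \size{\Phi'}$ propagates from the subderivation. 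The one subtlety among the contextual cases is reduction under a bang: but this cannot occur, precisely because weak contexts $\ctxt{W}$ forbid the hole under $\termbang{\cdot}$, so there is no such case to handle.

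The heart of the proof is the base case, one subcase per root rule. For $\dBeta$, i.e.\ $\termapp{\ctxtapp{\ctxt{L}}{\termabs{x}{t}}}{u} \rrule{\dBeta} \ctxtapp{\ctxt{L}}{\termsubs{x}{u}{t}}$, I would first peel off the list context $\ctxt{L}$ of explicit substitutions (handled by an auxiliary induction on $\ctxt{L}$, commuting the $\ruleBESubs$ rules past the application), reducing to the case $\ctxt{L} = \Box$. There, inversion of $\ruleBArrowE$ gives $\sequ{\Gamma}{\assign{\termabs{x}{t}}{\functtype{\M}{\tau}}}$ and $\sequ{\Delta}{\assign{u}{\M}}$, and a further inversion of $\ruleBArrowI$ exposes $\sequ{\Gamma;\assign{x}{\M}}{\assign{t}{\tau}}$. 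Applying $\ruleBESubs$ to these two premises rebuilds a derivation of $\termsubs{x}{u}{t}$ with the same type and context. The size bookkeeping is where the strict inequality is earned: the redex derivation uses one $\ruleBArrowE$ and one $\ruleBArrowI$ (both counted), whereas the contractum uses only one $\ruleBESubs$, so one counted rule is eliminated and $\size{\Phi} > \size{\Phi'}$ follows.

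For $\sBang$, i.e.\ $\termsubs{x}{\ctxtapp{\ctxt{L}}{\termbang{u}}}{t} \rrule{\sBang} \ctxtapp{\ctxt{L}}{\substitute{x}{u}{t}}$, I would again reduce to $\ctxt{L} = \Box$, invert $\ruleBESubs$ to obtain $\sequ{\Gamma;\assign{x}{\intertype{\sigma_i}{i\in I}}}{\assign{t}{\tau}}$ together with $\sequ{\Delta}{\assign{\termbang{u}}{\intertype{\sigma_i}{i\in I}}}$, then invert $\ruleBBang$ to get the family $\many{\sequ{\Delta_i}{\assign{u}{\sigma_i}}}{i\in I}$. The Substitution Lemma (Lem.~\ref{l:substitution-bang}) then yields a derivation of $\substitute{x}{u}{t}$ with $\size{\Phi_{\substitute{x}{u}{t}}} = \size{\Phi_t} +_{i\in I}\size{\Phi^i_u} - |I|$; since the eliminated $\ruleBESubs$ rule is counted, this gives the strict decrease. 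The $\dBang$ case, $\termder{(\ctxtapp{\ctxt{L}}{\termbang{t}})} \rrule{\dBang} \ctxtapp{\ctxt{L}}{t}$, is the simplest: after peeling $\ctxt{L}$, invert $\ruleBDer$ and then $\ruleBBang$ (necessarily with a singleton $I$, since $\termder{\cdot}$ demands a multiset $\intertype{\sigma}{}$) to recover directly a derivation of $t$, losing the counted $\ruleBDer$ rule.

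I expect the main obstacle to be the interaction between the list context $\ctxt{L}$ and the typing derivation in the two multiplicative rules: one must check that the explicit substitutions in $\ctxt{L}$ can be commuted across the surrounding constructor while exactly preserving the typing context and the size count, and that $\alpha$-conversion keeps variable captures at bay. This is routine but tedious, and is best isolated as a small commutation lemma for $\ctxt{L}$ proved by induction on its length; once that is in place, each base subcase is a direct inversion-plus-reconstruction with a transparent size accounting.
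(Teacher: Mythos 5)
Your proposal follows the paper's proof essentially step for step: the same outer induction on the weak context $\ctxt{W}$, the same inner induction on the list context $\ctxt{L}$ in each root case (which the paper carries out inline, where you would factor it into a small commutation lemma), inversion-and-reconstruction for the $\dBeta$ and $\dBang$ cases, the Substitution Lemma for $\sBang$, and the same size accounting giving the strict decrease. The approach and all key steps coincide, so there is nothing to flag.
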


\begin{proof}
By induction on $t \rewrite{\bangweak} t'$. 
\begin{itemize}
\item For the base cases we
    have to consider three rules:
  \begin{itemize}
    \item Rule $\dBeta$. Then $t = \termapp{\ctxtapp{\ctxt{L}}{\termabs{x}{s}}}{u}$
    and $t' = \ctxtapp{\ctxt{L}}{\termsubs{x}{u}{s}}$. We proceed by induction
    on $\ctxt{L}$.
    \begin{itemize}
      \item $\ctxt{L} = \Box$. Then $\Gamma = \ctxtsum{(\ctxtres{\Gamma'}{x}{})}{\Delta}{}$
      s.t. $\Gamma'(x) = \M$ and $$
\Rule{
  \Rule{
    \derivable{\Phi_{s}}{\sequB{\Gamma'}{\assign{s}{\tau}}}{\SysBang}
  }{
    \sequB{\ctxtres{\Gamma'}{x}{}}{\assign{\termabs{x}{s}}{\functtype{\M}{\tau}}}
  }{\ruleBArrowI}
  \quad
  \derivable{\Phi_{u}}{\sequB{\Delta}{\assign{u}{\M}}}{\SysBang}
}{
  \derivable{\Phi}{\sequB{\Gamma}{\assign{\termapp{(\termabs{x}{s})}{u}}{\tau}}}{\SysBang}
}{\ruleBArrowE}
      $$ and we conclude with $$
\Rule{
  \derivable{\Phi_{s}}{\sequB{\Gamma'}{\assign{s}{\tau}}}{\SysBang}
  \quad
  \derivable{\Phi_{u}}{\sequB{\Delta}{\assign{u}{\M}}}{\SysBang}
}{
  \derivable{\Phi'}{\sequB{\Gamma}{\assign{\termsubs{x}{u}{s}}{\tau}}}{\SysBang}
}{\ruleBESubs} $$
     Note that $\size{\Phi} = \size{\Phi'} + 1$.
 
      \item $\ctxt{L} = \termsubs{y}{r}{\ctxt{L'}}$. Then $\Gamma =
      \ctxtsum{(\ctxtsum{(\ctxtres{\Gamma'}{y}{})}{\Delta'}{})}{\Delta}{}$
      s.t. $(\ctxtsum{(\ctxtres{\Gamma'}{y}{})}{\Delta'}{})(x) = \M$ and
      $\Gamma'(y) = \M'$ and
      $$\kern-5em
\prooftree
  \Rule{
    \derivable{\Phi_{\ctxt{L'}}}{\sequB{\Gamma'}{\assign{\ctxtapp{\ctxt{L'}}{\termabs{x}{s}}}{\functtype{\M}{\tau}}}}{\SysBang}
    \quad
    \derivable{\Phi_{r}}{\sequB{\Delta'}{\assign{r}{\M'}}}{\SysBang}
  }{
    \sequB{\ctxtsum{(\ctxtres{\Gamma'}{y}{})}{\Delta'}{}}{\assign{\ctxtapp{\ctxt{L}}{\termabs{x}{s}}}{\functtype{\M}{\tau}}}
  }{\ruleBESubs}
  \quad
  \derivable{\Phi_{u}}{\sequB{\Delta}{\assign{u}{\M}}}{\SysBang}
\justifies
  \derivable{\Phi}{\sequB{\Gamma}{\assign{\termapp{\ctxtapp{\ctxt{L}}{\termabs{x}{s}}}{u}}{\tau}}}{\SysBang}
\using
  \ruleBArrowE
\endprooftree
      $$ Thus, we build $$
\Rule{
  \derivable{\Phi_{\ctxt{L'}}}{\sequB{\Gamma'}{\assign{\ctxtapp{\ctxt{L'}}{\termabs{x}{s}}}{\functtype{\M}{\tau}}}}{\SysBang}
  \quad
  \derivable{\Phi_{u}}{\sequB{\Delta}{\assign{u}{\M}}}{\SysBang}
}{
  \derivable{\Psi}{\sequB{\ctxtsum{\Gamma'}{\Delta}{}}{\assign{\termapp{\ctxtapp{\ctxt{L'}}{\termabs{x}{s}}}{u}}{\tau}}}{\SysBang}
}{\ruleBArrowE}
      $$ and by \ih there exists
$\derivable{\Psi'}{\sequB{\ctxtsum{\Gamma'}{\Delta}{}}{\assign{\ctxtapp{\ctxt{L'}}{\termsubs{x}{u}{s}}}{\tau}}}{\SysBang}$
such that $\size{\Psi} > \size{\Psi'}$. Then, we conclude with $$
\Rule{
  \derivable{\Psi'}{\sequB{\ctxtsum{\Gamma'}{\Delta}{}}{\assign{\ctxtapp{\ctxt{L'}}{\termsubs{x}{u}{s}}}{\tau}}}{\SysBang}
  \quad
  \derivable{\Phi_{r}}{\sequB{\Delta'}{\assign{r}{\M'}}}{\SysBang} }{
  \derivable{\Phi'}{\sequB{\ctxtsum{(\ctxtsum{(\ctxtres{\Gamma'}{y}{})}{\Delta'}{})}{\Delta}{}}{\assign{\ctxtapp{\ctxt{L}}{\termsubs{x}{u}{s}}}{\tau}}}{\SysBang}
}{\ruleBESubs} $$ since we may assume that $y \notin
\dom{\Delta}$. Notice that $\size{\Phi} = \size{\Psi} +
\size{\Phi_{r}} +1 >\size{\Psi'} + \size{\Phi_{r}}+1
= \size{\Phi'}$.
    \end{itemize}
    
    \item Rule $\sBang$. Then $t = \termsubs{x}{\ctxtapp{\ctxt{L}}{\termbang{u}}}{s}$
    and $t' = \ctxtapp{\ctxt{L}}{\substitute{x}{u}{s}}$. We proceed by induction
    on $\ctxt{L}$.
    \begin{itemize}
      \item $\ctxt{L} = \Box$. Then $\Gamma = \ctxtsum{(\ctxtres{\Gamma'}{x}{})}{\Delta_i}{i \in I}$
      s.t. $\Gamma'(x) = \intertype{\sigma}{i \in I}$ and $$
\prooftree
  \derivable{\Phi_{s}}{\sequB{\Gamma'}{\assign{s}{\tau}}}{\SysBang}
  \quad
  \Rule{
    \many{\derivable{\Phi^{i}_{u}}{\sequB{\Delta_i}{\assign{u}{\sigma_i}}}{\SysBang}}{i \in I}
  }{
    \sequB{\ctxtsum{}{\Delta_i}{i \in I}}{\assign{\termbang{u}}{\intertype{\sigma}{i \in I}}}
  }{\ruleBBang}
\justifies
  \derivable{\Phi}{\sequB{\Gamma}{\assign{\termsubs{x}{\termbang{u}}{s}}{\tau}}}{\SysBang}
\using
  \ruleBESubs
\endprooftree $$
      Thus, we conclude directly by Lemma~\ref{l:substitution-bang} with $\Phi_{s}$ and
      $\many{\Phi^{i}_{u}}{i \in I}$. Notice that $\size{\Phi} = 1+ \size{\Phi_s} 
      +_{i \in I}{\size{\Phi^i_u}}$, while $\size{\Phi'} = \size{\Phi_s}
      +_{i \in I}{\size{\Phi^i_u}} - |I|$.
      
      \item $\ctxt{L} = \termsubs{y}{r}{\ctxt{L'}}$. Then $\Gamma =
      \ctxtsum{(\ctxtres{\Gamma'}{x}{})}{\ctxtsum{(\ctxtres{\Delta}{y}{})}{\Delta'}{}}{}$
      with $\Gamma'(x) = \M$, $\Delta(y) = \M'$ and $$
\prooftree
  \derivable{\Phi_{s}}{\sequB{\Gamma'}{\assign{s}{\tau}}}{\SysBang}
  \quad
  \Rule{
    \derivable{\Phi_{\ctxt{L'}}}{\sequB{\Delta}{\assign{\ctxtapp{\ctxt{L'}}{\termbang{u}}}{\M}}}{\SysBang}
    \quad
    \derivable{\Phi_{r}}{\sequB{\Delta'}{\assign{r}{\M'}}}{\SysBang}
  }{
  \sequB{\ctxtsum{(\ctxtres{\Delta}{y}{})}{\Delta'}{}}{\assign{\ctxtapp{\termsubs{y}{r}{\ctxt{L'}}}{\termbang{u}}}{\M}}
  }{\ruleBESubs}
\justifies
  \derivable{\Phi}{\sequB{\Gamma}{\assign{\termsubs{x}{
  \ctxtapp{\ctxt{L}}{\termbang{u}}
  }{s}}{\tau}}}{\SysBang}
\using
  \ruleBESubs
\endprooftree
      $$ Thus, we build $$
\Rule{
  \derivable{\Phi_{s}}{\sequB{\Gamma'}{\assign{s}{\tau}}}{\SysBang}
  \quad
  \derivable{\Phi_{\ctxt{L'}}}{\sequB{\Delta}{\assign{\ctxtapp{\ctxt{L'}}{\termbang{u}}}{\M}}}{\SysBang}
}{
  \derivable{\Psi}{\sequB{\ctxtsum{(\ctxtres{\Gamma'}{x}{})}{\Delta}{}}{\assign{\termsubs{x}{
  \ctxtapp{\ctxt{L'}}{\termbang{u}}
  }{s}}{\tau}}}{\SysBang}
}{\ruleBESubs}
      $$ and by the \ih there exists
      $\derivable{\Psi'}{\sequB{\ctxtsum{(\ctxtres{\Gamma'}{x}{})}{\Delta}{}}{\assign{\ctxtapp{\ctxt{L'}}{\substitute{x}{u}{s}}}{\tau}}}{\SysBang}$
      such that $\size{\Psi} > \size{\Psi'}$. Then, we conclude with $$
\Rule{
  \derivable{\Psi'}{\sequB{\ctxtsum{(\ctxtres{\Gamma'}{x}{})}{\Delta}{}}{\assign{\ctxtapp{\ctxt{L'}}{\substitute{x}{u}{s}}}{\tau}}}{\SysBang}
  \quad
  \derivable{\Phi_{r}}{\sequB{\Delta'}{\assign{r}{\M'}}}{\SysBang}
}{
  \derivable{\Phi'}{\sequB{\ctxtsum{(\ctxtres{\Gamma'}{x}{})}{\ctxtsum{(\ctxtres{\Delta}{y}{})}{\Delta'}{}}{}}{\assign{\ctxtapp{\ctxt{L}}{\substitute{x}{u}{s}}}{\tau}}}{\SysBang}
}{\ruleBESubs} $$
      since we may assume that $y \notin \dom{\Gamma'}$. Notice that
      $\size{\Phi} = \size{\Psi} + \size{\Phi_{r}} +1 > \size{\Psi'} +
      \size{\Phi_{r}} +1  = \size{\Phi'}$.
    \end{itemize}
    
    \item Rule $\dBang$. Then $t = \termder{\ctxtapp{\ctxt{L}}{\termbang{s}}}$
    and $t' = \ctxtapp{\ctxt{L}}{s}$. We proceed by induction on $\ctxt{L}$.
    \begin{itemize}
      \item $\ctxt{L} = \Box$. This case is immediate since $$
\prooftree
  \Rule{
    \derivable{\Phi'}{\sequB{\Gamma}{\assign{s}{\tau}}}{\SysBang}
  }{
    \sequB{\Gamma}{\assign{\termbang{s}}{\intertype{\tau}{}}}
  }{\ruleBBang}
\justifies
  \derivable{\Phi}{\sequB{\Gamma}{\assign{\termder{\termbang{s}}}{\tau}}}{\SysBang}
\using
  \ruleBDer
\endprooftree $$
      \item $\ctxt{L} = \termsubs{y}{r}{\ctxt{L'}}$. Then $\Gamma =
      \ctxtsum{(\ctxtres{\Gamma'}{x}{})}{\Delta}{}$ with $\Gamma'(x) = \M$ and $$
\prooftree
  \Rule{
    \derivable{\Phi_{\ctxt{L'}}}{\sequB{\Gamma'}{\assign{\ctxtapp{\ctxt{L'}}{\termbang{s}}}{\intertype{\tau}{}}}}{\SysBang}
    \quad
    \derivable{\Phi_{r}}{\sequB{\Delta}{\assign{r}{\M}}}{\SysBang}
  }{
    \sequB{\Gamma}{\assign{\ctxtapp{\ctxt{L}}{\termbang{s}}}{\intertype{\tau}{}}}
  }{\ruleBESubs}
\justifies
  \derivable{\Phi}{\sequB{\Gamma}{\assign{\termder{\ctxtapp{\ctxt{L}}{\termbang{s}}}}{\tau}}}{\SysBang}
\using
  \ruleBDer
\endprooftree
      $$ Thus, we build $$
\Rule{
  \derivable{\Phi_{\ctxt{L'}}}{\sequB{\Gamma'}{\assign{\ctxtapp{\ctxt{L'}}{\termbang{s}}}{\intertype{\tau}{}}}}{\SysBang}
}{
  \derivable{\Psi}{\sequB{\Gamma'}{\assign{\termder{\ctxtapp{\ctxt{L'}}{\termbang{s}}}}{\tau}}}{\SysBang}
}{\ruleBDer}
      $$ and by the \ih there exists
      $\derivable{\Psi'}{\sequB{\Gamma'}{\assign{\ctxtapp{\ctxt{L'}}{s}}{\tau}}}{\SysBang}$.
      Hence we conclude $$
\Rule{
  \derivable{\Psi'}{\sequB{\Gamma'}{\assign{\ctxtapp{\ctxt{L'}}{s}}{\tau}}}{\SysBang}
  \quad
  \derivable{\Phi_{r}}{\sequB{\Delta}{\assign{r}{\M}}}{\SysBang}
}{
  \derivable{\Phi'}{\sequB{\Gamma}{\assign{\ctxtapp{\ctxt{L}}{s}}{\tau}}}{\SysBang}  
}{\ruleBESubs} $$
      Notice that $\size{\Phi} = \size{\Psi} + \size{\Phi_{r}} +1 > \size{\Psi'} +
      \size{\Phi_{r}} +1 = \size{\Phi'}$.
    \end{itemize}
  \end{itemize}
  
\item All the inductive cases for
  $t \rewrite{\bangweak} t'$ are straightforward by the \ih
\end{itemize}
\end{proof}

In order to prove subject expansion, an anti-substitution lemma is needed:

\begin{lemma}[Anti-Substitution]
\label{l:antisubstitution-bang}
If
$\derivable{\Phi_{\substitute{x}{u}{t}}}{\sequB{\Gamma'}{\assign{\substitute{x}{u}{t}}{\tau}}}{\SysBang}$,
then there exists
$\derivable{\Phi_t}{\sequB{\Gamma; \assign{x}{\intertype{\sigma_i}{i \in I}}}{\assign{t}{\tau}}}{\SysBang}$
and
$\many{\derivable{\Phi^i_u}{\sequB{\Delta_i}{\assign{u}{\sigma_i}}}{\SysBang}}{i \in I}$
such that $\Gamma' = \ctxtsum{\Gamma}{\Delta_i}{i \in I}$ and
$\size{\Phi_{\substitute{x}{u}{t}}} = \size{\Phi_{t}}  +_{i \in
I}{\size{\Phi^i_{u}}} - |I|$.
\end{lemma}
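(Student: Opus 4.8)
The plan is to prove the statement by induction on the structure of $t$, dually to the Substitution Lemma~\ref{l:substitution-bang}: instead of merging derivations, here I decompose the given derivation of $\substitute{x}{u}{t}$ into a derivation of $t$ (with a fresh assignment for $x$) together with one derivation of $u$ per typed occurrence of $x$. The index set $I$ is assembled along the induction---each leaf where $t = x$ contributes a single index, while the branching rules combine the index sets of their premises by disjoint union and the corresponding typing contexts by multiset union $\sqcup$. Throughout I use $\alpha$-conversion so that every bound variable $y$ occurring in $t$ is distinct from $x$ and avoids $\fv{u}$, which guarantees that $\substitute{x}{u}{-}$ commutes with the term constructors.

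I first settle the base cases. If $t = x$, then $\substitute{x}{u}{t} = u$ and $\Phi_{\substitute{x}{u}{t}}$ is a derivation of $\sequB{\Gamma'}{\assign{u}{\tau}}$; I take $I$ to be a singleton $\{i_0\}$ with $\sigma_{i_0} \coloneq \tau$, set $\Phi^{i_0}_u \coloneq \Phi_{\substitute{x}{u}{t}}$ and $\Delta_{i_0} \coloneq \Gamma'$, and build $\Phi_t$ by $\ruleBAxiom$ as $\sequB{\assign{x}{\intertype{\tau}{}}}{\assign{x}{\tau}}$, so that $\Gamma$ is empty and $\Gamma' = \ctxtsum{\Gamma}{\Delta_{i_0}}{}$. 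The size equation holds since $\size{\Phi_t} = 1 = |I|$ (the term $\size{\Phi^{i_0}_u}$ cancels). If $t = y \neq x$, then $\substitute{x}{u}{t} = y$; I take $I = \emptyset$, let $\Phi_t \coloneq \Phi_{\substitute{x}{u}{t}}$ with $x$ receiving $\emul$, and both the context and the size equations are immediate.

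For the inductive cases $t = \termapp{t_1}{t_2}$, $t = \termabs{y}{s}$, $t = \termder{s}$ and $t = \termsubs{y}{v}{s}$, the last rule of $\Phi_{\substitute{x}{u}{t}}$ is forced by the top constructor of $t$, and its premises are derivations of the substituted immediate subterms. I apply the \ih to each such premise, obtaining a split of its index set together with the associated $u$-derivations; I then set $I$ to be the disjoint union of these index sets, take the multiset union of the typing contexts, and reassemble $\Phi_t$ with the same last rule. In each case the size equation follows by adding the \ih equations and using that every one of $\ruleBArrowE$, $\ruleBArrowI$, $\ruleBDer$ and $\ruleBESubs$ contributes exactly $1$ to $\size{-}$.

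The main obstacle is the bang case $t = \termbang{s}$, where $\Phi_{\substitute{x}{u}{t}}$ ends in $\ruleBBang$ with premises $\many{\derivable{\Phi^j}{\sequB{\Gamma_j}{\assign{\substitute{x}{u}{s}}{\rho_j}}}{\SysBang}}{j \in J}$ and $\tau = \intertype{\rho_j}{j \in J}$. Here the subterm $s$ is typed $\#(J)$ times, so I apply the \ih to each $\Phi^j$ to obtain an index set $I_j$, and I set $I \coloneq \biguplus_{j \in J} I_j$, assign $x$ the multiset $\bigsqcup_{j \in J} \intertype{\sigma_i}{i \in I_j}$, and collect all resulting $u$-derivations. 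The delicate point is the size accounting: since $\ruleBBang$ is by definition \emph{not} counted in $\size{-}$, we have $\size{\Phi_{\substitute{x}{u}{t}}} = \sum_{j \in J} \size{\Phi^j}$ and $\size{\Phi_t} = \sum_{j \in J} \size{\Phi^j_s}$, so summing the \ih equations $\size{\Phi^j} = \size{\Phi^j_s} +_{i \in I_j} \size{\Phi^i_u} - |I_j|$ over $j \in J$ yields exactly $\size{\Phi_{\substitute{x}{u}{t}}} = \size{\Phi_t} +_{i \in I} \size{\Phi^i_u} - |I|$. The degenerate subcase $J = \emptyset$ (an untyped argument) forces $I = \emptyset$ and is immediate.
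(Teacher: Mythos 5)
Your proof is correct and follows essentially the same route as the paper: induction on the structure of $t$, with the variable cases handled exactly as in the paper (singleton index set when $t=x$, empty index set when $t=y\neq x$) and the inductive cases reassembled by reapplying the forced last rule after combining the index sets and contexts given by the induction hypothesis. Your explicit treatment of the $\termbang{s}$ case (summing the induction-hypothesis equations over the $\#(J)$ premises, using that $\ruleBBang$ does not count towards $\size{-}$) is a detail the paper leaves implicit under "the remaining inductive cases are similar," and it is handled correctly.
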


\begin{proof}
By induction on $t$.
\comment{
\begin{itemize}
  \item $t = x$. Then, $\substitute{x}{u}{t} = u$ and we set 
  $I = \{1\}$, $\sigma_1 = \tau$, $\Gamma = \emptyset$ $\Delta_1 = \Gamma'$,
  $\Phi^{1}_{u} = \Phi_{\substitute{x}{u}{t}}$, and
  $\derivable{\Phi_t}{\sequB{\assign{x}{\intertype{\tau}{}}}{\assign{x}{\tau}}}{\SysBang}$
  by rule $\ruleBAxiom$. We conclude since $\size{\Phi_t} = |I| = 1$.
  
  \item $t = y \neq x$. Then, $\substitute{x}{u}{t} = y$ and
  we conclude with $I = \emptyset$ (hence, $\intertype{\sigma_i}{i \in I} =
  \emul$), $\Gamma = \Gamma'$ and $\Phi_t =
  \Phi_{\substitute{x}{u}{t}}$.

  \item $t = \termapp{s}{r}$. Then, $\substitute{x}{u}{t} =
  \termapp{\substitute{x}{u}{s}}{\substitute{x}{u}{r}}$. Moreover, by rule
  $\ruleBArrowE$, $\Gamma' = \ctxtsum{\Gamma'_1}{\Gamma'_2}{}$ s.t.
  $\derivable{\Phi_{\substitute{x}{u}{s}}}{\sequB{\Gamma'_1}{\assign{\substitute{x}{u}{s}}{\functtype{\M}{\tau}}}}{\SysBang}$
  and
  $\derivable{\Phi_{\substitute{x}{u}{r}}}{\sequB{\Gamma'_2}{\assign{\substitute{x}{u}{r}}{\M}}}{\SysBang}$.
  By \ih on $\Phi_{\substitute{x}{u}{s}}$ there exist derivations
  $\derivable{\Phi_s}{\sequB{\Gamma_1; \assign{x}{\intertype{\sigma_i}{i \in
  I_s}}}{\assign{s}{\functtype{\M}{\tau}}}}{\SysBang}$ and
  $\many{\derivable{\Phi^i_u}{\sequB{\Delta_i}{\assign{u}{\sigma_i}}}{\SysBang}}{i
  \in I_s}$ s.t. $\Gamma'_1 = \ctxtsum{\Gamma_1}{\Delta_i}{i \in I_s}$ and
  $\size{\Phi_{\substitute{x}{u}{s}}} = \size{\Phi_{s}}  +_{i \in
  I_s}{\size{\Phi^i_{u}}} - |I_s|$. By \ih once again, this time on
  $\Phi_{\substitute{x}{u}{r}}$, there exists $\derivable{\Phi_r}{\sequB{\Gamma_2;
  \assign{x}{\intertype{\sigma_i}{i \in I_r}}}{\assign{s}{\M}}}{\SysBang}$ and
  $\many{\derivable{\Phi^i_u}{\sequB{\Delta_i}{\assign{u}{\sigma_i}}}{\SysBang}}{i
  \in I_r}$ such that $\Gamma'_r = \ctxtsum{\Gamma_1}{\Delta_i}{i \in I_r}$ and
  $\size{\Phi_{\substitute{x}{u}{r}}} = \size{\Phi_{r}}  +_{i \in
  I_r}{\size{\Phi^i_{u}}} - |I_r|$. Consider $I = I_s \cup I_r$. Then, by rule
  $\ruleBArrowE$, \[
  \Rule{\derivable{\Phi_s}{\sequB{\Gamma_1; \assign{x}{\intertype{\sigma_i}{i \in I_s}}}{\assign{s}{\functtype{\M}{\tau}}}}{\SysBang}
        \quad
        \derivable{\Phi_r}{\sequB{\Gamma_2; \assign{x}{\intertype{\sigma_i}{i \in I_r}}}{\assign{s}{\M}}}{\SysBang}}
       {\sequB{\ctxtsum{\Gamma_1}{\Gamma_2}{}; \assign{x}{\intertype{\sigma_i}{i \in I}}}{\assign{\termapp{s}{r}}{\tau}}}
       {\ruleBArrowE} \]
  Moreover, we have \[
  \begin{array}{rcl}
  \size{\Phi_{\substitute{x}{u}{t}}}
    & = & \size{\Phi_{\substitute{x}{u}{s}}} + \size{\Phi_{\substitute{x}{u}{r}}} + 1 \\
    & = & \size{\Phi_{s}}  +_{i \in I_s}{\size{\Phi^i_{u}}} - |I_s| + \size{\Phi_{r}}  +_{i \in I_r}{\size{\Phi^i_{u}}} - |I_r| + 1 \\
    & = & \size{\Phi_{s}} + \size{\Phi_{r}} + 1  +_{i \in I}{\size{\Phi^i_{u}}} - |I| \\
    & = & \size{\Phi_{t}} +_{i \in I}{\size{\Phi^i_{u}}} - |I| 
  \end{array} \] and
  $\many{\derivable{\Phi^i_u}{\sequB{\Delta_i}{\assign{u}{\sigma_i}}}{\SysBang}}{i
  \in I}$. Thus, we conclude.
  
  \item The remaining inductive cases are similar, concluding directly from
  the \ih
\end{itemize}
}
\end{proof}

\begin{lemma}[Weighted Subject Expansion]
\label{l:se-bang}
Let $\derivable{\Phi'}{\sequB{\Gamma}{\assign{t'}{\tau}}}{\SysBang}$. If
$t \rewrite{\bangweak} t'$, then there is
$\derivable{\Phi}{\sequB{\Gamma}{\assign{t}{\tau}}}{\SysBang}$
such that $\size{\Phi} > \size{\Phi'}$.
\end{lemma}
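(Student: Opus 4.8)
The plan is to mirror the proof of Weighted Subject Reduction (Lem.~\ref{l:wsr-bang}) run backwards, reversing each derivation-reconstruction step and replacing the Substitution Lemma by its converse, the Anti-Substitution Lemma (Lem.~\ref{l:antisubstitution-bang}). I would argue by induction on the weak context $\ctxt{W}$ in which the step $t \rewrite{\bangweak} t'$ is performed. The inductive cases, where $\ctxt{W}$ is a non-trivial constructor around the contracted redex, follow immediately from the \ih applied to the premise typing the hole, exactly as in WSR: one reapplies the last typing rule and the size strictly increases because it does so for the reconstructed sub-derivation. So the whole content is in the base case $\ctxt{W} = \Box$, which splits according to the three rewriting rules, each handled by an inner induction on the list context $\ctxt{L}$.

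For the base case $\ctxt{W} = \Box$ I would treat the three root rules. For $\dBeta$, where $t = \termapp{\ctxtapp{\ctxt{L}}{\termabs{x}{s}}}{u}$ and $t' = \ctxtapp{\ctxt{L}}{\termsubs{x}{u}{s}}$, a derivation $\Phi'$ of $t'$ ends (for $\ctxt{L} = \Box$) in a $\ruleBESubs$ whose premises type $s$ and $u$; I recombine these using $\ruleBArrowI$ and then $\ruleBArrowE$ to obtain a derivation of $t$ of size one greater. For $\dBang$, where $t = \termder{(\ctxtapp{\ctxt{L}}{\termbang{s}})}$ and $t' = \ctxtapp{\ctxt{L}}{s}$, I reintroduce a singleton $\ruleBBang$ (which is not counted) followed by $\ruleBDer$, again adding exactly one to the size. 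In both rules the step $\ctxt{L} = \termsubs{y}{v}{\ctxt{L'}}$ is handled by peeling off the outer $\ruleBESubs$, applying the \ih to the smaller list context, and re-wrapping, just as in WSR.

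The delicate case, and the one I expect to be the main obstacle, is $\sBang$, where $t = \termsubs{x}{\ctxtapp{\ctxt{L}}{\termbang{u}}}{s}$ reduces to $t' = \ctxtapp{\ctxt{L}}{\substitute{x}{u}{s}}$. For $\ctxt{L} = \Box$ the subject of $\Phi'$ is the meta-level substitution $\substitute{x}{u}{s}$, so there is no outermost rule to invert directly; instead I would invoke Anti-Substitution (Lem.~\ref{l:antisubstitution-bang}) to split $\Phi'$ into a derivation $\Phi_s$ of $s$ with $x : \intertype{\sigma_i}{i \in I}$ together with a family $\many{\Phi^i_u}{i \in I}$ typing $u$, with contexts summing back to $\Gamma$. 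I then assemble $\termbang{u}$ with a $\ruleBBang$ over this family and close with $\ruleBESubs$ using $\Phi_s$, recovering a derivation of $t$. Care is needed for the degenerate case $I = \emptyset$ (where $u$ is untyped in $t'$, yet $\termbang{u}$ is still typed by an empty $\ruleBBang$ with the empty multiset), for the correct redistribution of the contexts $\Delta_i$, and for the capture conditions guaranteeing that the reconstructed $\termsubs{x}{\termbang{u}}{s}$ is well formed.

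Finally, the size bookkeeping is the mirror image of the one recorded in the proof of WSR: in the $\sBang$ base case the Anti-Substitution identity $\size{\Phi'} = \size{\Phi_s} +_{i \in I} \size{\Phi^i_u} - |I|$ combines with the size contributed by the reintroduced $\ruleBESubs$ (and the uncounted $\ruleBBang$) node to give $\size{\Phi} > \size{\Phi'}$, while in the $\dBeta$ and $\dBang$ base cases the single reintroduced $\ruleBArrowE$, resp.\ $\ruleBDer$, adds one to the size. Since every constructor case of $\ctxt{W}$ transmits strictness through the \ih, we obtain $\size{\Phi} > \size{\Phi'}$ in all cases, as required.
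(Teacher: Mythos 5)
Your proposal is correct and follows essentially the same route as the paper's proof: an outer induction on the weak context $\ctxt{W}$, an inner induction on the list context $\ctxt{L}$ for each of the three root rules, inversion and reconstruction via $\ruleBArrowI$/$\ruleBArrowE$, $\ruleBBang$/$\ruleBDer$, and $\ruleBESubs$, with the Anti-Substitution Lemma (Lem.~\ref{l:antisubstitution-bang}) doing the work in the $\sBang$ case. Your size bookkeeping is also right (indeed, since $\ruleBBang$ nodes are not counted, the $\dBang$ and $\sBang$ base cases add exactly one counted rule, which is even slightly more precise than the ``$+2$'' written in the paper), so the strict inequality $\size{\Phi} > \size{\Phi'}$ goes through exactly as you describe.
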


\begin{proof}

By induction on $t \rewrite{\bangweak} t'$.
\begin{itemize}
  \item For the base cases we
    have to consider three rules:
  \begin{itemize}
    \item Rule $\dBeta$. Then, $t =
    \termapp{\ctxtapp{\ctxt{L}}{\termabs{x}{s}}}{u}$ and $t' =
    \ctxtapp{\ctxt{L}}{\termsubs{x}{u}{s}}$. We proceed by induction on
    $\ctxt{L}$.
    \begin{itemize}
      \item $\ctxt{L} = \Box$. Then, $\Gamma = \ctxtsum{(\ctxtres{\Gamma'}{x}{})}{\Delta}{}$ such
      that \[
      \Rule{\derivable{\Phi'_s}{\sequB{\Gamma'}{\assign{s}{\tau}}}{\SysBang}
            \quad
            \derivable{\Phi'_u}{\sequB{\Delta}{\assign{u}{\Gamma'(x)}}}{\SysBang}
           }
           {\derivable{\Phi'}{\sequB{\ctxtsum{(\ctxtres{\Gamma'}{x}{})}{\Delta}{}}{\assign{\termsubs{x}{u}{s}}{\tau}}}{\SysBang}}
           {\ruleBESubs}
      \] then, we construct \[
      \Rule{\Rule{\derivable{\Phi'_s}{\sequB{\Gamma'}{\assign{s}{\tau}}}{\SysBang}}
                 {\sequB{\ctxtres{\Gamma'}{x}{}}{\assign{\termabs{x}{s}}{\functtype{\Gamma'(x)}{\tau}}}}
                 {\ruleBArrowI}
            \quad
            \derivable{\Phi'_u}{\sequB{\Delta}{\assign{u}{\Gamma'(x)}}}{\SysBang}
           }
           {\derivable{\Phi}{\sequB{\ctxtsum{(\ctxtres{\Gamma'}{x}{})}{\Delta}{}}{\assign{\termapp{(\termabs{x}{s})}{u}}{\tau}}}{\SysBang}}
           {\ruleBArrowE} \]
      Note that $\size{\Phi} = \size{\Phi'} + 1$.
      
      \item $\ctxt{L} = \termsubs{y}{r}{\ctxt{L'}}$. Then, $\Gamma =
      \ctxtsum{(\ctxtres{\Gamma'}{y}{})}{\Delta}{}$ such that \[
      \Rule{\derivable{\Psi'}{\sequB{\Gamma'}{\assign{\ctxtapp{\ctxt{L'}}{\termsubs{x}{u}{s}}}{\tau}}}{\SysBang}
            \quad
            \derivable{\Phi'_r}{\sequB{\Delta}{\assign{r}{\Gamma'(y)}}}{\SysBang}
           }
           {\derivable{\Phi'}{\sequB{\ctxtsum{(\ctxtres{\Gamma'}{y}{})}{\Delta}{}}{\assign{\termsubs{y}{r}{\ctxtapp{\ctxt{L'}}{\termsubs{x}{u}{s}}}}{\tau}}}{\SysBang}}
           {\ruleBESubs}
      \] By \ih on $\Psi'$ we have a derivation
      $\derivable{\Psi}{\sequB{\Gamma'}{\assign{\termapp{\ctxtapp{\ctxt{L'}}{\termabs{x}{s}}}{u}}{\tau}}}{\SysBang}$
      with $\size{\Psi} > \size{\Psi'}$.
      Moreover, by rule $\ruleBArrowE$, $\Gamma' =
      \ctxtsum{\Gamma'_1}{\Gamma'_2}{}$ and \[
      \Rule{\derivable{\Phi_{\ctxt{L'}}}{\sequB{\Gamma'_1}{\assign{\ctxtapp{\ctxt{L'}}{\termabs{x}{s}}}{\functtype{\M}\tau}}}{\SysBang}
            \quad
            \derivable{\Phi_u}{\sequB{\Gamma'_2}{\assign{u}{\M}}}{\SysBang}
           }
           {\derivable{\Psi}{\sequB{\Gamma'}{\assign{\termapp{\ctxtapp{\ctxt{L'}}{\termabs{x}{s}}}{u}}{\tau}}}{\SysBang}}
           {\ruleBArrowE}
      \] Moreover, by hypothesis of rule $\dB$, $y \notin \fv{u}$. Thus, in
      particular, $y \notin \dom{\Gamma'_2}$ and $\Gamma'(y) = \Gamma'_1(y)$.
      Then, we construct 
      \[\kern-5em
      \Rule{\Rule{\derivable{\Phi_{\ctxt{L'}}}{\sequB{\Gamma'_1}{\assign{\ctxtapp{\ctxt{L'}}{\termabs{x}{s}}}{\functtype{\M}\tau}}}{\SysBang}
                  \quad
                  \derivable{\Phi'_r}{\sequB{\Delta}{\assign{r}{\Gamma'_1(y)}}}{\SysBang}
                 }
                 {\sequB{\ctxtsum{(\ctxtres{\Gamma'_1}{y}{})}{\Delta}{}}{\assign{\ctxtapp{\ctxt{L}}{\termabs{x}{s}}}{\functtype{\M}{\tau}}}}
                 {\ruleBESubs}
            \quad
            \derivable{\Phi_u}{\sequB{\Gamma'_2}{\assign{u}{\M}}}{\SysBang}
           }
           {\derivable{\Phi}{\sequB{\ctxtsum{(\ctxtres{\Gamma'}{y}{})}{\Delta}{}}{\assign{\termapp{\ctxtapp{\ctxt{L}}{\termabs{x}{s}}}{u}}{\tau}}}{\SysBang}}
           {\ruleBArrowE}
      \] 
      and conclude with \[
      \begin{array}{rcl}
      \size{\Phi} & = & \size{\Phi_{\ctxt{L'}}} + \size{\Phi'_r} + \size{\Phi_u} +2 \\
                  & = & \size{\Psi} + \size{\Phi'_r} + 2 \\
                  & > & \size{\Psi'} + \size{\Phi'_r} + 1 \\
                  & = & \size{\Phi'}
      \end{array} \]
    \end{itemize}
    
    \item Rule $\sBang$. Then, $t =
    \termsubs{x}{\ctxtapp{\ctxt{L}}{\termbang{u}}}{s}$ and $t' =
    \ctxtapp{\ctxt{L}}{\substitute{x}{u}{s}}$. We proceed by induction on
    $\ctxt{L}$.
    \begin{itemize}
      \item $\ctxt{L} = \Box$. By Lemma~\ref{l:antisubstitution-bang} with
      $\Phi'$, there exist 
      $\derivable{\Phi_s}{\sequB{\Gamma'; \assign{x}{\intertype{\sigma_i}{i \in I}}}{\assign{s}{\tau}}}{\SysBang}$
      and
      $\many{\derivable{\Phi^i_u}{\sequB{\Delta_i}{\assign{u}{\sigma_i}}}{\SysBang}}{i \in I}$
      such that $\Gamma = \ctxtsum{\Gamma'}{\Delta_i}{i \in I}$ and
      $\size{\Phi'} = \size{\Phi_{s}} +_{\iI} \size{\Phi^i_{u}} - |I|$. Then, we construct \[
      \Rule{\derivable{\Phi_s}{\sequB{\Gamma'; \assign{x}{\intertype{\sigma_i}{i \in I}}}{\assign{s}{\tau}}}{\SysBang}
            \quad
            \Rule{\many{\derivable{\Phi^i_u}{\sequB{\Delta_i}{\assign{u}{\sigma_i}}}{\SysBang}}{i \in I}}
                 {\sequB{\ctxtsum{}{\Delta_i}{i \in I}}{\assign{\termbang{u}}{\intertype{\sigma_i}{i \in I}}}}
                 {\ruleBBang}
           }
           {\derivable{\Phi}{\sequB{\Gamma}{\assign{t}{\tau}}}{\SysBang}}
           {\ruleBESubs}
      \] and conclude since $\size{\Phi} = \size{\Phi_s} +_{\iI} \size{\Phi^i_{u}} + 1 > \size{\Phi'}$.
      
      \item $\ctxt{L} = \termsubs{y}{r}{\ctxt{L'}}$. Then, $\Gamma =
      \ctxtsum{(\ctxtres{\Gamma'}{y}{})}{\Delta}{}$ such that \[
      \Rule{\derivable{\Psi'}{\sequB{\Gamma'}{\assign{\ctxtapp{\ctxt{L'}}{\substitute{x}{u}{s}}}{\tau}}}{\SysBang}
            \quad
            \derivable{\Phi'_r}{\sequB{\Delta}{\assign{r}{\Gamma'(y)}}}{\SysBang}
           }
           {\derivable{\Phi'}{\sequB{\ctxtsum{(\ctxtres{\Gamma'}{y}{})}{\Delta}{}}{\assign{\termsubs{y}{r}{\ctxtapp{\ctxt{L'}}{\substitute{x}{u}{s}}}}{\tau}}}{\SysBang}}
           {\ruleBESubs}
      \] By \ih on $\Psi'$ we have a derivation
      $\derivable{\Psi}{\sequB{\Gamma'}{\assign{\termsubs{x}{\ctxtapp{\ctxt{L'}}{\termbang{u}}}{s}}{\tau}}}{\SysBang}$
      with $\size{\Psi} > \size{\Psi'}$. Moreover, by rule $\ruleBESubs$,
      $\Gamma' = \ctxtsum{\Gamma'_1}{\Gamma'_2}{}$ and \[
      \Rule{\derivable{\Phi_s}{\sequB{\Gamma'_1}{\assign{s}{\tau}}}{\SysBang}
            \quad
            \derivable{\Phi_{\ctxt{L'}}}{\sequB{\Gamma'_2}{\assign{\ctxtapp{\ctxt{L'}}{\termbang{u}}}{\Gamma'_1(x)}}}{\SysBang}
           }
           {\derivable{\Psi}{\sequB{\Gamma'}{\assign{\termsubs{x}{\ctxtapp{\ctxt{L'}}{\termbang{u}}}{s}}{\tau}}}{\SysBang}}
           {\ruleBESubs}
      \] Moreover, by hypothesis of rule $\sBang$, $y \notin \fv{s}$. Thus, in
      particular, $y \notin \dom{\Gamma'_1}$ and $\Gamma'(y) = \Gamma'_2(y)$.
      Then, we construct \[\kern-1em
      \Rule{\derivable{\Phi_s}{\sequB{\Gamma'_1}{\assign{s}{\tau}}}{\SysBang}
            \quad
            \Rule{\derivable{\Phi_{\ctxt{L'}}}{\sequB{\Gamma'_2}{\assign{\ctxtapp{\ctxt{L'}}{\termbang{u}}}{\Gamma'_1(x)}}}{\SysBang}
                  \quad
                  \derivable{\Phi'_r}{\sequB{\Delta}{\assign{r}{\Gamma'_2(y)}}}{\SysBang}
                 }
                 {\sequB{\ctxtsum{(\ctxtres{\Gamma'_2}{y}{})}{\Delta}{}}{\assign{\ctxtapp{\ctxt{L}}{\termbang{u}}}{\Gamma'_1(x)}}}
                 {\ruleBESubs}
           }
           {\derivable{\Phi}{\sequB{\Gamma}{\assign{t}{\tau}}}{\SysBang}}
           {\ruleBESubs}
      \] and conclude with \[
      \begin{array}{rcl}
      \size{\Phi} & = & \size{\Phi_s} + \size{\Phi_{\ctxt{L'}}} + \size{\Phi'_r} +2\\
                  & = & \size{\Psi} + \size{\Phi'_r} +2 \\
                  & > & \size{\Psi'} + \size{\Phi'_r} +1 \\
                  & = & \size{\Phi'}
      \end{array} \]
    \end{itemize}
    
    \item Rule $\dBang$. Then, $t =
    \termder{(\ctxtapp{\ctxt{L}}{\termbang{s}})}$ and $t' =
    \ctxtapp{\ctxt{L}}{s}$. We proceed by induction on $\ctxt{L}$.
    \begin{itemize}
      \item $\ctxt{L} = \Box$. We have a derivation
      $\derivable{\Phi'}{\sequB{\Gamma}{\assign{s}{\tau}}}{\SysBang}$ and we
      construct \[
      \Rule{\Rule{\derivable{\Phi'}{\sequB{\Gamma}{\assign{s}{\tau}}}{\SysBang}}
                 {\sequB{\Gamma}{\assign{\termbang{s}}{\intertype{\tau}{}}}}
                 {\ruleBBang}
           }
           {\derivable{\Phi}{\sequB{\Gamma}{\assign{\termder{(\termbang{s})}}{\tau}}}{\SysBang}}
           {\ruleBDer}
      \] to conclude since $\size{\Phi} = \size{\Phi'} +1$.
      
      \item $\ctxt{L} = \termsubs{y}{r}{\ctxt{L'}}$. Then, $\Gamma =
      \ctxtsum{(\ctxtres{\Gamma'}{y}{})}{\Delta}{}$ such that \[
      \Rule{\derivable{\Psi'}{\sequB{\Gamma'}{\assign{\ctxtapp{\ctxt{L'}}{s}}{\tau}}}{\SysBang}
            \quad
            \derivable{\Phi'_r}{\sequB{\Delta}{\assign{r}{\Gamma'(y)}}}{\SysBang}
           }
           {\derivable{\Phi'}{\sequB{\ctxtsum{(\ctxtres{\Gamma'}{y}{})}{\Delta}{}}{\assign{\termsubs{y}{r}{\ctxtapp{\ctxt{L'}}{s}}}{\tau}}}{\SysBang}}
           {\ruleBESubs}
      \] By \ih on $\Psi'$ we have a derivation
      $\derivable{\Psi}{\sequB{\Gamma'}{\assign{\termder{(\ctxtapp{\ctxt{L'}}{\termbang{s}})}}{\tau}}}{\SysBang}$
      with $\size{\Psi} > \size{\Psi'}$. Moreover, by rule $\ruleBDer$, \[
      \Rule{\derivable{\Phi_{\ctxt{L'}}}{\sequB{\Gamma'}{\assign{\ctxtapp{\ctxt{L'}}{\termbang{s}}}{\intertype{\tau}{}}}}{\SysBang}}
           {\derivable{\Psi}{\sequB{\Gamma'}{\assign{\termder{(\ctxtapp{\ctxt{L'}}{\termbang{s}})}}{\tau}}}{\SysBang}}
           {\ruleBDer}
      \] Then, we construct \[
      \Rule{\Rule{\derivable{\Phi_{\ctxt{L'}}}{\sequB{\Gamma'}{\assign{\ctxtapp{\ctxt{L'}}{\termbang{s}}}{\intertype{\tau}{}}}}{\SysBang}
                  \quad
                  \derivable{\Phi'_r}{\sequB{\Delta}{\assign{r}{\Gamma'(y)}}}{\SysBang}
                 }
                 {\sequB{\ctxtsum{(\ctxtres{\Gamma'}{y}{})}{\Delta}{}}{\assign{\ctxtapp{\ctxt{L}}{\termbang{s}}}{\intertype{\tau}{}}}}
                 {\ruleBESubs}
           }
           {\derivable{\Phi}{\sequB{\Gamma}{\assign{t}{\tau}}}{\SysBang}}
           {\ruleBDer}
      \] and conclude with \[
      \begin{array}{rcl}
      \size{\Phi} & = & \size{\Phi_{\ctxt{L'}}} + \size{\Phi'_r} +2 \\
                  & = & \size{\Psi} + \size{\Phi'_r} +2  \\
                  & > & \size{\Psi'} + \size{\Phi'_r} +1 \\
                  & = & \size{\Phi'}
      \end{array} \]
    \end{itemize}
  \end{itemize}
  
  \item All the inductive cases for $t \rewrite{\bangweak} t'$ are straightforward by the \ih
\end{itemize}
\end{proof}

Erasing steps like $\termsubs{x}{\termbang{z}}{y} \rewrite{\sBang} y$
may seem problematic for subject reduction and expansion, but they are
not: the variable $x$ is necessarily assigned a type $\emul$
  in the corresponding typing context, and the term $\termbang{z}$ is then
  necessarily typed with $\emul$, so there is no loss of information
since the contexts allowing to type the redex and the reduced term are
the same.

Typable terms are necessarily weak clash-free:
\begin{lemma}
\label{l:clashes-do-not-type-Bang}
If $\derivable{\Phi}{\sequB{\Gamma}{\assign{t}{\sigma}}}{\SysBang}$, then $t$
is $\cfz$.
\end{lemma}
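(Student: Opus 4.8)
The plan is to argue by induction on the type derivation $\Phi$ (the system is syntax-directed, so this coincides with structural induction on $t$), after isolating an auxiliary observation on the shape of types.

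First I would record the following sublemma, proved by induction on the list context $\ctxt{L}$: if $\pabs{v}$, that is $v = \ctxtapp{\ctxt{L}}{\termabs{x}{u}}$, and $v$ is $\SysBang$-typable, then its type is an arrow $\functtype{\M}{\tau}$; dually, if $\pbang{v}$, that is $v = \ctxtapp{\ctxt{L}}{\termbang{u}}$, and $v$ is $\SysBang$-typable, then its type is a multiset $\M$. For $\ctxt{L} = \Box$ the only rule applicable to an abstraction is $\ruleBArrowI$ (which yields an arrow) and the only one applicable to a bang is $\ruleBBang$ (which yields a multiset); for $\ctxt{L} = \termsubs{y}{s}{\ctxt{L}'}$ the only applicable rule is $\ruleBESubs$, which transmits the type of its body unchanged, so the claim follows from the \ih. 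What makes this useful is that the three productions of types---base variables, multisets $\M$, and arrows $\functtype{\M}{\sigma}$---are mutually exclusive, so no typable term headed (up to explicit substitutions) by an abstraction can ever be consumed where a multiset is demanded, and symmetrically for bangs.

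With the sublemma available, I would run the main induction using the characterisation that $t$ fails to be $\cfz$ precisely when $t = \ctxtapp{\ctxt{W}}{c}$ for some weak context $\ctxt{W}$ and clash $c$. In each case I consider where the hole of $\ctxt{W}$ can sit. If it lies in an immediate, typable subterm (the body of an abstraction, either side of an application, or the body or substituend of a closure), the \ih already forbids a clash there. If it is at the root, so that $\ctxt{W} = \Box$ and $t$ is itself a clash, I rule out each of the four shapes by a typing mismatch: in $\termder{(\ctxtapp{\ctxt{L}}{\termabs{x}{u}})}$ the subterm must be typed $\intertype{\sigma}{}$ by $\ruleBDer$, yet the sublemma gives it an arrow; in $\termapp{\ctxtapp{\ctxt{L}}{\termbang{s}}}{u}$ the function must carry an arrow by $\ruleBArrowE$, yet the sublemma gives it a multiset; and in both $\termapp{t'}{(\ctxtapp{\ctxt{L}}{\termabs{x}{u}})}$ and $\termsubs{y}{\ctxtapp{\ctxt{L}}{\termabs{x}{u}}}{s}$ the argument, respectively the substituend, must be a multiset, yet the sublemma makes it an arrow. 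The last possibility, a hole under a bang, cannot occur since weak contexts never enter the scope of a $\termbang{\!}$; consequently the case $t = \termbang{t'}$ is discharged at once, without any appeal to $t'$ (which is essential, as $t'$ may be untyped when the $\ruleBBang$ premise set is empty).

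The proof is essentially bookkeeping: the only work is to track, case by case, where a weak-context hole can land and to invoke either the \ih or the sublemma. The single load-bearing fact is the disjointness of arrow and multiset types together with the type-preservation of $\ruleBESubs$; granting these, every clash shape is immediately incompatible with typability.
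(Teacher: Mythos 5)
Your proof is correct and takes essentially the same route as the paper's: the paper argues by contrapositive, inducting on the weak context $\ctxt{W}$ in $t = \ctxtapp{\ctxt{W}}{c}$, with the only interesting case being $\ctxt{W} = \Box$, where each clash shape is refuted by exactly the typing mismatch you isolate (terms $\ctxtapp{\ctxt{L}}{\termabs{x}{u}}$ can only receive arrow types and terms $\ctxtapp{\ctxt{L}}{\termbang{u}}$ only multiset types, via $\ruleBArrowI$, $\ruleBBang$ and $\ruleBESubs$). Your explicit sublemma and your direct induction on the derivation (rather than on $\ctxt{W}$), including the observation that the case $t = \termbang{t'}$ needs no appeal to the possibly untyped $t'$, are just a more detailed write-up of the same argument.
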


\begin{proof}
Assume towards a contradiction that $t$ is not $\cfz$, \ie there exists
a weak context $\ctxt{W}$ and a clash  $c$ such that $t \rewriten{\bangweak}
\ctxtapp{\ctxt{W}}{c}$. Then, Lemma~\ref{l:wsr-bang} gives
$\derivable{\Phi'}{\sequB{\Gamma}{\assign{\ctxtapp{\ctxt{W}}{c}}{\sigma}}}{\SysBang}$.
If we show that a term of the form $\ctxtapp{\ctxt{W}}{c}$ cannot be typed in
system $\SysBang$, we are done. This follows by straightforward induction on
$\ctxt{W}$. The base case is when $\ctxt{W} = \Box$. For every possible $c$, it
is immediate to see that there is a mismatch between its syntactical form and
the typing rules of system $\SysBang$. For instance, if $c =
\termapp{\ctxtapp{\ctxt{L}}{\termbang{t}}}{u}$, then
$\ctxtapp{\ctxt{L}}{\termbang{t}}$ should  have a functional type by rule
$\ruleBArrowE$ but it can only be assigned a multiset type by rules
$\ruleBESubs$ and $\ruleBBang$. As for the inductive case, an easy inspection
of the typing rules  shows for all terms $t$ and weak contexts $\ctxt{W}$, $t$
must be typed in order to type $\ctxtapp{\ctxt{W}}{t}$.
\end{proof}

However, normal terms are not necessarily clash-free, but the type
system captures weak clash-freeness of normal terms. Said
differently, when restricted to $\ipnrml$, typability exactly
corresponds to weak clash-freeness.

\begin{theorem}
\label{t:clashfree-bang}
Let $t \in \TermExplicit$. Then, $\cfnrml{t}$ iff $\pnrml{t}$ and $t$ is
$\SysBang$-typable.
\end{theorem}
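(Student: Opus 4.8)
The statement splits into two implications, of which only one requires genuine work.

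For the direction ($\Leftarrow$), assume $\pnrml{t}$ and that $t$ is $\SysBang$-typable. By Lem.~\ref{l:clashes-do-not-type-Bang}, typability already gives that $t$ is $\cfz$ (weak clash free), and by Prop.~\ref{prop:normal} the hypothesis $\pnrml{t}$ gives $t \not\rewrite{\bangweak}$, i.e.\ $t$ is a normal form. Hence $t$ is a weak clash free normal form, and Prop.~\ref{p:clashfree} yields $\cfnrml{t}$. For the direction ($\Rightarrow$), assume $\cfnrml{t}$. The normalisation part $\pnrml{t}$ is immediate: by Prop.~\ref{p:clashfree}, $t$ is a weak clash free normal form, hence in particular a normal form, so $\pnrml{t}$ follows from Prop.~\ref{prop:normal}. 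It therefore remains only to prove that every weak clash free normal form is $\SysBang$-typable, which is the heart of the argument.

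I would establish typability by a simultaneous induction on the derivation witnessing membership in the grammars $\icfntrl$, $\icfnabs$, $\icfnbang$, $\icfnrml$, proving the following strengthened statements at once: (i) if $\cfntrl{t}$ then $t$ is typable with \emph{any} prescribed type $\sigma$; (ii) if $\cfnabs{t}$ then $t$ is typable with \emph{some} multiset type $\M$; (iii) if $\cfnbang{t}$ then $t$ is typable; and (iv) if $\cfnrml{t}$ then $t$ is typable. Strengthening (i) to ``any type'' is precisely what makes the recursion close: in the application production $\termapp{\icfntrl}{\icfnabs}$ I first type the argument with some multiset $\M$ using (ii), then type the neutral head with the arrow type $\functtype{\M}{\tau}$ using (i) and conclude with $\ruleBArrowE$; in the dereliction production $\termder{(\icfntrl)}$ I type the subterm with the singleton $\multiset{\sigma}$ by (i) and apply $\ruleBDer$; and in the substitution productions I type the substituted neutral with exactly the multiset $\Gamma(x)$ demanded by the context $\Gamma$ of the body, again appealing to (i).

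The remaining cases are routine. For (ii), the bang production $\termbang{t}$ is typed with the empty multiset $\emul$ via $\ruleBBang$ with $I = \emptyset$ (no premise required), the neutral production follows from (i), and the substitution production from the induction hypothesis together with (i). For (iii), the abstraction production $\termabs{x}{\icfnrml}$ is handled by typing the body with (iv) and applying $\ruleBArrowI$, while its neutral and substitution productions reduce to (i) and to the induction hypothesis; statement (iv) is then just the union of (ii) and (iii). Since each production refers only to strict subderivations — in particular the inclusion $\icfntrl \subseteq \icfnabs$ is itself a one-step production — the simultaneous induction is well founded.

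The main subtlety, and the step I would check most carefully, is pinning down the exact strengthenings (i) and (ii): one must prove that neutral terms admit \emph{every} type and that neutral-abs terms admit \emph{some} multiset type, since it is precisely these two facts that feed the function position of an application and the argument/substitution positions, respectively. The availability of $\emul$ for an untyped bang (the $I=\emptyset$ case of $\ruleBBang$) is what guarantees that the multiset constraints imposed by $\ruleBArrowE$ and $\ruleBESubs$ are always satisfiable, so no case of the induction can get stuck.
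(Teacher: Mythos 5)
Your proposal is correct, and its core --- the proof that every weak clash free normal form is $\SysBang$-typable --- is essentially the paper's own argument: the same simultaneous induction over the grammars $\icfntrl$, $\icfnabs$, $\icfnbang$, $\icfnrml$, with the same two crucial strengthenings (neutral terms admit \emph{every} type, which --- since multiset types are themselves types --- is exactly what feeds the function position in $\ruleBArrowE$, the singleton $\multiset{\tau}$ in $\ruleBDer$, and the multiset $\Gamma(x)$ in $\ruleBESubs$; neutral-abs terms admit \emph{some} multiset type, with $\emul$ via $\ruleBBang$ covering untyped bangs). Where you genuinely depart from the paper is the direction ($\Leftarrow$). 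The paper proves it \emph{inside} the same simultaneous induction, as the right-to-left halves of four per-category iff statements, using Lem.~\ref{l:clashfree-bang} to extract the needed structural facts (e.g.\ that the head of a typable application lying in $\ipnabs$ must in fact lie in $\ipntrl$). You instead obtain it compositionally from results already established: Lem.~\ref{l:clashes-do-not-type-Bang} gives $\cfz$ from typability, Prop.~\ref{prop:normal} gives $t \not\rewrite{\bangweak}$ from $\pnrml{t}$, and Prop.~\ref{p:clashfree} converts ``weak clash free normal form'' into $\cfnrml{t}$. This is sound --- all three results precede the theorem in the paper --- and it buys modularity: Lem.~\ref{l:clashfree-bang} and the right-to-left induction become unnecessary, their inductive content being delegated to the already-proved Prop.~\ref{p:clashfree}. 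What the paper's formulation buys in exchange is the stronger per-category equivalences (typability with every type characterises $\icfntrl$ among neutral normal forms, typability with a multiset type characterises $\icfnabs$, and so on), which your top-level argument does not deliver but which the theorem as stated does not require.
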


\begin{proof}
By simultaneous induction on the following claims: 
\begin{enumerate}
  \item\label{t:clashfree-bang:ne} $\cfntrl{t}$ iff $\pntrl{t}$ and for every $\tau$ there exists $\Gamma$ such that $\derivable{}{\sequB{\Gamma}{\assign{t}{\tau}}}{\SysBang}$.
  \item\label{t:clashfree-bang:na} $\cfnabs{t}$ iff $\pnabs{t}$ and
    there exist $\Gamma$ and $\M$ such that
    $\derivable{}{\sequB{\Gamma}{\assign{t}{\M}}}{\SysBang}$.
  \item\label{t:clashfree-bang:nb} $\cfnbang{t}$ iff $\pnbang{t}$ and
    there exist $\Gamma$ and $\tau$ such that $\derivable{}{\sequB{\Gamma}{\assign{t}{\tau}}}{\SysBang}$.
  \item\label{t:clashfree-bang:no} $\cfnrml{t}$ iff $\pnrml{t}$ and there exist $\Gamma$ and $\tau$ such that $\derivable{}{\sequB{\Gamma}{\assign{t}{\tau}}}{\SysBang}$.
\end{enumerate}
We first show the left-to-right implications by only analysing the key cases.

If $t =
\cfntrl{x}$, then $\pntrl{t}$ and for every type $\tau$ we conclude by
$\ruleBAxiom$.

If $t = \cfntrl{\termapp{s}{u}}$, by definition
$\cfntrl{s}$ and $\cfnabs{u}$. Let $\tau$ be any type. By \ih
(\ref{t:clashfree-bang:na}) $\pnabs{u}$ and
$\derivable{}{\sequB{\Delta}{\assign{u}{\M}}}{\SysBang}$. Then, by \ih
(\ref{t:clashfree-bang:ne}) we get $\pntrl{s}$ and
$\derivable{}{\sequB{\Gamma}{\assign{s}{\functtype{\M}{\tau}}}}{\SysBang}$.
Moreover, $\pntrl{s}$ and $\pnabs{u}$ imply $\pnabs{s}$ and
$\pnrml{u}$ resp., hence $\pntrl{t}$. Thus, we conclude by
$\ruleBArrowE$,
$\derivable{}{\sequB{\ctxtsum{\Gamma}{\Delta}{}}{\assign{\termapp{s}{u}}{\tau}}}{\SysBang}$.

If $t = \cfnbang{\termabs{x}{s}}$, then $\cfnrml{s}$ by definition. By
\ih (\ref{t:clashfree-bang:no}) $\pnrml{s}$ and
$\derivable{}{\sequB{\Gamma}{\assign{s}{\tau}}}{\SysBang}$ for some
type $\tau$. Then, $\pnbang{t}$ and we conclude by $\ruleBArrowI$,
$\derivable{}{\sequB{\ctxtres{\Gamma}{x}{}}{\assign{\termabs{x}{s}}{\functtype{\Gamma(x)}{\tau}}}}{\SysBang}$.

If $t = \cfnabs{\termbang{s}}$, then $\pnabs{t}$ by definition and we
conclude by $\ruleBBang$,
$\derivable{}{\sequB{}{\assign{\termbang{s}}{\emul}}}{\SysBang}$.

If
$t = \cfntrl{\termder{s}}$, then $\cfntrl{s}$ by definition. Let
$\tau$ be any type, by \ih (\ref{t:clashfree-bang:ne}) we get $\pntrl{s}$ and
$\derivable{}{\sequB{\Gamma}{\assign{s}{\multiset{\tau}}}}{\SysBang}$.
Moreover, $\pntrl{s}$ implies $\pnbang{s}$ and hence
$\pntrl{t}$. Thus, we conclude by $\ruleBDer$,
$\derivable{}{\sequB{\Gamma}{\assign{\termder{s}}{\tau}}}{\SysBang}$.

If
$t = \termsubs{x}{u}{s}$, then $\cfntrl{u}$ and there are three
possible cases: $\cfntrl{s}$, $\cfnabs{s}$ or $\cfnbang{s}$. In either
case, by the proper \ih we get
$\derivable{}{\sequB{\Gamma}{\assign{s}{\tau}}}{\SysBang}$
(resp. $\M$) and conclude by $\ruleBESubs$, given that \ih (\ref{t:clashfree-bang:ne}) on $u$
implies $\pntrl{u}$ and
$\derivable{}{\sequB{\Delta}{\assign{u}{\Gamma(x)}}}{\SysBang}$. Note
that $\pntrl{u}$ in turn implies $\pnbang{u}$, thus $t$ remains
  in the same set as $s$ ($\ipntrl$, $\ipnabs$ or $\ipnbang$ resp.).

The right-to-left implications uses Lemma~\ref{l:clashfree-bang}.  
\end{proof}

Typability can be shown to (qualitatively and quantitatively) characterise
normalisation. The type system $\SysBang$ is  {\em sound} (all the typable terms are normalising) and {\em complete} (all the normalising terms are typable).
  
\begin{theorem}[Soundness and Completeness for System $\SysBang$]
\label{t:correctness-bang}
The term $t$ is $\SysBang$-typable iff $t$ $\bangweak$-normalises to a term
$p \in \icfnrml$. Moreover, if
$\derivable{\Phi}{\sequB{\Gamma}{\assign{t}{\tau}}}{\SysBang}$, then
$t \rewriten{\bangweak}^{(\cbeta,\cexp)} p$ and $\size{\Phi} \geq \cbeta +
\cexp + \wsize{p}$.
\end{theorem}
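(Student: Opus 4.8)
The plan is to obtain both directions from the two weighted lemmas (Lem.~\ref{l:wsr-bang} and Lem.~\ref{l:se-bang}) together with the characterisation of weak clash free normal forms in Thm.~\ref{t:clashfree-bang}.

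For the $\Rightarrow$ direction I would start from $\derivable{\Phi}{\sequB{\Gamma}{\assign{t}{\tau}}}{\SysBang}$ and first argue termination: by Weighted Subject Reduction each $\rewrite{\bangweak}$-step strictly decreases the non-negative integer $\size{\Phi}$, so every reduction sequence out of $t$ has length at most $\size{\Phi}$; hence $t$ is strongly normalising and, by confluence (Thm.~\ref{t:confluence}), reaches a unique normal form $p$, say along $t \rewriten{\bangweak}^{(\cbeta,\cexp)} p$. Iterating Lem.~\ref{l:wsr-bang} along this path yields $\derivable{\Phi_p}{\sequB{\Gamma}{\assign{p}{\tau}}}{\SysBang}$, so $p$ is typable; since $p\not\rewrite{\bangweak}$, Prop.~\ref{prop:normal} gives $\pnrml{p}$, and Thm.~\ref{t:clashfree-bang} then upgrades this to $\cfnrml{p}$, i.e. $p$ is a weak clash free normal form. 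For the quantitative bound I would note that each of the $\cbeta+\cexp$ steps lowers the derivation size by at least $1$ (the inequality in Lem.~\ref{l:wsr-bang} is strict, between integers), so $\size{\Phi}-\size{\Phi_p}\geq \cbeta+\cexp$; combining this with the inequality $\size{\Phi_p}\geq\wsize{p}$ recorded just after the definition of $\size{\Phi}$ gives $\size{\Phi}\geq \cbeta+\cexp+\wsize{p}$.

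For the $\Leftarrow$ direction, assuming $t \rewriten{\bangweak} p$ with $\cfnrml{p}$, I would use Thm.~\ref{t:clashfree-bang} to get that $p$ is $\SysBang$-typable, and then apply Weighted Subject Expansion (Lem.~\ref{l:se-bang}) once per reduction step, walking backwards along $t \rewriten{\bangweak} p$, to transport typability from $p$ up to $t$.

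The main obstacle is really the termination argument underlying the $\Rightarrow$ direction: strong normalisation of typed terms is not assumed but has to be read off from the strictly decreasing integer measure supplied by Lem.~\ref{l:wsr-bang}, with uniqueness of the normal form (and hence a well-defined target $p$) coming from Thm.~\ref{t:confluence}. Everything else is bookkeeping; in particular, because every step, whether multiplicative or exponential, contributes at least $1$ to the decrease of $\size{\Phi}$, the quantitative bound is insensitive to the particular reduction path chosen.
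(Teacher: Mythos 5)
Your proposal is correct and follows essentially the same route as the paper: soundness via iterated Weighted Subject Reduction (Lem.~\ref{l:wsr-bang}) combined with the characterisation of typable normal forms (Thm.~\ref{t:clashfree-bang}), completeness by induction on the length of the normalising sequence using Thm.~\ref{t:clashfree-bang} for the base case and Weighted Subject Expansion (Lem.~\ref{l:se-bang}) for the inductive step, and the \emph{moreover} part from the strict decrease of $\size{\cdot}$ together with the remark $\size{\Phi_p} \geq \wsize{p}$. Your explicit appeal to confluence (Thm.~\ref{t:confluence}) for uniqueness of the normal form is a small addition the paper leaves implicit, but it changes nothing essential.
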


\begin{proof}
The soundness proof is straightforward by Lemma~\ref{l:wsr-bang} and
Theorem~\ref{t:clashfree-bang}. Observe that the argument is simply combinatorial,
no reducibility argument is needed.
For the completeness proof, we
reason by induction on the length of the $\bangweak$-normalising sequence. For
the base case, we use Theorem~\ref{t:clashfree-bang} which states that
$\cfnrml{p}$ implies $p$ is $\SysBang$-typable. For the inductive case we use
Lemma~\ref{l:se-bang}. The \emph{moreover} statement holds by
Lemma~\ref{l:wsr-bang} and~\ref{l:se-bang}, and the fact that the size of the
type derivation of $p$ is greater than or equal to $\wsize{p}$.
\end{proof}

The previous theorem can be illustrated by the term $t_0 =
\termapp{\termapp{\termder{(\termbang{\Kterm})}}{(\termbang{\id})}}{(\termbang{\Omega})}$
defined in Example~\ref{example:t0}, which normalises in $5$ steps to a normal
form of $\bangweak$-size $1$, the sum of the two being bounded by the size $11$
of its type derivation $\Phi_0$ given in Example~\ref{example:t0-typed}.


\section{Capturing Call-by-Name and Call-by-Value}
\label{s:cbname-cbvalue}

This section explores the CBN/CBV embeddings into the $\BangRev$-calculus. For
CBN, we slightly adapt Girard's translation into LL~\cite{Girard87}, which
preserves normal forms and is sound and complete with respect to the standard
(quantitative) type system~\cite{Gardner94}. For CBV, however, we reformulate
both the translation and the type system, so that preservation of normal forms
and completeness are restored. In both cases, we specify the operational
semantics of CBN and CBV by means of a very simple notion of explicit
substitution, see for example~\cite{AccattoliP12}. 

Terms ($\TermLambda$), values and contexts are defined as follows:
\begin{center}
\begin{tabular}{rrcll}
\textbf{(Terms)}                  & $t,u$      & $\Coloneq$ & $v \mid \termapp{t}{u} \mid \termsubs{x}{u}{t}$ \\
\textbf{(Values)}                 & $v$        & $\Coloneq$ & $x \in \TermVariable \mid \termabs{x}{t}$ \\
\textbf{(List Contexts)}          & $\ctxt{L}$ & $\Coloneq$ & $\Box \mid \termsubs{x}{t}{\ctxt{L}}$ \\
\textbf{(Call-by-Name Contexts)}  & $\ctxt{N}$ & $\Coloneq$ & $\Box \mid \termapp{\ctxt{N}}{t} \mid \termabs{x}{\ctxt{N}} \mid \termsubs{x}{u}{\ctxt{N}}$ \\
\textbf{(Call-by-Value Contexts)} & $\ctxt{V}$ & $\Coloneq$ & $\Box \mid \termapp{\ctxt{V}}{t} \mid \termapp{t}{\ctxt{V}} \mid \termsubs{x}{u}{\ctxt{V}} \mid \termsubs{x}{\ctxt{V}}{t}$
\end{tabular}
\end{center}

As in Sec.~\ref{s:bang} we use the predicate $\pabs{t}$ iff $t =
\ctxtapp{\ctxt{L}}{\termabs{x}{t'}}$.  We also use the predicates $\papp{t}$
iff $t = \ctxtapp{\ctxt{L}}{\termapp{t'}{t''}}$ and $\pvar{t}$ iff $t =
\ctxtapp{\ctxt{L}}{x}$.

The \emphdef{Call-by-Name} reduction relation $\rewrite{\callbyname}$ is
defined as the closure of the rules $\dBeta$ and
$\sTerm$ presented below under contexts $\ctxt{N}$, while the \emphdef{Call-by-Value} reduction relation
$\rewrite{\callbyvalue}$ is defined as the closure of 
the rules $\dBeta$ and  $\sVal$ below under contexts $\ctxt{V}$. Equivalently,
$\mathbin{\rewrite{\callbyname}} \eqdef \ctxt{N}(\rrule{\dBeta} \cup
\rrule{\sTerm})$ and $\mathbin{\rewrite{\callbyvalue}} \eqdef
\ctxt{V}(\rrule{\dBeta} \cup \rrule{\sVal})$ and \[
\begin{array}{rcl}
\termapp{\ctxtapp{\ctxt{L}}{\termabs{x}{t}}}{u} & \rrule{\dBeta} & \ctxtapp{\ctxt{L}}{\termsubs{x}{u}{t}} \\   
\termsubs{x}{u}{t}                              & \rrule{\sTerm} & \substitute{x}{u}{t} \\
\termsubs{x}{\ctxtapp{\ctxt{L}}{v}}{t}          & \rrule{\sVal}  & \ctxtapp{\ctxt{L}}{\substitute{x}{v}{t}}
\end{array} \] 
We write $t \not\rewrite{\callbyname}$ (resp. $t \not \rewrite{\callbyvalue}$),
and call $t$ an \emphdef{$\callbyname$-normal form} (resp.
\emphdef{$\callbyvalue$-normal form}), if $t$ cannot be reduced by means of
$\rewrite{\callbyname}$ (resp. $\rewrite{\callbyvalue}$).

Observe that we use CBN and CBV formulations based on distinguished
multiplicative (\cf $\dBeta$) and exponential (\cf $\sTerm$ and $\sVal$) rules,
inheriting the nature of cut elimination rules in LL. Moreover, CBN is to be
understood as \emph{head} CBN reduction~\cite{Barendregt84}, \ie reduction does
not take place in arguments of applications (and in the present case in
arguments of substitutions as well) while CBV corresponds to \emph{open} CBV
reduction~\cite{AccattoliP12,AccattoliG16}, \ie reduction does not take place
inside abstractions.

\begin{theorem}
\label{t:confluence-cbname-cbvalue}
The reduction relations $\rewrite{\callbyname}$ 
and $\rewrite{\callbyvalue}$ are both confluent.
\end{theorem}

\begin{proof} The proofs of these properties are folklore.
    They use  an abstract result in rewriting theory stating that
    the union of two reduction relations is confluent if
    both are 
    confluent and commute with each other~\cite{Terese03}. More precisely,
    two arbitrary reduction relations $\rewrite{1}$ and $\rewrite{2}$
    \emphdef{commute}
    iff for all $t_0,t_1,t_2$
    such that $t_0 \rewriten{1} t_1$
    and $t_0 \rewriten{2} t_2$, there exists $t_3$ such that
    $t_1 \rewriten{2} t_3$ and
     $t_2 \rewriten{1} t_3$.
The reasoning in our case is as follows:
\begin{itemize}
\item Each rule $\rGen \in \{\dBeta, \sTerm, \sVal\}$ induces a \emphdef{complete} reduction relation $\rewrite{\rGen}$, \ie confluent and terminating. 
\item The relation $\rewrite{\dBeta}$ commutes with $\rewrite{\rGen}$,
  for $\rGen \in  \{\sTerm, \sVal\}$.
\end{itemize}
The two previous points are straightforward.
\end{proof}

\comment{
The set of $\callbyname$-normal forms can be alternatively characterised by the
following definition. \[
\Rule{\phantom{x \in \HCBNNF}}{x \in \HCBNNF}{}
\qquad
\Rule{t \in \HCBNNF}{tu \in \HCBNNF}{}
\qquad
\Rule{t \in \HCBNNF}{ t \in \CBNNF}{}\qquad
\Rule{t \in \CBNNF}{ \lambda x. t \in \CBNNF}{}
\]
The set of $\callbyvalue$-normal forms can be alternatively characterised by
the following definition. \[
\begin{array}{c}
\Rule{\phantom{x \in \VarHCBVNF}}{x \in \VarHCBVNF}{} \qquad
\Rule{t \in \VarHCBVNF, u \in \HCBVNF}{t\exsubs{x}{u} \in \VarHCBVNF}{}
\\
\\
\qquad
\Rule{t \in \VarHCBVNF, u \in \CBVNF}{tu \in \HCBVNF}{} \qquad
\Rule{t \in \HCBVNF, u \in \CBVNF}{tu \in \HCBVNF}{} \qquad
\Rule{t \in \HCBVNF, u \in \HCBVNF}{t\exsubs{x}{u} \in \HCBVNF}{}
\\
\\
\Rule{\phantom{\lambda x. t \in \CBVNF}}{\lambda x. t \in \CBVNF}{} \qquad
\Rule{t \in \VarHCBVNF}{t \in \CBVNF}{}\qquad
\Rule{t \in \HCBVNF}{t \in \CBVNF}{}\qquad
\Rule{t \in \CBVNF, u \in \HCBVNF}{t\exsubs{x}{u} \in \CBVNF}{}
\end{array} \]
}

\paragraph{{\bf Embeddings}}
The CBN and CBV embeddings into the $\BangRev$-calculus, written  $\cbn{\_}$ and
$\cbv{\_}$ resp., are inductively defined as: \[
\begin{array}{c@{\qquad}c}
\begin{array}{rcl}
\cbn{x}                    & \eqdef & x \\
\cbn{(\termabs{x}{t})}     & \eqdef & \termabs{x}{\cbn{t}} \\
\cbn{(\termapp{t}{u})}     & \eqdef & \termapp{\cbn{t}}{\termbang{\cbn{u}}} \\
\cbn{(\termsubs{x}{u}{t})} & \eqdef & \termsubs{x}{\termbang{\cbn{u}}}{\cbn{t}}
\end{array}
&
\begin{array}{rcl}
\cbv{x} & \eqdef & \termbang{x} \\
\cbv{(\termabs{x}{t})} & \eqdef & \termbang{\termabs{x}{\cbv{t}}} \\
\cbv{(\termapp{t}{u})} & \eqdef & \left \{
\begin{array}{ll}
  \termapp{\ctxtapp{\ctxt{L}}{s}}{\cbv{u}} & \text{if $\cbv{t} = \ctxtapp{\ctxt{L}}{\termbang{s}}$} \\
\termapp{\termder{(\cbv{t})}}{\cbv{u}}   & \text{otherwise}
\end{array}
\right. \\
\cbv{(\termsubs{x}{u}{t})} & \eqdef & \termsubs{x}{\cbv{u}}{\cbv{t}}
\end{array}
\end{array} \]

Both translations extend to list contexts $\ctxt{L}$ as expected. Observe that
the terms of the $\BangRev$-calculus that are in the image of the embeddings
never contain two consecutive $\termbang{\!}$ constructors. The CBN embedding
extends Girard's translation to explicit substitutions, while the CBV one is
different. Indeed, the translation of an application $\termapp{t}{u}$ is
usually defined as $\termapp{\termder{(\cbv{t})}}{\cbv{u}}$ (see for
example~\cite{EhrhardG16}). This definition does not preserve normal forms,
\ie $\termapp{x}{y}$ is a $\callbyvalue$-normal form but its translated version
$\termapp{\termder{(\termbang{x})}}{\termbang{y}}$ is not a $\bangweak$-normal
form. We restore this fundamental property by using the notion of
superdevelopment~\cite{Hyland78,vanoostrom21}, so that 
$\dBang$-reductions created by the translation are
executed on the fly.

\begin{example}
\label{ex:embeddings1}
Special $\lambda$-terms are $\idl = \termabs{z}{z}$, $\Kterml =
\termabs{x}{\termabs{y}{x}}$, $\Deltal = \termabs{x}{\termapp{x}{x}}$, and
$\Omegal = \termapp{\Deltal}{\Deltal}$. Observe that  $\cbn{\idl} = \idl = \id$
and $\cbn{\Kterml} = \Kterml = \Kterm$. In this example, we compute $\cbn{t}$
and $\cbv{t}$, where $t = \Kterml\idl\Omegal$. \[
\begin{array}{rll}
\cbn{t}
& =              & \cbn{(\Kterml\idl)}\termbang{\cbn{\Omegal}} \\
& =              & \cbn{\Kterml}\termbang{\cbn{\idl}}\termbang{\cbn{\Omegal}}  \\
& =              & \Kterm\termbang{\id}\termbang{(\cbn{\Deltal}\termbang{\cbn{\Deltal}})}  \\
& =              & \Kterm\termbang{\id}\termbang{(\termapp{(\termabs{x}{\termapp x {\termbang x}})}{\termbang{(\termabs{x}{\termapp x {\termbang x}}}}}))\\
& =              & \Kterm\termbang{\id}\termbang{(\termapp{\Delta}{\termbang{\Delta}})}\\
& =              & \Kterm\termbang{\id}\termbang{\Omega}\\
\end{array} \]
The $\BangRev$-term \cbn{t} is the same as the one obtained by performing a
$\dBang$-step starting from the term $t_0$ of Example~\ref{example:t0}. The CBV
embedding is a bit more involved, due to the superdevelopments. \[
\begin{array}{rll}
\cbv{t}
& =              & \termapp{\termder{\cbv{(\termapp\Kterml\idl)}}}{\cbv{\Omegal}} \\
& =              & \termapp{\termder{(\termapp{(\termabs{x}{\termbang{\termabs{y}{\termbang x}}})}{(\termbang{\termabs{x}{}\termbang x})})}}{\cbv{\Omegal}} \\
& =              & \termapp{\termder{(\termapp{(\termabs{x}{\termbang{\termabs{y}{\termbang x}}})}{(\termbang{\termabs{x}{}\termbang x})})}}      (\termapp{(\termabs{x}{\termapp x {\termbang x}})}{\termbang{(\termabs{x}{\termapp x {\termbang x}}}))              } \\
& =              & \termapp{\termder{(\termapp{(\termabs{x}{\termbang{\termabs{y}{\termbang x}}})}{(\termbang{\termabs{x}{}\termbang x})})}}  {\Omega    } \\
\end{array} \]
Notice that in this particular case we have $\cbn{\Omegal} = \cbv{\Omegal} =
\Omega$.
\end{example}

The example above will be continued after the proof of the fact that the
embeddings preserve the CBN and CBV reductions.

We first show (Lemma~\ref{l:normal-forms-cbn-cbv}) that the set of
$\callbyname$-normal forms and $\callbyvalue$-normal forms can be respectively
characterised by the following grammars:
\begin{center}
\begin{tabular}{rrcll}
\textbf{(CBN Neutral)} & $\HCBNNF$ & $\Coloneq$ & $x \in \TermVariable \mid \termapp{\HCBNNF}{t}$ \\
\textbf{(CBN Normal)}  & $\CBNNF$  & $\Coloneq$ & $\termabs{x}{\CBNNF} \mid \HCBNNF$ \\
\\
\textbf{(Variable)}    & $\VarHCBVNF$ & $\Coloneq$ & $x \in \TermVariable \mid \termsubs{x}{\HCBVNF}{\VarHCBVNF}$ \\
\textbf{(CBV Neutral)} & $\HCBVNF$    & $\Coloneq$ & $\termapp{\VarHCBVNF}{\CBVNF} \mid \termapp{\HCBVNF}{\CBVNF} \mid \termsubs{x}{\HCBVNF}{\HCBVNF}$ \\
\textbf{(CBV Normal)}  & $\CBVNF$     & $\Coloneq$ & $\termabs{x}{t} \mid \VarHCBVNF \mid \HCBVNF \mid \termsubs{x}{\HCBVNF}{\CBVNF}$
\end{tabular}
\end{center}

Note that the grammar for neutral and normal forms for CBN does not use the ES operator. The reason is that   the $\sTerm$-rule executing ES can always be fired in a CBN context, so that
  ES only appear in the right-hand sides of applications of neutral terms. 
Note that variables are left out of the definition of neutral terms for the CBV
case, since they are now considered values. Moreover, in this way both CBN and
CBV neutral terms translate to neutral terms of the $\BangRev$-calculus.

\begin{lemma}\mbox{}
\label{l:normal-forms-cbn-cbv}
Let $t \in \TermLambda$. 
\begin{enumerate}
  \item\label{l:normal-forms-cbn} $t \in \CBNNF$ iff $t \not
  \rewrite{\callbyname}$.
  \item\label{l:normal-forms-cbv} $t \in \CBVNF$ iff $t \not
  \rewrite{\callbyvalue}$.
\end{enumerate}
\end{lemma}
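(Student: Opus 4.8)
The plan is to characterise the normal forms by the standard two-directional structural argument, proving each implication by a separate induction and \emph{strengthening} the statements with the predicates $\pabs$, $\pvar$, $\papp$ so that the mutually recursive grammar classes close under induction. The structural fact I will use throughout is the \emph{trichotomy}: after stripping the outermost list context $\ctxt{L}$, the head of any term is a variable, an abstraction, or an application, so exactly one of $\pvar{t}$, $\pabs{t}$, $\papp{t}$ holds. From this I read off when a root redex fires: $\termapp{s}{u}$ is a $\dBeta$-redex exactly when $\pabs{s}$; the closure $\termsubs{x}{u}{t}$ is an $\sVal$-redex exactly when $\pvar{u}$ or $\pabs{u}$ (equivalently $\neg\papp{u}$), and it is an $\sTerm$-redex unconditionally.

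For item~\ref{l:normal-forms-cbn} I will prove simultaneously: (a) $t\in\HCBNNF$ iff $t\not\rewrite{\callbyname}$ and $\neg\pabs{t}$; and (b) $t\in\CBNNF$ iff $t\not\rewrite{\callbyname}$. The left-to-right implications go by induction on the grammar derivation: an $\HCBNNF$ term has a variable head, so no root $\dBeta$ fires (hence $\neg\pabs{t}$), it is never a closure so no root $\sTerm$ fires, its only $\ctxt{N}$-reachable proper subterm is the left child of an application (normal by the induction hypothesis), and arguments sit in non-reducible positions; the $\CBNNF$ cases are immediate, the body of an abstraction being its only reachable subterm. The right-to-left implications go by structural induction on $t$: a closure is always $\sTerm$-reducible at the root, hence excluded; for an application $\termapp{s}{u}$ normality blocks the root $\dBeta$, giving $\neg\pabs{s}$ and $s$ normal, so $s\in\HCBNNF$ by the induction hypothesis and $\termapp{s}{u}\in\HCBNNF$; a normal $\termabs{x}{s}$ has $s\in\CBNNF$; and any normal $t$ that is neither a closure nor an abstraction satisfies $\neg\pabs{t}$, landing in $\HCBNNF\subseteq\CBNNF$ by (a).

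For item~\ref{l:normal-forms-cbv} I will prove simultaneously: (A) $t\in\VarHCBVNF$ iff $t\not\rewrite{\callbyvalue}$ and $\pvar{t}$; (B) $t\in\HCBVNF$ iff $t\not\rewrite{\callbyvalue}$ and $\papp{t}$; (C) $t\in\CBVNF$ iff $t\not\rewrite{\callbyvalue}$. The forward implications are again a structural induction on the grammar, the crucial point being that every substituted-term position in these grammars is an $\HCBVNF$, i.e. a $\papp$ term, which is exactly what blocks the root $\sVal$-redex, and that $\termabs{x}{t}$ is $\callbyvalue$-normal for an arbitrary body since $\ctxt{V}$ never enters abstractions. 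The backward direction is by structural induction on $t$ phrased as one combined statement: assuming $t\not\rewrite{\callbyvalue}$, I first establish the two conditional parts ``$\pvar{t}\Rightarrow t\in\VarHCBVNF$'' and ``$\papp{t}\Rightarrow t\in\HCBVNF$'' using the induction hypothesis on proper subterms only, and then derive $t\in\CBVNF$ by the trichotomy: if $t$ is an abstraction it is directly in $\CBVNF$; if $\pvar{t}$ or $\papp{t}$ I invoke the two parts just proved at $t$; and if $\pabs{t}$ with $t$ a closure $\termsubs{x}{u}{s}$, then $\neg\pvar{u}$ and $\neg\pabs{u}$, so $\papp{u}$ and thus $u\in\HCBVNF$, while $s\in\CBVNF$ by the induction hypothesis, placing $t$ in the production $\termsubs{x}{\HCBVNF}{\CBVNF}$.

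I expect no deep difficulty, but the delicate points — where the argument could silently break — are getting the strengthenings right so the mutual recursion closes: tracking $\pabs$ for CBN and $\pvar$/$\papp$ for CBV, observing that $\HCBVNF$ must coincide with ``$\papp$ and normal'' precisely because neutral terms are the only ones admissible in substituted-term position, and ordering the backward CBV induction so that $\CBVNF$ at $t$ is concluded from the two predicate-conditional parts proved \emph{at the same} $t$ rather than from the induction hypothesis, which keeps the recursion well-founded instead of circular.
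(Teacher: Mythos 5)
Your proposal is correct and follows essentially the same route as the paper: both directions are handled by a simultaneous induction on statements strengthened with head predicates, with the grammar classes closed under the mutual recursion exactly as you describe. The only cosmetic difference is that you phrase the CBV strengthenings positively ($\pvar{t}$, $\papp{t}$) where the paper uses the equivalent negated forms ($\neg\pabs{t}$ and $\neg\papp{t}$, resp. $\neg\pabs{t}$ and $\neg\pvar{t}$), the two being interchangeable by the trichotomy you state.
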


\begin{proof} 
  \begin{enumerate}
  \item We prove simultaneously the following statements:
    \begin{enumerate}
    \item \label{ldix:uno} $t \in \HCBNNF$ iff $t \not \rewrite{\callbyname}$ and $\neg \pabs{t}$.
    \item \label{ldix:dos} $t \in \CBNNF$ iff $t \not \rewrite{\callbyname}$.
    \end{enumerate}
    
    \begin{enumerate}
      \item \label{case-non-lambda} The left-to-right implication is
        straightforward. For the right-to-left implication we reason
        by induction on $t$.  Suppose $t \not
        \rewrite{\callbyname}$ and $\neg \pabs{t}$. If $t$ is a
        variable, then $t\in \HCBNNF$.  If $t$ is a substitution, then
        rule $\sTerm$ is applicable, so this
          case does not apply. If $t = \termapp{u}{u'}$, then $\neg
        \pabs{u}$ (otherwise $\dBeta$ would be applicable) and $u \not
        \rewrite{\callbyname}$ (otherwise $t$ would be
        $\callbyname$-reducible).  The \ih (\ref{ldix:uno}) gives $u
        \in \HCBNNF$ and thus we conclude $t=u u' \in \HCBNNF$.
      
      \item The left-to-right implication is straightforward. For the
        right-to-left implication we reason by induction on
        $t$. If $t = \termabs{x}{u}$, then $u \not
        \rewrite{\callbyname}$, and the \ih (\ref{ldix:dos}) gives $u
        \in \CBNNF$, which implies in turn $\lambda x. u \in \CBNNF$.
        Otherwise, we apply the previous case.
    \end{enumerate}
    
    \item We prove simultaneously the following statements:
    \begin{enumerate}
      \item \label{ldix:dos-uno} $t \in \VarHCBVNF$ iff $t \not \rewrite{\callbyvalue}$ and $\neg \pabs{t}$ and $\neg \papp{t}$.
      
      \item \label{ldix:dos-dos} $t \in \HCBVNF$ iff $t \not \rewrite{\callbyvalue}$ and $\neg
      \pabs{t}$ and $\neg \pvar{t}$.
      
      \item \label{ldix:dos-tres} $t \in \CBVNF$ iff $t \not \rewrite{\callbyvalue}$.
    \end{enumerate}
    \begin{enumerate}
      \item The left-to-right implication is straightforward. For the
        right-to-left implication we reason by induction on
        $t$. Suppose $t \not \rewrite{\callbyvalue}$
        and $\neg \pabs{t}$ and $\neg \papp{t}$.  Then necessarily
        $\pvar{t}$.  If $t = u\exsubs{x}{u'}$, then $u, u'\not
        \rewrite{\callbyvalue}$ (otherwise $t$ would be
        $\callbyvalue$-reducible), $\neg \pabs{u'}$ and $\neg
        \pvar{u'}$ (otherwise $t$ would be $\sVal$-reducible), $\neg
        \pabs{u}$ and $\neg \papp{u}$.  The \ih (\ref{ldix:dos-uno})
        gives $u \in \VarHCBVNF$ and the \ih (\ref{ldix:dos-dos})
        gives $u' \in \HCBVNF$, so that we conclude $t \in
        \VarHCBVNF$.  If $t=x$ we trivially conclude $t \in
        \VarHCBVNF$.
      
      \item The left-to-right implication is straightforward. For the
      right-to-left implication we reason by induction on $t$.
      Suppose $t \not \rewrite{\callbyvalue}$ and $\neg \pabs{t}$
      and $\neg \pvar{t}$. Then necessarily $\papp{t}$.
      If $t = u\exsubs{x}{u'}$, then $u, u'\not \rewrite{\callbyvalue}$
      (otherwise $t$ would be $\callbyvalue$-reducible),
      $\neg \pabs{u'}$ and $\neg \pvar{u'}$ (otherwise $t$ would be $\sVal$-reducible),
      $\neg \pabs{u}$ and $\neg \pvar{u}$.
      The \ih (\ref{ldix:dos-dos}) gives $u \in \HCBVNF$ and
      $u' \in \HCBVNF$, so that we conclude $t \in \HCBVNF$.
      If $t = uu'$, then $u, u'\not \rewrite{\callbyvalue}$
      (otherwise $t$ would be $\callbyvalue$-reducible),
      $\neg \pabs{u}$ (otherwise $t$ would be $\dBeta$-reducible).
      The \ih (\ref{ldix:dos-tres}) gives $u' \in \CBVNF$.
      Moreover, if $\neg \papp{u}$ (resp.  $\neg \pvar{u}$), then
      the \ih (\ref{ldix:dos-uno}) gives $u \in \VarHCBVNF$
(resp. the \ih (\ref{ldix:dos-dos}) gives $u \in \HCBVNF$)
      In both cases we conclude $t=uu' \in  \HCBVNF$.
Since $\neg \papp{u}$ or  $\neg \pvar{u}$) holds, we are done.

      \item The left-to-right implication is straightforward. For the
      right-to-left implication we reason by induction on $t$.
      Suppose $t \not \rewrite{\callbyvalue}$. If $t=\lambda x. u$,
      then $t \in \CBVNF$ is straightforward. If $t =
      u\exsubs{x}{u'}$, then $u, u'\not \rewrite{\callbyvalue}$
      (otherwise $t$ would be $\callbyvalue$-reducible), $\neg
      \pabs{u'}$ and $\neg \pvar{u'}$ (otherwise $t$ would be $\sVal$-reducible).
      The \ih (\ref{ldix:dos-tres}) and (\ref{ldix:dos-dos}) give
      $u \in \CBVNF$ and $u' \in \HCBVNF$, so that we conclude $t \in \CBVNF$.
      If $t=\termapp{u}{u'}$, then $u \not \rewrite{\callbyvalue}$ and $u' \not
      \rewrite{\callbyvalue}$.  The \ih (\ref{ldix:dos-tres}) on $u'$ gives $u' \in
      \CBVNF$.  Moreover, $\neg \pabs{u}$, otherwise $t$ would be
      $\callbyvalue$-reducible. Since either $\neg \pvar{u}$ or
        $\neg \papp{u}$ must hold, then the
        \ih (\ref{ldix:dos-uno}) or (\ref{ldix:dos-dos}) gives
      $u \in \VarHCBVNF$ or $u \in \HCBVNF$.
      Thus $\termapp{u}{u'} \in \HCBVNF\subseteq \CBVNF$ as required.
    \end{enumerate}
  \end{enumerate}
\end{proof}

The following lemma shows that the CBN and CBV embeddings into the $\BangRev$-calculus preserve the property of being in normal form.
\begin{lemma}
\label{l:preservation-cbn-cbv}
Let $t \in \TermLambda$.
\begin{enumerate}
  \item\label{l:preservation-cbn} If $t \not\rewrite{\callbyname}$, then $\cbn{t}\not\rewrite{\bangweak}$.
  \item\label{l:preservation-cbv} If $t \not\rewrite{\callbyvalue}$, then $\cbv{t}\not\rewrite{\bangweak}$.
\end{enumerate}
\end{lemma}

\begin{proof}\mbox{} 
\begin{enumerate}
  \item By Lemma~\ref{l:normal-forms-cbn-cbv} and Proposition~\ref{prop:normal} it
  is sufficient to prove that $t \in \CBNNF$ implies $\pnrml{\cbn{t}}$. This is
  done by simultaneously showing the following points:
  \begin{enumerate}
    \item\label{l:preservation-cbn-ne} If $t \in \HCBNNF$, then $\pntrl{\cbn{t}}$.
    \item\label{l:preservation-cbn-no} If $t \in \CBNNF$, then $\pnrml{\cbn{t}}$.
  \end{enumerate}
  
  \begin{itemize}
    \item If $t = x \in \HCBNNF$, then $\cbn{x} = x$ and thus $\pntrl{x}$
    holds.
    
    \item If $t = \termapp{s}{u} \in \HCBNNF$ comes from $s \in \HCBNNF$, then
    the \ih (\ref{l:preservation-cbn-ne}) gives $\pntrl{\cbn{s}}$ and hence
    $\pnabs{\cbn{s}}$. Moreover
    $\pnabs{\termbang{\cbn{u}}}$ also holds, which implies
    $\pnrml{\termbang{\cbn{u}}}$. We can then conclude
    $\pntrl{\termapp{\cbn{s}}{\termbang{\cbn{u}}}}$, \ie
    $\pntrl{\cbn{(\termapp{s}{u})}}$.
    
    \item If $t \in \CBNNF$ comes from $t \in \HCBNNF$, then $\pntrl{\cbn{t}}$
    holds by the previous point, which gives $\pnrml{\cbn{t}}$.
    
    \item If $t = \termabs{x}{u} \in \CBNNF$ comes from $u \in \CBNNF$, then
    the \ih (\ref{l:preservation-cbn-no}) gives $\pnrml{\cbn{u}}$, which implies
    $\pnbang{\termabs{x}{\cbn{u}}}$, thus giving
    $\pnrml{\termabs{x}{\cbn{u}}}$. We conclude since $\cbn{(\termabs{x}{u})} =
    \termabs{x}{\cbn{u}}$.
  \end{itemize}
  
  \item By Lemma~\ref{l:normal-forms-cbn-cbv} and Proposition~\ref{prop:normal} it
  is sufficient to prove that $t \in \CBVNF$ implies $\pnrml{\cbv{t}}$. This is
  done by simultaneously showing the following points:
  \begin{enumerate}
    \item\label{l:preservation-cbv-vr} If $t \in \VarHCBVNF$, then $\pnabs{\cbv{t}}$.
    \item\label{l:preservation-cbv-ne} If $t \in \HCBVNF$, then $\pntrl{\cbv{t}}$.
    \item\label{l:preservation-cbv-no} If $t \in \CBVNF$, then $\pnrml{\cbv{t}}$.
  \end{enumerate}
  
  \begin{itemize}
    \item If $t = x \in \VarHCBVNF$, then $\cbv{x} = \termbang{x}$ and
    $\pnabs{\termbang{x}}$.
    
    \item If $t = \termsubs{x}{u}{s} \in \VarHCBVNF$ comes from $s \in
    \VarHCBVNF$ and $u \in \HCBVNF$, then the \ih (\ref{l:preservation-cbv-vr}) gives $\pnabs{\cbv{s}}$ and
    $\pntrl{\cbv{u}}$, which in turn implies $\pnbang{\cbv{u}}$. Hence, it
    allows us to conclude $\pnabs{\termsubs{x}{\cbv{u}}{\cbv{s}}}$.
    
    \item If $t = \termapp{s}{u} \in \HCBVNF$ comes from $s \in \VarHCBVNF$ and
    $u \in \CBVNF$. It is immediate to see that $\cbv{s} =
    \ctxtapp{\ctxt{L}}{\termbang{x}}$ for some variable $x$. Thus, $\cbv{t} =
    \termapp{\ctxtapp{\ctxt{L}}{x}}{\cbv{u}}$. Moreover, by \ih (\ref{l:preservation-cbv-vr})
    $\pnabs{\cbv{s}}$ holds, which implies $\pnbang{r}$ for all $\exsubs{y}{r}$
    in $\ctxt{L}$. Thus, $\pntrl{\ctxtapp{\ctxt{L}}{x}}$ holds.
    This implies $\pnabs{\ctxtapp{\ctxt{L}}{x}}$, by definition. Also, the \ih
    (\ref{l:preservation-cbv-no}) on $u \in \CBVNF$ gives $\pnrml{\cbv{u}}$. Hence, we conclude
    $\pntrl{\cbv{t}}$.

    \item If $t = \termapp{s}{u} \in \HCBVNF$ comes from $s \in \HCBVNF$ and
    $u \in \CBVNF$. By \ih (\ref{l:preservation-cbv-ne}) $\pntrl{\cbv{s}}$ holds. In particular, it implies
    $\pnbang{\cbv{s}}$ (\ie $\neg\pbang{\cbv{s}}$). Hence, $\cbv{t} =
    \termapp{\termder{(\cbv{s})}}{\cbv{u}}$ and $\pntrl{\termder{(\cbv{s})}}$.
    Moreover, the latter implies $\pnabs{\termder{(\cbv{s})}}$, by definition.
    Also, the \ih (\ref{l:preservation-cbv-no}) on $u \in \CBVNF$ gives $\pnrml{\cbv{u}}$. Thus, we conclude
    $\pntrl{\cbv{t}}$.
    
    \item If $t = \termsubs{x}{u}{s} \in \HCBVNF$ comes from $s \in \HCBVNF$
    and $u \in \HCBVNF$, then the \ih (\ref{l:preservation-cbv-ne}) gives $\pntrl{\cbv{s}}$ and
    $\pntrl{\cbv{u}}$, which in turn implies $\pnbang{\cbv{u}}$. Hence, it
    allows us to conclude $\pntrl{\termsubs{x}{\cbv{u}}{\cbv{s}}}$.
    
    \item If $t = \termabs{x}{s} \in \CBVNF$, then $\cbv{t} =
    \termbang{\termabs{x}{\cbv{s}}}$. Thus, $\pnabs{\cbv{t}}$ holds and we
    conclude $\pnrml{\cbv{t}}$.
    
    \item If $t \in \CBVNF$ comes from $t \in \VarHCBVNF$. Then,
    item (\ref{l:preservation-cbv-vr}) gives
    $\pnabs{\cbv{t}}$ which implies $\pnrml{\cbv{t}}$ by definition.

    \item If $t \in \CBVNF$ comes from $t \in \HCBVNF$. Then, item
    (\ref{l:preservation-cbv-ne}) gives $\pntrl{\cbv{t}}$ which
    implies $\pnrml{\cbv{t}}$.
    
    \item If $t = \termsubs{x}{u}{s} \in \CBVNF$ comes from $s \in \CBVNF$
    and $u \in \HCBVNF$, then the \ih (\ref{l:preservation-cbv-no}) and
    (\ref{l:preservation-cbv-ne}) gives $\pnrml{\cbv{s}}$ and
    $\pntrl{\cbv{u}}$, which in turn implies $\pnbang{\cbv{u}}$. Hence, it
    allows us to conclude $\pnrml{\termsubs{x}{\cbv{u}}{\cbv{s}}}$.
  \end{itemize}
\end{enumerate}
\end{proof}

Simulation of CBN and CBV reductions in the $\BangRev$-calculus can be shown by
induction on the reduction relations. We start with a technical lemma
showing that the substitution is compatible with the CBN translation, whereas in the case of the CBV translation the argument of the substitution must be adjusted.

\begin{lemma}\mbox{}
\label{l:substitution-cbn-cbv}
Let $t, u, v \in \TermLambda$ with $v$ a value.
\begin{enumerate}
  \item\label{l:substitution-cbn} $\cbn{\substitute{x}{u}{t}} =
  \substitute{x}{\cbn{u}}{\cbn{t}}$.
  \item\label{l:substitution-cbv} $\cbv{\substitute{x}{v}{t}} =
  \substitute{x}{u}{\cbv{t}}$, where $\cbv{v} = \termbang{u}$.
\end{enumerate}
\end{lemma}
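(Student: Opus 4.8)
The plan is to prove both items by structural induction on $t$, taking substitution modulo $\alpha$-conversion so that all bound variables avoid $x$ and the free variables of $u$ (resp. $v$); this silently handles the binder cases. For the CBN equation every case is a direct computation, since the translation is purely homomorphic: it merely wraps each application argument and each closure body in a bang. Thus for $t = \termapp{t_1}{t_2}$ one computes $\cbn{\substitute{x}{u}{\termapp{t_1}{t_2}}} = \termapp{\cbn{\substitute{x}{u}{t_1}}}{\termbang{\cbn{\substitute{x}{u}{t_2}}}}$, applies the \ih to both subterms, and recognises the outcome as $\substitute{x}{\cbn{u}}{\cbn{(\termapp{t_1}{t_2})}}$; the variable, abstraction and closure cases are analogous. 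No case analysis on the shape of a translated subterm occurs, so there is no obstacle here.

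The CBV equation is delicate only in the application case, because $\cbv{(\termapp{t_1}{t_2})}$ is defined by cases according to whether $\cbv{t_1}$ is of the form $\ctxtapp{\ctxt{L}}{\termbang{s}}$, i.e. whether $\pbang{\cbv{t_1}}$ holds. The crucial observation — which I would isolate as a preliminary remark — is that since $v$ is a value, the term $u$ with $\cbv{v} = \termbang{u}$ is always either a variable or an abstraction, hence $u$ itself is never of bang-shape. Consequently $\pbang{\substitute{x}{u}{w}}$ holds iff $\pbang{w}$ holds, for any $w$: substituting a non-bang-shaped term neither creates nor destroys an outermost $\ctxtapp{\ctxt{L}}{\termbang{\cdot}}$ pattern, the only risky subcase being $w = x$, where the replacement $u$ is precisely not bang-shaped. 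Combined with the \ih $\cbv{\substitute{x}{v}{t_1}} = \substitute{x}{u}{\cbv{t_1}}$, this shows that the two defining branches are selected coherently on both sides of the target equation.

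With branch-coherence in hand I would treat the two branches separately. If $\pbang{\cbv{t_1}}$, write $\cbv{t_1} = \ctxtapp{\ctxt{L}}{\termbang{s}}$; then, using the \ih on $t_1$ and $t_2$ and pushing $\substitute{x}{u}{\cdot}$ through $\ctxt{L}$ and $s$, both sides reduce to $\termapp{\ctxtapp{\substitute{x}{u}{\ctxt{L}}}{\substitute{x}{u}{s}}}{\substitute{x}{u}{\cbv{t_2}}}$. If $\neg\pbang{\cbv{t_1}}$, both sides reduce to $\termapp{\termder{(\substitute{x}{u}{\cbv{t_1}})}}{\substitute{x}{u}{\cbv{t_2}}}$. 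The remaining CBV cases are homomorphic and dispatched directly by the \ih: the variable case splits on $t = x$ (where $\cbv{\substitute{x}{v}{x}} = \cbv{v} = \termbang{u} = \substitute{x}{u}{\termbang{x}}$) versus $t = y \neq x$, and the abstraction and closure cases merely commute the substitution past $\termbang{\termabs{y}{\cdot}}$ and $\termsubs{y}{\cdot}{\cdot}$ respectively. The one genuine subtlety, and the step I expect to require the most care, is thus the branch-coherence claim for the superdevelopment clause; once that auxiliary fact about bang-shape under substitution is established, the whole induction goes through mechanically.
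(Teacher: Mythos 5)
Your proof is correct and takes essentially the same route as the paper's: structural induction on $t$, with the CBN case purely homomorphic, and the CBV application case split according to whether the translation of the function subterm is of the shape $\ctxtapp{\ctxt{L}}{\termbang{s}}$. Your ``branch-coherence'' observation is precisely the paper's key remark: since $v$ is a value, $u$ (with $\cbv{v} = \termbang{u}$) is a variable or an abstraction, so substituting it can neither create nor destroy an outermost bang shape, which keeps the two defining branches of the translation aligned on both sides of the equation.
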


\begin{proof} \mbox{}
\begin{enumerate}
  \item By induction on $t$.
  \begin{itemize}
    \item $t = x$. Then, $\cbn{\substitute{x}{u}{x}} = \cbn{u} =
    \substitute{x}{\cbn{u}}{x} = \substitute{x}{\cbn{u}}{\cbn{x}}$.
    
    \item $t = y$. Then, $\cbn{\substitute{x}{u}{y}} = y =
    \substitute{x}{\cbn{u}}{y} = \substitute{x}{\cbn{u}}{\cbn{y}}$.
    
    \item $t = \termapp{t_0}{t_1}$. Then, \[
\begin{array}{rll}
\cbn{\substitute{x}{u}{(t_0 t_1)}}
 & =              & \cbn{(\termapp{\substitute{x}{u}{t_0}}{\substitute{x}{u}{t_1}})} \\
 & =              & \termapp{\cbn{\substitute{x}{u}{t_0}}}{\termbang{(\cbn{\substitute{x}{u}{t_1}})}} \\
 & =_{\text{\ih}} & \termapp{\substitute{x}{\cbn{u}}{\cbn{t_0}}}{\termbang{\substitute{x}{\cbn{u}}{\cbn{t_1}}}} \\
 & =              & \substitute{x}{\cbn{u}}{(\termapp{\cbn{t_0}}{\termbang{\cbn{t_1}}})} \\
 & =              & \substitute{x}{\cbn{u}}{\cbn{(\termapp{t_0}{t_1})}}
\end{array} \]
    
    \item $t = \termabs{y}{t_0}$. Straightforward by the \ih
    
    \item $t = \termsubs{y}{t_1}{t_0}$. Similar to the previous case. 
  \end{itemize}
  
  \item By induction on $t$.
  \begin{itemize}
    \item $t = x$. Then, $\cbv{\substitute{x}{v}{x}} = \cbv{v} = \termbang{u} =
    \substitute{x}{u}{(\termbang{x})} = \substitute{x}{u}{\cbv{x}}$.
    
    \item $t = y$. Then, $\cbv{\substitute{x}{v}{y}} = \cbv{y} =
    \substitute{x}{u}{\cbv{y}}$.
    
    \item $t = \termapp{t_0}{t_1}$. There are two cases:
    \begin{itemize}
      \item If $\cbv{t_0} = \ctxtapp{\ctxt{L}}{\termbang{s}}$,
      then the \ih gives $\cbv{\substitute{x}{v}{t_0}} =
      \substitute{x}{u}{\cbv{t_0}} =
      \substitute{x}{u}{\ctxtapp{\ctxt{L}}{\termbang{s}}} =
      \ctxtapp{\substitute{x}{u}{\ctxt{L}}}{\termbang{\substitute{x}{u}{s}}}$.
      Then, \[
\begin{array}{rll}
\cbv{\substitute{x}{v}{(\termapp{t_0}{t_1})}}
 & =              & \cbv{(\termapp{\substitute{x}{v}{t_0}}{\substitute{x}{v}{t_1}})} \\
 & =              & \termapp{\ctxtapp{\substitute{x}{u}{\ctxt{L}}}{\substitute{x}{u}{s}}}{\cbv{(\substitute{x}{v}{t_1})}} \\
 & =_{\text{\ih}} & \termapp{\ctxtapp{\substitute{x}{u}{\ctxt{L}}}{\substitute{x}{u}{s}}}{(\substitute{x}{u}{\cbv{t_1}})} \\
 & =              & \substitute{x}{u}{\termapp{\ctxtapp{\ctxt{L}}{s}}{\cbv{t_1}}} \\
 & =              & \substitute{x}{u}{\cbv{(\termapp{t_0}{t_1})}}
\end{array} \]
      
      \item Otherwise, $\cbv{\substitute{x}{v}{(\termapp{t_0}{t_1})}} =
      \cbv{(\termapp{\substitute{x}{v}{t_0}}{\substitute{x}{v}{t_1}})}$. The
      \ih gives $\cbv{\substitute{x}{v}{t_0}} = \substitute{x}{u}{\cbv{t_0}}$
      and $\cbv{\substitute{x}{v}{t_1}} = \substitute{x}{u}{\cbv{t_1}}$.
      Moreover, $\cbv{t_0} \neq \ctxtapp{\ctxt{L}}{\termbang{s}}$ implies the
      same for $\substitute{x}{u}{\cbv{t_0}}$, simply because $u$ is a variable
      or an abstraction. Then, \[
\begin{array}{rll}
\cbv{(\termapp{\substitute{x}{v}{t_0}}{\substitute{x}{v}{t_1}})}
 & = & \termapp{\termder{(\substitute{x}{u}{\cbv{t_0}})}}{\substitute{x}{u}{\cbv{t_1}}} \\
 & = & \substitute{x}{u}{(\termapp{\termder{(\cbv{t_0})}}{\cbv{t_1}})} \\
 & = & \substitute{x}{u}{\cbv{(\termapp{t_0}{t_1})}}
\end{array} \]
    \end{itemize}
    
    \item $t = \termabs{y}{t_0}$. Then,\[
\begin{array}{rll}
\cbv{\substitute{x}{v}{(\termabs{y}{t_0})}}
 & =              & \cbv{(\termabs{y}{\substitute{x}{v}{t_0}})} \\
 & =              & \termbang{\termabs{y}{\cbv{(\substitute{x}{v}{t_0})}}} \\
 & =_{\text{\ih}} & \termbang{\termabs{y}{(\substitute{x}{u}{\cbv{t_0}})}} \\
 & =              & \substitute{x}{u}{\termbang{(\termabs{y}{\cbv{t_0}})}} \\
 & =              & \substitute{x}{u}{\cbv{(\termabs{y}{t_0})}}
\end{array} \]
    
    \item $t = \termsubs{y}{t_1}{t_0}$. This case is straightforward by the \ih
  \end{itemize}
\end{enumerate}
\end{proof}

The following lemma shows that CBN and CBV reductions are simulated
  via the embeddings. One CBN reduction step is simulated by exactly one step  of weak reduction, whereas
  a single CBV reduction step may give rise to several steps of weak reduction
  An example is given after the proof of the lemma.

\begin{lemma}
\label{l:reduction-cbn-cbv}
Let $t, s \in \TermLambda$.
\begin{enumerate}
  \item\label{l:reduction-cbn} If $t \rewrite{\callbyname} s$, then $\cbn{t} \rewrite{\bangweak} \cbn{s}$.
  \item\label{l:reduction-cbv} If $t \rewrite{\callbyvalue} s$, then $\cbv{t}
   \rewritenp{\bangweak} \cbv{s}$.
\end{enumerate}
Moreover, exponential steps in CBN/CBV
are always simulated by exponential steps in $\BangRev$,
while multiplicative  steps in CBN/CBV
are  simulated by at least one multiplicative step in $\BangRev$.
\end{lemma}

\begin{proof}
Both proofs are by induction on the reduction relations.
\begin{enumerate}
  \item Case $t  \rewrite{\callbyname} s$.
  \begin{itemize}
    \item $t = \termapp{\ctxtapp{\ctxt{L}}{\termabs{x}{r}}}{u} \rrule{\dBeta}
    \ctxtapp{\ctxt{L}}{\termsubs{x}{u}{r}} = s$. Then, $$\cbn{t} =
    \termapp{\ctxtapp{\cbn{\ctxt{L}}}{\termabs{x}{\cbn{r}}}}{\termbang{\cbn{u}}}
    \rewrite{\dBeta}
    \ctxtapp{\cbn{\ctxt{L}}}{\termsubs{x}{\termbang{\cbn{u}}}{\cbn{r}}} =
    \cbn{s}$$
    
    \item $t = \termsubs{x}{u}{r} \rrule{\sTerm} \substitute{x}{u}{r} = s$.
    Then, $$\cbn{t} = \termsubs{x}{\termbang{\cbn{u}}}{\cbn{r}}
    \rewrite{\sBang} \substitute{x}{\cbn{u}}{\cbn{r}}
    =_{L.\ref{l:substitution-cbn-cbv}} \cbn{s}$$
    
    \item $t = \termapp{r}{u} \rewrite{\callbyname} \termapp{r'}{u} = s$, or
    $t = \termabs{x}{r} \rewrite{\callbyname} \termabs{x}{r'} = s$, or $t =
    \termsubs{x}{u}{r} \rewrite{\callbyname} \termsubs{x}{u}{r'} = s$, where
    $r \rewrite{\callbyname} r'$. Then, we easily conclude by the \ih
  \end{itemize}
  
  \item Case $t \rewrite{\callbyvalue} s$.
  \begin{itemize}
    \item $t = \termapp{\ctxtapp{\ctxt{L}}{\termabs{x}{r}}}{u} \rrule{\dBeta}
    \ctxtapp{\ctxt{L}}{\termsubs{x}{u}{r}} = s$. Then, $$\cbv{t} =
    \termapp{\ctxtapp{\cbv{\ctxt{L}}}{\termabs{x}{\cbv{r}}}}{\cbv{u}}
    \rewrite{\dBeta} \ctxtapp{\cbv{\ctxt{L}}}{\cbv{r}\exsubs{x}{\cbv{u}}} =
    \cbv{s}$$
    
    \item $t = \termsubs{x}{\ctxtapp{\ctxt{L}}{v}}{r} \rrule{\sVal}
    \ctxtapp{\ctxt{L}}{\substitute{x}{v}{r}} = s$. Then, $$\cbv{t} =
    \termsubs{x}{\ctxtapp{\cbv{\ctxt{L}}}{\cbv{v}}}{\cbv{r}} \rewrite{\sBang}
    \ctxtapp{\cbv{\ctxt{L}}}{\substitute{x}{u}{\cbv{r}}}
    =_{L.\ref{l:substitution-cbn-cbv}} \cbv{s}$$ where $\cbv{v} =
    \termbang{u}$.
    
    \item $t = \termapp{r_0}{u} \rewrite{\callbyvalue} \termapp{r_1}{u} =
    s$ comes from $r_0 \rewrite{\callbyvalue} r_1$. The \ih gives $\cbv{r_0}
    \rewritenp{\bangweak} \cbv{r_1}$. There are two cases:
    \begin{itemize}
      \item If $\cbv{r_0} = \ctxtapp{\ctxt{L}}{\termbang{r}}$, then $\cbv{t}
      = \termapp{\ctxtapp{\ctxt{L}}{r}}{\cbv{u}}$. The \ih implies in
      particular that $\cbv{r_1} = \ctxtapp{\ctxt{L}'}{\termbang{r'}}$, which
      implies in turn $\ctxtapp{\ctxt{L}}{r} \rewritenp{\bangweak}
      \ctxtapp{\ctxt{L}'}{r'}$. We then have $\cbv{t} =
      \termapp{\ctxtapp{\ctxt{L}}{r}}{\cbv{u}} \rewritenp{\bangweak}
      \termapp{\ctxtapp{\ctxt{L}'}{r'}}{\cbv{u}} = \cbv{s}$.
      
      \item Otherwise, $\cbv{t} = \termapp{\termder{(\cbv{r_0})}}{\cbv{u}}$.
      We have again two cases. 
      \begin{itemize}
        \item If $\cbv{r_1} = \ctxtapp{\ctxt{L}}{\termbang{r}}$, then $$\kern-5em
        \cbv{t} = \termapp{\termder{(\cbv{r_0})}}{\cbv{u}} \rewritenp{\bangweak} 
        \termapp{\termder{(\cbv{r_1})}}{\cbv{u}} =
        \termapp{\termder{(\ctxtapp{\ctxt{L}}{\termbang{r}})}}{\cbv{u}}
        \rewrite{\dBang} \termapp{\ctxtapp{\ctxt{L}}{r}}{\cbv{u}} = \cbv{s}$$
        
        \item Otherwise, the \ih allows us to conclude $$\cbv{t} =
        \termapp{\termder{(\cbv{r_0})}}{\cbv{u}} \rewritenp{\bangweak} 
        \termapp{\termder{(\cbv{r_1})}}{\cbv{u}} = \cbv{s}$$
      \end{itemize}
    \end{itemize}
    
    \item $t = \termapp{r}{u} \rewrite{\callbyvalue} \termapp{r}{u'} =
    s$ comes from $u \rewrite{\callbyvalue} u'$. Then, there are two cases
    according to $\cbv{r}$ but the proof is easy using the \ih
    
    \item $t = \termsubs{x}{u}{r} \rewrite{\callbyvalue}
    \termsubs{x}{u}{r'} = s$ or $t = \termsubs{x}{r}{u}
    \rewrite{\callbyvalue} \termsubs{x}{r'}{u} = s$, where $r
    \rewrite{\callbyvalue} r'$. Then, we easily conclude by the \ih
  \end{itemize}
\end{enumerate}
\end{proof}

As mentioned above, the CBV case may require several reduction steps between
$\cbv{t}$ and $\cbv{s}$.
For instance, if $t = \termapp{\termapp{\id}{y}}{z} \rewrite{\callbyvalue} \termapp{\termsubs{w}{y}{w}}{z}=s$,
then $\cbv{t} = \termapp{\termder{(\termapp{(\termabs{w}{\termbang{w}})}{\termbang{y}})}}{\termbang{z}}
\rewrite{\bangweak} \termapp{\termder{(\termsubs{w}{\termbang{y}}{\termbang{w}})}}{\termbang{z}}
\rewrite{\bangweak} \termapp{\termsubs{w}{\termbang{y}}{w}}{\termbang{z}} = \cbv{s}$.

One may as well wonder if the converse of Lemma~\ref{l:reduction-cbn-cbv} also holds. Indeed, the property holds for the CBN case, \ie $\cbn{t} \rewrite{\bangweak} \cbn{s}$ implies $t \rewrite{\callbyname} s$.
However, the  following example~\cite{Arrial-exemple}  shows that the property does not hold for CBV. Let $t =  \termapp{(\termabs{x}{\termapp{(\termabs{y}{y})}{z}})}{z}$ and $s= \termapp{(\termabs{x}{\termsubs{y}{z}{y}})}{z}$.
Then $\cbv{t}= \termapp{(\termabs{x}{\termapp{(\termabs{y}{\termbang{y}})}{\termbang{z}}})}{\termbang{z}} \rewrite{\bangweak} \termapp{(\termabs{x}{\termsubs{y}{\termbang{z}}{(\termbang{y})}})}{\termbang{z}} = \cbv{s}$,
but $t$ does not reduce to $s$ in CBV.

\begin{example}\label{ex:embeddings2}
Let us consider again the term $t=\Kterml\idl\Omegal$ of the Example~\ref{ex:embeddings1}.
We have seen that 
$\cbn{t}= \Kterm\termbang{\id}\termbang{\Omega}$
and $\cbv{t}= \termapp{\termder{(\termapp{(\termabs{x}{\termbang{\termabs{y}{\termbang x}}})}{(\termbang{\termabs{x}{\termbang x}})})}}  {\Omega}$.
The CBN reduction of $t$ is the following:
$$\Kterml\idl\Omegal\rewrite{\dBeta}\termapp{(\termsubs x{\idl}{\termabs y x})}\Omegal\rewrite{\sTerm} \termapp{(\termabs y \idl)}{\Omegal}\rewrite{\dBeta} \termsubs y{\Omegal}{\idl}\rewrite{\sTerm}\idl $$
The corresponding reduction of $\cbn{t}$ in the $\BangRev$-calculus is:
$$\Kterm\termbang{\id}\termbang{\Omega}\rewrite{\dBeta}\termapp{(\termsubs x{\termbang\id}{\termabs y x})}{\termbang\Omega}\rewrite{\sBang} \termapp{(\termabs{y}{\id})}{\termbang\Omega}
\rewrite{\dBeta} \termsubs y{\Omega}{\id}\rewrite{\sBang}\id
$$
The CBV reduction of $t$ is the following:
$$\Kterml\idl\Omegal\rewrite{\dBeta}\termapp{(\termsubs x{\idl}{\termabs y x})}\Omegal
\rewrite{\sVal}\termapp{(\termabs y \idl)}{\Omegal}\rewrite{\dBeta}$$
$$ \rewrite{\dBeta} \termsubs y{\Omegal}{\idl}
\rewrite{\dBeta} \termsubs y {\termsubs x{\Deltal}{(\termapp x x )}}{\idl}
$$
Since $\termsubs y{\termsubs x {\Deltal}{(\termapp x x )}}{\idl}\rewrite{\sVal} \termsubs y {\Omegal}{\idl}$, the reduction enters a loop.
Accordingly, the reduction of $\cbv{t}$ is:
$$
\termapp{\termder{(\termapp{(\termabs{x}{\termbang{\termabs{y}{\termbang x}}})}{(\termbang{\termabs{x}{\termbang x}})})}}{\Omega}\rewrite{\dBeta}
\termapp{\termder{(\termsubs {x} {\termbang{\termabs{x}{\termbang x}}}{(\termbang{\termabs{y}{\termbang x}})})}}{\Omega} \rewrite{\dBang}$$
$$\rewrite{\dBang} \termapp{\termsubs{x}{\termbang{\termabs{x}{\termbang x}}}{(\termabs{y}{\termbang x})}}{\Omega}
\rewrite{\dBeta} \termsubs {y}{\Omega}{\termsubs{x}{\termbang{\termabs{x}{\termbang x}}}{(\termbang x)}}\rewrite{\sBang} \termsubs{y}{\Omega}{(\termbang{\termabs{x}{\termbang x}})}
\rewrite{\dBeta}$$
$$
\rewrite{\dBeta} \termsubs{y}{\termsubs{x}{\termbang\Delta}{(\termapp{x}{\termbang{x}})}}{(\termbang{\termabs{x}{\termbang x}})}
$$
Since $ \termsubs{y}{\termsubs{x}{\termbang\Delta}{(\termapp{x}{\termbang{x}})}}{(\termbang{\termabs{x}{\termbang x}})}\rewrite{\sBang}  \termsubs{y}{\Omega}{(\termbang{\termabs{x}{\termbang x}})} $, the reduction enters a loop.
\end{example}

\paragraph{{\bf Non-Idempotent Types for Call-by-Name and Call-by-Value}}
For CBN we use the non-idempotent type system defined
in~\cite{KesnerV14} for explicit substitutions, that we present in
Figure~\ref{fig:typingSchemesNV} (top), and which is an extension of
that in~\cite{Gardner94}. For CBV,  we slightly  reformulate the
non-idempotent system in~\cite{GuerrieriM18}, tha we  present in
Figure~\ref{fig:typingSchemesNV} (bottom), in order  to recover
completeness of the (typed) CBV translation. In both cases,
  the types of our systems are the same as the ones in system  $\SysBang$.

\begin{figure}
\centering $
\begin{array}{c}
\textbf{System $\SysCBN$ for Call-by-Name}
\\[1em]
\Rule{\vphantom{\Gamma}}
     {\sequ{\assign{x}{\intertype{\sigma}{}}}{\assign{x}{\sigma}}}
     {\ruleNAxiom}
\qquad
\Rule{\sequ{\Gamma; \assign{x}{\intertype{\sigma_i}{i \in I}}}{\assign{t}{\tau}}
      \quad
      \many{\sequ{\Delta_i}{\assign{u}{\sigma_i}}}{i \in I}
     }
     {\sequ{\ctxtsum{\Gamma}{\Delta_i}{i \in I}}{\assign{\termsubs{x}{u}{t}}{\tau}}}
     {\ruleNESubs}
\\
\\
\Rule{\sequ{\Gamma}{\assign{t}{\tau}}}
     {\sequ{\ctxtres{\Gamma}{x}{}}{\assign{\termabs{x}{t}}{\functtype{\Gamma(x)}{\tau}}}}
     {\ruleNArrowI}
\qquad
\Rule{\sequ{\Gamma}{\assign{t}{\functtype{\intertype{\sigma_i}{i \in I}}{\tau}}}
      \quad
      \many{\sequ{\Delta_i}{\assign{u}{\sigma_i}}}{i \in I}
     }
     {\sequ{\ctxtsum{\Gamma}{\Delta_i}{i \in I}}{\assign{\termapp{t}{u}}{\tau}}}
     {\ruleNArrowE}
\\[2em]
\textbf{System $\SysCBV$ for Call-by-Value}
\\[1em]
\Rule{\vphantom{\Gamma}}
     {\sequ{\assign{x}{\M}}{\assign{x}{\M}}}
     {\ruleVAxiom}
\qquad
\Rule{\sequ{\Gamma}{\assign{t}{\sigma}}
      \quad
      \sequ{\Delta}{\assign{u}{\Gamma(x)}}
     }
     {\sequ{\ctxtsum{(\ctxtres{\Gamma}{x}{})}{\Delta}{}}{\assign{\termsubs{x}{u}{t}}{\sigma}}}
     {\ruleVESubs}
\\
\\
\Rule{\many{\sequ{\Gamma_i}{\assign{t}{\tau_i}}}{i \in I}}
     {\sequ{\ctxtsum{}{\ctxtres{\Gamma_i}{x}{}}{i \in I}}{\assign{\termabs{x}{t}}{\intertype{\functtype{\Gamma_i(x)}{\tau_i}}{i \in I}}}}
     {\ruleVArrowI}
\qquad
\Rule{\sequ{\Gamma}{\assign{t}{\intertype{\functtype{\M}{\tau}}{}}}
      \quad
      \sequ{\Delta}{\assign{u}{\M}}
     }
     {\sequ{\ctxtsum{\Gamma}{\Delta}{}}{\assign{\termapp{t}{u}}{\tau}}}
     {\ruleVArrowE}
\end{array} $
\caption{Typing schemes for CBN/CBV.}
\label{fig:typingSchemesNV}
\end{figure}

We write $\derivable{}{\sequN{\Gamma}{\assign{t}{\sigma}}}{\SysCBN}$
(resp.  $\derivable{}{\sequV{\Gamma}{\assign{t}{\sigma}}}{\SysCBV}$)
if there exists a type derivation in system $\SysCBN$
(resp. $\SysCBV$). We use names for type derivations as in Sec.~\ref{s:system}.
A key  point in rule $\ruleVArrowE$  is that left hand sides
of applications are typed with multisets of the form
$\intertype{\functtype{\M}{\tau}}{}$, where $\tau$ is any type, potentially a
base one, while~\cite{GuerrieriM18} necessarily requires a multiset of the form
$\intertype{\functtype{\M}{\M'}}{}$, a subtle difference which breaks
completeness. System $\SysCBN$ (resp. $\SysCBV$) can be understood as a
relational model of the Call-by-Name (resp. Call-by-Value) calculus, in the
sense that typing is stable by reduction and expansion (see
Sec.~\ref{s:technical-name} and~\ref{s:technical-value}).

The CBV translation is not complete for the system in~\cite{GuerrieriM18}, \ie
there exists a $\lambda$-term $t$ such that
$\sequB{\Gamma}{\assign{\cbv{t}}{\sigma}}$ is derivable in $\SysBang$ but
$\sequV{\Gamma}{\assign{t}{\sigma}}$ is not derivable in their system (see \cite{GuerrieriM18} Proposition 15).
In this article, we recover the completeness of the translations.  More precisely, our two embeddings are sound and
complete w.r.t. system $\SysBang$:

\begin{theorem}[Soundness/Completeness of the Embeddings]
Let ${t \in \TermLambda}$.
\begin{enumerate}
  \item \label{t:soundness-completeness-cbn}
  $\derivable{}{\sequN{\Gamma}{\assign{t}{\sigma}}}{\SysCBN}$ iff
  $\derivable{}{\sequB{\Gamma}{\assign{\cbn{t}}{\sigma}}}{\SysBang}$.

  \item \label{t:soundness-completeness-cbv}
  $\derivable{}{\sequV{\Gamma}{\assign{t}{\sigma}}}{\SysCBV}$ iff
  $\derivable{}{\sequB{\Gamma}{\assign{\cbv{t}}{\sigma}}}{\SysBang}$.
\end{enumerate}
\label{t:soundness-completeness-cbn-cbv}
\end{theorem}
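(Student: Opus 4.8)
The plan is to prove each of the two statements separately by induction on the structure of the source term $t \in \TermLambda$, exploiting the fact that every clause of the translation is driven by the head constructor of $t$. Since the claim is purely qualitative — only the context $\Gamma$ and the type $\sigma$ are tracked, not derivation sizes — it suffices, in each case, to exhibit a correspondence between derivations of $\sequN{\Gamma}{\assign{t}{\sigma}}$ (resp. $\sequV{\Gamma}{\assign{t}{\sigma}}$) and derivations of $\sequB{\Gamma}{\assign{\cbn{t}}{\sigma}}$ (resp. $\sequB{\Gamma}{\assign{\cbv{t}}{\sigma}}$). For each syntactic form of $t$ I read off which $\SysBang$ rules must conclude a derivation of the translated term, invert them to recover a source derivation, and conversely build the $\SysBang$ derivation from a source one, applying the induction hypothesis at the immediate subterms of $t$.

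For the CBN embedding (item~\ref{t:soundness-completeness-cbn}) the correspondence is rigid. The variable and abstraction cases match $\ruleNAxiom$/$\ruleBAxiom$ and $\ruleNArrowI$/$\ruleBArrowI$ directly. The key observation is that the $\termbang{\!}$ placed on arguments by the translation of applications and substitutions is exactly what realises the multi-premise structure of $\ruleNArrowE$ and $\ruleNESubs$: any derivation of $\sequB{\ctxtsum{\Gamma}{\Delta}{}}{\assign{\termapp{\cbn{t}}{\termbang{\cbn{u}}}}{\tau}}$ must end in $\ruleBArrowE$ applied to $\cbn{t} : \functtype{\M}{\tau}$ and $\termbang{\cbn{u}} : \M$, and the latter is typed by $\ruleBBang$ with one premise $\sequB{\Gamma_i}{\assign{\cbn{u}}{\sigma_i}}$ for each $i \in I$ (where $\M = \intertype{\sigma_i}{i \in I}$). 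These premises are precisely the family demanded by $\ruleNArrowE$, so the induction hypothesis on $t$ and $u$ closes the case; the empty multiset $I = \emptyset$ corresponds to an untyped argument in both systems. The substitution case is identical, with $\ruleBESubs$ in place of $\ruleBArrowE$.

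For the CBV embedding (item~\ref{t:soundness-completeness-cbv}) the variable, abstraction and substitution cases are analogous, now using that $\cbv{x} = \termbang{x}$ and $\cbv{(\termabs{x}{t})} = \termbang{(\termabs{x}{\cbv{t}})}$ are banged, so that $\ruleBBang$ over $\ruleBAxiom$ (resp. $\ruleBArrowI$) reconstructs exactly the multiset types assigned by $\ruleVAxiom$ (resp. the family of arrows assigned by $\ruleVArrowI$). The only delicate case is the application, because of the superdevelopment built into the translation. To absorb it I first isolate a factoring lemma: for every list context $\ctxt{L}$ and term $s$, $\derivable{}{\sequB{\Gamma}{\assign{\ctxtapp{\ctxt{L}}{\termbang{s}}}{\intertype{\sigma}{}}}}{\SysBang}$ iff $\derivable{}{\sequB{\Gamma}{\assign{\ctxtapp{\ctxt{L}}{s}}{\sigma}}}{\SysBang}$, proved by a straightforward induction on $\ctxt{L}$ (at $\ctxt{L} = \Box$ a singleton-multiset $\ruleBBang$ node is added or removed; the inductive step just reuses the same $\ruleBESubs$ node on both sides). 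In the subcase $\cbv{t} = \ctxtapp{\ctxt{L}}{\termbang{s}}$ the translation yields $\termapp{\ctxtapp{\ctxt{L}}{s}}{\cbv{u}}$, and the factoring lemma converts a $\functtype{\M}{\tau}$ derivation of $\ctxtapp{\ctxt{L}}{s}$ into a $\intertype{\functtype{\M}{\tau}}{}$ derivation of $\cbv{t}$; in the complementary subcase the translation inserts $\termder{\!}$, and $\ruleBDer$ performs the same conversion between $\functtype{\M}{\tau}$ and $\intertype{\functtype{\M}{\tau}}{}$. In either case the derivation reduces to $\cbv{t} : \intertype{\functtype{\M}{\tau}}{}$ and $\cbv{u} : \M$, which by the induction hypothesis on $t$ and $u$ matches exactly the premises of $\ruleVArrowE$ — and the singleton shape $\intertype{\functtype{\M}{\tau}}{}$ of the left-hand type is precisely the reformulation that restores completeness.

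The main obstacle is this CBV application case: one must verify that the two syntactically different outputs of the translation — stripping the outer bang under $\ctxt{L}$ versus inserting a dereliction — are typed uniformly, and that in both the left premise carries the singleton multiset type required by $\ruleVArrowE$. The factoring lemma is what makes the two presentations interchangeable; once it is in place, all remaining cases are routine inversions of the $\SysBang$ typing rules.
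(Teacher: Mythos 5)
Your proposal is correct and follows essentially the same route as the paper: induction on $t$ with rule inversion in both directions, where variables/abstractions/substitutions match rule-for-rule and the CBN bang on arguments realises the multi-premise rules $\ruleNArrowE$/$\ruleNESubs$. Your ``factoring lemma'' for $\ctxtapp{\ctxt{L}}{\termbang{s}}$ versus $\ctxtapp{\ctxt{L}}{s}$ is exactly the paper's inline double induction on $\ctxt{L}$ (via $\ruleBESubs$ and $\ruleBBang$) in the CBV application case, just packaged as a standalone iff-statement.
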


\begin{proof}
\begin{enumerate}
  \item $\Rightarrow$) By induction on $t$.
  \begin{itemize}
    \item $t = x$. Then, $\Gamma = \assign{x}{\intertype{\sigma}{}}$
    and we conclude by rule $\ruleBAxiom$, since $\cbn{t} = x$.
    
    \item $t = \termapp{s}{u}$. Then, $\Gamma = \ctxtsum{\Gamma'}{\Delta_i}{i
    \in I}$, $\sequN{\Gamma'}{\assign{s}{\functtype{\intertype{\tau_i}{i \in
    I}}{\sigma}}}$ and $\many{\sequN{\Delta_i}{\assign{u}{\tau_i}}}{i \in I}$.
    By \ih $\sequB{\Gamma'}{\assign{\cbn{s}}{\functtype{\intertype{\tau_i}{i
    \in I}}{\sigma}}}$ and $\many{\sequB{\Delta_i}{\assign{\cbn{u}}{\tau_i}}}{i
    \in I}$. Moreover, by definition $\cbn{t} =
    \termapp{\cbn{s}}{\termbang{\cbn{u}}}$. Thus, we conclude by rules
    $\ruleBBang$ and $\ruleBArrowE$ \[
    \Rule{
      \sequB{\Gamma'}{\assign{\cbn{s}}{\functtype{\intertype{\tau_i}{i \in I}}{\sigma}}}
      \quad
      \Rule{
        \many{\sequB{\Delta_i}{\assign{\cbn{u}}{\tau_i}}}{i \in I}
      }{
        \sequ{\ctxtsum{}{\Delta_i}{i \in I}}{\assign{\termbang{\cbn{u}}}{\intertype{\tau_i}{i \in I}}}
      }{}
    }{
      \sequ{\ctxtsum{\Gamma'}{\Delta_i}{i \in I}}{\assign{\termapp{\cbn{s}}{\termbang{\cbn{u}}}}{\sigma}}
    }{} \]
    
    \item $t = \termabs{x}{s}$. Then, $\sigma = \functtype{\Gamma'(x)}{\tau}$,
    $\Gamma = \ctxtres{\Gamma'}{x}{}$ and $\sequN{\Gamma'}{\assign{s}{\tau}}$.
    By \ih $\sequB{\Gamma'}{\assign{\cbn{s}}{\tau}}$. Moreover, by definition
    $\cbn{t} = \termabs{x}{\cbn{s}}$, hence we conclude by rule $\ruleBArrowI$.
    
    \item $t = \termsubs{x}{u}{s}$. Then, $\Gamma =
    \ctxtsum{(\ctxtres{\Gamma'}{x}{})}{\Delta_i}{i \in I}$,
    $\sequN{\Gamma'}{\assign{s}{\sigma}}$ and
    $\many{\sequN{\Delta_i}{\assign{u}{\tau_i}}}{i \in I}$ with $\Gamma'(x) =
    \intertype{\tau_i}{i \in I}$. By \ih
    $\sequB{\Gamma'}{\assign{\cbn{s}}{\sigma}}$ and
    $\many{\sequB{\Delta_i}{\assign{\cbn{u}}{\tau_i}}}{i \in I}$ hold. Finally,
    since $\cbn{t} = \termsubs{x}{\termbang{\cbn{u}}}{\cbn{s}}$, we conclude by
    rules $\ruleBBang$ and $\ruleBESubs$.
  \end{itemize}
  
  $\Leftarrow$) By induction on $t$.
  \begin{itemize}
    \item $t = x$. Then, $\cbn{x} = x$, $\Gamma =
    \assign{x}{\intertype{\sigma}{}}$ and the result is immediate using rule
    $\ruleNAxiom$.
    
    \item $t = \termapp{s}{u}$. Then, $\cbn{t} =
    \termapp{\cbn{s}}{\termbang{\cbn{u}}}$, $\Gamma =
    \ctxtsum{\Gamma'}{\Delta}{}$,
    $\sequB{\Gamma'}{\assign{\cbn{s}}{\functtype{\M}{\sigma}}}$ and
    $\sequB{\Delta}{\assign{\termbang{\cbn{u}}}{\M}}$. Moreover, by rule
    $\ruleBBang$, $\Delta = \ctxtsum{}{\Delta_i}{i \in I}$, $\M =
    \intertype{\tau_i}{i \in I}$ and
    $\many{\sequB{\Delta_i}{\assign{\cbn{u}}{\tau_i}}}{i \in I}$. The \ih 
    gives $\sequN{\Gamma'}{\assign{s}{\functtype{\intertype{\tau_i}{i \in
    I}}{\sigma}}}$ and $\many{\sequN{\Delta_i}{\assign{u}{\tau_i}}}{i \in I}$.
    Finally, we conclude by rule $\ruleNArrowE$.
    
    \item $t = \termabs{x}{s}$. Then $\cbn{t} = \termabs{x}{\cbn{s}}$, $\sigma
    = \functtype{\Gamma'(x)}{\tau}$, $\Gamma = \ctxtres{\Gamma'}{x}{}$ and
    $\sequB{\Gamma'}{\assign{\cbn{s}}{\tau}}$. The result follow immediately
    from the \ih using rule $\ruleNArrowI$.
    
    \item $t = \termsubs{x}{u}{s}$. Then, $\cbn{t} =
    \termsubs{x}{\termbang{\cbn{u}}}{\cbn{s}}$, $\Gamma =
    \ctxtsum{(\ctxtres{\Gamma'}{x}{})}{\Delta}{}$,
    $\sequB{\Gamma'}{\assign{\cbn{s}}{\sigma}}$ and
    $\sequB{\Delta}{\assign{\termbang{\cbn{u}}}{\Gamma'(x)}}$. Moreover, by
    rule $\ruleBBang$, $\Delta = \ctxtsum{}{\Delta_i}{i \in I}$, $\Gamma'(x) =
    \intertype{\tau_i}{i \in I}$ and
    $\many{\sequB{\Delta_i}{\assign{\cbn{u}}{\tau_i}}}{i \in I}$. The \ih 
    gives $\sequN{\Gamma'}{\assign{s}{\sigma}}$ and $\many{\sequN{\Delta_i}{\assign{u}{\tau_i}}}{i \in I}$.
    Finally, we conclude by rule $\ruleNESubs$.
  \end{itemize}
  
  \item $\Rightarrow$) By induction on $t$.
  \begin{itemize}
    \item $t = x$. Then, $\sigma = \M$ and $\Gamma = \assign{x}{\M}$ where, by
    definition, $\M = \intertype{\tau_i}{i \in I}$. Moreover, by definition of
    the embedding, $\cbv{t} = \termbang{x}$. Thus, we conclude by rules
    $\ruleBAxiom$ and $\ruleBBang$ \[
    \Rule{
      \raisebox{0.5em}{$\Big(
        \raisebox{-0.25em}{\hspace{-1em}
          \Rule{}{\sequ{\assign{x}{\intertype{\tau_i}{}}}{\assign{x}{\tau_i}}}{}
        \hspace{-1em}}
      \Big)_{i \in I}$}
    }{
      \sequ{\assign{x}{\intertype{\tau_i}{i \in I}}}{\assign{\termbang{x}}{\intertype{\tau_i}{i \in I}}}
    }{} \]
    
    \item $t = \termapp{s}{u}$. Then, $\Gamma = \ctxtsum{\Gamma'}{\Delta}{}$, 
    $\sequV{\Gamma'}{\assign{s}{\intertype{\functtype{\M}{\sigma}}{}}}$ and
    $\sequV{\Delta}{\assign{u}{\M}}$. By \ih
    $\sequB{\Gamma'}{\assign{\cbv{s}}{\intertype{\functtype{\M}{\sigma}}{}}}$
    and $\sequB{\Delta}{\assign{\cbv{u}}{\M}}$. There are two cases:
    \begin{itemize}
      \item If $\cbv{s} = \ctxtapp{\ctxt{L}}{\termbang{r}}$, then $\cbv{t}
      = \termapp{\ctxtapp{\ctxt{L}}{r}}{\cbv{u}}$. From
      $\sequB{\Gamma'}{\assign{\ctxtapp{\ctxt{L}}{\termbang{r}}}{\intertype{\functtype{\M}{\sigma}}{}}}$
      it is immediate to see,
      by induction on $\ctxt{L}$ using $\ruleBESubs$, that there exists $\Gamma''$ s.t.
      $\sequB{\Gamma''}{\assign{\termbang{r}}{\intertype{\functtype{\M}{\sigma}}{}}}$. Then,
      $\sequB{\Gamma''}{\assign{r}{\functtype{\M}{\sigma}}}$ holds from rule
      $\ruleBBang$. Moreover, by straighforward induction on $\ctxt{L}$ once
      again, we get
      $\sequB{\Gamma'}{\assign{\ctxtapp{\ctxt{L}}{r}}{\functtype{\M}{\sigma}}}$.
      Finally, we conclude using  $\ruleBArrowE$ by using also
      $\sequB{\Delta}{\assign{\cbv{u}}{\M}}$.

      \item Otherwise, $\cbv{t} = \termapp{\termder{(\cbv{s})}}{\cbv{u}}$.
      Then, we resort to rules $\ruleBDer$ and $\ruleBArrowE$ to conclude \[
    \Rule{
      \Rule{
        \sequB{\Gamma'}{\assign{\cbv{s}}{\intertype{\functtype{\M}{\sigma}}{}}}
      }{
        \sequ{\Gamma'}{\assign{\termder{\cbv{s}}}{\functtype{\M}{\sigma}}}
      }{}
      \quad
      \sequB{\Delta}{\assign{\cbv{u}}{\M}}
    }{
      \sequ{\ctxtsum{\Gamma'}{\Delta}{}}{\assign{\termapp{\termder{(\cbv{s})}}{\cbv{u}}}{\sigma}}
    }{} \]
    \end{itemize}

    \item $t = \termabs{x}{s}$. Then $\sigma =
    \intertype{\functtype{\Gamma_i(x)}{\tau_i}}{i \in I}$, $\Gamma =
    \ctxtsum{}{\ctxtres{\Gamma_i}{x}{}}{i \in I}$ and
    $\many{\sequV{\Gamma_i}{\assign{s}{\tau_i}}}{i \in I}$. By \ih
    $\many{\sequB{\Gamma_i}{\assign{\cbv{s}}{\tau_i}}}{i \in I}$. Thus, we
    conclude with rules $\ruleBArrowI$ and $\ruleBBang$ since, by definition,
    $\cbv{t} = \termbang{\termabs{x}{\cbv{s}}}$ \[
    \Rule{
      \raisebox{1em}{$\Bigg(
        \raisebox{-0.75em}{\hspace{-1em}
          \Rule{
            \sequB{\Gamma_i}{\assign{\cbv{s}}{\tau_i}}
          }{
            \sequ{\ctxtres{\Gamma_i}{x}{}}{\assign{\termabs{x}{\cbv{s}}}{\functtype{\Gamma_i(x)}{\tau_i}}}
          }{}
        \hspace{-1em}}
      \Bigg)_{i \in I}$}
    }{
      \sequ{\ctxtsum{}{\ctxtres{\Gamma_i}{x}{}}{i \in I}}{\assign{\termbang{\termabs{x}{\cbv{s}}}}{\intertype{\functtype{\Gamma_i(x)}{\tau_i}}{i \in I}}}
    }{} \]
    
    \item $t = \termsubs{x}{u}{s}$. Then, $\Gamma =
    \ctxtsum{(\ctxtres{\Gamma'}{x}{})}{\Delta}{}$,
    $\sequV{\Gamma'}{\assign{s}{\sigma}}$ and
    $\sequV{\Delta}{\assign{u}{\Gamma'(x)}}$. By \ih
    $\sequB{\Gamma'}{\assign{\cbv{s}}{\sigma}}$ and
    $\sequB{\Delta}{\assign{\cbv{u}}{\Gamma'(x)}}$ hold. Hence, since $\cbv{t} =
    \termsubs{x}{\cbv{u}}{\cbv{s}}$, we conclude by rule $\ruleBESubs$.
  \end{itemize}
  
  $\Leftarrow$) By induction on $t$.
  \begin{itemize}
    \item $t = x$. Then $\cbv{x} = \termbang{x}$, $\sigma =
    \intertype{\tau_i}{i \in I}$ and $\Gamma = \assign{x}{\intertype{\tau_i}{i
    \in I}}$. The result is immediate using rule $\ruleVAxiom$.
    
    \item $t = \termapp{s}{u}$. There are two cases to consider:
    \begin{itemize}
      \item If $\cbv{s} = \ctxtapp{\ctxt{L}}{\termbang{r}}$, then $\cbv{t} =
      \termapp{\ctxtapp{\ctxt{L}}{r}}{\cbv{u}}$. Then, $\Gamma =
      \ctxtsum{\Gamma'}{\Delta}{}$,
      $\sequB{\Gamma'}{\assign{\ctxtapp{\ctxt{L}}{r}}{\functtype{\M}{\sigma}}}$
      and $\sequB{\Delta}{\assign{\cbv{u}}{\M}}$. Moreover, by induction on
      $\ctxt{L}$ using $\ruleBESubs$, it is immediate to see that there exists
      $\Gamma''$ s.t. $\sequB{\Gamma''}{\assign{r}{\functtype{\M}{\sigma}}}$. By
      rule $\ruleBBang$ we get
      $\sequB{\Gamma''}{\assign{\termbang{r}}{\intertype{\functtype{\M}{\sigma}}{}}}$
      and, by straightforward induction on $\ctxt{L}$ once again,
      $\sequB{\Gamma'}{\assign{\ctxtapp{\ctxt{L}}{\termbang{r}}}{\intertype{\functtype{\M}{\sigma}}{}}}$.
      Recall that the induction is on $t = \termapp{s}{u}$ and $\cbv{s}
      = \ctxtapp{\ctxt{L}}{\termbang{r}}$, hence by the \ih
      $\sequV{\Gamma'}{\assign{s}{\intertype{\functtype{\M}{\sigma}}{}}}$ and
      $\sequV{\Delta}{\assign{u}{\M}}$. Thus, we conclude by rule
      $\ruleVArrowE$.

      \item Otherwise, $\cbv{t} =
      \termapp{\termder{(\cbv{s})}}{\cbv{u}}$. We then have $\Gamma =
      \ctxtsum{\Gamma'}{\Delta}{}$,
      $\sequB{\Gamma'}{\assign{\termder{(\cbv{s})}}{\functtype{\M}{\sigma}}}$ and
      $\sequB{\Delta}{\assign{\cbv{u}}{\M}}$. Moreover, from rule $\ruleBDer$,
      we have
      $\sequB{\Gamma'}{\assign{\cbv{s}}{\intertype{\functtype{\M}{\sigma}}{}}}$.
      Then, by \ih
      $\sequV{\Gamma'}{\assign{s}{\intertype{\functtype{\M}{\sigma}}{}}}$
      and $\sequV{\Delta}{\assign{u}{\M}}$. Finally, we conclude by rule
      $\ruleVArrowE$.
    \end{itemize}
    
    \item $t = \termabs{x}{s}$. Then, $\cbv{t} =
    \termbang{\termabs{x}{\cbv{s}}}$, $\sigma = \intertype{\sigma_i}{i \in I}$,
    $\Gamma = \ctxtsum{}{\Gamma_i}{i \in I}$ and
    $\many{\sequB{\Gamma_i}{\assign{\termabs{x}{\cbv{s}}}{\sigma_i}}}{i \in I}$.
    Moreover, by rule $\ruleBArrowI$, for every $i \in I$ we have $\sigma_i =
    \functtype{\Gamma'_i(x)}{\tau_i}$, $\Gamma_i = \ctxtres{\Gamma'_i}{x}{}$
    and $\sequB{\Gamma'_i}{\assign{\cbv{s}}{\tau_i}}$. By \ih
    $\many{\sequV{\Gamma'_i}{\assign{s}{\tau_i}}}{i \in I}$ and we conclude by
    rule $\ruleVArrowI$.
    
    \item $t = \termsubs{x}{u}{s}$. Then, $\cbv{t} =
    \termsubs{x}{\cbv{u}}{\cbv{s}}$, $\Gamma =
    \ctxtsum{(\ctxtres{\Gamma'}{x}{})}{\Delta}{}$,
    $\sequB{\Gamma'}{\assign{\cbv{s}}{\sigma}}$ and
    $\sequB{\Delta}{\assign{\cbv{u}}{\Gamma'(x)}}$. The result follow
    immediately from the \ih using rule $\ruleVESubs$.
  \end{itemize}
\end{enumerate}
\end{proof}

To illustrate our previous theorem, we take the term
  which is the counter-example
  to completeness in~\cite{GuerrieriM18} Proposition 15.
  Indeed, we show how $t = \termabs{x}{\termapp{x}{x}}$, and $\cbv{t}=\termbang{\termabs{x}{\termapp{x}{\termbang{x}}}}$ can now be  typed with the same type context and type
  in the systems $\SysCBV$ and $\SysBang$ respectively.
  Let $\sigma = \functtype{\emul}{\emul}$. \[
\Rule{
  \Rule{
    \Rule{}
         {\sequV{x:\intertype{\functtype{\intertype{\sigma}{}}{\sigma}}{}}{x:\intertype{\functtype{\intertype{\sigma}{}}{\sigma}}{}}}
         {\ruleVAxiom}
    \quad
    \Rule{}
         {\sequV{x:\intertype{\sigma}{}}{x:\intertype{\sigma}{}}}
         {\ruleVAxiom}
  }
  {\sequV{x:\intertype{\functtype{\intertype{\sigma}{}}{\sigma}, \sigma}{}}{\termapp{x}{x}}:\sigma }
  {\ruleVArrowE}
}
{\sequV{}{\termabs{x}{\termapp{x}{x}}}:\intertype{\functtype{\intertype{\functtype{\intertype{\sigma}{}}{\sigma}, \sigma}{}}{\sigma}}{} }
{\ruleVArrowI} \] \[
\Rule{
  \Rule{
    \Rule{
      \Rule{}
           {\sequB{x: \intertype{\functtype{\intertype{\sigma}{}}{\sigma}}{}}{x: \functtype{\intertype{\sigma}{}}{\sigma}}}
           {\ruleBAxiom}
      \quad
      \Rule{
        \Rule{}
             {\sequB{x:\intertype{\sigma}{}}{x:\sigma}}
             {\ruleBAxiom}
      }
      {\sequB{x:\intertype{\sigma}{}}{\termbang{x}}:\intertype{\sigma}{}}
      {\ruleBBang}
    }
    {\sequB{x:\intertype{\functtype{\intertype{\sigma}{}}{\sigma}, \sigma}{}}{\termapp{x}{\termbang{x}}}:\sigma }
    {\ruleBArrowE}
  }
  {\sequB{}{\termabs{x}{\termapp{x}{\termbang{x}}}}:\functtype{\intertype{\functtype{\intertype{\sigma}{}}{\sigma}, \sigma}{}}{\sigma} }
  {\ruleBArrowI} 
}
{\sequB{}{\termbang{\termabs{x}{\termapp{x}{\termbang{x}}}}}:\intertype{\functtype{\intertype{\functtype{\intertype{\sigma}{}}{\sigma}, \sigma}{}}{\sigma}}{} }
{\ruleBBang} \]

The type system $\SysCBN$ (resp.  $\SysCBV$) characterises
$\callbyname$-normalisation (resp. $\callbyvalue$-normalisation). More
precisely:

\begin{theorem}[{\bf Characterisation of CBN/CBV Normalisation}]
Let ${t \in \TermLambda}$.
\begin{itemize}
  \item\label{t:normalisation-cbn} $t$ is $\SysCBN$-typable iff $t$ is
  $\callbyname$-normalising.
  \item\label{t:normalisation-cbv} $t$ is $\SysCBV$-typable iff $t$ is
  $\callbyvalue$-normalising.
\end{itemize}
\label{t:normalisation-cbn-cbv}
\end{theorem}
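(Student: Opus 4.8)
The plan is to reduce both equivalences to the already-established characterisation of $\bangweak$-normalisation by $\SysBang$-typability (Thm.~\ref{t:correctness-bang}), transporting everything through the embeddings. The two engines are: (i) Soundness/Completeness of the embeddings (Thm.~\ref{t:soundness-completeness-cbn-cbv}), which turns $\SysCBN$- (resp. $\SysCBV$-) typability of $t$ into $\SysBang$-typability of $\cbn{t}$ (resp. $\cbv{t}$) and back; and (ii) the operational correspondence, namely simulation (Lem.~\ref{l:reduction-cbn-cbv}) together with preservation of normal forms (Lem.~\ref{l:preservation-cbn-cbv}). It therefore suffices to establish the bridge ``$t$ is $\callbyname$-normalising iff $\cbn{t}$ is $\bangweak$-normalising to a weak clash-free normal form'' (and its CBV analogue), after which the stated equivalences follow by composing with (i) and Thm.~\ref{t:correctness-bang}.

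For the left-to-right implication (typable $\Rightarrow$ normalising) I would first apply Thm.~\ref{t:soundness-completeness-cbn-cbv} to obtain a derivation $\Phi$ of $\cbn{t}$ (resp. $\cbv{t}$) in $\SysBang$. Weighted Subject Reduction (Lem.~\ref{l:wsr-bang}) makes $\size{\Phi}$ a strictly decreasing nonnegative measure along $\rewrite{\bangweak}$, so every $\bangweak$-reduction issued from $\cbn{t}$ (resp. $\cbv{t}$) has length at most $\size{\Phi}$. Now take any reduction $t = t_0 \rewrite{\callbyname} \dots \rewrite{\callbyname} t_k$: by Lem.~\ref{l:reduction-cbn-cbv} each step maps to exactly one $\bangweak$-step, so $\cbn{t_0} \rewriten{\bangweak} \cbn{t_k}$ has length $k \le \size{\Phi}$. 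Hence all $\callbyname$-reductions from $t$ are bounded, so $t$ is (strongly, a fortiori weakly) $\callbyname$-normalising. The CBV case is identical once one observes that the simulation of Lem.~\ref{l:reduction-cbn-cbv} produces \emph{at least one} $\bangweak$-step per $\callbyvalue$-step: this is immediate in the two base cases ($\dBeta$, $\sVal$) and is preserved through the inductive cases (the only adjustment being an occasional \emph{extra} $\dBang$-step), so $\cbv{t_0} \Rewplus{\bangweak} \dots \Rewplus{\bangweak} \cbv{t_k}$ has length $\ge k$.

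For the right-to-left implication (normalising $\Rightarrow$ typable) I would take a reduction $t \rewriten{\callbyname} s$ to a $\callbyname$-normal form $s$. Iterating Lem.~\ref{l:reduction-cbn-cbv} gives $\cbn{t} \rewriten{\bangweak} \cbn{s}$, and Lem.~\ref{l:preservation-cbn-cbv} yields $\cbn{s} \not\rewrite{\bangweak}$, i.e. $\pnrml{\cbn{s}}$ by Prop.~\ref{prop:normal}. A routine induction on terms shows that \emph{every} term in the image of $\cbn{\cdot}$ is weak clash-free: no $\termder{\!}$ is ever introduced, every argument of an application and every content of an explicit substitution is a $\termbang{\!}$, and the head of a translated term is never a $\termbang{\!}$, so none of the four clash shapes can occur. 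Thus $\cfnrml{\cbn{s}}$, and Thm.~\ref{t:clashfree-bang} provides a $\SysBang$-derivation of $\cbn{s}$. Iterating Weighted Subject Expansion (Lem.~\ref{l:se-bang}) backwards along $\cbn{t} \rewriten{\bangweak} \cbn{s}$ yields a $\SysBang$-derivation of $\cbn{t}$, whence $t$ is $\SysCBN$-typable by Thm.~\ref{t:soundness-completeness-cbn-cbv}. The CBV case follows the same pattern, using $\cbv{\cdot}$ and the $\callbyvalue$-parts of the cited results.

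The main obstacle I anticipate is checking weak clash-freeness of the \emph{CBV} image, since $\cbv{\cdot}$ does introduce $\termder{\!}$ and branches on whether $\cbv{t} = \ctxtapp{\ctxt{L}}{\termbang{s}}$. One must verify that stripping the outer bang in the first branch never recreates an application whose function is of the form $\ctxtapp{\ctxt{L}'}{\termbang{\cdot}}$, that the dereliction $\termder{(\cbv{t})}$ of the second branch is formed only when $\cbv{t}$ is neither of the form $\ctxtapp{\ctxt{L}}{\termbang{\cdot}}$ nor of the form $\ctxtapp{\ctxt{L}}{\termabs{x}{\cdot}}$, and that a substitution content $\cbv{u}$ is never of the form $\ctxtapp{\ctxt{L}}{\termabs{x}{\cdot}}$; all of these rest on the invariants that $\cbv{u}$ is a bang exactly when $u$ is a value and that the image contains no two consecutive bangs. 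The secondary delicate point, flagged above, is the ``at least one step'' strengthening of the CBV simulation, which is exactly what is needed to convert strong $\bangweak$-normalisation of $\cbv{t}$ into a genuine length bound on $\callbyvalue$-reductions.
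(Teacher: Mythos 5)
Your proof is correct, but it takes a genuinely different route from the paper's. The paper proves the two statements \emph{directly}, as Thm.~\ref{t:normalisation-name} and Thm.~\ref{t:normalisation-value}: for each of $\SysCBN$ and $\SysCBV$ it develops its own substitution/anti-substitution lemmas, weighted subject reduction and expansion (Lem.~\ref{l:sr-name}, \ref{l:se-name}, \ref{l:sr-value}, \ref{l:se-value}), and typability of normal forms (Lem.~\ref{l:normal-forms-name}, \ref{l:normal-forms-value}), so the characterisation is intrinsic and never mentions the Bang calculus. You instead transport everything through the embeddings: typability via Thm.~\ref{t:soundness-completeness-cbn-cbv}, operational behaviour via simulation (Lem.~\ref{l:reduction-cbn-cbv}) and preservation of normal forms (Lem.~\ref{l:preservation-cbn-cbv}), with the Bang-side characterisation supplied by Lem.~\ref{l:wsr-bang}, Lem.~\ref{l:se-bang}, Thm.~\ref{t:clashfree-bang} and Prop.~\ref{prop:normal}. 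This is sound and non-circular (none of the cited results depend on the present theorem), but it obliges you to prove two facts the paper never needs: (a) weak clash-freeness of the images of $\cbn{(\cdot)}$ and $\cbv{(\cdot)}$ --- which you correctly flag, and which holds by the invariants you state (arguments and substitution contents are bangs in the CBN image; in the CBV image exactly the translations of values are bangs, abstractions occur only under bangs, and derelictions are applied only to non-bang translations); and (b) the ``at least one step'' strengthening of the CBV simulation, which holds because both base cases contribute one step and the bang-stripping branch of the translation transports Bang-steps one-for-one (a fact the paper's own proof of Lem.~\ref{l:reduction-cbn-cbv} also relies on implicitly). What each approach buys: yours reuses the $\SysBang$ machinery instead of redeveloping WSR/WSE for two more systems, and your left-to-right argument even yields \emph{strong} CBN/CBV normalisation of typable terms; the paper's direct route keeps $\SysCBN$/$\SysCBV$ self-contained as models of CBN/CBV and, crucially, delivers the quantitative ``moreover'' clauses of Thm.~\ref{t:normalisation-name} and Thm.~\ref{t:normalisation-value} (bounds involving $\sizen{\cdot}$ and $\sizev{\cdot}$), which are much harder to extract through the embedding --- especially for CBV, where the simulation does not preserve step counts.
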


\begin{proof}
Sec.~\ref{s:technical-name} and~\ref{s:technical-value} below show the two
statements separately as Thm.~\ref{t:normalisation-name} and
Thm.~\ref{t:normalisation-value} respectively.
\end{proof}

\subsection{Call-by-Name}
\label{s:technical-name}
The following lemmas aim to prove that $\SysCBN$ is indeed a model for the
Call-by-Name reduction strategy presented at the beginning of
Sec.~\ref{s:cbname-cbvalue}. Similar results have been already presented in the
literature for different formulations of call-by-name languages, with or
without explicit substitutions, some examples
are~\cite{BucciarelliKV17,KesnerV14}. In this subsection we measure
$\SysCBN$-derivations by using a function $\sizen{\_}$ which counts $1$
for all typing rules. We follow here the same pattern used
in Sec.~\ref{s:system} for system $\SysBang$: a substitution (resp.
anti-substitution) lemma is established and used in the proof of subject
reduction (resp. expansion). In the following subsection the same methodology
is used for $\SysCBV$.

\begin{lemma}[Substitution]
\label{l:substitution-name}
Let
$\derivable{\Phi_{t}}{\sequN{\Gamma;\assign{x}{\intertype{\sigma_i}{i \in I}}}{\assign{t}{\tau}}}{\SysCBN}$
and
$\many{\derivable{\Phi^{i}_{u}}{\sequN{\Delta_i}{\assign{u}{\sigma_i}}}{\SysCBN}}{i \in I}$.
Then, there exists
$\derivable{\Phi_{\substitute{x}{u}{t}}}{\sequN{\ctxtsum{\Gamma}{\Delta_i}{i \in I}}{\assign{\substitute{x}{u}{t}}{\tau}}}{\SysCBN}$
such that
$\sizen{\Phi_{\substitute{x}{u}{t}}} = \sizen{\Phi_{t}} +_{i \in I} \sizen{\Phi^{i}_{u}} - |I|$.
\end{lemma}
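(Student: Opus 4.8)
The plan is to proceed by induction on the structure of $t$ (equivalently, on the derivation $\Phi_t$, since system $\SysCBN$ is syntax-directed). Throughout, write $\M = \intertype{\sigma_i}{i \in I}$ for the multiset assigned to $x$ in the conclusion of $\Phi_t$. I first dispatch the two variable cases. If $t = x$, then $\Phi_t$ is an instance of $\ruleNAxiom$, which forces $\Gamma$ to assign only empty multisets and $I$ to be a singleton $\{j\}$ with $\sigma_j = \tau$; since $\substitute{x}{u}{x} = u$, I simply take $\Phi_{\substitute{x}{u}{t}} \coloneq \Phi^j_u$, and the size equation reads $\sizen{\Phi^j_u} = 1 +_{i \in I} \sizen{\Phi^i_u} - |I|$ with $\sizen{\Phi_t} = 1$ and $|I| = 1$, the $-|I|$ term exactly cancelling the consumed axiom. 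If $t = y \neq x$, then the axiom context assigns $x$ the empty multiset, so $I = \emptyset$, $\substitute{x}{u}{y} = y$, and I keep $\Phi_t$ unchanged.

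For the abstraction case $t = \termabs{y}{s}$, I use $\alpha$-conversion to assume $y \neq x$ and $y \notin \fv{u}$, so that no $\Delta_i$ mentions $y$. The premise of $\ruleNArrowI$ types $s$ in a context that agrees with $\Gamma;\assign{x}{\M}$ outside $y$ and whose $x$-binding is still $\M$; applying the \ih to it (with the same family $\many{\Phi^i_u}{i \in I}$) yields a derivation of $\substitute{x}{u}{s}$, to which I reapply $\ruleNArrowI$. Since $\substitute{x}{u}{(\termabs{y}{s})} = \termabs{y}{\substitute{x}{u}{s}}$ and exactly one $\ruleNArrowI$ rule is added on each side, the size equation is inherited directly from the \ih.

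The heart of the argument is the two multiplicative cases, $t = \termapp{s}{r}$ and $t = \termsubs{y}{r}{s}$, where the argument $r$ is typed once per element of an index set $J$ (resp. $K$) by $\ruleNArrowE$ (resp. $\ruleNESubs$). Here the multiset $\M$ on $x$ is distributed over the premises, inducing a partition $I = I_0 \uplus \biguplus_{j \in J} I_j$, with $I_0$ feeding the functional (resp. body) premise and $I_j$ the $j$-th copy of $r$. I then split the given family $\many{\Phi^i_u}{i \in I}$ along this partition and apply the \ih once to the premise for $s$ (with $\{\Phi^i_u\}_{i \in I_0}$) and once to each copy of the premise for $r$ (with $\{\Phi^i_u\}_{i \in I_j}$), finally reassembling with the same rule; substitution commutes with application and ES as expected, again using $\alpha$-conversion ($y \neq x$, $y \notin \fv{u}$) in the ES case. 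Summing the per-premise \ih equations over the partition, the reapplied rule contributes $+1$ on both sides and the correction terms telescope as $-|I_0| - \sum_{j} |I_j| = -|I|$, yielding exactly $\sizen{\Phi_{\substitute{x}{u}{t}}} = \sizen{\Phi_t} +_{i \in I} \sizen{\Phi^i_u} - |I|$.

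The only genuine obstacle is the bookkeeping in these multiplicative cases: one must match the partition of $I$ to the splitting of $\M$ across the premises and check that the per-premise size equations add up correctly. This is precisely where non-idempotency is used — each axiom on $x$ is consumed by exactly one $\Phi^i_u$, which is what makes the $-|I|$ correction additive across the partition.
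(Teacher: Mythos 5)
Your proof is correct and follows essentially the same route as the paper, which simply performs induction on the derivation $\Phi_t$ (the paper leaves the case analysis implicit, remarking it mirrors Lem.~\ref{l:substitution-bang}). Your treatment of the variable, abstraction, and multiplicative cases — including the partition of $I$ along the splitting of the multiset on $x$ and the telescoping of the $-|I|$ correction — is exactly the bookkeeping the paper's ``straightforward induction'' elides.
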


\begin{proof}
By induction on $\Phi_t$, reasoning exactly as in Lemma~\ref{l:substitution-bang}.
\end{proof}

\begin{lemma}[Weighted Subject Reduction]
\label{l:sr-name}
Let $\derivable{\Phi}{\sequN{\Gamma}{\assign{t}{\tau}}}{\SysCBN}$ and $t
\rewrite{\callbyname} t'$. Then, there exists
$\derivable{\Phi'}{\sequN{\Gamma}{\assign{t'}{\tau}}}{\SysCBN}$ such that 
$\sizen{\Phi} > \sizen{\Phi'}$.
\end{lemma}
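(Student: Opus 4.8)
The plan is to follow exactly the pattern of the Weighted Subject Reduction proof for $\SysBang$ (Lem.~\ref{l:wsr-bang}), reasoning by induction on the context $\ctxt{N}$ in which the fired rule ($\dBeta$ or $\sTerm$) sits. The inductive cases $\ctxt{N} = \termapp{\ctxt{N}'}{u}$, $\ctxt{N} = \termabs{x}{\ctxt{N}'}$ and $\ctxt{N} = \termsubs{x}{u}{\ctxt{N}'}$ are immediate: in each the redex lies inside a single typed premise of the last rule applied ($\ruleNArrowE$, $\ruleNArrowI$, $\ruleNESubs$ respectively), so the \ih applied to that premise yields a typing of the contractum over the same context with strictly smaller measure, and rebuilding the same rule preserves $\Gamma$ and the strict decrease. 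Hence the real content is the base case $\ctxt{N} = \Box$, treated by one subcase per rewriting rule.

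For $\sTerm$, where $t = \termsubs{x}{u}{s} \rrule{\sTerm} \substitute{x}{u}{s} = t'$, the derivation $\Phi$ must end in $\ruleNESubs$, providing $\derivable{\Phi_s}{\sequN{\Gamma';\assign{x}{\intertype{\sigma_i}{i\in I}}}{\assign{s}{\tau}}}{\SysCBN}$ and $\many{\derivable{\Phi^i_u}{\sequN{\Delta_i}{\assign{u}{\sigma_i}}}{\SysCBN}}{i\in I}$ with $\Gamma = \ctxtsum{\Gamma'}{\Delta_i}{i\in I}$. I would invoke the Substitution Lemma (Lem.~\ref{l:substitution-name}) to get $\Phi'$ for $\substitute{x}{u}{s}$ over the same $\Gamma$, with $\sizen{\Phi'} = \sizen{\Phi_s} +_{i\in I}\sizen{\Phi^i_u} - |I|$. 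Since $\sizen{\Phi} = \sizen{\Phi_s} +_{i\in I}\sizen{\Phi^i_u} + 1$, this gives $\sizen{\Phi} - \sizen{\Phi'} = |I| + 1 > 0$.

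For $\dBeta$, where $t = \termapp{\ctxtapp{\ctxt{L}}{\termabs{x}{s}}}{u} \rrule{\dBeta} \ctxtapp{\ctxt{L}}{\termsubs{x}{u}{s}} = t'$, I would do an inner induction on $\ctxt{L}$. When $\ctxt{L} = \Box$, $\Phi$ ends in $\ruleNArrowE$ whose left premise is an $\ruleNArrowI$ typing $\termabs{x}{s}$ from $\derivable{\Phi_s}{\sequN{\Gamma';\assign{x}{\intertype{\sigma_i}{i\in I}}}{\assign{s}{\tau}}}{\SysCBN}$, together with the family $\many{\Phi^i_u}{i\in I}$; replacing the $\ruleNArrowI$/$\ruleNArrowE$ pair by a single $\ruleNESubs$ produces $\Phi'$ for $\termsubs{x}{u}{s}$ with $\sizen{\Phi} = \sizen{\Phi'} + 1$. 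When $\ctxt{L} = \termsubs{y}{v}{\ctxt{L'}}$, the subject is $\termapp{\termsubs{y}{v}{\ctxtapp{\ctxt{L'}}{\termabs{x}{s}}}}{u}$; I would commute the outer application inside the substitution, apply the inner \ih to $\termapp{\ctxtapp{\ctxt{L'}}{\termabs{x}{s}}}{u} \rrule{\dBeta} \ctxtapp{\ctxt{L'}}{\termsubs{x}{u}{s}}$, and reassemble with $\ruleNESubs$, using $\alpha$-conversion to guarantee $y \notin \fv{u}$ so that the premises typing $v$ do not mention $y$ and the final context equals $\Gamma$.

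The main obstacle will be precisely this last case: in contrast to system $\SysBang$, in $\SysCBN$ the argument $v$ of each explicit substitution in $\ctxt{L}$ carries a whole \emph{family} of premises (one per copy demanded by rule $\ruleNESubs$), so the splitting of contexts and the additivity of $\sizen{\_}$ over these premises must be tracked carefully in order to conclude the strict inequality $\sizen{\Phi} > \sizen{\Phi'}$. All of this is purely combinatorial and notation-heavy, exactly as in Lem.~\ref{l:wsr-bang}, with no reducibility argument needed.
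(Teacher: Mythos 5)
Your proposal is correct and follows essentially the same route as the paper: the paper's proof is a one-line induction on $\Phi$ invoking the Substitution Lemma (Lem.~\ref{l:substitution-name}) in the $\sTerm$ base case, and your induction on the evaluation context $\ctxt{N}$ (with the inner induction on $\ctxt{L}$ and the $\ruleNArrowI$/$\ruleNArrowE$-to-$\ruleNESubs$ rewiring for $\dBeta$) is the same argument spelled out in the style of Lem.~\ref{l:wsr-bang}. The size accounting you give ($\sizen{\Phi} - \sizen{\Phi'} = |I|+1$ for $\sTerm$, and $+1$ for $\dBeta$) is exactly what makes the strict decrease go through.
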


\begin{proof}
By induction on $t \rewrite{\callbyname} t'$, where, in
particular, Lemma~\ref{l:substitution-name} is used in the base case
$\termsubs{x}{u}{t} \rrule{\sTerm} \substitute{x}{u}{t}$.
\end{proof}

\begin{lemma}[Anti-Substitution]
\label{l:antisubstitution-name}
If
$\derivable{\Phi_{\substitute{x}{u}{t}}}{\sequB{\Gamma'}{\assign{\substitute{x}{u}{t}}{\tau}}}{\SysCBN}$,
then there exist
$\derivable{\Phi_t}{\sequB{\Gamma; \assign{x}{\intertype{\sigma_i}{i \in I}}}{\assign{t}{\tau}}}{\SysCBN}$
and
$\many{\derivable{\Phi^i_u}{\sequB{\Delta_i}{\assign{u}{\sigma_i}}}{\SysCBN}}{i \in I}$
such that $\Gamma' = \ctxtsum{\Gamma}{\Delta_i}{i \in I}$ and
$\sizen{\Phi_{\substitute{x}{u}{t}}} = \sizen{\Phi_{t}}  +_{i \in I}{\sizen{\Phi^i_{u}}} - |I|$.
\end{lemma}

\begin{proof}
By induction on $t$, reasoning exactly as in Lemma~\ref{l:antisubstitution-bang}.
\end{proof}

\begin{lemma}[Weighted Subject Expansion]
\label{l:se-name}
Let $\derivable{\Phi'}{\sequN{\Gamma}{\assign{t'}{\tau}}}{\SysCBN}$ and $t
\rewrite{\callbyname} t'$. Then, there exists
$\derivable{\Phi}{\sequN{\Gamma}{\assign{t}{\tau}}}{\SysCBN}$ such that
$\sizen{\Phi} > \sizen{\Phi'}$.
\end{lemma}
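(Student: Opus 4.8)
The plan is to mirror the proof of Weighted Subject Expansion for the $\BangRev$-calculus (Lem.~\ref{l:se-bang}), proceeding by induction on the context $\ctxt{N}$ in which the step $t \rewrite{\callbyname} t'$ takes place, where $\mathbin{\rewrite{\callbyname}} = \ctxt{N}(\rrule{\dBeta} \cup \rrule{\sTerm})$. The base case $\ctxt{N} = \Box$ splits according to the two root rules, and the engine of the whole argument will be the Anti-Substitution lemma (Lem.~\ref{l:antisubstitution-name}), which inverts the meta-level substitution introduced by the $\sTerm$ rule while tracking the exact effect on $\sizen{\_}$.

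First I would treat the root $\sTerm$ case: $t = \termsubs{x}{u}{s} \rrule{\sTerm} \substitute{x}{u}{s} = t'$, where $\derivable{\Phi'}{\sequN{\Gamma}{\assign{\substitute{x}{u}{s}}{\tau}}}{\SysCBN}$. Applying Lem.~\ref{l:antisubstitution-name} to $\Phi'$ yields $\derivable{\Phi_s}{\sequN{\Gamma_0;\assign{x}{\intertype{\sigma_i}{i\in I}}}{\assign{s}{\tau}}}{\SysCBN}$ and $\many{\derivable{\Phi^i_u}{\sequN{\Delta_i}{\assign{u}{\sigma_i}}}{\SysCBN}}{i\in I}$ with $\Gamma = \ctxtsum{\Gamma_0}{\Delta_i}{i\in I}$ and $\sizen{\Phi'} = \sizen{\Phi_s} +_{i\in I}\sizen{\Phi^i_u} - |I|$; reassembling with rule $\ruleNESubs$ gives $\Phi$ with $\sizen{\Phi} = \sizen{\Phi_s} +_{i\in I}\sizen{\Phi^i_u} + 1 = \sizen{\Phi'} + |I| + 1 > \sizen{\Phi'}$. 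The root $\dBeta$ case, $t = \termapp{\ctxtapp{\ctxt{L}}{\termabs{x}{s}}}{u} \rrule{\dBeta} \ctxtapp{\ctxt{L}}{\termsubs{x}{u}{s}} = t'$, is handled by a sub-induction on $\ctxt{L}$ exactly as in Lem.~\ref{l:se-bang}. For $\ctxt{L} = \Box$ I invert the $\ruleNESubs$ derivation of $\termsubs{x}{u}{s}$ and re-derive $\termapp{(\termabs{x}{s})}{u}$ by $\ruleNArrowI$ followed by $\ruleNArrowE$, which turns the single $\ruleNESubs$ node into a $\ruleNArrowI$/$\ruleNArrowE$ pair and gives $\sizen{\Phi} = \sizen{\Phi'} + 1$. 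For $\ctxt{L} = \termsubs{y}{r}{\ctxt{L'}}$ I peel off the outer substitution, apply the inner inductive hypothesis to the residual $\dBeta$-redex, and commute the substitution back over the reconstructed application, using the freshness side-condition $y \notin \fv{u}$ of rule $\dBeta$ to split the context so that $y$ lands only in the branch typing $\ctxtapp{\ctxt{L'}}{\termabs{x}{s}}$; the strict inequality from the inner hypothesis propagates.

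The inductive cases for $\ctxt{N}$ (namely $\termapp{\ctxt{N}}{t}$, $\termabs{x}{\ctxt{N}}$ and $\termsubs{x}{u}{\ctxt{N}}$) are routine: the relevant typing rule ($\ruleNArrowE$, $\ruleNArrowI$ or $\ruleNESubs$) is inverted, the inductive hypothesis is applied to the premise typing the reducing subterm to obtain a strictly larger derivation, and the same rule is reapplied, so the strict increase is preserved. I expect the main obstacle to be the nested $\dBeta$ case with $\ctxt{L} = \termsubs{y}{r}{\ctxt{L'}}$: because the substitution and application rules are multiplicative and the $\ruleNESubs$ rule carries a multiset of argument premises, one must bookkeep the context splitting $\Gamma' = \ctxtsum{\Gamma'_1}{\Gamma'_2}{}$ and verify that the freshness condition indeed forces $\Gamma'(y) = \Gamma'_1(y)$, so that the commuted derivation is well formed and the $\sizen{\_}$ accounting still yields a strict increase, precisely as in the corresponding branch of Lem.~\ref{l:se-bang}.
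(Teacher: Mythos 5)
Your proposal is correct and takes essentially the same route as the paper: the paper proves this lemma by a structural induction (stated there as induction on $\Phi'$, which amounts to the same case analysis as your induction on $\ctxt{N}$), using the Anti-Substitution lemma (Lem.~\ref{l:antisubstitution-name}) in the root $\sTerm$ case and handling the root $\dBeta$ case and the congruence cases exactly as in Lem.~\ref{l:se-bang}, which you explicitly mirror. Your size bookkeeping ($\sizen{\Phi} = \sizen{\Phi'} + |I| + 1$ for $\sTerm$, $\sizen{\Phi} = \sizen{\Phi'} + 1$ for $\dBeta$ with $\ctxt{L} = \Box$, and relevance of the system forcing $y \notin \dom{\Delta'_i}$ in the nested case) is precisely what the paper's one-line proof leaves implicit.
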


\begin{proof}
By induction on $t \rewrite{\callbyname} t'$, where, in
particular, Lemma~\ref{l:antisubstitution-name} is used in the base case
$\termsubs{x}{u}{t} \rrule{\sTerm} \substitute{x}{u}{t}$.
\end{proof}

The following lemma, showing that $\callbyname$-normal forms are typable, is used in
the proof of Theorem~\ref{t:normalisation-name} as the base case for
establishing that a $\callbyname$-normalising term is typable.
\begin{lemma}
\label{l:normal-forms-name}
Let $t \not\rewrite{\callbyname}$. Then, $t$ is $\SysCBN$-typable.
\end{lemma}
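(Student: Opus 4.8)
The plan is to first collapse the dynamic hypothesis into a syntactic one and then run a structural induction. By Lemma~\ref{l:normal-forms-cbn-cbv}(\ref{l:normal-forms-cbn}), the assumption $t \not\rewrite{\callbyname}$ is equivalent to $t \in \CBNNF$, so it suffices to show that every term generated by the grammar for $\CBNNF$ is $\SysCBN$-typable. Note that this grammar contains no explicit substitution: a closure is always $\sTerm$-reducible (in the empty context $\Box$, which is a valid $\ctxt{N}$), so the induction never needs to treat a substitution case and stays entirely within the pure fragment $x \mid \termapp{\HCBNNF}{t} \mid \termabs{x}{\CBNNF}$.

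Following the pattern of Theorem~\ref{t:clashfree-bang}, I would prove \emph{simultaneously}, by induction on the defining grammars, the two statements: (a) if $t \in \HCBNNF$, then for \emph{every} type $\tau$ there exists $\derivable{}{\sequN{\Gamma}{\assign{t}{\tau}}}{\SysCBN}$; and (b) if $t \in \CBNNF$, then $t$ is $\SysCBN$-typable. The strengthening in (a)—that a CBN-neutral term can be assigned an \emph{arbitrary} target type—is the crucial device, since it is what keeps the application case going.

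For (a): if $t = x$, then $\ruleNAxiom$ gives $\sequN{\assign{x}{\intertype{\tau}{}}}{\assign{x}{\tau}}$ for any chosen $\tau$. If $t = \termapp{s}{u}$ with $s \in \HCBNNF$, fix an arbitrary $\tau$; by the \ih on $s$ applied to the target type $\functtype{\emul}{\tau}$, there is a derivation $\derivable{}{\sequN{\Gamma}{\assign{s}{\functtype{\emul}{\tau}}}}{\SysCBN}$, and I close with $\ruleNArrowE$ taking $I = \emptyset$, so that the argument $u$ is simply not typed. This yields $\sequN{\Gamma}{\assign{\termapp{s}{u}}{\tau}}$. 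For (b): if $t \in \HCBNNF$, specialise (a) to any concrete $\tau$; if $t = \termabs{x}{s}$ with $s \in \CBNNF$, the \ih gives $\derivable{}{\sequN{\Gamma}{\assign{s}{\tau}}}{\SysCBN}$ for some $\Gamma,\tau$, and $\ruleNArrowI$ delivers $\sequN{\ctxtres{\Gamma}{x}{}}{\assign{\termabs{x}{s}}{\functtype{\Gamma(x)}{\tau}}}$.

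The only subtle point—and the one I would flag as the heart of the argument—is the interaction between the empty multiset $\emul$ and rule $\ruleNArrowE$ with $I = \emptyset$. This is precisely the mechanism by which a neutral term remains typable irrespective of its arguments: the arguments of a head application in CBN are placed under an (erasable) multiset and may be left completely untyped, which is exactly why head-normal forms are always typable even when their arguments are non-normalising. Everything else is a routine case inspection, so no genuine obstacle remains once statement (a) is correctly strengthened to quantify over all target types.
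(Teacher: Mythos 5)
Your proof is correct and takes essentially the same approach as the paper: the paper's proof is stated as a ``straightforward induction on $\callbyname$-normal forms,'' and your simultaneous induction with the strengthened claim (a) --- that CBN-neutral terms can be given \emph{any} target type, using $\ruleNArrowE$ with $I = \emptyset$ to leave arguments untyped --- is precisely the pattern the paper itself spells out in the analogous results (Thm.~\ref{t:clashfree-bang} and Lem.~\ref{l:normal-forms-value}).
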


\begin{proof}
  By straightforward induction on $\callbyname$-normal forms,
  using Lemma~\ref{l:normal-forms-cbn-cbv}.
\end{proof}

We now relate $\SysCBN$-typability with $\callbyname$-normalisation.  We also
put in evidence the quantitative aspect of system $\SysCBN$, so that we
introduce the \emphdef{$\mathtt n$-size} of terms as: $\nsize{x} \eqdef 0$,
$\nsize{\termabs{x}{t}} \eqdef 1 + \nsize{t}$, $\nsize{\termapp{t}{u}}
\eqdef 1 + \nsize{t}$, and $\nsize{\termsubs{x}{u}{t}} \eqdef
1 + \nsize{t}$. Note in particular that given a
derivation $\Phi_t$ for a term $t$ we always have $\sizen{\Phi_t} \geq
\nsize{t}$.

\begin{theorem}[Soundness and Completeness for System $\SysCBN$]
\label{t:normalisation-name}
Let ${t \in \TermLambda}$. Then, $t$ is $\SysCBN$-typable iff $t$ is
$\callbyname$-normalising. Moreover, if
$\derivable{\Phi}{\sequN{\Gamma}{\assign{t}{\tau}}}{\SysCBN}$, then
there exists $p \in \CBNNF$ such that $t
\rewriten{\callbyname}^{(\cbeta,\cexp)} p$ and $\sizen{\Phi} \geq \cbeta +
\cexp + \nsize{p}$.
\end{theorem}

\begin{proof}
The $\Rightarrow$ direction holds by WSR (Lemma~\ref{l:sr-name}), while the
$\Leftarrow$ direction follows from Lemma~\ref{l:normal-forms-name} and WSE
(Lemma~\ref{l:se-name}). The \emph{moreover} statement holds by
Lemma~\ref{l:sr-name} and the fact that the size of
the type derivation of $p$ is greater than or equal to $\nsize{p}$.
\end{proof}

\subsection{Call-by-Value}
\label{s:technical-value}

The following lemmas aim to prove that $\SysCBV$ is indeed a model for the
Call-by-Value reduction strategy. In this subsection we measure
$\SysCBV$-derivations by using a function $\sizev{\_}$. Formally,
  \begin{definition} Size of derivations is defined by induction as follows: 
  \begin{itemize}
  \setlength\itemsep{1em}
  \item $\sizev{\Rule{\vphantom{\Gamma}}
                     {\sequ{\assign{x}{\M}}{\assign{x}{\M}}}
                     {\ruleVAxiom}} = |\M|$,
  \item $\sizev{\Rule{\Pi_1 \quad \Pi_2}
                     {\sequ{\Gam}{\assign{\termsubs{x}{u}{t}}{\sigma}}}
                     {\ruleVESubs}} = 1+ \sizev{\Pi_1} + \sizev{\Pi_2}$,
  \item $\sizev{\Rule{\Pi_1 \quad \Pi_2}
                     {\sequ{\Gam}{\assign{\termapp{t}{u}}{\tau}}}
                     {\ruleVArrowE}} = 1 + \sizev{\Pi_1} + \sizev{\Pi_2}$,
  \item $\sizev{\Rule{\many{\Pi_i}{i \in I}}
                     {\sequ{\ctxtsum{}{\ctxtres{\Gamma_i}{x}{}}{i \in I}}{\assign{\termabs{x}{t}}{\intertype{\functtype{\Gamma_i(x)}{\tau_i}}{i \in I}}}}
                     {\ruleVArrowI}} = |I| +_{\iI} \sizev{\Pi_i}$.
\end{itemize}
\end{definition}
Thus, $\sizev{\Pi}$ counts $1$ for rule
$\ruleVArrowE$, while rule $\ruleVESubs$ does not count, the axiom $\ruleVAxiom$
$\sequV{\assign{x}{\M}}{\assign{x}{\M}}$ counts $|\M|$, and
$\ruleVArrowI$ contributes with its number of premises.

In order to prove a substitution lemma for CBV, we need an additional lemma allowing the decomposition of the multiset types of values.
\begin{lemma}[Split type for value]
\label{l:split-value}
Let $\derivable{\Phi_{v}}{\sequV{\Gamma}{\assign{v}{\M}}}{\SysCBV}$ such that $\M =
\ctxtsum{}{\M_i}{i \in I}$. Then, there exist
$\many{\derivable{\Phi_{v}^{i}}{\sequV{\Gamma_i}{\assign{v}{\M_i}}}{\SysCBV}}{i \in I}$
such that $\Gamma = \ctxtsum{}{\Gamma_i}{i \in I}$
and $\sizev{\Phi_v} = +_{i \in I} \sizev{\Phi_{v}^{i}}$.
\end{lemma}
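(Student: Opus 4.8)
The plan is to proceed by a direct case analysis on the last rule of $\Phi_v$, which is entirely determined by the syntactic shape of the value $v$. Since $v$ is either a variable $x$ or an abstraction $\termabs{x}{t}$, and in system $\SysCBV$ values are typed only by $\ruleVAxiom$ and $\ruleVArrowI$ respectively (the rules $\ruleVESubs$ and $\ruleVArrowE$ apply only to closures and applications, which are not values), no genuine induction is needed: the whole argument reduces to re-splitting the data feeding the single root rule according to the decomposition $\M = \ctxtsum{}{\M_i}{i \in I}$.

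First I would treat the variable case $v = x$, where $\Phi_v$ is an instance of $\ruleVAxiom$, so that $\Gamma = \assign{x}{\M}$ and $\sizev{\Phi_v} = |\M|$. For each $i \in I$ I take $\Phi_v^i$ to be the axiom $\sequV{\assign{x}{\M_i}}{\assign{x}{\M_i}}$, setting $\Gamma_i = \assign{x}{\M_i}$. Then $\ctxtsum{}{\Gamma_i}{i \in I} = \assign{x}{\ctxtsum{}{\M_i}{i \in I}} = \assign{x}{\M} = \Gamma$, and size additivity follows from $|\M| = +_{i \in I}{|\M_i|}$, giving $\sizev{\Phi_v} = +_{i \in I}{\sizev{\Phi_v^i}}$.

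Next I would treat the abstraction case $v = \termabs{x}{t}$, where $\Phi_v$ ends in $\ruleVArrowI$, so that $\M = \intertype{\functtype{\Gamma_j(x)}{\tau_j}}{j \in J}$ is built from a family of premises $\many{\derivable{\Phi_{t,j}}{\sequV{\Gamma_j}{\assign{t}{\tau_j}}}{\SysCBV}}{j \in J}$, with $\Gamma = \ctxtsum{}{\ctxtres{\Gamma_j}{x}{}}{j \in J}$ and $\sizev{\Phi_v} = +_{j \in J}{\sizev{\Phi_{t,j}}} + |J|$. The decomposition $\M = \ctxtsum{}{\M_i}{i \in I}$ induces a partition of the index set $J$ into $\many{J_i}{i \in I}$ with $\M_i = \intertype{\functtype{\Gamma_j(x)}{\tau_j}}{j \in J_i}$, and I would fix one such partition. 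For each $i$, applying $\ruleVArrowI$ to exactly the sub-family $\many{\Phi_{t,j}}{j \in J_i}$ yields $\derivable{\Phi_v^i}{\sequV{\Gamma_i}{\assign{v}{\M_i}}}{\SysCBV}$ with $\Gamma_i = \ctxtsum{}{\ctxtres{\Gamma_j}{x}{}}{j \in J_i}$. Summing contexts over $i$ recovers $\ctxtsum{}{\ctxtres{\Gamma_j}{x}{}}{j \in J} = \Gamma$, while summing sizes gives $+_{i \in I}{(+_{j \in J_i}{\sizev{\Phi_{t,j}}} + |J_i|)} = +_{j \in J}{\sizev{\Phi_{t,j}}} + |J| = \sizev{\Phi_v}$.

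The argument is essentially bookkeeping; the only point deserving care is making the multiset decomposition explicit as a partition of the index set $J$ in the abstraction case, so that the regrouped premises genuinely reconstruct each $\M_i$ and so that their contexts and sizes add up exactly. The degenerate case $I = \emptyset$ (hence $\M = \emul$) is subsumed uniformly by both cases, yielding the empty context and size $0$ on each side.
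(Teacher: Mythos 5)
Your proof is correct and follows exactly the paper's approach: the paper's proof is a case analysis on the shape of the value $v$ (variable via $\ruleVAxiom$, abstraction via $\ruleVArrowI$), which is precisely what you carry out, including the correct size accounting for both rules. The only difference is that you spell out the bookkeeping (the index-set partition and the degenerate case) that the paper leaves implicit as ``straightforward''.
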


\begin{proof}
By case analysis on the shape  of $v$.
\begin{itemize}
  \item $v = x$. By $\ruleVAxiom$,
  $\derivable{\Phi_v}{\sequV{\assign{x}{\M}}{\assign{x}{\M}}}{\SysCBV}$. From $\M =
  \ctxtsum{\!}{\M_i}{i \in I}$ we get
  $\many{\derivable{\Phi^i_{v}}{\sequV{\assign{x}{\M_i}}{\assign{x}{\M_i}}}{\SysCBV}}{i \in I}$
  by $\ruleVAxiom$ several times. We then conclude, since $\sizev{\Phi_v} = |\M| = 
  +_{i \in I} |\M_i| = +_{i \in I} \sizev{\Phi^i_v}$.
  
  \item $v = \termabs{x}{s}$. By $\ruleVArrowI$, we have $\Gamma =
  \ctxtsum{\!}{\ctxtres{\Gamma_j}{x}{}}{j \in J}$, $\M =
  \intertype{\functtype{\Gamma_j(x)}{\tau_j}}{j \in J}$ and
  $\many{\derivable{\Phi^j_{s}}{\sequV{\Gamma_j}{\assign{s}{\tau_j}}}{\SysCBV}}{j \in J}$
  for some $J$. Since 
  $\M=\ctxtsum{\!}{\M_i}{i \in I}$, we have also $J = \ctxtsum{\!}{J_i}{i \in I}$,  with  $\M_i =
  \intertype{\functtype{\Gamma_j(x)}{\tau_j}}{j \in J_i}$ for each $i \in I$.
  We  construct the type derivations
  $\many{\derivable{\Phi^i_v}{\sequV{\ctxtsum{\!}{\ctxtres{\Gamma_j}{x}{}}{j \in J_i}}{\assign{\termabs{x}{s}}{\M_i}}}{\SysCBV}}{i \in I}$, using for each $i\in I$ the rule $\ruleVArrowI$ with premises
  $\many{\derivable{\Phi^j_{s}}{\sequV{\Gamma_j}{\assign{s}{\tau_j}}}{\SysCBV}}{j \in J_i}$.
  We have that  $\Gamma =
  \ctxtsum{\!}{\ctxtres{\Gamma_j}{x}{}}{j \in J} =
  \ctxtsum{\!}{(\ctxtsum{\!}{\ctxtres{\Gamma_j}{x}{}}{j \in J_i})}{i \in I}$.
  We conclude since $\sizev{\Phi_v} = |J| +_{j \in J} \sizev{\Phi^j_s} = 
  +_{i \in I} (|J_i| +_{j \in J_i} \sizev{\Phi^j_v}) = +_{i \in I} \sizev{\Phi^i_v}$.
\end{itemize}
\end{proof}


\begin{lemma}[Substitution]
\label{l:substitution-value}
Let $\derivable{\Phi_{t}}{\sequV{\Gamma}{\assign{t}{\tau}}}{\SysCBV}$
and
$\derivable{\Phi_{v}}{\sequV{\Delta}{\assign{v}{\Gam(x)}}}{\SysCBV}$. Then,
there exists
$\derivable{\Phi_{\substitute{x}{v}{t}}}{\sequV{\ctxtsum{\ctxtres{\Gamma}{x}{}}{\Delta}{}}{\assign{\substitute{x}{v}{t}}{\tau}}}{\SysCBV}$
such that $\sizev{\Phi_{\substitute{x}{v}{t}}} =
\sizev{\Phi_{t}} + \sizev{\Phi_{v}} - |\Gam(x)|$.
\end{lemma}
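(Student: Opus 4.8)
=== PLAN ===

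The plan is to prove the Substitution Lemma for System $\SysCBV$ by induction on the derivation $\Phi_t$, following exactly the structure used earlier for Lem.~\ref{l:substitution-bang} and Lem.~\ref{l:substitution-name}. The statement is quantitative, so at each step I must track not only that a derivation $\Phi_{\substitute{x}{v}{t}}$ exists with the right context $\ctxtsum{\ctxtres{\Gamma}{x}{}}{\Delta}{}$ and type $\tau$, but also that the size equation $\sizev{\Phi_{\substitute{x}{v}{t}}} = \sizev{\Phi_t} + \sizev{\Phi_v} - \ssize{\Gamma(x)}$ holds under the $\SysCBV$-specific measure $\sizev{\_}$ (which weighs the axiom by $\ssize{\M}$ and $\ruleVArrowI$ by its premises plus their number).

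First I would treat the axiom case $t = x$. Here $\Phi_t$ is an instance of $\ruleVAxiom$ typing $\sequV{\assign{x}{\M}}{\assign{x}{\M}}$ with $\Gamma = \assign{x}{\M}$, so $\Gamma(x) = \M$ and $\substitute{x}{v}{t} = v$; I take $\Phi_{\substitute{x}{v}{t}} = \Phi_v$, and verify the size equation using $\sizev{\Phi_t} = \ssize{\M}$, which gives $\sizev{\Phi_v} = \sizev{\Phi_t} + \sizev{\Phi_v} - \ssize{\Gamma(x)}$. The case $t = y \neq x$ is immediate with $\Gamma(x) = \emul$ and $\ssize{\emul}=0$, since $\substitute{x}{v}{y} = y$ and no occurrence of $x$ is replaced. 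For the inductive cases ($\termapp{s}{r}$, $\termabs{y}{s}$, $\termsubs{y}{r}{s}$), I would split $\Gamma(x)$ across the premises, apply the \ih to each subderivation, and reassemble using the same rule; the arithmetic bookkeeping for the size combines the per-premise equations and the $-\ssize{\Gamma(x)}$ terms additively, exactly as in the earlier proofs.

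The genuinely $\SysCBV$-specific obstacle, and the main difference from Lem.~\ref{l:substitution-bang}, lies in the abstraction rule $\ruleVArrowI$ and in how the variable $x$ being substituted may be distributed over the $I$-indexed family of premises. In $\SysCBV$ a value is typed with a \emph{multiset} type assembled from $I$ premises, and the single hypothesis $\Phi_v$ types $v$ at the full multiset $\Gamma(x)$; to push the substitution into each premise I must first \emph{split} $\Phi_v$ according to the portion of $\Gamma(x)$ used by each premise. This is precisely why Lem.~\ref{l:split-value} was proved immediately beforehand: I expect to invoke it to decompose $\Phi_v$ into a family $\many{\Phi_v^i}{i \in I}$ with $\sizev{\Phi_v} = +_{i \in I}\sizev{\Phi_v^i}$, feed each $\Phi_v^i$ to the appropriate premise, and then recombine. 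Getting the multiset splitting to line up with the context splitting (so that $\ctxtres{\Gamma}{x}{}$ and $\Delta$ distribute correctly) and confirming that the contribution of the $+1$-per-premise weight of $\ruleVArrowI$ is preserved is where the accounting is most delicate, and this is the step I expect to require the most care.
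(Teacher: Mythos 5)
Your plan follows the paper's proof exactly: induction on $\Phi_t$, setting $\Phi_{\substitute{x}{v}{t}} = \Phi_v$ in the axiom case $t = x$, and discharging all multi-premise rules by the induction hypothesis together with the splitting lemma (Lem.~\ref{l:split-value}), which is precisely how the paper handles them. The only detail you gloss over is the case $t = y \neq x$, where the size equation forces $\sizev{\Phi_v} = 0$; this does hold because $\Phi_v$ types $v$ at $\Gamma(x) = \emul$ and a value typed with the empty multiset costs nothing (the axiom weighs $|\emul| = 0$, and $\ruleVArrowI$ with zero premises weighs $0$), a point the paper makes explicit.
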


\begin{proof}
By induction on $\Phi_t$. If $\Phi_t$ is $\ruleVAxiom$ and $t = x$, then
$\substitute{x}{v}{t} = v$ and $\Phi_t$ is of the form
$\sequ{\assign{x}{\Gam(x)}}{\assign{x}{\Gam(x)}}$. We let
$\Phi_{\substitute{x}{v}{t}} = \Phi_v$. We conclude since $\sizev{\Phi_t} =
|\Gam(x)|$. If $\Phi_t$ is $\ruleVAxiom$ and $t = y \neq x$, then
$\substitute{x}{v}{t} = y$, $\Gam(x) = \emul$ (so that $|\Gam(x)| = 0$) and
$\Phi_y$ is of the form $\sequ{\assign{y}{\M}}{\assign{y}{\M}}$. We let
$\Phi_{\substitute{x}{v}{t}} = \Phi_y$. Moreover,
$\derivable{\Phi_{v}}{\sequV{\Delta}{\assign{v}{\emul}}}{\SysCBV}$ implies
$\sizev{\Phi_v} = 0$. Then, we immediately conclude.

If $\Phi_t$ ends with $\ruleVArrowE$, then $t=t_1t_2$, $\Gamma=\Gamma_1+\Gamma_2$
  and there exist a type $\M$ and two derivations $\derivable{\Phi_{t_1}}{\sequV{\Gamma_1}{\assign{t_1}{[\M\rightarrow \tau]}}}{\SysCBV}$ and
  $\derivable{\Phi_{t_2}}{\sequV{\Gamma_2}{\assign{t_2}{\M}}}{\SysCBV}$.
  Since $\derivable{\Phi_{v}}{\sequV{\Delta}{\assign{v}{\Gam(x)}}}{\SysCBV}$ and $\Gam(x)=\Gam_1(x)+\Gam_2(x)$, Lemma~\ref{l:split-value} ensures that there exist
  two derivations  $\derivable{\Phi^1_{v}}{\sequV{\Delta_1}{\assign{v}{\Gam_1(x)}}}{\SysCBV}$ and $\derivable{\Phi^2_{v}}{\sequV{\Delta_2}{\assign{v}{\Gam_2(x)}}}{\SysCBV}$ such that 
  $\Delta=\Delta_1+\Delta_2$ and $\size{\Phi_v}=\size{\Phi^1_v}+\size{\Phi^2_v}$. Using the \ih on $\Phi_{t_i},\Phi^i_{v}$, $i=1,2$, we get two derivations
  $\derivable{\Phi_{\substitute{x}{v}{t_1}}}{\sequV{\ctxtsum{\ctxtres{\Gamma_1}{x}{}}{\Delta_1}{}}{\assign{\substitute{x}{v}{t_1}}{[\M\rightarrow \tau]}}}{\SysCBV}$ and
  $\derivable{\Phi_{\substitute{x}{v}{t_2}}}{\sequV{\ctxtsum{\ctxtres{\Gamma_2}{x}{}}{\Delta_2}{}}{\assign{\substitute{x}{v}{t_2}}{\M}}}{\SysCBV}$ such that
  $\size{\Phi_{\substitute{x}{v}{t_i}}} =\sizev{\Phi_{t_i}} + \sizev{\Phi^i_{v}} - |\Gam_i(x)|$, for  $i=1,2$. By observing that $\substitute x v t=\termapp{\substitute x v {t_1}}{\substitute x v {t_2}}$
  and by using $\ruleVArrowE$, we get a derivation $\derivable{\Phi_{\substitute{x}{v}{t}}}{\sequV{\ctxtsum{\ctxtres{\Gamma}{x}{}}{\Delta}{}}{\assign{\substitute{x}{v}{t}}{\tau}}}{\SysCBV}$ such that
  $\sizev{\Phi_{\substitute{x}{v}{t}}} = (\sizev{\Phi_{t_1}} + \sizev{\Phi^1_{v}} - |\Gam_1(x)|)+(\sizev{\Phi_{t_2}} + \sizev{\Phi^2_{v}} - |\Gam_2(x)|)+1=
(\sizev{\Phi_{t_1}} + \sizev{\Phi_{t_2}} +1)+ (\sizev{\Phi^1_{v}} + \sizev{\Phi^2_{v}})-(|\Gam_1(x)|)+|\Gam_2(x)|)=  
\sizev{\Phi_{t}} + \sizev{\Phi_{v}} - |\Gam(x)|$.

All the other cases are similar to the previous one, and follow easily from the \ih and Lemma~\ref{l:split-value}.
\end{proof}

\begin{lemma}[Weighted Subject Reduction]
\label{l:sr-value}
Let $\derivable{\Phi}{\sequV{\Gamma}{\assign{t}{\tau}}}{\SysCBV}$ and $t
\rewrite{\callbyvalue} t'$. Then, there exists
$\derivable{\Phi'}{\sequV{\Gamma}{\assign{t'}{\tau}}}{\SysCBV}$
such that $\sizev{\Phi} > \sizev{\Phi'}$.
\end{lemma}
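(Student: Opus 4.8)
The plan is to follow the same strategy used for the analogous statements in $\SysBang$ (Lem.~\ref{l:wsr-bang}) and $\SysCBN$ (Lem.~\ref{l:sr-name}): proceed by induction on the context $\ctxt{V}$ in which the fired rule ($\dBeta$ or $\sVal$) is located. The inductive cases, where $\ctxt{V}$ is a proper application or substitution context, follow routinely from the \ih: the relevant rule ($\ruleVArrowE$ or $\ruleVESubs$) is applied around an unaffected premise, and since multiset types $\M$ are themselves types, the \ih applies to whichever premise carries the reduced subterm; reassembling the derivation preserves context and type and strictly decreases $\sizev{\_}$. Thus the work concentrates in the base case $\ctxt{V} = \Box$, split according to which root rule is fired, each handled by an auxiliary induction on the list context $\ctxt{L}$ exactly as in Lem.~\ref{l:wsr-bang}, peeling off the outermost explicit substitution $\termsubs{y}{w}{\ctxt{L'}}$ and closing by the \ih.

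For the root $\dBeta$-step $\termapp{\ctxtapp{\ctxt{L}}{\termabs{x}{s}}}{u} \rrule{\dBeta} \ctxtapp{\ctxt{L}}{\termsubs{x}{u}{s}}$ with $\ctxt{L} = \Box$, the key observation is that rule $\ruleVArrowE$ forces the function to carry a \emph{singleton} type $\intertype{\functtype{\M}{\tau}}{}$; hence the last rule typing $\termabs{x}{s}$ is $\ruleVArrowI$ with exactly one premise $\derivable{\Phi_s}{\sequV{\Gamma_s}{\assign{s}{\tau}}}{\SysCBV}$, where $\Gamma_s(x) = \M$, while $u$ is typed $\sequV{\Delta}{\assign{u}{\M}}$ by some $\Phi_u$. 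Reassembling with $\ruleVESubs$ yields a derivation of $\termsubs{x}{u}{s}$ in the same context $\ctxtsum{(\ctxtres{\Gamma_s}{x}{})}{\Delta}{}$ and type $\tau$. For the measure, the source counts $(\sizev{\Phi_s}+1)$ for $\ruleVArrowI$ (premise size plus number of premises) plus one for $\ruleVArrowE$ plus $\sizev{\Phi_u}$, while the target counts one for $\ruleVESubs$ plus the two subderivations; thus $\sizev{\Phi} = \sizev{\Phi'} + 1$.

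For the root $\sVal$-step $\termsubs{x}{v}{s} \rrule{\sVal} \substitute{x}{v}{s}$ with $\ctxt{L} = \Box$, the derivation ends in $\ruleVESubs$ with premises $\derivable{\Phi_s}{\sequV{\Gamma_s}{\assign{s}{\tau}}}{\SysCBV}$ and $\derivable{\Phi_v}{\sequV{\Delta}{\assign{v}{\Gamma_s(x)}}}{\SysCBV}$. Since $v$ is a value, the Substitution Lemma (Lem.~\ref{l:substitution-value}) applies directly and produces $\derivable{\Phi'}{\sequV{\ctxtsum{(\ctxtres{\Gamma_s}{x}{})}{\Delta}{}}{\assign{\substitute{x}{v}{s}}{\tau}}}{\SysCBV}$ with $\sizev{\Phi'} = \sizev{\Phi_s} + \sizev{\Phi_v} - |\Gamma_s(x)|$. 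As the source has $\sizev{\Phi} = \sizev{\Phi_s} + \sizev{\Phi_v} + 1$, we obtain $\sizev{\Phi} - \sizev{\Phi'} = 1 + |\Gamma_s(x)| \geq 1 > 0$, as required.

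The genuinely CBV-specific point, which I expect to be the main obstacle, is the bookkeeping forced by the nonstandard measure $\sizev{\_}$: weighting $\ruleVAxiom$ by $|\M|$ and $\ruleVArrowI$ by the sizes of its premises plus their number is precisely what makes the Substitution Lemma produce the clean equality with slack $-|\Gamma_s(x)|$, so that the strict decrease in the $\sVal$-case is absorbed together with the unit cost of $\ruleVESubs$. Verifying that the singleton restriction of $\ruleVArrowE$ indeed propagates down through $\ctxt{L}$ (each $\ruleVESubs$ preserving the type), so that the $\dBeta$-case never confronts a genuinely multiset-typed abstraction, is the other point needing care, but it is immediate from the shape of the rules.
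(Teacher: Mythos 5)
Your proof is correct and takes essentially the same route as the paper: the paper's own proof is just ``by induction on $\Phi$ using Lem.~\ref{l:substitution-value}'', and your induction on the reduction context (with the inner induction on $\ctxt{L}$, mirroring Lem.~\ref{l:wsr-bang}) unfolds to exactly that argument --- the substitution lemma absorbs the $\sVal$-base case and a direct $\ruleVArrowE$/$\ruleVArrowI$-to-$\ruleVESubs$ rearrangement handles $\dBeta$. Your size bookkeeping (the slack $1+|\Gamma_s(x)|$ in the $\sVal$-case and the exact decrement of $1$ in the $\dBeta$-case) is also accurate.
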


\begin{proof}
  By induction on $t \rewrite{\callbyvalue} t'$. We only show the base case
  $t \rewrite{\sVal} t'$ as the case $t \rewrite{\dBeta} t'$ is very similar
  to the one in Lemma~\ref{l:wsr-bang}. The inductive cases are
  straightforward.

  Let 
  $t = \termsubs{x}{\ctxtapp{\ctxt{L}}{v}}{s}$ and
  $t' = \ctxtapp{\ctxt{L}}{\substitute{x}{v}{s}}$. We proceed by
  induction on $\ctxt{L}$.
    \begin{itemize}
      \item $\ctxt{L} = \Box$. Then $\Gamma = \ctxtsum{(\ctxtres{\Gamma'}{x}{})}{\Delta}{}$
        s.t. $\Gamma'(x) = \M$ and $\Phi$ has the following form
        $$
\prooftree
  \derivable{\Phi_{s}}{\sequV{\Gamma'}{\assign{s}{\tau}}}{\SysCBV}
  \quad
 \derivable{\Phi_{v}}{ \sequV{\Delta}{\assign{v}{\M}}}{\SysCBV}
  \justifies
  \sequV{\Gamma}{\assign{\termsubs{x}{v}{s}}{\tau}}
\using
  \ruleVESubs
\endprooftree $$
      Thus, we conclude directly by Lemma~\ref{l:substitution-value} with $\Phi_{s}$ and
      $\Phi_v$. Notice that $\sizev{\Phi} = 1+ \sizev{\Phi_s} 
      + \sizev{\Phi_v}$, while $\sizev{\Phi'} = \sizev{\Phi_s}
      + \sizev{\Phi_v} - |\M|$.
      
      \item $\ctxt{L} = \termsubs{y}{r}{\ctxt{L'}}$. Then $\Gamma =
      \ctxtsum{(\ctxtres{\Gamma'}{x}{})}{\ctxtsum{(\ctxtres{\Delta}{y}{})}{\Delta'}{}}{}$
      with $\Gamma'(x) = \M$, $\Delta(y) = \M'$ and $$
\prooftree
  \derivable{\Phi_{s}}{\sequV{\Gamma'}{\assign{s}{\tau}}}{\SysCBV}
  \quad
  \Rule{
    \derivable{\Phi_{\ctxt{L'}}}{\sequV{\Delta}{\assign{\ctxtapp{\ctxt{L'}}{v}}{\M}}}{\SysCBV}
    \quad
    \derivable{\Phi_{r}}{\sequV{\Delta'}{\assign{r}{\M'}}}{\SysCBV}
  }{
  \sequV{\ctxtsum{(\ctxtres{\Delta}{y}{})}{\Delta'}{}}{\assign{\ctxtapp{\termsubs{y}{r}{\ctxt{L'}}}{v}}{\M}}
  }{\ruleVESubs}
\justifies
  \sequV{\Gamma}{\assign{\termsubs{x}{
  \ctxtapp{\ctxt{L}}{v}
  }{s}}{\tau}}
\using
  \ruleVESubs
\endprooftree
      $$ Thus, we build $$
\Rule{
  \derivable{\Phi_{s}}{\sequV{\Gamma'}{\assign{s}{\tau}}}{\SysCBV}
  \quad
  \derivable{\Phi_{\ctxt{L'}}}{\sequV{\Delta}{\assign{\ctxtapp{\ctxt{L'}}{v}}{\M}}}{\SysCBV}
}{
  \derivable{\Psi}{\sequV{\ctxtsum{(\ctxtres{\Gamma'}{x}{})}{\Delta}{}}{\assign{\termsubs{x}{
  \ctxtapp{\ctxt{L'}}{v}
  }{s}}{\tau}}}{\SysCBV}
}{\ruleVESubs}
      $$ and by the \ih there exists
      $\derivable{\Psi'}{\sequV{\ctxtsum{(\ctxtres{\Gamma'}{x}{})}{\Delta}{}}{\assign{\ctxtapp{\ctxt{L'}}{\substitute{x}{v}{s}}}{\tau}}}{\SysCBV}$
      such that $\sizev{\Psi} > \sizev{\Psi'}$. Then, we conclude with $$
\Rule{
  \derivable{\Psi'}{\sequV{\ctxtsum{(\ctxtres{\Gamma'}{x}{})}{\Delta}{}}{\assign{\ctxtapp{\ctxt{L'}}{\substitute{x}{v}{s}}}{\tau}}}{\SysCBV}
  \quad
  \derivable{\Phi_{r}}{\sequV{\Delta'}{\assign{r}{\M'}}}{\SysCBV}
}{
  \derivable{\Phi'}{\sequV{\ctxtsum{(\ctxtres{\Gamma'}{x}{})}{\ctxtsum{(\ctxtres{\Delta}{y}{})}{\Delta'}{}}{}}{\assign{\ctxtapp{\ctxt{L}}{\substitute{x}{v}{s}}}{\tau}}}{\SysCBV  }
}{\ruleVESubs} $$
      since we may assume that $y \notin \dom{\Gamma'}$. Notice that
      $\sizev{\Phi} = \sizev{\Psi} + \sizev{\Phi_{r}} +1 > \sizev{\Psi'} +
      \sizev{\Phi_{r}} +1  = \sizev{\Phi'}$.
    \end{itemize}
    
\end{proof}

The ability of merging different type derivations of a given value is necessary
for proving an anti-substitution lemma for CBV.

\begin{lemma}[Merge type for value]
\label{l:merge-value}
Let $\many{\derivable{\Phi_{v}^{i}}{\sequV{\Gamma_i}{\assign{v}{\M_i}}}{\SysCBV}}{i \in I}$.
Then, there exists $\derivable{\Phi_{v}}{\sequV{\Gamma}{\assign{v}{\M}}}{\SysCBV}$ such that
$\Gamma = \ctxtsum{}{\Gamma_i}{i \in I}$, $\M = \ctxtsum{}{\M_i}{i \in I}$
and $\sizev{\Phi_v} = +_{i \in I} \sizev{\Phi_{v}^{i}}$.
\end{lemma}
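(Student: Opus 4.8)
The plan is to recognise that this lemma is exactly the converse of the \emph{split} lemma (Lem.~\ref{l:split-value}), so I would prove it dually, by a direct case analysis on the value $v$ rather than by induction. Since $v$ is a value it is either a variable or an abstraction, and in each case the shape of every derivation $\Phi_{v}^{i}$ is forced by the syntax of the subject: a variable can only be typed by $\ruleVAxiom$, and an abstraction only by $\ruleVArrowI$. This determinacy of the last rule is precisely what lets a flat case analysis suffice, with no appeal to an induction hypothesis.

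First I would treat the variable case $v = x$. Here each $\Phi_{v}^{i}$ is a single instance of $\ruleVAxiom$, so necessarily $\Gamma_i = \assign{x}{\M_i}$ and $\sizev{\Phi_{v}^{i}} = |\M_i|$. I then take $\Phi_v$ to be the single axiom $\sequV{\assign{x}{\M}}{\assign{x}{\M}}$ with $\M = \ctxtsum{}{\M_i}{i \in I}$. The context condition holds because $\ctxtsum{}{\Gamma_i}{i \in I} = \assign{x}{\ctxtsum{}{\M_i}{i \in I}} = \assign{x}{\M} = \Gamma$, and the size identity reduces to $|\M| = \sum_{i \in I} |\M_i|$, which holds since the size of a multiset union is the sum of the sizes.

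The bulk of the work is the abstraction case $v = \termabs{y}{t}$. Each $\Phi_{v}^{i}$ ends with $\ruleVArrowI$, hence decomposes into a family $\many{\derivable{\Phi_{i,j}}{\sequV{\Gamma_{i,j}}{\assign{t}{\tau_{i,j}}}}{\SysCBV}}{j \in J_i}$ with $\M_i = \intertype{\functtype{\Gamma_{i,j}(y)}{\tau_{i,j}}}{j \in J_i}$ and $\Gamma_i = \ctxtsum{}{\ctxtres{\Gamma_{i,j}}{y}{}}{j \in J_i}$. I would then form the tagged union $K = \set{(i,j) \mid i \in I, j \in J_i}$ of all premise indices and apply $\ruleVArrowI$ \emph{once} to the whole family $\many{\Phi_{i,j}}{(i,j) \in K}$. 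Because $\ruleVArrowI$ is itself the rule that turns a family of premises into a multiset type, this yields a derivation $\Phi_v$ of type $\intertype{\functtype{\Gamma_{i,j}(y)}{\tau_{i,j}}}{(i,j) \in K} = \ctxtsum{}{\M_i}{i \in I} = \M$ in context $\ctxtsum{}{\ctxtres{\Gamma_{i,j}}{y}{}}{(i,j) \in K} = \ctxtsum{}{\Gamma_i}{i \in I} = \Gamma$, exactly as required.

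Finally the size identity follows by unfolding the convention for $\ruleVArrowI$, which counts the sum of the sizes of its premises plus their number: $\sizev{\Phi_v} = (+_{(i,j) \in K} \sizev{\Phi_{i,j}}) + |K| = +_{i \in I}\big((+_{j \in J_i}\sizev{\Phi_{i,j}}) + |J_i|\big) = +_{i \in I} \sizev{\Phi_{v}^{i}}$. The empty case $I = \emptyset$ is subsumed uniformly, giving the axiom $\sequV{\assign{x}{\emul}}{\assign{x}{\emul}}$ (equivalently the empty context) for a variable and a zero-premise instance of $\ruleVArrowI$ deriving $\sequV{}{\assign{\termabs{y}{t}}{\emul}}$ for an abstraction. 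I expect the only delicate point to be this bookkeeping of $\sizev{\cdot}$ under the non-standard weight assigned to $\ruleVArrowI$, but it becomes routine once the premises of the several derivations are regrouped into a single family indexed by $K$.
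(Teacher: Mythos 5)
Your proof is correct and follows exactly the approach the paper takes, namely a direct case analysis on the value $v$ (the paper's own proof is just the one-line remark ``straightforward by case analysis on $v$''). Your elaboration of the two cases, including the regrouping of premises under a single $\ruleVArrowI$ instance and the bookkeeping for $\sizev{\cdot}$, correctly fills in what the paper leaves implicit.
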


\begin{proof}
The proof is straightforward by case analysis on $v$.
\end{proof}

\begin{lemma}[Anti-Substitution]
\label{l:antisubstitution-value}
Let $\derivable{\Phi_{\substitute{x}{v}{t}}}{\sequV{\Gamma}{\assign{\substitute{x}{v}{t}}{\tau}}}{\SysCBV}$.
Then, there exist $\derivable{\Phi_{t}}{\sequV{\Gamma'}{\assign{t}{\tau}}}{\SysCBV}$ and
$\derivable{\Phi_{v}}{\sequV{\Delta}{\assign{v}{\Gamma'(x)}}}{\SysCBV}$ such
that $\Gamma = \ctxtsum{\ctxtres{\Gamma'}{x}{}}{\Delta}{}$ and
$\sizev{\Phi_{\substitute{x}{v}{t}}} = \sizev{\Phi_{t}} + \sizev{\Phi_{v}}
- |\Gam'(x)|$.
\end{lemma}

\begin{proof}
By induction on $t$. If $t = x$, then $\substitute{x}{v}{t} = v$ and it is
necessarily the case $\tau = \M$. We let $\Phi_v = \Phi_{\substitute{x}{v}{t}}$
and $\derivable{\Phi_t}{\sequ{\assign{x}{\M}}{\assign{x}{\M}}}{\SysCBV}$ by
$\ruleVAxiom$. We conclude since $\Gamma'(x) = \M$ and $\sizev{\Phi_t} = |\M|$
by definition. If $t = y \neq x$, then $\substitute{x}{v}{t} = y$ and we let
$\Phi_t = \Phi_{\substitute{x}{v}{t}}$. Therefore, by $\ruleVAxiom$, $\Gamma'(x)
= \emul$. Moreover, by rules $\ruleVAxiom$ and $\ruleVArrowI$,
$\derivable{\Phi_{v}}{\sequV{}{\assign{v}{\emul}}}{\SysCBV}$.
Thus, $\sizev{\Phi_{v}} = 0 = |\Gamma'(x)|$ and we conclude. 

If $t = \termapp{t_1}{t_2}$, then $\substitute{x}{v}{t} =
\termapp{\substitute{x}{v}{t_1}}{\substitute{x}{v}{t_2}}$ and there exist a
type $\M$ and two derivations
$\derivable{\Phi_{\substitute{x}{v}{t_1}}}{\sequV{\ctxtsum{\ctxtres{\Gamma_1}{x}{}}{\Delta_1}{}}{\assign{\substitute{x}{v}{t_1}}{\intertype{\functtype{\M}{\tau}}{}}}}{\SysCBV}$
and
$\derivable{\Phi_{\substitute{x}{v}{t_2}}}{\sequV{\ctxtsum{\ctxtres{\Gamma_2}{x}{}}{\Delta_2}{}}{\assign{\substitute{x}{v}{t_2}}{\M}}}{\SysCBV}$
such that $\Gamma = \Gamma_1 + \Gamma_2$ and
$\size{\Phi_{\substitute{x}{v}{t}}} =
\size{\Phi_{\substitute{x}{v}{t_1}}} + \size{\Phi_{\substitute{x}{v}{t_2}}} +
1$. Using the \ih on $t_1$ and $t_2$ we get four derivations:
\begin{enumerate}
  \item\label{l:antisubstitution-value:deriv:t1}
  $\derivable{\Phi_{t_1}}{\sequV{\Gamma'_1}{\assign{t_1}{\intertype{\functtype{\M}{\tau}}{}}}}{\SysCBV}$ 

  \item\label{l:antisubstitution-value:deriv:v1}
  $\derivable{\Phi^1_{v}}{\sequV{\Delta_1}{\assign{v}{\Gamma'_1(x)}}}{\SysCBV}$

  \item\label{l:antisubstitution-value:deriv:t2}
  $\derivable{\Phi_{t_2}}{\sequV{\Gamma'_2}{\assign{t_2}{\M}}}{\SysCBV}$

  \item\label{l:antisubstitution-value:deriv:v2}
  $\derivable{\Phi^2_{v}}{\sequV{\Delta_2}{\assign{v}{\Gamma'_2(x)}}}{\SysCBV}$ 
\end{enumerate}
such that $\Gamma_i = \ctxtsum{\ctxtres{\Gamma'_i}{x}{}}{\Delta_i}{}$ and
$\size{\Phi_{\substitute{x}{v}{t_i}}} = \size{\Phi_{t_i}} + \size{\Phi^i_{v}} -
|\Gamma'_i(x)|$ for $i = 1,2$. Lemma~\ref{l:merge-value} applied to
(\ref{l:antisubstitution-value:deriv:v1}) and
(\ref{l:antisubstitution-value:deriv:v2}) above provides a derivation
$\derivable{\Phi_{v}}{\sequV{\Delta}{\assign{v}{\Gamma'(x)}}}{\SysCBV}$ 
where $\Delta = \Delta_1 + \Delta_2$ and $\Gamma'(x) = \Gamma_1'(x) +
\Gamma_2'(x)$ . Using (\ref{l:antisubstitution-value:deriv:t1}) and
(\ref{l:antisubstitution-value:deriv:t2}) above we get
$\derivable{\Phi_{t}}{\sequV{\Gamma'}{\assign{t}{\tau}}}{\SysCBV}$ where
$\Gamma' = \Gamma'_1 + \Gamma'_2$ and $\size{\Phi_{t}} = \size{\Phi_{t_1}} +
\size{\Phi_{t_2}} + 1$. We have that $\Gamma =
\ctxtsum{\ctxtres{\Gamma'}{x}{}}{\Delta}{}$ and
$\sizev{\Phi_{\substitute{x}{v}{t}}} = \sizev{\Phi_{t}} + \sizev{\Phi_{v}} -
|\Gamma'(x)|$, and we conclude.

All the other cases are similar to the previous one, and follow easily from the
\ih and Lemma~\ref{l:merge-value}.
\end{proof}

\begin{lemma}[Weighted Subject Expansion]
\label{l:se-value}
Let $\derivable{\Phi'}{\sequV{\Gamma}{\assign{t'}{\tau}}}{\SysCBV}$ and $t
\rewrite{\callbyvalue} t'$. Then, there exists
$\derivable{\Phi}{\sequV{\Gamma}{\assign{t}{\tau}}}{\SysCBV}$
such that $\sizev{\Phi} >  \sizev{\Phi'}$.
\end{lemma}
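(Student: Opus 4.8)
The plan is to mirror the Weighted Subject Reduction argument (Lem.~\ref{l:sr-value}) run backwards: where that proof invokes the Substitution Lemma (Lem.~\ref{l:substitution-value}), this one will invoke its converse, the Anti-Substitution Lemma (Lem.~\ref{l:antisubstitution-value}). Concretely I would proceed by induction on the Call-by-Value context $\ctxt{V}$ in which the fired redex sits (reading the measure decrease off the given derivation $\Phi'$ of $t'$), exactly as in the closest analog Lem.~\ref{l:se-bang}. All the inductive context cases ($\termapp{\ctxt{V}}{u}$, $\termapp{u}{\ctxt{V}}$, $\termsubs{x}{u}{\ctxt{V}}$, $\termsubs{x}{\ctxt{V}}{u}$) are immediate: $\Phi'$ ends in the typing rule matching the outer constructor, one premise is replaced by the derivation supplied by the \ih, and the strict inequality $\sizev{\Phi} > \sizev{\Phi'}$ follows because that premise already satisfies it while every other contribution is unchanged. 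The real work is at the root, $\ctxt{V} = \Box$, where two rewriting rules must be inverted.

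For the $\dBeta$ root step $t = \termapp{\ctxtapp{\ctxt{L}}{\termabs{x}{s}}}{u} \rewrite{\callbyvalue} \ctxtapp{\ctxt{L}}{\termsubs{x}{u}{s}} = t'$, I would do an inner induction on $\ctxt{L}$. In the base case $\ctxt{L} = \Box$ the derivation $\Phi'$ of $\termsubs{x}{u}{s}$ ends in $\ruleVESubs$, exposing $\derivable{\Phi_s}{\sequV{\Gamma'}{\assign{s}{\tau}}}{\SysCBV}$ and $\derivable{\Phi_u}{\sequV{\Delta}{\assign{u}{\Gamma'(x)}}}{\SysCBV}$. Setting $\M \coloneq \Gamma'(x)$, one application of $\ruleVArrowI$ to the single premise $\Phi_s$ yields $\termabs{x}{s}$ of type $\intertype{\functtype{\M}{\tau}}{}$, which is exactly the singleton arrow multiset demanded by $\ruleVArrowE$; combining with $\Phi_u$ via $\ruleVArrowE$ rebuilds $t$ over the same context. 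The arithmetic is $\sizev{\Phi} = (\sizev{\Phi_s} + 1) + \sizev{\Phi_u} + 1 = \sizev{\Phi'} + 1$, where the $\ruleVArrowI$ node with one premise contributes $+1$ and the $\ruleVArrowE$ node $+1$, against the single $\ruleVESubs$ node of $\Phi'$. The step $\ctxt{L} = \termsubs{y}{r}{\ctxt{L'}}$ peels off the outer $\ruleVESubs$, applies the \ih to the stripped judgement (using the side condition $y \notin \fv{u}$ to relocate the premise for $r$) and reassembles, just as in the corresponding case of Lem.~\ref{l:se-bang}.

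For the $\sVal$ root step $t = \termsubs{x}{\ctxtapp{\ctxt{L}}{v}}{s} \rewrite{\callbyvalue} \ctxtapp{\ctxt{L}}{\substitute{x}{v}{s}} = t'$, again by inner induction on $\ctxt{L}$. When $\ctxt{L} = \Box$ the reduct is $\substitute{x}{v}{s}$, and I would apply Anti-Substitution (Lem.~\ref{l:antisubstitution-value}) to $\Phi'$, obtaining $\derivable{\Phi_s}{\sequV{\Gamma'}{\assign{s}{\tau}}}{\SysCBV}$ and $\derivable{\Phi_v}{\sequV{\Delta}{\assign{v}{\Gamma'(x)}}}{\SysCBV}$ together with $\sizev{\Phi'} = \sizev{\Phi_s} + \sizev{\Phi_v} - |\Gamma'(x)|$. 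A single $\ruleVESubs$ node then reconstructs $t = \termsubs{x}{v}{s}$ with $\sizev{\Phi} = \sizev{\Phi_s} + \sizev{\Phi_v} + 1 = \sizev{\Phi'} + |\Gamma'(x)| + 1 > \sizev{\Phi'}$. The nested case $\ctxt{L} = \termsubs{y}{r}{\ctxt{L'}}$ is dispatched by the \ih on the inner substitution, exactly as in the list-context subcase above.

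The step I expect to be the main obstacle is the size bookkeeping under the nonstandard weighting of $\SysCBV$, in which $\ruleVAxiom$ costs $|\M|$ and $\ruleVArrowI$ costs the sizes of its premises plus their number. In particular one must check strictness in the genuinely erasing $\sVal$-steps, where $v$ may be typed with $\emul$, so that $\sizev{\Phi_v} = 0$ and $|\Gamma'(x)| = 0$; there the decrease is supplied entirely by the single $\ruleVESubs$ rule counted in $\Phi$, keeping $\sizev{\Phi} = \sizev{\Phi'} + 1$. A secondary point of care is verifying, in the $\dBeta$ case, that inverting an explicit substitution through $\ruleVArrowI$/$\ruleVArrowE$ reproduces precisely the singleton type $\intertype{\functtype{\M}{\tau}}{}$ and the exact context splitting $\ctxtsum{(\ctxtres{\Gamma'}{x}{})}{\Delta}{}$, so that the reconstructed $\Phi$ is well-formed and types $t$ in the original context $\Gamma$.
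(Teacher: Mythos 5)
Your proposal is correct and follows essentially the same route as the paper: a structural induction (the paper phrases it as induction on $\Phi'$, you on the context $\ctxt{V}$, which yields the same case analysis), with the Anti-Substitution Lemma (Lem.~\ref{l:antisubstitution-value}) doing the work in the $\sVal$ base case and the $\dBeta$ case handled by reassembling $\ruleVArrowI$/$\ruleVArrowE$ into the singleton arrow multiset, exactly as in the detailed analog Lem.~\ref{l:se-bang}. Your size bookkeeping under the nonstandard $\sizev{\_}$ measure (including the erasing case where $\sizev{\Phi_v} = |\Gamma'(x)| = 0$) is accurate.
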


\begin{proof}
By induction on $t \rewrite{\callbyvalue} t'$ where, in
particular, Lemma~\ref{l:antisubstitution-value} is used in the base case
$\termsubs{x}{v}{t} \rrule{\sVal} \substitute{x}{v}{t}$.
\end{proof}

As in the case of  CBN,  a lemma establishing that  normal forms are typable is needed.

\begin{lemma}
\label{l:normal-forms-value}
Let $t \not\rewrite{\callbyvalue}$. Then, $t$ is $\SysCBV$-typable.
\end{lemma}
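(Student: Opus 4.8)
The plan is to follow the pattern of the Call-by-Name analogue (Lem.~\ref{l:normal-forms-name}) and reduce the statement to the inductive description of $\callbyvalue$-normal forms. By Lem.~\ref{l:normal-forms-cbn-cbv}(\ref{l:normal-forms-cbv}), the hypothesis $t \not\rewrite{\callbyvalue}$ is equivalent to $t \in \CBVNF$, so it suffices to show that every term in $\CBVNF$ is $\SysCBV$-typable. I would prove this by a \emph{simultaneous} induction on the three grammars $\VarHCBVNF$, $\HCBVNF$ and $\CBVNF$, establishing the following strengthened statements: (a) if $t \in \VarHCBVNF$, then for every multiset type $\M$ there is $\Gamma$ with $\sequ{\Gamma}{\assign{t}{\M}}$; (b) if $t \in \HCBVNF$, then for every type $\tau$ there is $\Gamma$ with $\sequ{\Gamma}{\assign{t}{\tau}}$; and (c) if $t \in \CBVNF$, then there is $\Gamma$ with $\sequ{\Gamma}{\assign{t}{\emul}}$, which in particular yields $\SysCBV$-typability.

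The strengthening of (c) to the \emph{empty} multiset $\emul$ is exactly what makes the induction close. The degenerate instances of the typing rules---the axiom $\ruleVAxiom$ with $\M = \emul$ and rule $\ruleVArrowI$ with empty index set $I = \emptyset$---produce $\emul$-typings of variables and abstractions for free, and these $\emul$-typed subterms are precisely what the elimination rules will consume. Concretely, for (a) a bare variable is handled by $\ruleVAxiom$, and a closure $\termsubs{x}{u}{s}$ with $s \in \VarHCBVNF$ and $u \in \HCBVNF$ is obtained by combining the \ih (a) on $s$ (typing it with the desired $\M$, which fixes a multiset $\Gamma(x)$ for $x$) with the \ih (b) on $u$, typing $u$ with that multiset $\Gamma(x)$---legitimate since multiset types are themselves types---and concluding through $\ruleVESubs$. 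For (b), each application $\termapp{r}{s}$ is typed via $\ruleVArrowE$ by choosing the argument multiset to be $\emul$: the head $r$ receives type $\intertype{\functtype{\emul}{\tau}}{}$ from \ih (a) (if $r \in \VarHCBVNF$) or \ih (b) (if $r \in \HCBVNF$), while the argument $s \in \CBVNF$ receives type $\emul$ from \ih (c); the closure production is treated as in (a). For (c), abstractions use $\ruleVArrowI$ with $I = \emptyset$, the cases $t \in \VarHCBVNF$ and $t \in \HCBVNF$ fall back on (a) and (b) with target type $\emul$, and closures reuse the $\ruleVESubs$ argument.

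The main obstacle is the circular dependency between the three claims: typing a neutral application in (b) requires its $\CBVNF$ argument to be typable with $\emul$ (claim (c)), whereas typing a closure in (c) requires its $\HCBVNF$ argument to carry an \emph{arbitrary} multiset type (claim (b)). I expect this to dissolve by carrying out the three inductions together and by exploiting that the type grammar subsumes multiset types, so that the single general statement (b) already delivers the multiset-typed premises demanded by $\ruleVESubs$ and $\ruleVArrowE$; no separate promotion lemma is needed. The remaining bookkeeping---relevance of the axiom, multiset union of contexts, and the $\dom{\cdot}$ side conditions---is routine and entirely analogous to the arguments used in Lem.~\ref{l:clashes-do-not-type-Bang} and Thm.~\ref{t:clashfree-bang}.
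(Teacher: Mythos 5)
Your proposal is correct and follows essentially the same route as the paper: the paper also proves the three claims by simultaneous induction on the grammars $\VarHCBVNF$, $\HCBVNF$ and $\CBVNF$, with the variable claim universal over multiset types, the neutral claim universal over types, and the degenerate instances of $\ruleVAxiom$ and $\ruleVArrowI$ (with $I = \emptyset$) supplying the typings that the elimination rules consume. The only cosmetic difference is that the paper states the third claim existentially (some multiset type $\M$) rather than pinning it to $\emul$, but your specialisation to $\emul$ is harmless and is exactly the witness the paper itself uses in the abstraction case.
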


\begin{proof}
By induction on $t$, we prove simultaneously the following statements:
\begin{enumerate}
  \item\label{l:normal-forms-value:vr} If $t \in \VarHCBVNF$, then for every
  multiset type $\M$ there exists $\Gamma$ such that $\derivable{}{\sequV{\Gamma}{\assign{t}{\M}}}{\SysCBV}$.
  \item\label{l:normal-forms-value:ne} If $t \in \HCBVNF$, then for every type
  $\tau$ there exists $\Gamma$ such that  $\derivable{}{\sequV{\Gamma}{\assign{t}{\tau}}}{\SysCBV}$.
  \item\label{l:normal-forms-value:no} If $t \in \CBVNF$, then
    there exist $\Gamma$ and $\M$ such that 
  $\derivable{}{\sequV{\Gamma}{\assign{t}{\M}}}{\SysCBV}$.
\end{enumerate}
We only analyse the key cases. If $t = x \in \VarHCBVNF$ we conclude by
$\ruleVAxiom$. If $t = \termapp{s}{u} \in \HCBVNF$, then $u \in \CBVNF$ and
there are two possible cases: $s \in \VarHCBVNF$ or $s \in \HCBVNF$. Let
$\tau$ by any type. By \ih (\ref{l:normal-forms-value:no}),
$\derivable{}{\sequV{\Delta}{\assign{u}{\M}}}{\SysCBV}$. Moreover, by \ih
(\ref{l:normal-forms-value:vr}) or (\ref{l:normal-forms-value:ne}) resp., we have
$\derivable{}{\sequV{\Gamma}{\assign{s}{\multiset{\functtype{\M}{\tau}}}}}{\SysCBV}$
and we conclude by $\ruleVArrowE$. If $t = \termabs{x}{s} \in \CBVNF$ we
conclude by $\ruleVArrowI$ with
$\derivable{}{\sequV{\Gamma}{\assign{t}{\emul}}}{\SysCBV}$. If $t =
\termsubs{x}{u}{s}$, there are three possible cases: $s \in \VarHCBVNF$, $s \in
\HCBVNF$ or $s \in \CBVNF$. In either case, by the proper \ih we get
$\derivable{}{\sequV{\Gamma}{\assign{s}{\M}}}{\SysCBV}$ (resp. $\tau$) and
conclude by $\ruleVESubs$, given that $u \in \HCBVNF$ and
$\derivable{}{\sequV{\Delta}{\assign{u}{\Gamma(x)}}}{\SysCBV}$ by \ih
(\ref{l:normal-forms-value:ne}).
\end{proof}

We  now relate $\SysCBV$-typability with $\callbyvalue$-normalisation.  We also
put in evidence the quantitative aspect of system $\SysCBV$, so that we
introduce the \emphdef{$\callbyvalue$-size} of terms as: $\valsize{x} \eqdef 0$,
$\valsize{\termabs{x}{t}} \eqdef 0$, $\valsize{\termapp{t}{u}} \eqdef 1 +
\valsize{t} + \valsize{u}$, and $\valsize{\termsubs{x}{u}{t}} \eqdef
1 + \valsize{t} + \valsize{u}$.
Note in particular that given a derivation
$\Phi_t$ for a term $t$ we always have $\sizev{\Phi_t} \geq \valsize{t}$.

\begin{theorem}[Soundness and Completeness for System $\SysCBV$]
\label{t:normalisation-value}
Let ${t \in \TermLambda}$. Then, $t$ is $\SysCBV$-typable iff $t$ is
$\callbyvalue$-normalising. Moreover, if
$\derivable{\Phi}{\sequV{\Gamma}{\assign{t}{\tau}}}{\SysCBV}$, then there
exists $p \in \CBVNF$ such that
$t \rewriten{\callbyvalue}^{(\cbeta,\cexp)} p$ and $\sizev{\Phi} \geq \cbeta +
\cexp + \valsize{p}$.
\end{theorem}

\begin{proof}
The $\Rightarrow$ direction holds by WSR (Lemma~\ref{l:sr-value}), while the
$\Leftarrow$ direction follows from Lemma~\ref{l:normal-forms-value} and WSE
(Lemma~\ref{l:se-value}). The \emph{moreover} statement holds by
Lemma~\ref{l:sr-value} and the fact that the size of the type derivation of $p$ is greater than or equal to $\valsize{p}$.
\end{proof}


\section{A Tight Type System Giving Exact Bounds}
\label{s:tight}

In order to count exactly the length of $\bangweak$-reduction sequences to
normal forms, we first fix a \emph{deterministic} strategy for the
$\BangRev$-calculus, called $\weakstg$, which computes the \emph{same}
$\bangweak$-normal forms. We then define the \emph{tight} type system
$\SysTight$, being able to count exactly the length of $\weakstg$-reduction
sequences. Theorem~\ref{t:confluence}, stating that any two different reduction
paths to normal form have the same length, guarantees that system $\SysTight$
is able to count exactly the length of \emph{any} $\bangweak$-reduction
sequence to $\bangweak$-normal form.

\paragraph{{\bf A Deterministic Strategy for the $\BangRev$-Calculus}}
The reduction relation $\lored$ defined below is a deterministic version of
$\rewrite{\bangweak}$ and is used, as explained, as a technical tool of
our development.

\[
\begin{array}{c}
\Rule{}
     {\termapp{\ctxtapp{\ctxt{L}}{\termabs{x}{t}}}{u} \lored \ctxtapp{\ctxt{L}}{\termsubs{x}{u}{t}}}
     {}
\qquad
\Rule{}
     {\termsubs{x}{\ctxtapp{\ctxt{L}}{\termbang{u}}}{t} \lored \ctxtapp{\ctxt{L}}{\substitute{x}{u}{t}}}
     {}
\qquad
\Rule{}
     {\termder{(\ctxtapp{\ctxt{L}}{\termbang{t}})} \lored \ctxtapp{\ctxt{L}}{t}}
     {}
\\
\\
\Rule{t \lored u }
     {\termabs{x}{t} \lored \termabs{x}{u}}
     {}
\qquad
\Rule{t \lored u \quad \neg\pbang{t}}
     {\termsubs{x}{t}{r} \lored \termsubs{x}{u}{r}}        
     {}
\qquad
\Rule{t \lored u \quad \neg\pbang{t}}
     {\termder{t} \lored \termder{u}}
     {}
\\
\\
\Rule{t \lored u \quad \neg\pabs{t}}
     {\termapp{t}{r} \lored \termapp{u}{r}}
     {}
\qquad
\Rule{t \lored u \quad \pnabs{r}}
     {\termapp{r}{t} \lored \termapp{r}{u}}        
     {}
\qquad
\Rule{t \lored u \quad \pnbang{r}}
     {\termsubs{x}{r}{t} \lored \termsubs{x}{r}{u}}
     {}
\end{array} \]

The rules in the first line correspond to the base cases.
  The first of these rules is  the multiplicative case, while the
  other two  are the exponential cases.
  The six remaining rules specify the
  closure by weak  contexts. 

Normal forms of $\rewrite{\bangweak}$ and $\lored$ are the same, both
characterised by the set $\ipnrml$.

\begin{proposition}
\label{l:equivalence}
Let $t \in \TermExplicit$. Then,
\begin{inparaenum}[(1)]
  \item\label{l:equivalence:bangweak} $t \not\rewrite{\bangweak}$ iff
  \item\label{l:equivalence:lored} $t \not\lored$ iff
  \item\label{l:equivalence:pnrml} $\pnrml{t}$.
\end{inparaenum}
\end{proposition}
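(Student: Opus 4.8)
Since the equivalence of items~\ref{l:equivalence:bangweak} and~\ref{l:equivalence:pnrml} is exactly Prop.~\ref{prop:normal}, the plan is to tie item~\ref{l:equivalence:lored} to these by comparing $\lored$ with $\rewrite{\bangweak}$ in both directions. The easy direction is the inclusion $\lored\ \subseteq\ \rewrite{\bangweak}$: the three root rules defining $\lored$ are literally the three rewriting rules defining $\rewrite{\bangweak}$, while each contextual rule of $\lored$ merely propagates a step into one of the weak context formers (under $\termabs{x}{\Box}$, under $\termder{\Box}$, on either side of an application, or into the body or the content of a substitution), all of which are instances of weak contexts $\ctxt{W}$. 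Hence $t\lored t'$ implies $t\rewrite{\bangweak}t'$, so $t\not\rewrite{\bangweak}$ implies $t\not\lored$, which is \ref{l:equivalence:bangweak}$\Rightarrow$\ref{l:equivalence:lored}; via Prop.~\ref{prop:normal} this also yields \ref{l:equivalence:pnrml}$\Rightarrow$\ref{l:equivalence:lored}.

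The substantial direction is a \emph{progress} property for the deterministic strategy: I would show that $\neg\pnrml{t}$ implies $t$ is $\lored$-reducible, which is exactly the contrapositive of \ref{l:equivalence:lored}$\Rightarrow$\ref{l:equivalence:pnrml} and closes the cycle. Because the $\lored$-rules are guarded by the predicates $\pabs{\cdot}$, $\pbang{\cdot}$, $\pnabs{\cdot}$ and $\pnbang{\cdot}$, a naive induction cannot feed those guards, so I would prove simultaneously, by induction on $t$, the four implications dual to Prop.~\ref{prop:normal}: that $t$ is $\lored$-reducible whenever (a) $\neg\pntrl{t}$, $\neg\pabs{t}$ and $\neg\pbang{t}$; (b) $\neg\pnabs{t}$ and $\neg\pabs{t}$; (c) $\neg\pnbang{t}$ and $\neg\pbang{t}$; and (d) $\neg\pnrml{t}$.

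The inductive step is a case analysis on the shape of $t$ in which each way of failing to be normal is matched to exactly one $\lored$-rule. For $t=\termapp{s}{u}$: either $\pabs{s}$ and $t$ is a $\dBeta$-redex at the root; or $\neg\pabs{s}$ with $\neg\pnabs{s}$, and the \ih (clause (b)) on $s$ feeds the left-application rule (guarded by $\neg\pabs{s}$); or $\pnabs{s}$ with $\neg\pnrml{u}$, and the \ih (clause (d)) on $u$ feeds the right-application rule (guarded by $\pnabs{s}$). The cases $t=\termder{s}$ and $t=\termsubs{x}{u}{s}$ are analogous: an exposed bang (i.e.\ $\pbang{\cdot}$ on the relevant subterm) produces a $\dBang$- or $\sBang$-redex at the root, and otherwise the \ih produces a step inside, admitted by the guard $\neg\pbang{\cdot}$ on a content or $\pnbang{\cdot}$ on a body; and $t=\termabs{x}{s}$ reduces directly to the hypothesis $\neg\pnrml{s}$. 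The only real bookkeeping obstacle is keeping the guards aligned across clauses — deciding, in each subcase, which of $\pabs{\cdot}$, $\pbang{\cdot}$, $\pnabs{\cdot}$, $\pnbang{\cdot}$ holds; for this I would lean on Remark~\ref{r:normal} (e.g.\ $\pntrl{\cdot}\Leftrightarrow\pnabs{\cdot}\wedge\pnbang{\cdot}$, and a neutral-bang term outside $\ipnabs$ is abstraction-headed) together with the fact that the three grammar classes place applications and derelictions only inside $\ipntrl$, so that membership of a compound term in $\ipnrml$ decomposes uniquely into membership of its immediate subterms. Assembling the four implications gives \ref{l:equivalence:lored}$\Leftrightarrow$\ref{l:equivalence:pnrml}, which with Prop.~\ref{prop:normal} completes the triple equivalence.
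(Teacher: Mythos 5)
Your proposal is correct and takes essentially the same route as the paper: the paper also derives (\ref{l:equivalence:bangweak})$\Rightarrow$(\ref{l:equivalence:lored}) from the inclusion $\lored \subset \rewrite{\bangweak}$ and settles the remaining direction by adapting the proof of Prop.~\ref{prop:normal}. Your progress lemma with the four simultaneous contrapositive clauses, matching each failure of normality to an applicable $\lored$-rule via its guard, is exactly the content the paper compresses into the phrase ``straightforward adaptation.''
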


\begin{proof}
Notice that $(\ref{l:equivalence:bangweak}) \implies
(\ref{l:equivalence:lored})$ follows from $\lored \mathbin{\subset}
\rewrite{\bangweak}$. Moreover, $(\ref{l:equivalence:bangweak})$ iff
$(\ref{l:equivalence:pnrml})$ holds by Proposition~\ref{prop:normal}. The proof of
$(\ref{l:equivalence:lored}) \implies (\ref{l:equivalence:pnrml})$ follows from
a straightforward adaptation of the proof of Proposition~\ref{prop:normal}.   
\end{proof}

\paragraph{{\bf The Type System $\SysTight$}}
We now extend the type system $\SysBang$ to a \emph{tight} one, called
$\SysTight$, being able to provide \emph{exact} bounds for
$\weakstg$-normalising sequences and size of normal forms. The
technique is based on~\cite{AccattoliGK18}, which defines type
systems to count reduction lengths for different strategies in the
$\lambda$-calculus. The notion of tight derivation turns out to be a
particular implementation of \emph{minimal derivation}, pioneered
by de Carvalho in~\cite{Carvalho:thesis}, where exact bounds for CBN abstract
machines are inferred from minimal type derivations.

We define the following sets of types:
\begin{center}
\begin{tabular}{rrcll}
\textbf{(Tight Constants)}     & $\typetight$   & $\Coloneq$ & $\typeabs \mid \typebang \mid \typeneutral$ \\
\textbf{(Types)}           & $\sigma, \tau$ & $\Coloneq$ & $\typetight \mid \M \mid \functtype{\M}{\sigma}$ \\
\textbf{(Multiset Types)}  & $\M$           & $\Coloneq$ & $\intertype{\sigma_i}{i \in I}$  where $I$ is a finite set
\end{tabular}
\end{center}

In contrast with $\SysBang$ where an infinite countable set of
  variable is used, following the standard presentations in the
  literature, system $\SysTight$ relies only in the use of a few type
  constants with a specific semantics.  Inspired
by~\cite{AccattoliGK18}, which only uses two constant types $\typeabs$
and $\typeneutral$ for abstractions and neutral terms respectively, we
now use three tight constants. Indeed, the
constant $\typeabs$ (resp.  $\typebang$) types terms whose normal form
has the shape $\ctxtapp{\ctxt{L}}{\termabs{x}{t}}$ (resp.
$\ctxtapp{\ctxt{L}}{\termbang{t}}$), and the constant $\typeneutral$
types terms whose normal form is in $\icfntrl$. As a matter of
notation, given an arbitrary tight constant
$\typetight_0$ we write $\overline{\typetight_0}$ to denote a tight
constant different from $\typetight_0$. Thus for
instance, $\overline{\typeabs} \in \{\typebang, \typeneutral\}$.

Typing contexts are functions from variables to multiset types, assigning the
empty multiset to all but a finite number of variables. Sequents are of the
form $\sequT{\Gamma}{\assign{t}{\sigma}}{\cbeta}{\cexp}{\csize}$, where the
natural numbers $\cbeta$, $\cexp$ and $\csize$ provide information on the
reduction of $t$ to normal form, and on the size of its normal form. More
precisely, $\cbeta$ (resp. $\cexp$) indicates the number of multiplicative
(resp. exponential) steps to normal form, while $\csize$ indicates the
$\bangweak$-size of this normal form. Observe that we do not count $\sBang$ and
$\dBang$ steps separately, because both of them are exponential steps of the
same nature. It is also worth noticing that only two counters suffice in the
case of the $\lambda$-calculus~\cite{AccattoliGK18}, one to count
$\beta$-reduction steps, and another to count the $\bangweak$-size of normal
forms. The difficulty in the case of the $\BangRev$-calculus is to statically
discriminate between multiplicative and exponential steps.

A multiset type $\intertype{\sigma_i}{i \in I}$ is
  \emphdef{tight}, written $\ptight{\intertype{\sig_i}{i \in I}}$, if
  $\sigma_i \in \typetight$ for all $i \in I$. A context $\Gamma$ is
  said to be \emphdef{tight} if it assigns tight multisets to all
  variables. A type derivation
  $\derivable{\Phi}{\sequT{\Gamma}{\assign{t}{\sigma}}{\cbeta}{\cexp}{\csize}}{\SysTight}$
  is \emphdef{tight} if $\Gamma$ is tight and $\sigma \in
  \typetight$.

Typing rules
(Figure~\ref{fig:typingSchemesTight}) are split in two groups: the
\emph{persistent} and the \emph{consuming} ones. A constructor is consuming
(resp. persistent) if it is consumed (resp. not consumed) during
$\bangweak$-reduction to $\bangweak$-normal form. For instance, in
$\termapp{\termapp{\termder{(\termbang{\Kterm})}}{(\termbang{\id})}}{(\termbang{\Omega})}$
the two abstractions of $\Kterm$ are consuming, while the abstraction of $\id$
is persistent, and all the other constructors are also consuming, except those
of $\Omega$ that turns out to be an untyped subterm. This dichotomy between
consuming/persistent constructors has been used in~\cite{AccattoliGK18} for
the $\lambda$-calculus, and adapted here for the $\BangRev$-calculus.

\begin{figure}
\centering $
\begin{array}{c}
\textbf{Persistent Typing Rules}
\\[1em]
\Rule{\sequT{\Gamma}{\assign{t}{\typeneutral}}{\cbeta}{\cexp}{\csize}
      \quad
      \sequT{\Delta}{\assign{u}{\overline{\typeabs}}}{\cbeta'}{\cexp'}{\csize'}
     }
     {\sequT{\ctxtsum{\Gamma}{\Delta}{}}{\assign{\termapp{t}{u}}{\typeneutral}}{\cbeta+\cbeta'}{\cexp+\cexp'}{\csize+\csize'+1}}
     {\ruleTArrowE}
\qquad
\Rule{\sequT{\Gamma}{\assign{t}{\typetight}}{\cbeta}{\cexp}{\csize}
      \quad
      \ptight{\Gamma(x)}
     }
     {\sequT{\ctxtres{\Gamma}{x}{}}{\assign{\termabs{x}{t}}{\typeabs}}{\cbeta}{\cexp}{\csize+1}}
     {\ruleTArrowI}
\\
\\
\Rule{\vphantom{\sequT{}{\assign{\termbang{t}}{\typebang}}{0}{0}{0}}}
     {\sequT{}{\assign{\termbang{t}}{\typebang}}{0}{0}{0}}
     {\ruleTBang}
\qquad
\Rule{\sequT{\Gamma}{\assign{t}{\typeneutral}}{\cbeta}{\cexp}{\csize}}
     {\sequT{\Gamma}{\assign{\termder{t}}{\typeneutral}}{\cbeta}{\cexp}{\csize+1}}
     {\ruleTDer}
\\
\\
\Rule{\sequT{\Gamma}{\assign{t}{\tau}}{\cbeta}{\cexp}{\csize}
      \quad
      \sequT{\Delta}{\assign{u}{\typeneutral}}{\cbeta'}{\cexp'}{\csize'}
      \quad
      \ptight{\Gamma(x)}
     }
     {\sequT{\ctxtsum{(\ctxtres{\Gamma}{x}{})}{\Delta}{}}{\assign{\termsubs{x}{u}{t}}{\tau}}{\cbeta+\cbeta'}{\cexp+\cexp'}{\csize+\csize'}}
     {\ruleTESubs}
\\[2em]
\textbf{Consuming Typing Rules}
\\[1em]
\Rule{\vphantom{\sequT{}{\assign{\termbang{t}}{\typebang}}{0}{0}{0}}}
     {\sequT{\assign{x}{\intertype{\sigma}{}}}{\assign{x}{\sigma}}{0}{0}{0}}
     {\ruleDAxiom}
\qquad
\Rule{\sequT{\Gamma}{\assign{t}{\functtype{\M}{\tau}}}{\cbeta}{\cexp}{\csize}
      \quad
      \sequT{\Delta}{\assign{u}{\M}}{\cbeta'}{\cexp'}{\csize'}
     }
     {\sequT{\ctxtsum{\Gamma}{\Delta}{}}{\assign{\termapp{t}{u}}{\tau}}{\cbeta+\cbeta'+1}{\cexp+\cexp'}{\csize+\csize'}}
     {\ruleDArrowE}
\\
\\
\Rule{\sequT{\Gamma}{\assign{t}{\functtype{\M}{\tau}}}{\cbeta}{\cexp}{\csize}
      \quad
      \sequT{\Delta}{\assign{u}{\typeneutral}}{\cbeta'}{\cexp'}{\csize'}
      \quad
      \ptight{\M}
     }
     {\sequT{\ctxtsum{\Gamma}{\Delta}{}}{\assign{\termapp{t}{u}}{\tau}}{\cbeta+\cbeta'+1}{\cexp+\cexp'}{\csize+\csize'}}
     {\ruleBTApp}
\\
\\
\Rule{\sequT{\Gamma}{\assign{t}{\tau}}{\cbeta}{\cexp}{\csize}}
     {\sequT{\ctxtres{\Gamma}{x}{}}{\assign{\termabs{x}{t}}{\functtype{\Gamma(x)}{\tau}}}{\cbeta}{\cexp}{\csize}}
     {\ruleDArrowI}
\qquad
\Rule{(\sequT{\Gamma_i}{\assign{t}{\sigma_i}}{\cbeta_i}{\cexp_i}{\csize_i})_{i \in I}}
     {\sequT{\ctxtsum{}{\Gamma_i}{i \in I}}{\assign{\termbang{t}}{\intertype{\sigma_i}{\iI}}}{+_{\iI}{\cbeta_i}}{1+_{\iI}{\cexp_i}}{+_{\iI}{\csize_i}}}
     {\ruleDBang}
\\
\\
\Rule{\sequT{\Gamma}{\assign{t}{\intertype{\sigma}{}}}{\cbeta}{\cexp}{\csize}}
     {\sequT{\Gamma}{\assign{\termder{t}}{\sigma}}{\cbeta}{\cexp}{\csize}}
     {\ruleDDer}
\qquad
\Rule{\sequT{\Gamma}{\assign{t}{\sigma}}{\cbeta}{\cexp}{\csize}
      \quad
      \sequT{\Delta}{\assign{u}{\Gamma(x)}}{\cbeta'}{\cexp'}{\csize'}
     }
     {\sequT{\ctxtsum{(\ctxtres{\Gamma}{x}{})}{\Delta}{}}{\assign{\termsubs{x}{u}{t}}{\sigma}}{\cbeta+\cbeta'}{\cexp+\cexp'}{\csize+\csize'}}
     {\ruleDESubs}
\end{array} $
\caption{System $\SysTight$ for the $\BangRev$-Calculus.}
\label{fig:typingSchemesTight}
\end{figure}

Observe that in every typing rule the counters of the conclusion are at least
the sums of the corresponding counters  of the premises. In some cases, one
counter may  undergo an additional increment, as explained below. The
persistent rules are those typing persistent constructors, so that none of them
increases the first two counters, but only possibly the third one, which
contributes to the size of the normal form. The consuming rules type consuming
constructors, so that they may increase one of the first two counters,
contributing to the length of the normalisation sequence. More precisely, rules
$\ruleDArrowE$ and $\ruleBTApp$ increment the first counter because the
(consuming) application will be used to perform a $\dBeta$-step, while rule
$\ruleDBang$ increments the second counter because the (consuming) bang will be
used to perform either a $\sBang$ or a $\dBang$-step. Rule $\ruleBTApp$ is
particularly useful to type $\dBeta$-redexes whose reduction does not create an
exponential redex, because the argument of the substitution created by the
$\dBeta$-step does not reduce to a bang.

\begin{example}
\label{example:t0-tight}
The following tight typing can be derived for term $t_0$ of
Example~\ref{example:t0}:
{\small \[
\Rule{
  \Rule{
    \Rule{
      \Rule{
        \Rule{
          \Rule{
            \Rule{}{
              \sequT{\assign{x}{\multiset{\typeabs}}}{\assign{x}{\typeabs}}{0}{0}{0}
            }{\ruleDAxiom}
          }{
            \sequT{\assign{x}{\multiset{\typeabs}}}{\assign{\termabs{y}{x}}{\functtype{\emul}{\typeabs}}}{0}{0}{0}
          }{\ruleDArrowI}
        }{
          \sequT{}{\assign{\termabs{x}{\termabs{y}{x}}}{\functtype{\multiset{\typeabs}}{\functtype{\emul}{\typeabs}}}}{0}{0}{0}
        }{\ruleDArrowI}
      }{
        \sequT{}{\assign{\termbang{\Kterm}}{\multiset{\functtype{\multiset{\typeabs}}{\functtype{\emul}{\typeabs}}}}}{0}{1}{0}
      }{\ruleDBang}
    }{
      \sequT{}{\assign{\termder{(\termbang{\Kterm})}}{\functtype{\multiset{\typeabs}}{\functtype{\emul}{\typeabs}}}}{0}{1}{0}
    }{\ruleDDer}
    \Rule{
      \Rule{
        \Rule{}{
          \sequT{\assign{x}{\multiset{\typeneutral}}}{\assign{x}{\typeneutral}}{0}{0}{0}
        }{\ruleDAxiom}
      }{
        \sequT{}{\assign{\termabs{x}{x}}{\typeabs}}{0}{0}{1}
      }{\ruleTArrowI}
    }{
      \sequT{}{\assign{\termbang{\id}}{\multiset{\typeabs}}}{0}{1}{1}
    }{\ruleDBang}
  }{
    \sequT{}{\assign{\termapp{\termder{(\termbang{\Kterm})}}{(\termbang{\id})}}{\functtype{\emul}{\typeabs}}}{1}{2}{1}
  }{\ruleDArrowE}
  \Rule{}{
    \sequT{}{\assign{\termbang{\Omega}}{\emul}}{0}{1}{0}
  }{\ruleDBang}
}{
  \sequT{}{\assign{\termapp{\termapp{\termder{(\termbang{\Kterm})}}{(\termbang{\id})}}{(\termbang{\Omega})}}{\typeabs}}{2}{3}{1}
}{\ruleDArrowE}
\] }
Note that the only persistent rule used is $\ruleTArrowI$ when typing $\id$,
thus contributing to count the $\bangweak$-size of the $\bangweak$-normal form
of $t_0$. Indeed, $\id$ is the $\bangweak$-normal form of $t_0$.
\end{example}

\paragraph{{\bf Soundness}}
\label{s:correctness}
We now study soundness of the type system $\SysTight$, which does not only
guarantee that typable terms are normalising --a qualitative property-- but
also provides quantitative (exact) information for normalising sequences. More
precisely, given a tight type derivation $\Phi$ with counters
$(\cbeta,\cexp,\csize)$ for a term $t$, $t$ is $\bangweak$-normalisable in
$(\cbeta+\cexp)$-steps and its $\bangweak$-normal form has $\bangweak$-size
$\csize$. Therefore, information about a \emph{dynamic} behaviour of $t$, is
extracted from a static typing property of $t$. The soundness proof is mainly
based on a subject reduction property (Lemma~\ref{l:QSR}), as well as on some
auxiliary results.

We start by the following remark, which is proved by inspecting the typing
rules:

\begin{remark}
\label{r:typing}
If
$\derivable{\Phi}{\sequT{\Gamma}{\assign{t}{\sigma}}{\cbeta}{\cexp}{\csize}}{\SysTight}$
then:
\begin{itemize}
  \item $\pabs{t}$ implies $\sigma = \typeabs$ or $\sigma = \functtype{\M}{\tau}$.
  \item $\pbang{t}$ implies $\sigma = \typebang$ or $\sigma = \M$.
\end{itemize}
\end{remark}

As in system $\SysBang$, typable terms are weakly clash-free, as stated in the
following lemma. This lemma is needed for proving
Lemma~\ref{l:tight-spreading-neutral}, which is one of the two \emph{tight
spreading} lemmas established here. By \emph{tight spreading} we mean that in a
type derivation the tightness of the final context implies (under some
additional hypotheses) the tightness of the final type, and hence of the
derivation itself.

\begin{lemma}
\label{l:clashes-do-not-type}
If
$\derivable{\Phi}{\sequT{\Gamma}{\assign{t}{\sigma}}{\cbeta}{\cexp}{\csize}}{\SysTight}$,
then $t$ is $\cfz$.
\end{lemma}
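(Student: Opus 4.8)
The plan is to prove the contrapositive exactly as in Lem.~\ref{l:clashes-do-not-type-Bang}: I would show that whenever $t = \ctxtapp{\ctxt{W}}{c}$ for a weak context $\ctxt{W}$ and a clash $c$, the term $t$ admits no $\SysTight$-derivation. The argument proceeds by induction on $\ctxt{W}$, the whole content being concentrated in the base case $\ctxt{W} = \Box$, where $t$ is itself one of the four clash shapes. The one tool I would lean on is Rem.~\ref{r:typing}, which pins down the possible types of a term according to the $\pabs{\cdot}$/$\pbang{\cdot}$ predicates: a subterm of the form $\ctxtapp{\ctxt{L}}{\termbang{s}}$ may only receive type $\typebang$ or a multiset $\M$, while a subterm of the form $\ctxtapp{\ctxt{L}}{\termabs{x}{s}}$ may only receive type $\typeabs$ or a functional type $\functtype{\M}{\tau}$.

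For the base case I would inspect each clash and exhibit a type mismatch with every rule that could type the corresponding constructor. If $c = \termapp{\ctxtapp{\ctxt{L}}{\termbang{t}}}{u}$, then its left subterm has the form $\ctxtapp{\ctxt{L}}{\termbang{t}}$, so by Rem.~\ref{r:typing} it can only receive type $\typebang$ or a multiset $\M$; but $\ruleTArrowE$ requires type $\typeneutral$ for it and $\ruleDArrowE$ requires a functional type, both impossible. If $c = \termsubs{y}{\ctxtapp{\ctxt{L}}{\termabs{x}{u}}}{t}$ or $c = \termder{(\ctxtapp{\ctxt{L}}{\termabs{x}{u}})}$, the relevant subterm has the form $\ctxtapp{\ctxt{L}}{\termabs{x}{u}}$, hence type $\typeabs$ or $\functtype{\M}{\tau}$ by Rem.~\ref{r:typing}; this clashes with the $\typeneutral$ demanded by the persistent rules $\ruleTESubs$/$\ruleTDer$ and with the multiset demanded by the consuming rules $\ruleDESubs$/$\ruleDDer$. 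The genuinely informative clash is the fourth, $c = \termapp{t}{(\ctxtapp{\ctxt{L}}{\termabs{x}{u}})}$: its argument again has type $\typeabs$ or functional, yet $\ruleDArrowE$ asks the argument to be a multiset, while the persistent rule $\ruleTArrowE$ asks for $\overline{\typeabs}$, a tight type \emph{different from} $\typeabs$. It is exactly this side condition that rules out the surviving possibility $\typeabs$ and closes the case, explaining why this shape counts as a clash even though it is not itself a redex.

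For the inductive step, each production of the weak-context grammar places the sub-context $\ctxt{W'}$ so that $\ctxtapp{\ctxt{W'}}{c}$ sits in a position (left or right of an application, body of an abstraction or dereliction, either side of a closure) that necessarily appears as a premise of whichever consuming or persistent rule is used at the root; since weak contexts never descend under a $\termbang{\cdot}$, no rule can erase that premise (this is precisely the position where $\ruleTBang$/$\ruleDBang$ could leave a subterm untyped), so the induction hypothesis applies to the typed subderivation of $\ctxtapp{\ctxt{W'}}{c}$ and yields the contradiction. I expect the only delicate point to be the fourth clash in the base case, where one must remember to invoke the $\overline{\typeabs}$ restriction of $\ruleTArrowE$ rather than treat the argument position uniformly; every other case is a direct syntactic mismatch read off from Rem.~\ref{r:typing} together with the shape of the relevant typing rule.
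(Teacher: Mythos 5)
Your proposal is correct and follows essentially the same route as the paper: the paper's own proof is a terse ``straightforward induction'' that implicitly replays the contrapositive argument it spelled out for system $\SysBang$ (Lem.~\ref{l:clashes-do-not-type-Bang}), namely induction on the weak context $\ctxt{W}$ with a type-mismatch analysis of each clash in the base case, which is exactly what you do. Your treatment is in fact more detailed than the paper's, and you correctly isolate the one point specific to $\SysTight$ --- the fourth clash, where the $\overline{\typeabs}$ side condition of $\ruleTArrowE$ is needed to exclude the argument type $\typeabs$.
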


\begin{proof}
By straightforward induction in $t$.
\end{proof}

The following tight spreading lemmas will be used in
Lemma~\ref{l:tight-spreading-czero} and Lemma~\ref{l:czero-normal}, which in turn
ensure that in tight derivations the counters work as expected for normal
forms.

\begin{lemma}[Tight Spreading for Neutral Terms]
\label{l:tight-spreading-neutral}
Let
$\derivable{\Phi}{\sequT{\Gamma}{\assign{t}{\sigma}}{\cbeta}{\cexp}{\csize}}{\SysTight}$
such that $\pntrl{t}$. If $\Gamma$ is tight, then $\sigma \in \typetight$.
\end{lemma}

\begin{proof}
  We reason by induction on $t$.
    Notice that by \ih  for every
    subterm $u$ of $t$  verifying $\pntrl{u}$, 
    every derivation of $u$ having a tight typing context
   must also have a tight type subject. 

\begin{itemize}
  \item $t = x$. Then, $\Phi$ ends with rule $\ruleDAxiom$ and $\Gamma =
  \assign{x}{\intertype{\sig}{}}$ tight implies $\sig \in \typetight$.
  
  \item $t = \termapp{r}{u}$. By definition of $\pntrl{t}$, $\pnabs{r}$ and
  $\pnrml{u}$ hold. Moreover, $\neg\pbang{r}$ by
  Lemma~\ref{l:clashes-do-not-type}. Then, by Remark~\ref{r:normal}, it is
  necessarily the case that $\pnbang{r}$ holds, and hence $\pntrl{r}$ as
  well. There are three cases for $\Phi$:
  \begin{enumerate}
    \item if $\Phi$ ends with rule $\ruleTArrowE$, then $\sigma =
    \typeneutral$ and the statement trivially holds.

    \item if $\Phi$ ends with rule $\ruleDArrowE$, then $\Gamma =
      \ctxtsum{\Gamma'}{\Delta}{}$, $\cbeta = \cbeta' + \cbeta'' + 1$,
      $\cexp = \cexp' + \cexp''$, $\csize = \csize' + \csize''$ and,
      in particular,
      $\derivable{\Phi_r}{\sequT{\Gamma'}{\assign{r}{\functtype{\M}{\tau}}}{\cbeta'}{\cexp'}{\csize'}}{\SysTight}$
      with $\Gamma'$ tight (since $\Gamma$ is tight).  Then,
        $\pntrl{r}$ gives $\functtype{\M}{\tau} \in \typetight$ by \ih
        This is clearly a contradiction. Hence, this case does not
        apply.
    \item if $\Phi$ ends with rule $\ruleBTApp$, then we reason exactly as in
    the previous case, so that this case does not apply neither.
  \end{enumerate}
  
  \item $t = \termder{u}$. By definition of $\pntrl{t}$, $\pnbang{u}$ holds.
  Moreover, $\neg\pabs{u}$ by Lemma~\ref{l:clashes-do-not-type}. Then, by
  Remark~\ref{r:normal}, it is necessarily the case that $\pnabs{u}$ holds,
  and hence $\pntrl{u}$ as well. Then, there are two cases for
  $\Phi$:
  \begin{enumerate}
    \item if $\Phi$ ends with rule $\ruleTDer$, then $\sigma = \typeneutral$
    and the statement  trivially holds. 
    
    \item if $\Phi$ ends with rule $\ruleDDer$, then
    $\derivable{\Phi_u}{\sequT{\Gamma}{\assign{u}{\intertype{\sigma}{}}}{\cbeta}{\cexp}{\csize}}{\SysTight}$.
    Therefore, $\pntrl{u}$ gives $\intertype{\sigma}{} \in \typetight$ by \ih
    This is clearly a contradiction. Hence, this case does not apply.
  \end{enumerate}
  
  \item $t = \termsubs{x}{u}{r}$. By definition of $\pntrl{t}$, $\pntrl{r}$ and
  $\pnbang{u}$ hold. Moreover, $\neg\pabs{u}$ by
  Lemma~\ref{l:clashes-do-not-type}. Then, by Remark~\ref{r:normal}, it is
  necessarily the case that $\pnabs{u}$ holds, and hence $\pntrl{u}$ as well.
  Then, there are two cases for $\Phi$:
  \begin{enumerate}
    \item if $\Phi$ ends with rule $\ruleDESubs$, $\Gamma =
    \ctxtsum{\ctxtres{\Gamma'}{x}{}}{\Delta}{}$, $\cbeta = \cbeta' +
    \cbeta''$, $\cexp = \cexp' + \cexp''$, $\csize = \csize' + \csize''$ and,
    in particular,
    $\derivable{\Phi_u}{\sequT{\Delta}{\assign{u}{\Gamma'(x)}}{\cbeta''}{\cexp''}{\csize''}}{\SysTight}$ with $\Delta$ tight (since $\Gamma$ is tight).
    Then, $\pntrl{u}$ gives $\Gamma'(x) \in \typetight$ by \ih
    This leads to a contradiction, since $\Gamma'(x)$ is a multiset type by
    definition. Hence, this case does not apply.
  \item if $\Phi$ ends with rule $\ruleTESubs$, then $\Gamma =
    \ctxtsum{\ctxtres{\Gamma'}{x}{}}{\Delta}{}$, $\ptight{\Gamma'(x)}$, $\cbeta
    = \cbeta' + \cbeta''$, $\cexp = \cexp' + \cexp''$, $\csize = \csize' +
    \csize''$ and, in particular,
    $\derivable{\Phi_r}{\sequT{\Gamma'}{\assign{r}{\sigma}}{\cbeta'}{\cexp'}{\csize'}}{\SysTight}$.
    Moreover, $\Gamma$ tight and $\ptight{\Gamma'(x)}$ give $\Gamma'$ tight as
    well. We conclude by \ih with $\pntrl{r}$ that $\sigma \in \typetight$.
    \end{enumerate}
\end{itemize}
\end{proof}

\begin{lemma}[Tight Spreading for Zero Counters]
\label{l:tight-spreading-czero}
Let
$\derivable{\Phi}{\sequT{\Gamma}{\assign{t}{\sigma}}{\cbeta}{\cexp}{\csize}}{\SysTight}$
such that $\cbeta = \cexp = 0$ and $\sigma$ is not an arrow type. If $\Gamma$
is tight, then $\sigma \in \typetight$.
\end{lemma}

\begin{proof}
By induction on $\Phi$. Note that the statement trivially holds for the
rules $\ruleTArrowE$, $\ruleTArrowI$, $\ruleTBang$, $\ruleTDer$ in
Figure~\ref{fig:typingSchemesTight} since all of them conclude with $\sigma \in
\typetight$. We proceed by analysing the other rules in
Figure~\ref{fig:typingSchemesTight}.
\begin{itemize}
  \item $\ruleTESubs$. Then $\Gamma =
  \ctxtsum{\ctxtres{\Gamma'}{x}{}}{\Delta}{}$, $\ptight{\Gamma'(x)}$, $\csize =
  \csize' + \csize''$ and, in particular,
  $\derivable{\Phi_r}{\sequT{\Gamma'}{\assign{r}{\sigma}}{0}{0}{\csize'}}{\SysTight}$.
  Moreover, $\Gamma$ tight and $\ptight{\Gamma'(x)}$ give $\Gamma'$ tight as
  well. We then conclude directly by \ih with $\Phi_r$ that $\sigma \in
  \typetight$.

  \item $\ruleDAxiom$. Then $t = x$ and $\Gamma =
  \assign{x}{\intertype{\sig}{}}$ tight which implies $\sig \in \typetight$.

  \item $\ruleBTApp$. This case does not apply since it concludes with
  $\cbeta > 0$.
  
  \item $\ruleDArrowE$. This case does not apply since it concludes with
  $\cbeta > 0$.
  
  \item $\ruleDArrowI$. This case does not apply since it concludes with an
  arrow type.
  
  \item $\ruleDBang$. This case does not apply since it concludes with
  $\cexp > 0$.
  
  \item $\ruleDDer$. Then $t = \termder{u}$ and
  $\sequT{\Gamma}{\assign{u}{\intertype{\sigma}{}}}{0}{0}{\csize}$ is
  derivable. Then, the \ih gives $\intertype{\sigma}{} \in \typetight$
  which is clearly a contradiction. Thus, this case does not apply either.
  
  \item $\ruleDESubs$. Then $t = \termsubs{x}{u}{r}$, $\csize = \csize' +
  \csize''$ and $\Gamma = \ctxtsum{(\ctxtres{\Gamma'}{x}{})}{\Delta}{}$ tight
  such that, in particular
  $\sequT{\Delta}{\assign{u}{\Gamma(x)}}{0}{0}{\csize''}$ with $\Delta$
  tight. The \ih gives $\Gamma(x) \in \typetight$ which leads to a
  contradiction since it is a multiset type. Hence, this case does not apply.
\end{itemize}
\end{proof}

The following two lemmas are needed to establish the base case of the induction
proving soundness. 

\begin{lemma}
\label{l:czero-normal}
If
$\derivable{\Phi}{\sequT{\Gamma}{\assign{t}{\sigma}}{\cbeta}{\cexp}{\csize}}{\SysTight}$ is
tight, then $\cbeta = \cexp = 0$ iff $\pnrml{t}$.
\end{lemma}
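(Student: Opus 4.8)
The plan is to prove the two implications separately, each by induction on the tight derivation $\Phi$, reading off the shape of normal forms from the grammar of Prop.~\ref{prop:normal} and invoking Tight Spreading (Lem.~\ref{l:tight-spreading}), clash-freeness of typable terms (Lem.~\ref{l:clashes-do-not-type}) and Rem.~\ref{r:normal}. Throughout I use that a union of contexts is tight iff each summand is tight (multiset union is tight exactly when both components are), so that the subderivations produced by every rule inherit tight contexts.

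For the direction $\cbeta = \cexp = 0 \Rightarrow \pnrml{t}$, I strengthen the statement to a simultaneous induction on $\Phi$, proving for tight $\Phi$ with $\cbeta = \cexp = 0$ both (P1) if $\sigma = \typeneutral$ then $\pntrl{t}$, and (P2) $\pnrml{t}$. First, rules $\ruleDArrowI$ and $\ruleDBang$ are impossible, since their conclusions force $\cbeta > 0$ resp. $\cexp > 0$. Next, the consuming eliminations $\ruleDArrowE$, $\ruleDDer$, $\ruleDESubs$ are also impossible: in each one a premise types a subterm with a non-tight type (a function type, resp.\ a multiset) under a tight context and with both counters equal to $0$, so Tight Spreading (case $\cbeta = \cexp = 0$) would force that type to be tight, a contradiction. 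The remaining cases are $\ruleAxiom$ and the persistent rules, none of which touch $\cbeta,\cexp$; here the \ih on the premises, combined with the productions of $\ipntrl/\ipnabs/\ipnbang/\ipnrml$, yields the claims. The essential use of (P1) is in $\ruleTArrowE$, $\ruleTDer$ and $\ruleTESubs$: from a premise $\assign{t_1}{\typeneutral}$ one gets $\pntrl{t_1}$, hence $t_1 \in \ipnabs$ (resp.\ $\ipnbang$), which is precisely the component required by the neutral productions; a bare $\pnrml{t_1}$ would not suffice since $\ipnrml \not\subseteq \ipnabs$.

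For the direction $\pnrml{t} \Rightarrow \cbeta = \cexp = 0$, I again induct on $\Phi$ and analyse its last rule. Rules $\ruleDArrowI$ and $\ruleDBang$ cannot occur because their conclusion type ($\functtype{\Gamma(x)}{\tau}$, resp.\ a multiset) is not tight, contradicting tightness of $\Phi$. For the consuming eliminations $\ruleDArrowE$, $\ruleDDer$, $\ruleDESubs$ I derive a contradiction exactly as inside the proof of Tight Spreading: the normal-form grammar places the relevant subterm in $\ipnabs$ (the left component $t_1$ of an application) or in $\ipnbang$ (the body of a $\termder{\cdot}$, or the content of an explicit substitution); clash-freeness of the typable term $t$ (Lem.~\ref{l:clashes-do-not-type}) forbids that this subterm have shape $\ctxtapp{\ctxt{L}}{\termbang{s}}$ (resp.\ $\ctxtapp{\ctxt{L}}{\termabs{x}{s}}$), so Rem.~\ref{r:normal} upgrades its membership to $\ipntrl$, and Tight Spreading (case $\pntrl{\cdot}$) then forces its non-tight premise type to be tight, a contradiction. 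Hence only $\ruleAxiom$ and the persistent rules survive: the axiom gives $\cbeta = \cexp = 0$ directly, while each persistent rule sets $\cbeta,\cexp$ to the sum of the premises' counters. Since $t$ is normal, reading off the grammar productions shows its immediate subterms are again normal, so the \ih gives $\cbeta = \cexp = 0$ on every premise, whence $\cbeta = \cexp = 0$ for $t$.

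The main obstacle is the uniform exclusion of the consuming elimination rules, which is what drives both directions: it is always the same contradiction engine (a non-tight type carried by a subterm that sits under a tight context), but it is triggered by different hypotheses — $\cbeta = \cexp = 0$ in the forward direction, and $\pntrl{\cdot}$ (obtained through clash-freeness together with Rem.~\ref{r:normal}) in the backward one. The secondary subtlety is the strengthening of $\pnrml{t}$ to the neutral statement (P1), without which the $\ruleTArrowE$ and $\ruleTDer$ cases could not certify that the head subterm lies in $\ipnabs$/$\ipnbang$ rather than merely in $\ipnrml$.
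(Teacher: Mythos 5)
Your proposal is correct and takes essentially the same route as the paper's proof: both directions proceed by induction with case analysis on the last rule, where the consuming introductions $\ruleDArrowI$/$\ruleDBang$ are excluded by their counters (forward direction) or by their non-tight conclusion types (backward direction), and the consuming eliminations are excluded by applying Tight Spreading (Lem.~\ref{l:tight-spreading}) to a premise carrying a non-tight type under a tight context. The only deviations are interchangeable bookkeeping: where you strengthen the forward induction hypothesis with clause (P1) and invoke clash-freeness (Lem.~\ref{l:clashes-do-not-type}) in the backward direction, the paper settles the very same case distinctions with Rem.~\ref{r:normal} combined with Rem.~\ref{r:typing} (respectively its contrapositive), so that, contrary to your closing remark, a bare $\pnrml{\cdot}$ on the head subterm does suffice there once those remarks are brought in.
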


\begin{proof}
$\left.\Rightarrow\right)$ By induction on $\Phi$.
\begin{itemize}
  \item $\ruleDAxiom$. Then $t = x$, $\Gamma = \intertype{\sigma}{}$ with
  $\sigma \in \typetight$. By definition $\pntrl{t}$ holds, which implies
  $\pnrml{t}$ as well.
  
  \item $\ruleTArrowE$. Then $t = \termapp{r}{u}$, $\sigma = \typeneutral$,
  $\Gamma = \ctxtsum{\Gamma'}{\Delta}{}$, $\csize = \csize' + \csize'' + 1$,
  $\sequT{\Gamma'}{\assign{r}{\typeneutral}}{0}{0}{\csize'}$,
  $\sequT{\Delta}{\assign{u}{\overline{\typeabs}}}{0}{0}{\csize''}$. Then,
  $\Gamma'$ and $\Delta$ are both tight, hence the \ih gives $\pnrml{r}$ and 
  $\pnrml{u}$. There are two cases to consider based on $\pnrml{r}$:
  \begin{enumerate}
    \item if $\pnabs{r}$ the result is immediate.

    \item if $\pnbang{r}$ then we can assume $r \notin \ipnabs$ too (since
    $\pnabs{r}$ is already considered). Then, by Remark~\ref{r:normal},
    $\pabs{r}$ holds. This leads to a contradiction with $r$ having type
    $\typeneutral$ (\cf Remark~\ref{r:typing}). Hence, this case does not apply.
  \end{enumerate}
  
  \item $\ruleBTApp$. This case does not apply since it concludes
  with $\cbeta > 0$.
  
  \item $\ruleDArrowE$. This case does not apply since it concludes
  with $\cbeta > 0$.
  
  \item $\ruleTArrowI$. Then $t = \termabs{x}{u}$, $\sigma = \typeabs$, $\Gamma
  = \ctxtres{\Gamma'}{x}{}$, $\csize = \csize' + 1$,
  $\sequT{\Gamma'}{\assign{u}{\typetight}}{0}{0}{\csize'}$ and
  $\ptight{\Gamma'(x)}$. Since $\Gamma$ is tight and $\ptight{\Gamma'(x)}$,
  $\Gamma'$ is tight as well. Then, by \ih $\pnrml{u}$ holds, which implies
  $\pnrml{t}$ too.
  
  \item $\ruleDArrowI$. Then $\sigma = \functtype{\M}{\tau}$ which
  contradicts the hypothesis of $\Phi$ being tight. Hence, this case does not
  apply.
  
  \item $\ruleTBang$. Then $t = \termbang{u}$ which implies $\pnabs{t}$ and
  hence $\pnrml{t}$.
  
  \item $\ruleDBang$. This case does not apply since it concludes with
  $\cexp > 0$.
  
  \item $\ruleTDer$. Then $t = \termder{u}$, $\sigma = \typeneutral$,
  $\csize = \csize' + 1$ and
  $\sequT{\Gamma}{\assign{u}{\typeneutral}}{0}{0}{\csize'}$. By \ih $\pnrml{u}$
  holds. There are two  cases to consider:
  \begin{enumerate}
    \item if $\pnbang{u}$ the result is immediate.
    
    \item if $\pnabs{u}$ and $u \notin \ipnbang$, then $\pbang{u}$ holds by
    Remark~\ref{r:normal}. This leads to a contradiction with $u$ having type
    $\typeneutral$ (\cf Remark~\ref{r:typing}). Hence, this case does not apply.
  \end{enumerate}
  
  \item $\ruleDDer$. Then $t = \termder{u}$ and
  $\derivable{\Phi_u}{\sequT{\Gamma}{\assign{u}{\intertype{\sigma}{}}}{0}{0}{\csize}}{\SysTight}$.
  Then, Lemma~\ref{l:tight-spreading-czero} on $\Phi_u$ give
  $\intertype{\sigma}{} \in \typetight$ which is clearly a contradiction. Thus,
  this case does not apply.
  
  \item $\ruleTESubs$. Then $t = \termsubs{x}{u}{r}$, $\Gamma =
    \ctxtsum{(\ctxtres{\Gamma'}{x}{})}{\Delta}{}$,
    $\csize = \csize' + \csize''$, $\sequT{\Gamma'}{\assign{r}{\sigma}}{0}{0}{\csize'}$,
  $\sequT{\Delta}{\assign{u}{\typeneutral}}{0}{0}{\csize''}$ and
  $\ptight{\Gamma'(x)}$. Since $\Gamma$ is tight and $\ptight{\Gamma'(x)}$,
  then $\Gamma'$ and $\Delta$ are both tight as well. Thus, \ih gives
  $\pnrml{r}$ and $\pnrml{u}$. Moreover, by definition $\pnrml{r}$ means
  $\pnabs{r}$ or $\pnbang{r}$. Same for $\pnrml{u}$, hence there are two
  different cases to analyse:
  \begin{enumerate}
    \item if $\pnbang{u}$ the result is immediate.
    \item if $\pnabs{u}$ and $u \notin \ipnbang$, then $\pbang{u}$ holds by
    Remark~\ref{r:normal}. This leads to a contradiction with $u$ having type
    $\typeneutral$ (\cf Remark~\ref{r:typing}). Hence, this case does not apply.
  \end{enumerate}
  
  \item $\ruleDESubs$. Then $t = \termsubs{x}{u}{r}$, $\csize = \csize' +
  \csize''$ and $\Gamma = \ctxtsum{(\ctxtres{\Gamma'}{x}{})}{\Delta}{}$ tight
  such that, in particular,
  $\derivable{\Phi_u}{\sequT{\Delta}{\assign{u}{\Gamma'(x)}}{0}{0}{\csize''}}{\SysTight}$
  with $\Delta$ tight. By Lemma~\ref{l:tight-spreading-czero} on $\Phi_u$,
  $\Gamma'(x) \in \typetight$ which leads to a contradiction, since
  $\Gamma'(x)$ is a multiset type by definition. Hence, this case does not apply.
\end{itemize}

$\left.\Leftarrow\right)$ By induction on $t$.
\begin{itemize}
  \item $t = x$. Then, $\Phi$ ends with rule $\ruleDAxiom$ and the statement holds
  trivially.
  
  \item $t = \termapp{r}{u}$. By definition $\pnrml{t}$ gives $\pnabs{r}$ and
  $\pnrml{u}$. Thus, $\pnrml{r}$ holds too. There are three  cases to consider:
  \begin{enumerate}
    \item if $\Phi$ ends with rule $\ruleTArrowE$, then $\sigma =
    \typeneutral$, $\Gamma = \ctxtsum{\Gamma'}{\Delta}{}$, $\cbeta = \cbeta' +
    \cbeta''$, $\cexp = \cexp' + \cexp''$, $\csize = \csize' + \csize'' + 1$,
    $\sequT{\Gamma'}{\assign{r}{\typeneutral}}{\cbeta'}{\cexp'}{\csize'}$,
    $\sequT{\Delta}{\assign{u}{\overline{\typeabs}}}{\cbeta''}{\cexp''}{\csize''}$.
    Moreover, $\Gamma'$ and $\Delta$ are both tight. Then, the \ih with
    $\pnrml{r}$ and $\pnrml{u}$ gives $\cbeta' = \cexp' = 0$ and $\cbeta'' =
    \cexp'' = 0$. Hence, $\cbeta = \cexp = 0$.
    
    \item if $\Phi$ ends with rule $\ruleDArrowE$, then $\Gamma =
    \ctxtsum{\Gamma'}{\Delta}{}$, $\cbeta = \cbeta' + \cbeta'' + 1$,
    $\cexp = \cexp' + \cexp''$, $\csize = \csize' + \csize''$ and, in
    particular,
    $\derivable{\Phi_r}{\sequT{\Gamma'}{\assign{r}{\functtype{\M}{\tau}}}{\cbeta'}{\cexp'}{\csize'}}{\SysTight}$.
    Moreover, by contra-positive of Remark~\ref{r:typing} with $\Phi_r$, it is
    necessarily the case that $\neg \pbang{r}$. Thus, together with $\pnabs{r}$ it
    gives $\pntrl{r}$ (\cf Remark~\ref{r:normal}). Also,
    $\Gamma$ tight implies $\Gamma'$ tight as well. Then, by
    Lemma~\ref{l:tight-spreading-neutral}, $\Phi_r$ is a tight typing, which
    leads to a contradiction with $r$ having a functional type. Hence, this
    case does not apply.

    \item if $\Phi$ ends with rule $\ruleBTApp$, then $\Gamma =
    \ctxtsum{\Gamma'}{\Delta}{}$, $\cbeta = \cbeta' + \cbeta'' +1$, $\cexp =
    \cexp' + \cexp''$,
    $\csize = \csize' + \csize''$ and, in particular,
    $\derivable{\Phi_r}{\sequT{\Gamma'}{\assign{r}{\functtype{\M}{\tau}}}{\cbeta'}{\cexp'}{\csize'}}{\SysTight}$.
    This case is identical to the previous one.
    \end{enumerate}
  
  \item $t = \termabs{x}{u}$. By definition $\pnrml{t}$ gives $\pnrml{u}$.
  There are two cases to consider for $\Phi$:
  \begin{enumerate}
    \item if $\Phi$ ends with rule $\ruleTArrowI$, then $\sigma = \typeabs$,
    $\Gamma = \ctxtres{\Gamma'}{x}{}$, $\csize = \csize' + 1$,
    $\derivable{\Phi_u}{\sequT{\Gamma'}{\assign{u}{\typetight}}{\cbeta}{\cexp}{\csize'}}{\SysTight}$
    and $\ptight{\Gamma'(x)}$. Since $\Gamma$ is tight and
    $\ptight{\Gamma'(x)}$, $\Gamma'$ is tight as well. Then, the \ih gives
    $\cbeta = \cexp = 0$.
    
    \item if $\Phi$ ends with rule $\ruleDArrowI$, then $\sigma =
    \functtype{\M}{\tau}$ which contradicts the hypothesis of $\Phi$ being
    tight. Hence, this case does not apply.
  \end{enumerate}
  
  \item $t = \termbang{u}$. There are two cases to consider for $\Phi$:
  \begin{enumerate}
    \item if $\Phi$ ends with rule $\ruleTBang$, then $\cbeta = \cexp = 0$ and
    the statement holds immediately.
    
    \item if $\Phi$ ends with rule $\ruleDBang$, then $\sigma$ is a multiset
    type which contradicts the hypothesis of $\Phi$ being tight. Hence, this
    case does not apply.
  \end{enumerate}
  
  \item $t = \termder{u}$. By definition $\pnrml{t}$ gives $\pnbang{u}$ which
  implies $\pnrml{u}$ as well. There are two cases to consider:
  \begin{enumerate}
    \item if $\Phi$ ends with rule $\ruleTDer$, then $\sigma = \typeneutral$,
    $\csize = \csize' + 1$ and
    $\sequT{\Gamma}{\assign{u}{\typeneutral}}{\cbeta}{\cexp}{\csize'}$. Then,
    the statement follows immediately from the \ih
    
    \item if $\Phi$ ends with rule $\ruleDDer$, then
    $\derivable{\Phi_u}{\sequT{\Gamma}{\assign{u}{\intertype{\sigma}{}}}{\cbeta}{\cexp}{\csize}}{\SysTight}$.
    Moreover, by contra-positive of Remark~\ref{r:typing} with $\Phi_u$, it is
    necessarily the case $\neg\pabs{u}$. Thus, together with $\pnbang{u}$ it
    gives $\pntrl{u}$ (\cf Remark~\ref{r:normal}). Then, by
    Lemma~\ref{l:tight-spreading-neutral}, $\Phi_u$ is a tight typing, which
    leads to a contradiction with $u$ having a multiset type. Hence, this case
    does not apply.
  \end{enumerate}
  
  \item $t = \termsubs{x}{u}{r}$. By definition $\pnrml{t}$ implies $\pnrml{r}$
  and $\pnbang{u}$, which in turn implies $\pnrml{u}$ as well. Then, there are
  two cases to consider for $\Phi$:
  \begin{enumerate}
    \item if $\Phi$ ends with rule $\ruleTESubs$, then $\Gamma =
    \ctxtsum{(\ctxtres{\Gamma'}{x}{})}{\Delta}{}$, $\cbeta = \cbeta' +
    \cbeta''$, $\cexp = \cexp' + \cexp''$, $\csize = \csize' + \csize''$,
    $\sequT{\Gamma'}{\assign{r}{\sigma}}{\cbeta'}{\cexp'}{\csize'}$,
    $\sequT{\Delta}{\assign{u}{\typeneutral}}{\cbeta''}{\cexp''}{\csize''}$ and
    $\ptight{\Gamma'(x)}$. Since $\Gamma$ is tight and $\ptight{\Gamma'(x)}$,
    then $\Gamma'$ and $\Delta$ are both tight as well. Then, the \ih with
    $\pnrml{r}$ and $\pnrml{u}$ gives $\cbeta' = \cexp' = 0$ and $\cbeta'' =
    \cexp'' = 0$. Hence, $\cbeta = \cexp = 0$.
    
    \item if $\Phi$ ends with rule $\ruleDESubs$, then $\cbeta = \cbeta' +
    \cbeta''$, $\cexp = \cexp' + \cexp''$, $\csize = \csize' + \csize''$ and
    $\Gamma = \ctxtsum{(\ctxtres{\Gamma'}{x}{})}{\Delta}{}$ tight such that, in
    particular,
    $\derivable{\Phi_u}{\sequT{\Delta}{\assign{u}{\Gamma'(x)}}{\cbeta''}{\cexp''}{\csize''}}{\SysTight}$
    with $\Delta$ tight. Moreover, by contra-positive of Remark~\ref{r:typing}
    with $\Phi_u$, it is necessarily the case $\neg\pabs{u}$. Thus, together
    with $\pnbang{u}$ it gives $\pntrl{u}$ (\cf Remark~\ref{r:normal}). Then, by
    Lemma~\ref{l:tight-spreading-neutral}, $\Phi_u$ is a tight typing, which
    leads to a contradiction with $u$ having a multiset type. Hence, this case
    does not apply.
  \end{enumerate}
\end{itemize}
\end{proof}

\begin{lemma}
\label{l:tight-size}
If
$\derivable{\Phi}{\sequT{\Gamma}{\assign{t}{\sigma}}{0}{0}{\csize}}{\SysTight}$
is tight, then $\csize = \wsize{t}$.
\end{lemma}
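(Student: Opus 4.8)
The plan is to prove the statement by induction on the type derivation $\Phi$, casing on its last rule. The crucial preliminary observation is that $\cbeta$ and $\cexp$ are natural numbers combined \emph{additively} by every rule, so the hypothesis $\cbeta = \cexp = 0$ forces every sub-derivation of $\Phi$ to carry first two counters equal to $0$ as well. Combined with tightness, this drastically restricts which rules can occur, and in fact the relevant case analysis is precisely the one already carried out in the $\Rightarrow$ direction of Lem.~\ref{l:czero-normal}.

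First I would rule out every consuming rule except the axiom. Rules $\ruleDArrowI$ and $\ruleDBang$ are excluded immediately, since they conclude with $\cbeta > 0$ and $\cexp > 0$ respectively. Rules $\ruleDArrowE$, $\ruleDDer$ and $\ruleDESubs$ each have a premise whose subject is typed with a multiset type (namely $\M$, $\intertype{\sigma}{}$ and $\Gamma(x)$ respectively) under a tight context and with null first two counters; applying Lem.~\ref{l:tight-spreading} to that premise would force this multiset type to be tight, a contradiction. Hence the only consuming rule that can appear is $\ruleAxiom$, where $t = x$ and the conclusion is $\csize = 0 = \wsize{x}$.

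It then remains to treat the persistent rules, each of which propagates tightness to its premises and increments the third counter exactly as $\wsize{\cdot}$ prescribes. For $\ruleTBang$ we have $t = \termbang{u}$ with $\csize = 0 = \wsize{\termbang{u}}$. For $\ruleTArrowI$, with $t = \termabs{x}{u}$ and premise $\sequT{\Gamma}{\assign{u}{\typetight}}{0}{0}{\csize'}$: since the conclusion context $\ctxtres{\Gamma}{x}{}$ is tight and $\ptight{\Gamma(x)}$ holds by the side condition, the context $\Gamma$ is tight, so the \ih gives $\csize' = \wsize{u}$ and therefore $\csize = \csize' + 1 = \wsize{\termabs{x}{u}}$. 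The cases $\ruleTDer$, $\ruleTArrowE$ and $\ruleTESubs$ are analogous: their subject premises carry tight types ($\typeneutral$, resp. $\overline{\typeabs}$) under tight contexts — here one uses that a multiset union is tight iff both summands are, so tightness of the conclusion context (together with $\ptight{\Gamma(x)}$ for $\ruleTESubs$) yields tightness of each premise context — and the defining equation $\csize = (\text{sum of premise sizes}) + 1$ matches the clause of $\wsize{\cdot}$ for dereliction, application and closure respectively.

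The only delicate point, and the main obstacle, is the exclusion of the consuming rules: this is exactly where tightness is essential, since without it a rule such as $\ruleDArrowE$ could contribute to $\csize$ while adding nothing to $\wsize{t}$, breaking the equality. Once the shape of $\Phi$ is pinned down to axiom-plus-persistent rules, the remainder is a routine matching of the third-counter bookkeeping against the defining clauses of $\wsize{\cdot}$.
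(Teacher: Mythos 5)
Your proof is correct and follows essentially the same route as the paper's: induction on $\Phi$, using the additivity of the first two counters together with Lem.~\ref{l:tight-spreading} to exclude the consuming rules $\ruleDArrowE$, $\ruleDDer$, $\ruleDESubs$ (and the counter side conditions to exclude $\ruleDArrowI$, $\ruleDBang$), then matching the third-counter bookkeeping of the persistent rules and $\ruleAxiom$ against the clauses of $\wsize{\cdot}$. No gaps to report.
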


\begin{proof}
By induction on $\Phi$.
\begin{itemize}
  \item $\ruleDAxiom$. Then $t = x$ and $\csize = 0 = \wsize{t}$.
  
  \item $\ruleTArrowE$. Then $t = \termapp{r}{u}$, $\sigma = \typeneutral$,
  $\Gamma = \ctxtsum{\Gamma'}{\Delta}{}$, $\csize = \csize' + \csize'' + 1$,
  $\derivable{\Phi_r}{\sequT{\Gamma'}{\assign{r}{\typeneutral}}{0}{0}{\csize'}}{\SysTight}$,
  $\derivable{\Phi_u}{\sequT{\Delta}{\assign{u}{\overline{\typeabs}}}{0}{0}{\csize''}}{\SysTight}$.
  Then, $\Gamma'$ and $\Delta$ are both tight, hence the \ih gives $\csize' =
  \wsize{r}$ and $\csize'' = \wsize{u}$. Hence, $\csize = \wsize{r} +
  \wsize{u} + 1 = \wsize{t}$.
  
  \item $\ruleBTApp$. This case does not apply since it concludes
  with $\cbeta > 0$.
  
  \item $\ruleDArrowE$. This case does not apply since it concludes
  with $\cbeta > 0$.
  
  \item $\ruleTArrowI$. Then $t = \termabs{x}{u}$, $\sigma = \typeabs$, $\Gamma
  = \ctxtres{\Gamma'}{x}{}$, $\csize = \csize' + 1$,
  $\derivable{\Phi_u}{\sequT{\Gamma'}{\assign{u}{\typetight}}{0}{0}{\csize'}}{\SysTight}$
  and $\ptight{\Gamma'(x)}$. Since $\Gamma$ is tight and $\ptight{\Gamma'(x)}$,
  $\Gamma'$ is tight as well. Then, by \ih $\csize' = \wsize{u}$, which implies
  $\csize = \wsize{u} + 1 = \wsize{t}$.
  
  \item $\ruleDArrowI$. Then $\sigma = \functtype{\M}{\tau}$ which
  contradicts the hypothesis of $\Phi$ being tight. Hence, this case does not
  apply.
  
  \item $\ruleTBang$. Then $t = \termbang{u}$ and $\csize = 0 = \wsize{t}$.
  
  \item $\ruleDBang$. This case does not apply since it concludes with
  $\cexp > 0$.
  
  \item $\ruleTDer$. Then $t = \termder{u}$, $\sigma = \typeneutral$,
  $\csize = \csize' + 1$ and
  $\derivable{\Phi_u}{\sequT{\Gamma}{\assign{u}{\typeneutral}}{0}{0}{\csize'}}{\SysTight}$.
  By \ih $\csize' = \wsize{u}$, which implies $\csize = \wsize{u} + 1 =
  \wsize{t}$.
  
  \item $\ruleDDer$. Then $t = \termder{u}$ and
  $\derivable{\Phi_u}{\sequT{\Gamma}{\assign{u}{\intertype{\sigma}{}}}{0}{0}{\csize}}{\SysTight}$.
  Then, Lemma~\ref{l:tight-spreading-czero} on $\Phi_u$ gives
  $\intertype{\sigma}{} \in \typetight$ which is clearly a contradiction. Thus,
  this case does not apply.
  
  \item $\ruleTESubs$. Then $t = \termsubs{x}{u}{r}$, $\Gamma =
  \ctxtsum{(\ctxtres{\Gamma'}{x}{})}{\Delta}{}$, $\csize = \csize' + \csize''$,
  $\derivable{\Phi_r}{\sequT{\Gamma'}{\assign{r}{\sigma}}{0}{0}{\csize'}}{\SysTight}$,
  $\derivable{\Phi_u}{\sequT{\Delta}{\assign{u}{\typeneutral}}{0}{0}{\csize''}}{\SysTight}$
  and $\ptight{\Gamma'(x)}$. Since $\Gamma$ is tight and $\ptight{\Gamma'(x)}$,
  then $\Gamma'$ and $\Delta$ are both tight as well. Thus, \ih gives
  $\csize' = \wsize{r}$ and $\csize'' = \wsize{u}$. Then, $\csize = \wsize{r} +
  \wsize{u}  = \wsize{t}$. 
  
  \item $\ruleDESubs$. Then $t = \termsubs{x}{u}{r}$, $\csize = \csize' +
  \csize''$ and $\Gamma = \ctxtsum{(\ctxtres{\Gamma'}{x}{})}{\Delta}{}$ tight
  such that, in particular,
  $\derivable{\Phi_u}{\sequT{\Delta}{\assign{u}{\Gamma'(x)}}{0}{0}{\csize''}}{\SysTight}$
  with $\Delta$ tight. By Lemma~\ref{l:tight-spreading-czero} on $\Phi_u$,
  $\Gamma'(x) \in \typetight$ which leads to a contradiction, since
  $\Gamma'(x)$ is a multiset type by definition. Hence, this case does not
  apply.
\end{itemize}
\end{proof}

As well as $\SysBang$-typability, $\SysTight$-typability  of a term may provide
additional information about the neutrality/normality of its subterms:

\begin{lemma}\mbox{}
  \label{l:clashfree-tight}
  Let $u\in \TermExplicit$:
\begin{enumerate}
  \item If $\pnabs{t}$ and $\termapp{t}{u}$ is $\SysTight$-typable, then
  $\pntrl{t}$.
  \item If $\pnbang{t}$ and $\termsubs{x}{t}{u}$ is $\SysTight$-typable, then
  $\pntrl{t}$.
  \item If $\pnbang{t}$ and $\termder{t}$ is $\SysTight$-typable, then
  $\pntrl{t}$.
  \item If $\pnbang{t}$ and $\termapp{u}{t}$ is $\SysTight$-typable, then
  $\pntrl{t}$.
  \item If $\pnrml{t}$ and $\termapp{u}{t}$ is $\SysTight$-typable, then
  $\pnabs{t}$.
\end{enumerate}
\end{lemma}

\begin{proof}
Straightforward case analysis using the characterisation in the proof of
Proposition~\ref{prop:normal} and resorting to Remark~\ref{r:typing}.
Notice that a similar property was shown for $\SysBang$-typability (Lemma~\ref{l:clashfree-bang}).
\end{proof}

As well as the type system $\SysBang$, the type system $\SysTight$ captures
clash-freeness of normal terms:

\begin{theorem}
\label{t:clashfree-tight}
Let $t \in \TermExplicit$. Then, $\cfnrml{t}$ iff
$\pnrml{t}$ and $t$ is $\SysTight$-typable.
\end{theorem}

\begin{proof}
A  similar property was shown for $\SysBang$-typability (Theorem~\ref{t:clashfree-bang}). This proof is analogous to that one, but now using
  Lemma~\ref{l:clashfree-tight}. Notice that the consuming rules of
  system $\SysTight$ are essentially the typing rules of system
  $\SysBang$.
\end{proof}

As usual, in order to prove soundness, the key property is subject reduction,
stating that every reduction step decreases one of the first two counters of
tight derivations by exactly one. We first prove a substitution lemma.

\begin{lemma}[Substitution]
\label{l:substitution-tight}
Let us consider
$\derivable{\Phi_{t}}{\sequT{\Gamma; \assign{x}{\intertype{\sigma_i}{i \in I}}}{\assign{t}{\tau}}{\cbeta}{\cexp}{\csize}}{\SysTight}$
and derivations
$\many{\derivable{\Phi^{i}_{u}}{\sequT{\Delta_i}{\assign{u}{\sigma_i}}{\cbeta_i}{\cexp_i}{\csize_i}}{\SysTight}}{i \in I}$,
then there exists a derivation
of the form 
$\derivable{\Phi_{\substitute{x}{u}{t}}}{\sequT{\ctxtsum{\Gamma}{\Delta_i}{i \in I}}{\assign{\substitute{x}{u}{t}}{\tau}}{\cbeta +_{\iI}{\cbeta_i}}{\cexp +_{\iI}{\cexp_i}}{\csize +_{\iI}{\csize_i}}}{\SysTight}$.
\end{lemma}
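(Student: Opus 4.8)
The plan is to proceed by induction on the type derivation $\Phi_t$, exactly as in the proof of Lem.~\ref{l:substitution-bang} for system $\SysBang$, the only novelty being that we must now also track the three counters $(\cbeta,\cexp,\csize)$ and check that they behave additively. The guiding observation is that $\substitute{x}{u}{\cdot}$ commutes with every term constructor: away from the occurrences of $x$, each constructor of $t$ reappears unchanged in $\substitute{x}{u}{t}$ and is typed by the \emph{same} rule, so it keeps its persistent/consuming status and, in particular, its fixed counter increment (the $+1$ on $\cbeta$ in $\ruleDArrowI$, on $\cexp$ in $\ruleDBang$, and on $\csize$ in every persistent rule). Thus only the base case $t = x$ genuinely alters the shape of the derivation, by grafting in the $\Phi^i_u$.

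First the base cases. If $t = x$, then $\Phi_t$ is an instance of $\ruleAxiom$ with counters $(0,0,0)$ and context $\assign{x}{\intertype{\tau}{}}$; hence $I$ is a singleton $\{i_0\}$ with $\sigma_{i_0} = \tau$ and $\Gamma$ empty. Since $\substitute{x}{u}{x} = u$, we take $\Phi_{\substitute{x}{u}{t}} \coloneq \Phi^{i_0}_u$, whose counters are $(0+\cbeta_{i_0},\,0+\cexp_{i_0},\,0+\csize_{i_0})$, as required. If $t = y \neq x$, then $x \notin \fv{t}$ forces $\intertype{\sigma_i}{i \in I} = \emul$, i.e. $I = \emptyset$; we keep $\Phi_{\substitute{x}{u}{t}} \coloneq \Phi_t$ and the empty sums over $I$ contribute $(0,0,0)$.

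For the inductive step I would treat every persistent and consuming rule uniformly. In each rule the conclusion context is the union of the premise contexts, so the multiset $\intertype{\sigma_i}{i \in I}$ assigned to $x$ distributes as a partition $I = \biguplus_k I_k$, one block $I_k$ per premise, the $k$-th premise typing its subterm under $\assign{x}{\intertype{\sigma_i}{i \in I_k}}$. I would apply the induction hypothesis to each premise using exactly the subfamily $\many{\Phi^i_u}{i \in I_k}$, obtaining a derivation of the substituted subterm whose counters are the premise's counters incremented by the sum over $I_k$ of the $(\cbeta_i,\cexp_i,\csize_i)$, and then reassemble with the same rule. Because $\substitute{x}{u}{\cdot}$ commutes with all constructors, the conclusion is a derivation of $\substitute{x}{u}{t}$ of type $\tau$; and since $\sum_k\bigl(c_k + \sum_{i \in I_k} c_i\bigr) = \sum_k c_k + \sum_{\iI} c_i$ for each of the three counters, the resulting triple is $\cbeta +_{\iI}\cbeta_i$, $\cexp +_{\iI}\cexp_i$, $\csize +_{\iI}\csize_i$. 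The rule $\ruleDBang$, whose premises are already indexed by a set $J$, is the representative case: one refines the partition to $I = \biguplus_{j \in J} I_j$ and the same arithmetic, now nested over $J$, delivers the claim.

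The only delicate point is purely bookkeeping: correctly partitioning $I$ across the premises of the multi-premise rules ($\ruleTArrowE$, $\ruleDArrowE$, $\ruleTESubs$, $\ruleDESubs$, and especially the family-indexed $\ruleDBang$), re-indexing the hypotheses $\Phi^i_u$ accordingly, and verifying that the nested counter sums collapse as stated. No reducibility or semantic argument is needed; the proof is a direct structural induction analogous to Lem.~\ref{l:substitution-bang} and Lem.~\ref{l:substitution-name}.
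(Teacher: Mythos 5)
Your proof is correct and follows essentially the same route as the paper, which establishes Lem.~\ref{l:substitution-tight} by structural induction on $\Phi_t$ (mirroring Lem.~\ref{l:substitution-bang}), with the base cases grafting in the $\Phi^i_u$ and the inductive cases partitioning $I$ across premises so the counters add up. Your elaboration of the bookkeeping (singleton $I$ at the axiom for $x$, empty $I$ for $y \neq x$, partition refinement at $\ruleDBang$) is exactly what the paper's terse proof leaves implicit.
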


\begin{proof}
Straightforward induction on $\Phi_t$.  The detailed proof of few chosen cases
follow. Suppose that the last rule of  $\Phi_t$ is:
\begin{itemize}
  \item $\ruleDAxiom$. Then $t = y$ and  $\Gamma = y:\intertype{\tau}{}$. If
  $x \neq y$ then $I=\emptyset$, and the required typing is $\Phi_{t}$. If $x =
  y$ then $I$ is a singleton $\{*\}$ and the required typing is $\Phi^{*}_{u}$.
  The counters are as expected since $\cbeta = \cexp = \csize = 0$.
  
  \item $\ruleBTApp$. Then $t = \termapp{t_1}{t_2}$ with
  $\derivable{\Phi_{t_1}}{\sequT{\Gamma_1;\assign{x}{\intertype{\sigma_i}{i \in I_1}}}{\assign{t_1}{\functtype{\M}{\tau}}}{\cbeta_1}{\cexp_1}{\csize_1}}{\SysTight}$
  and
  $\derivable{\Phi_{t_2}}{\sequT{\Gamma_2; \assign{x}{\intertype{\sigma_i}{i \in I_2}}}{\assign{t_2}{\typeneutral}}{\cbeta_2}{\cexp_2}{\csize_2}}{\SysTight}$
  such that $\Gamma = \ctxtsum{\Gamma_1}{\Gamma_2}{}$, $I = I_1 \uplus I_2$,
  $\cbeta = \cbeta_1 + \cbeta_2 + 1$, $\cexp = \cexp_1 + \cexp_2$,
  $\csize =
  \csize_1 + \csize_2$, and $\M$ is tight. The \ih provides the typings 
  $\derivable{\Phi_{\substitute{x}{u}{t_1}}}{\sequT{\ctxtsum{\Gamma_1}{\Delta_i}{i \in I_1}}{\assign{\substitute{x}{u}{t_1}}{\functtype{\M}{\tau}}}{\cbeta_1 +_{\iI_1}{\cbeta_i}}{\cexp_1 +_{\iI_1}{\cexp_i}}{\csize_1 +_{\iI_1}{\csize_i}}}{\SysTight}$
  and
  $\derivable{\Phi_{\substitute{x}{u}{t_2}}}{\sequT{\ctxtsum{\Gamma_2}{\Delta_i}{i \in I_2}}{\assign{\substitute{x}{u}{t_2}}{\typeneutral }}{\cbeta_2 +_{\iI_2}{\cbeta_i}}{\cexp_2 +_{\iI_2}{\cexp_i}}{\csize_2 +_{\iI_2}{\csize_i}}}{\SysTight}$.
  The required typing is obtained by applying the rule $\ruleBTApp$ to these,
  using the fact that $\M$ is tight. The counters are as expected. 
  
  \item $\ruleDESubs$. Then $ t = \termsubs{y}{t_2}{t_1}$, and we have
  $\derivable{\Phi_{t_1}}{\sequT{\Gamma_1; \assign{x}{\intertype{\sigma_i}{i \in I_1}}}{\assign{t_1}{\tau}}{\cbeta_1}{\cexp_1}{\csize_1}}{\SysTight}$
  and
  $\derivable{\Phi_{t_2}}{\sequT{\Gamma_2; \assign{x}{\intertype{\sigma_i}{i \in I_2}}}{\assign{t_2}{\Gamma_1(y)}}{\cbeta_2}{\cexp_2}{\csize_2}}{\SysTight}$
  such that $\Gamma = \ctxtsum{(\ctxtres{\Gamma_1}{y}{})}{\Gamma_2}{}$, $I = I_1
  \uplus I_2$, $\cbeta = \cbeta_1 + \cbeta_2$, $\cexp = \cexp_1 + \cexp_2$,
  $\csize = \csize_1 + \csize_2$. The  \ih provides the typings 
  $\derivable{\Phi_{\substitute{x}{u}{t_1}}}{\sequT{\ctxtsum{\Gamma_1}{\Delta_i}{i \in I_1}}{\assign{\substitute{x}{u}{t_1}}{\tau}}{\cbeta_1 +_{\iI_1}{\cbeta_i}}{\cexp_1 +_{\iI_1}{\cexp_i}}{\csize_1 +_{\iI_1}{\csize_i}}}{\SysTight}$
  and 
  $\derivable{\Phi_{\substitute{x}{u}{t_2}}}{\sequT{\ctxtsum{\Gamma_2}{\Delta_i}{i \in I_2}}{\assign{\substitute{x}{u}{t_2}}{ \Gamma_1(y)}}{\cbeta_2 +_{\iI_2}{\cbeta_i}}{\cexp_2 +_{\iI_2}{\cexp_i}}{\csize_2 +_{\iI_2}{\csize_i}}}{\SysTight}$
  The required typing is obtained by  applying the rule $\ruleDESubs$ to
  these. The counters are as expected. 
\end{itemize}
\end{proof}

The goal of exact subject reduction is to show that tight derivations are
preserved by reduction. To apply the \ih on a sub-derivation of the original
tight type derivation, one would need this sub-derivation to be  also tight.
However, tightness is a global property not necessarily true for all
sub-derivations. A subtle property is then needed, whose precise formulation
uses an idea in~\cite{AccattoliGK18}: the original typed term $t$ is required
not to be an abstraction-like term, or tightly typable. This is sufficient to
show the desired property. 
Moreover, subject
reduction for the system $\SysTight$ proceeds by induction on the definition of
$t \lored t'$, and in the three base cases of the recursive definition of
$t\lored t'$ the list $\ctxt{L}$ has arbitrary length. Therefore, for each base
case there is a further induction on the length of
$\ctxt{L}$.
Formally,

\begin{lemma}[Exact Subject Reduction]
\label{l:QSR}
Let
$\derivable{\Phi}{\sequT{\Gamma}{\assign{t}{\sigma}}{\cbeta}{\cexp}{\csize}}{\SysTight}$
such that $\Gamma$ is tight, and either
$\sigma \in \typetight$ or $\neg\pabs{t}$.
If $t \lored t'$, then there exists
$\derivable{\Phi'}{\sequT{\Gamma}{\assign{t'}{\sigma}}{\cbeta'}{\cexp'}{\csize}}{\SysTight}$
such that 
\begin{itemize}
\item $\cbeta' = \cbeta - 1$ and $\cexp' = \cexp$ if $t \lored t'$ is an $\mStep$-step.
\item $\cexp' = \cexp - 1$ and $\cbeta' = \cbeta$ if $t \lored t'$ is an $\eStep$-step.
\end{itemize}
\end{lemma}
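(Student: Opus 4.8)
The plan is to argue by induction on the derivation of $t \lored t'$, mirroring the proof of Weighted Subject Reduction (Lem.~\ref{l:wsr-bang}) but now tracking the counters \emph{exactly}. The disjunctive hypothesis ``$\sigma \in \typetight$ or $\neg\pabs{t}$'' is the invariant that, at each step, forces the redex to be typed by the \emph{consuming} rules rather than the persistent ones: a genuine redex such as $\termapp{\ctxtapp{\ctxt{L}}{\termabs{x}{s}}}{u}$ has $\pabs{}$ of its left component, so by Rem.~\ref{r:typing} that component cannot receive the type $\typeneutral$ demanded by the persistent application rule $\ruleTArrowE$; hence $\ruleDArrowE$ must have been used. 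The same reasoning, via Rem.~\ref{r:typing} and Rem.~\ref{r:normal}, pins down which rule types the $\sBang$- and $\dBang$-redexes.

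For the three root reductions I would proceed by an inner induction on the list context $\ctxt{L}$, exactly as in Lem.~\ref{l:wsr-bang}. In the $\dBeta$-case $\termapp{\ctxtapp{\ctxt{L}}{\termabs{x}{s}}}{u} \lored \ctxtapp{\ctxt{L}}{\termsubs{x}{u}{s}}$ (an $\mStep$-step), the abstraction is necessarily typed by the consuming rule $\ruleDArrowI$, the \emph{only} rule contributing a $+1$ to $\cbeta$ here; after the step the abstraction/application pair is replaced by an explicit substitution typed with $\ruleDESubs$, which adds nothing to $\cbeta$, so $\cbeta$ drops by exactly one while $\cexp$ and $\csize$ are preserved. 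In the exponential cases, the $\sBang$-redex $\termsubs{x}{\ctxtapp{\ctxt{L}}{\termbang{u}}}{s}$ is typed with $\ruleDESubs$ over $\ruleDBang$, and I would discharge the substitution using the Substitution Lemma (Lem.~\ref{l:substitution-tight}), whose counters for the contracted term are the pointwise sums of those of $s$ and of the copies of $u$; the single $+1$ contributed by the consumed $\ruleDBang$ is precisely what disappears, giving $\cexp' = \cexp - 1$ and $\cbeta' = \cbeta$. The $\dBang$-case $\termder{(\ctxtapp{\ctxt{L}}{\termbang{s}})}$, typed with $\ruleDDer$ over $\ruleDBang$, is analogous and simpler, the consumed $\ruleDBang$ again accounting for the unique exponential decrement.

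The inductive (context-closure) cases follow the congruence rules defining $\lored$. For each, I would inspect whether $\Phi$ ends in a persistent or a consuming rule and apply the \ih to the reduced subterm; the counters of $t'$ are then obtained by substituting, in the additive counter expression of the rule, the triple of the reduced premise by the one returned by the \ih, and the claimed variation of exactly one propagates unchanged. The delicate point is checking that the \ih hypothesis (``tight type or $\neg\pabs{}$'') holds for the reduced subterm, which is guaranteed by the side conditions attached to the $\lored$-rules together with Rem.~\ref{r:typing}. For instance, in $\termapp{v}{t} \lored \termapp{v}{u}$ with $\pnabs{v}$, if $\Phi$ ends in $\ruleTArrowE$ the argument carries the tight type $\overline{\typeabs}$, while if it ends in $\ruleDArrowE$ the argument carries a multiset type, which by the contrapositive of Rem.~\ref{r:typing} forces $\neg\pabs{t}$; either way the \ih applies. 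The abstraction and substitution cases use, in addition, the observation that $\pabs{}$ of the whole term coincides with $\pabs{}$ of the relevant subterm, so the outer invariant descends to the premise.

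The main obstacle is the exponential root cases, where exactness --- rather than a mere decrease --- must be established. There one has to argue that the redex is typed in a rigidly determined way (a single consuming $\ruleDBang$ sitting under $\ruleDESubs$ or $\ruleDDer$, with the copies of the banged subterm typed by the $I$ premises of $\ruleDBang$), so that precisely one exponential unit is removed, and then verify that the bookkeeping of Lem.~\ref{l:substitution-tight} leaves $\cbeta$ and $\csize$ untouched. Ruling out the persistent alternatives at the redex relies on Tight Spreading (Lem.~\ref{l:tight-spreading}) and Rem.~\ref{r:typing}, exactly as in the $\cbeta = \cexp = 0$ analysis of Lem.~\ref{l:czero-normal}; keeping the $\csize$ counter invariant throughout is what ultimately makes the three-counter formulation deliver \emph{independent} exact bounds.
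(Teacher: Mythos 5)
Your proposal is correct and follows essentially the same route as the paper's proof: induction on $t \lored t'$, an inner induction on $\ctxt{L}$ for the three root cases, the consuming rules forced at the redex via Rem.~\ref{r:typing}, the Substitution Lemma (Lem.~\ref{l:substitution-tight}) discharging the $\sBang$ case so that the single increment of $\ruleDBang$ accounts for the exponential decrement, and the contextual cases dispatched by the induction hypothesis. The extra detail you give for the congruence cases (the persistent/consuming split, tight spreading, and clash-freeness to rule out impossible typings) is exactly what the paper leaves implicit under ``straightforward by the \ih''.
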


\begin{proof}
By induction on $t \lored t'$.
\begin{itemize}
  \item $t = \termapp{\ctxtapp{\ctxt{L}}{\termabs{x}{u}}}{r} \lored
  \ctxtapp{\ctxt{L}}{\termsubs{x}{r}{u}} = t'$. We reason by induction on
  $\ctxt{L}$.
  \begin{itemize}
    \item $\ctxt{L} = \Box$. We first note that $\Phi$ cannot end with rule
    $\ruleTArrowE$ since $\termabs{x}{u}$ cannot be typed with $\typeneutral$,
    then there are two cases depending on the last rule of $\Phi$.
    \begin{enumerate}
      \item If $\Phi$ has the following form: \[
\Rule{
  \Rule{
    \sequT{\Gamma_u;\assign{x}{\M}}{\assign{u}{\sigma}}{\cbeta_u}{\cexp_u}{\csize_u}
  }{
    \sequT{\Gamma_u}{\assign{\termabs{x}{u}}{\functtype{\M}{\sigma}}}{\cbeta_u}{\cexp_u}{\csize_u}
  }{\ruleDArrowI}
  \quad
  \sequT{\Gamma_r}{\assign{r}{\M}}{\cbeta_r}{\cexp_r}{\csize_r}
}{
  \sequT{\ctxtsum{\Gamma_u}{\Gamma_r}{}}{\assign{\termapp{(\termabs{x}{u})}{r}}{\sigma}}{\cbeta_u+\cbeta_r+1}{\cexp_u+\cexp_r}{\csize_u+\csize_r}
}{\ruleDArrowE}
      \] We can then construct the following derivation $\Phi'$. \[
\Rule{
  \sequT{\Gamma_u;\assign{x}{\M}}{\assign{u}{\sigma}}{\cbeta_u}{\cexp_u}{\csize_u}
  \quad
  \sequT{\Gamma_r}{\assign{r}{\M}}{\cbeta_r}{\cexp_r}{\csize_r}
}{
  \sequT{\ctxtsum{\Gamma_u}{\Gamma_r}{}}{\assign{\termsubs{x}{r}{u}}{\sigma}}{\cbeta_u+\cbeta_r}{\cexp_u+\cexp_r}{\csize_u+\csize_r}
}{\ruleDESubs}
      \] The counters verify the expected property. 

      \item If $\Phi$ has the following form: \[\kern-2em
\Rule{
  \Rule{
    \sequT{\Gamma_u;\assign{x}{\M}}{\assign{u}{\sigma}}{\cbeta_u}{\cexp_u}{\csize_u}
  }{
    \sequT{\Gamma_u}{\assign{\termabs{x}{u}}{\functtype{\M}{\sigma}}}{\cbeta_u}{\cexp_u}{\csize_u}
  }{\ruleDArrowI}
  \quad
  \sequT{\Gamma_{r}}{\assign{r}{\typeneutral}}{\cbeta_r}{\cexp_r}{\csize_r}
  \quad
  \ptight{\M}
}{
  \sequT{\ctxtsum{\Gamma_u}{\Gamma_r}{}}{\assign{\termapp{(\termabs{x}{u})}{r}}{\sigma}}{\cbeta_u+\cbeta_r+1}{\cexp_u+\cexp_r}{\csize_u+\csize_r}
}{\ruleBTApp}
      \] We can then construct the following derivation $\Phi'$. \[
\Rule{
  \sequT{\Gamma_u;\assign{x}{\M}}{\assign{u}{\sigma}}{\cbeta_u}{\cexp_u}{\csize_u}
  \quad
  \sequT{\Gamma_r}{\assign{r}{\typeneutral}}{\cbeta_r}{\cexp_r}{\csize_r}
  \quad
  \ptight{\M}
}{
  \sequT{\ctxtsum{\Gamma_u}{\Gamma_{r}}{}}{\assign{\termsubs{x}{r}{u}}{\sigma}}{\cbeta_u+\cbeta_r}{\cexp_u+\cexp_r}{\csize_u+\csize_r}
}{\ruleTESubs}
      \] The counters verify the expected property.
    \end{enumerate}
    
    \item $\ctxt{L} = \termsubs{y}{s}{\ctxt{L'}}$. Immediate from the \ih
  \end{itemize}
  
  \item $t = \termsubs{x}{\ctxtapp{\ctxt{L}}{\termbang{r}}}{u}
  \lored \ctxtapp{\ctxt{L}}{\substitute{x}{r}{u}} = t'$. We reason by induction
  on $\ctxt{L}$.
  \begin{itemize}
    \item $\ctxt{L} = \Box$. We first note that $\Phi$ cannot end with
    rule $\ruleTESubs$ since $\termbang{r}$ cannot be typed with
    $\typeneutral$, then $\Phi$ has the following form: \[\kern-3em
\Rule{
  \sequT{\Gamma_u; x: \intertype{\sigma_i}{\iI}}{\assign{u}{\sigma}}{\cbeta_u}{\cexp_u}{\csize_u}
  \quad
  \Rule{(\sequT{\Gamma_i}{\assign{r}{\sigma_i}}{\cbeta_i}{\cexp_i}{\csize_i})_{\iI}}{\sequT{\ctxtsum{}{\Gamma_i}{i \in I}}{\assign{\termbang{r}}{\intertype{\sigma_i}{\iI}}}{+_{\iI} \cbeta_i}{1 +_{\iI} \cexp_i}{+_{\iI}\csize_i}}{\ruleDBang}
}{
  \sequT{\ctxtsum{\Gamma_u}{\Gamma_i}{i \in I}}{\assign{\termsubs{x}{\termbang{r}}{u}}{\sigma}}{\cbeta_u +_{\iI} \cbeta_i}{\cexp_u +1 +_{\iI} \cexp_i }{\csize_u +_{\iI} \csize_i}
}{\ruleDESubs}
    \] By applying Lemma~\ref{l:substitution-tight} to the premises we obtain a
    derivation \[
\derivable{\Phi'}{\sequT{\ctxtsum{\Gamma_u}{\Gamma_i}{i \in I}}{\assign{\substitute{x}{r}{u}}{\sigma}}{\cbeta_u+_{\iI}{\cbeta_i}}{\cexp_u+_{\iI}{\cexp_i}}{\csize_u+_{\iI}{\csize_i}}}{}
    \] The counters verify the expected property.
    
    \item $\ctxt{L}= \termsubs{y}{s}{\ctxt{L'}}$. Immediate from the \ih
  \end{itemize}

  \item $t = \termder{(\ctxtapp{\ctxt{L}}{\termbang{u}})} \lored
  \ctxtapp{\ctxt{L}}{u}=t'$. We reason by induction on $\ctxt{L}$.
  \begin{itemize}
    \item $\ctxt{L} = \Box$. Then $\Phi$ has necessarily the following form: \[
\Rule{
  \Rule{
    \sequT{\Gamma_u}{\assign{u}{\sigma}}{\cbeta_u}{\cexp_u}{\csize_u}
  }{
    \sequT{\Gamma_u}{\assign{\termbang{u}}{\intertype{\sigma}{}}}{\cbeta_u}{1+\cexp_u}{\csize_u}
  }{\ruleDBang}
}{
  \sequT{\Gamma_u}{\assign{\termder{\termbang{u}}}{\sigma}}{\cbeta_u}{1+\cexp_u}{\csize_u}
}{\ruleDDer}
    \] We conclude with the derivation $\Phi'$ given by the premise. The
    counters verify the expected property.
    
    \item $\ctxt{L} = \termsubs{y}{s}{\ctxt{L'}}$. Immediate from the \ih
  \end{itemize}
  
  \item All the inductive cases for internal reductions are straightforward by \ih
\end{itemize}  
\end{proof}

\begin{theorem}[Soundness for System $\SysTight$]
\label{t:correctness}
If
$\derivable{\Phi}{\sequT{\Gamma}{\assign{t}{\sigma}}{\cbeta}{\cexp}{\csize}}{\SysTight}$
is tight, then there exists $p$ such that $\cfnrml{p}$ and
$t \rewriten{\bangweak}^{(\cbeta,\cexp)} p$ with $\cbeta$ $\mStep$-steps,
$\cexp$ $\eStep$-steps, and $\wsize{p} = \csize$.
\end{theorem}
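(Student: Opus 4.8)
The plan is to proceed by induction on $\cbeta + \cexp$, using Exact Subject Reduction (Lem.~\ref{l:QSR}) as the engine of the inductive step and the characterisations of tight normal forms for the base case.

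For the base case $\cbeta + \cexp = 0$, i.e. $\cbeta = \cexp = 0$: since $\Phi$ is tight, Lem.~\ref{l:czero-normal} yields $\pnrml{t}$, so $t$ is already $\bangweak$-normal. Taking $p = t$, the reduction $t \rewriten{\bangweak}^{(0,0)} p$ is empty. Lem.~\ref{l:tight-size} gives $\wsize{t} = \csize$, and since $\pnrml{t}$ together with $\SysTight$-typability, Thm.~\ref{t:clashfree-tight} gives $\cfnrml{t}$. This settles all three requirements at once.

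For the inductive step $\cbeta + \cexp > 0$: by the contrapositive of Lem.~\ref{l:czero-normal} we have $t \notin \ipnrml$, hence by Prop.~\ref{l:equivalence} the deterministic strategy can fire, i.e. there is $t'$ with $t \lored t'$. Since $\Phi$ is tight we have $\sigma \in \typetight$, so the side condition of Lem.~\ref{l:QSR} ($\sigma \in \typetight$ or $\neg\pabs{t}$) holds; applying it produces a derivation $\derivable{\Phi'}{\sequT{\Gamma}{\assign{t'}{\sigma}}{\cbeta'}{\cexp'}{\csize}}{\SysTight}$ with the same context $\Gamma$, the same type $\sigma$, and the same size counter $\csize$, where $(\cbeta',\cexp')$ equals $(\cbeta-1,\cexp)$ on an $\mStep$-step and $(\cbeta,\cexp-1)$ on an $\eStep$-step. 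In particular $\Phi'$ is again tight and $\cbeta' + \cexp' < \cbeta + \cexp$, so the induction hypothesis applies to $\Phi'$, yielding a $p$ with $\cfnrml{p}$, $t' \rewriten{\bangweak}^{(\cbeta',\cexp')} p$, and $\wsize{p} = \csize$. Finally, since $\lored \mathbin{\subseteq} \rewrite{\bangweak}$ (Prop.~\ref{l:equivalence}), the step $t \lored t'$ is a genuine $\bangweak$-step, multiplicative if $\mStep$ and exponential if $\eStep$; prepending it to the sequence out of $t'$ recovers exactly $t \rewriten{\bangweak}^{(\cbeta,\cexp)} p$, while the size equation is untouched.

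The proof itself is a clean induction: the real work has already been discharged in Lem.~\ref{l:QSR} and in the supporting lemmas (Lem.~\ref{l:czero-normal}, Lem.~\ref{l:tight-size}, Thm.~\ref{t:clashfree-tight}). The only points demanding care are bookkeeping ones: checking that tightness of the derivation is preserved under $\lored$-reduction (it is, because Lem.~\ref{l:QSR} leaves $\Gamma$ and $\sigma$ unchanged), and verifying that the $\mStep$/$\eStep$ classification of a $\lored$-step matches the $(b,e)$-counting of $\rewriten{\bangweak}$, so that the single prepended step increments precisely the counter that Exact Subject Reduction has decremented. I expect no genuine obstacle beyond this accounting.
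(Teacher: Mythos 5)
Your proof is correct and follows essentially the same route as the paper's: induction on $\cbeta+\cexp$, with Lem.~\ref{l:czero-normal}, Lem.~\ref{l:tight-size} and Thm.~\ref{t:clashfree-tight} settling the base case, and Prop.~\ref{l:equivalence} plus Exact Subject Reduction (Lem.~\ref{l:QSR}) driving the inductive step. The only cosmetic difference is that the paper carries the whole induction inside the deterministic strategy $\loredn$ and transfers to $\rewriten{\bangweak}$ at the end via Thm.~\ref{t:confluence}, whereas you prepend each $\lored$-step directly to a general $\rewriten{\bangweak}$-sequence; both are adequate for the existential statement as written.
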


\begin{proof}
We prove the statement by showing that $t \loredn^{(\cbeta,\cexp)} p$ holds for
the deterministic strategy, then we conclude since $\lored \mathbin{\subseteq}
\rewrite{\bangweak}$. Let
$\derivable{\Phi}{\sequT{\Gamma}{\assign{t}{\sigma}}{\cbeta}{\cexp}{\csize}}{\SysTight}$.
We reason by induction on $\cbeta+\cexp$.

If $\cbeta+\cexp = 0$, then $\cbeta = \cexp = 0$ and Lemma~\ref{l:czero-normal}
gives $\pnrml{t}$. Moreover, by Lemma~\ref{l:tight-size} and
Theorem~\ref{t:clashfree-tight}, we get both $\wsize{t} = \csize$ and
$\cfnrml{t}$. Thus, we conclude with $p = t$.

If $\cbeta+\cexp > 0$, then $t \notin \ipnrml$ holds by
Lemma~\ref{l:czero-normal} and thus there exists $t'$ such that $t \loredn^{(1,0)} t'$
or $t \loredn^{(0,1)} t'$
by Proposition~\ref{l:equivalence}. By Lemma~\ref{l:QSR} there is
$\derivable{\Phi'}{\sequT{\Gamma}{\assign{t'}{\sigma}}{\cbeta'}{\cexp'}{\csize}}{\SysTight}$
such that $1 + \cbeta' + \cexp' = \cbeta + \cexp$. By the \ih there is $p$ such
that $\cfnrml{p}$ and $t' \loredn^{(\cbeta',\cexp')} p$
with $\csize = \wsize{p}$. Then $t \loredn^{(1,0)} t' \loredn^{(\cbeta',\cexp')} p$
(resp. $t \loredn^{(0,1)} t' \ldots$)
which means $t \loredn^{(\cbeta,\cexp)} p$, as expected.
\end{proof}


\paragraph{{\bf Completeness}}
\label{s:completeness}
We now study completeness of the type system $\SysTight$, which does not only
guarantee that normalising terms are typable --a qualitative property-- but
also provides a tight type derivation having appropriate counters. More
precisely, given a term $t$ which is $\bangweak$-normalisable by means of
$\cbeta$ $\dBeta$-steps and $\cexp$ 
$\{\sBang, \dBang\}$-steps,
and having a
$\bangweak$-normal form of $\bangweak$-size $\csize$, there is a tight
derivation $\Phi$ for $t$ with counters $(\cbeta,\cexp,\csize)$. The
completeness proof is mainly based on a subject expansion property
(Lemma~\ref{l:QSE-tight}), as well as on an auxiliary lemma providing tight
derivations with appropriate counters for $\bangweak$-normal weak clash-free
terms.

\begin{lemma}
\label{l:typing-normal-clash-free}
If $\cfnrml{t}$, then there is a tight derivation
$\derivable{\Phi}{\sequT{\Gamma}{\assign{t}{\sigma}}{0}{0}{\wsize{t}}}{\SysTight}$.
\end{lemma}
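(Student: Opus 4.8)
The plan is to prove a strengthened statement by simultaneous induction on the grammars defining $\icfntrl$, $\icfnabs$, $\icfnbang$ and $\icfnrml$. Concretely, I would establish: (1) if $\cfntrl{t}$ then $t$ admits a tight derivation of type $\typeneutral$ with counters $(0,0,\wsize{t})$; (2) if $\cfnabs{t}$ then $t$ admits a tight derivation of some type $\overline{\typeabs}$ (that is, $\typebang$ or $\typeneutral$) with counters $(0,0,\wsize{t})$; (3) if $\cfnbang{t}$ then $t$ admits a tight derivation of some type $\overline{\typebang}$ (that is, $\typeabs$ or $\typeneutral$) with counters $(0,0,\wsize{t})$; and (4) if $\cfnrml{t}$ then $t$ admits a tight derivation of some tight type with counters $(0,0,\wsize{t})$. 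Statement (4) is then exactly the lemma, obtained from (2) and (3) since $\icfnrml$ is $\icfnabs \cup \icfnbang$ and both $\overline{\typeabs}$ and $\overline{\typebang}$ are tight types.

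The induction follows the production rules of the four grammars, and in each case the required judgement is produced by precisely one \emph{persistent} rule (or the axiom for a variable), whose premises are supplied by the induction hypotheses on the appropriate subterms. For a variable $x$ I use $\ruleAxiom$ with context $\assign{x}{\intertype{\typeneutral}{}}$. For $\termapp{\icfntrl}{\icfnabs}$ I apply $\ruleTArrowE$, whose left premise requires type $\typeneutral$ (claim (1) on the function) and whose right premise requires type $\overline{\typeabs}$ (claim (2) on the argument). For $\termder{(\icfntrl)}$ I use $\ruleTDer$, fed by claim (1); for $\termabs{x}{\icfnrml}$ I use $\ruleTArrowI$, whose side condition $\ptight{\Gamma(x)}$ and whose demand that the body carry a tight type are both delivered by claim (4) together with tightness of the context. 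For every closure $\termsubs{x}{u}{s}$ the argument $u$ lies in $\icfntrl$, so claim (1) gives it type $\typeneutral$, which is exactly what $\ruleTESubs$ requires, while the body is handled by the suitable claim (1), (2) or (3). The crucial observation is that the mutually recursive shape of the grammars feeds each persistent rule precisely the tight base type its premises need, so the derivation remains tight throughout and the context is always a (possibly $x$-restricted) union of tight contexts.

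It then remains to verify the counters. Since only persistent rules and the axiom are used, no rule increments $\cbeta$ or $\cexp$, so both stay $0$. For the third counter, each persistent rule increments $\csize$ by exactly the amount that the corresponding clause of $\wsize{\cdot}$ contributes: $+1$ for application ($\ruleTArrowE$), abstraction ($\ruleTArrowI$), dereliction ($\ruleTDer$) and closure ($\ruleTESubs$), and $0$ for a bang via $\ruleTBang$ (matching $\wsize{\termbang{t}} = 0$) and for a variable via $\ruleAxiom$. A routine inductive computation then gives $\csize = \wsize{t}$. I expect the only genuine subtlety to be the bookkeeping of the tight base types, so that the side conditions of $\ruleTArrowE$, $\ruleTArrowI$ and $\ruleTESubs$ are all satisfied; once the four-way strengthening above is fixed this becomes immediate, and no hard step remains.
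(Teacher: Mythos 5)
Your proposal is correct and takes essentially the same route as the paper's own proof: the identical four-way strengthening (terms in $\icfntrl$ receive type $\typeneutral$, terms in $\icfnabs$ a type in $\overline{\typeabs}$, terms in $\icfnbang$ a type in $\overline{\typebang}$), the same induction on the clash-free normal-form grammars with each case closed by the corresponding persistent rule (or $\ruleAxiom$), and the same counter bookkeeping. Nothing is missing.
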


\begin{proof}
We proceed by induction on the derivation of  $\cfnrml{t}$ (resp. $\cfntrl{t}$,
$\cfnbang{t}$ and $\cfnabs{t}$). Moreover, we generalise the statement and
simultaneously show the following:
\begin{enumerate}
  \item\label{l:typing-normal-clash-free:ne} If $\cfntrl{t}$, then $\sigma = \typeneutral$; 
  \item\label{l:typing-normal-clash-free:nb} If $\cfnbang{t}$, then $\sigma \in \overline{\typebang}$;
  \item\label{l:typing-normal-clash-free:na} If $\cfnabs{t}$, then $\sigma \in \overline{\typeabs}$.
\end{enumerate}

We analyse the possible cases for $t$.
\begin{itemize}
  \item Let $\cfntrl{x}$, so that $\wsize{x} = 0$. Then, take
  $\derivable{}{\sequT{\assign{x}{\multiset{\typeneutral}}}{\assign{x}{\typeneutral}}{0}{0}{0}}{\SysTight}$
  to conclude.

  \item Let $\cfntrl{\termder{t}}$ with $\cfntrl{t}$. Then, apply the
  \ih (\ref{l:typing-normal-clash-free:ne}) to obtain
  $\derivable{}{\sequT{\Gamma}{\assign{t}{\typeneutral}}{0}{0}{\wsize{t}}}{\SysTight}$
  and conclude 
  $\derivable{}{\sequT{\Gamma}{\assign{\termder{t}}{\typeneutral}}{0}{0}{\wsize{t} + 1}}{\SysTight}$
  with rule $\ruleTDer$. The size is as expected since $\wsize{\termder{t}} =
  \wsize{t} + 1$.
  
  \item Let $\cfntrl{\termapp{t}{u}}$ with $\cfntrl{t}$ and $\cfnabs{u}$.
  Then, apply the \ih (\ref{l:typing-normal-clash-free:ne}) and
  (\ref{l:typing-normal-clash-free:na}) to obtain
  $\derivable{}{\sequT{\Gamma}{\assign{t}{\typeneutral}}{0}{0}{\wsize t}}{\SysTight}$
  and
  $\derivable{}{\sequT{\Delta}{\assign{u}{\overline \typeabs}}{0}{0}{\wsize{u}}}{\SysTight}$
  resp., and conclude
  $\derivable{}{\sequT{\ctxtsum{\Gamma}{\Delta}{}}{\assign{\termapp{t}{u}}{\typeneutral}}{0}{0}{\wsize{t} + \wsize{u} + 1}}{\SysTight}$
  with rule $\ruleTArrowE$. The size is as expected since
  $\wsize{\termapp{t}{u}} = \wsize{t} + \wsize{u} + 1$.
  
  \item Let $\cfntrl{\termsubs{x}{u}{t}}$ with $\cfntrl{t}$ and $\cfntrl{u}$,
  or $\cfnabs{\termsubs{x}{u}{t}}$ with $\cfnabs{t}$ and $\cfntrl{u}$, or
  $\cfnbang{\termsubs{x}{u}{t}}$ with $\cfnbang{t}$ and $\cfntrl{u}$.
  Then, apply the appropriate \ih to obtain
  $\derivable{}{\sequT{\Gamma}{\assign{t}{\typetight}}{0}{0}{\wsize{t}}}{\SysTight}$
  and
  $\derivable{}{\sequT{\Delta}{\assign{u}{\typeneutral}}{0}{0}{\wsize{u}}}{\SysTight}$,
  and conclude
  $\derivable{}{\sequT{\ctxtsum{(\ctxtres{\Gamma}{x}{})}{\Delta}{}}{\assign{\termsubs{x}{u}{t}}{\typetight}}{0}{0}{\wsize{t} + \wsize{u}}}{\SysTight}$
  by rule $\ruleTESubs$ since $\Gamma(x)$ is tight. The size is as expected
  since $\wsize{\termsubs{x}{u}{t}} = \wsize{t} + \wsize{u}$.
  
  \item Let $\cfnabs{\termbang{t}}$. Then, conclude
  $\derivable{}{\sequT{}{\assign{\termbang{t}}{\typebang}}{0}{0}{0}}{\SysTight}$
  by $\ruleTBang$. The size is as expected.
  
  \item Let $\cfnbang{\termabs{x}{t}}$ with $\cfnrml{t}$. Then, apply the
  appropriate \ih to obtain
  $\derivable{}{\sequT{\Gamma}{\assign{t}{\typetight}}{0}{0}{\wsize{t}}}{\SysTight}$
  and conclude
  $\derivable{}{\sequT{\ctxtres{\Gamma}{x}{}}{\assign{\termabs{x}{t}}{\typeabs}}{0}{0}{\wsize{t} + 1}}{\SysTight}$
  by $\ruleTArrowI$ since $\Gamma(x)$ is tight.
\end{itemize}
\end{proof}

\begin{lemma}[Anti-Substitution]
\label{l:bang:anti-substitution-tight}
If $\derivable{\Phi_{\substitute{x}{u}{t}}}{\sequT{\Gamma'}{\assign{\substitute{x}{u}{t}}{\tau}}{\cbeta'}{\cexp'}{\csize'}}{\SysTight}$,
then there exists 
$\derivable{\Phi_t}{\sequT{\Gamma; \assign{x}{\intertype{\sigma_i}{i \in I}}}{\assign{t}{\tau}}{\cbeta}{\cexp}{\csize}}{\SysTight}$
and
$\many{\derivable{\Phi^i_u}{\sequT{\Delta_i}{\assign{u}{\sigma_i}}{\cbeta_i}{\cexp_i}{\csize_i}}{\SysTight}}{i \in I}$
such that $\Gamma' = \ctxtsum{\Gamma}{\Delta_i}{i \in I}$,
$\cbeta' = \cbeta +_{\iI}{\cbeta_i}$, $\cexp' = \cexp +_{\iI}{\cexp_i}$ and
$\csize' = \csize +_{\iI}{\csize_i}$.
\end{lemma}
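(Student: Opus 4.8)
The plan is to proceed by induction on the structure of $t$, exactly as in the proof of Lem.~\ref{l:antisubstitution-bang}, the only novelties being that I must now track the three counters $\cbeta,\cexp,\csize$ and that each syntactic constructor may be typed by either a persistent or a consuming rule of Fig.~\ref{fig:typingSchemesTight}. Throughout, I invert the rule at the root of $\Phi_{\substitute{x}{u}{t}}$, apply the \ih to the subderivations, reassemble a derivation $\Phi_t$ of $t$ with the matching rule, and verify that the claimed additive identities for the counters hold. Since the consuming rules of $\SysTight$ are essentially the rules of $\SysBang$ (as noted for Thm.~\ref{t:clashfree-tight}), the skeleton is that of the $\SysBang$ anti-substitution, enriched with counter bookkeeping.

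For the base cases: if $t = x$, then $\substitute{x}{u}{t} = u$, and I take $I = \{1\}$, $\sigma_1 = \tau$, the axiom $\derivable{\Phi_t}{\sequT{\assign{x}{\intertype{\tau}{}}}{\assign{x}{\tau}}{0}{0}{0}}{\SysTight}$ (so $\cbeta=\cexp=\csize=0$), and $\Phi^1_u \coloneq \Phi_{\substitute{x}{u}{t}}$ with $\Delta_1 = \Gamma'$; the three identities reduce to $\cbeta' = \cbeta_1$ and its analogues. If $t = y \neq x$, then $\substitute{x}{u}{t} = y$, and I take $I = \emptyset$ (so $x$ gets $\emul$), $\Gamma = \Gamma'$ and $\Phi_t \coloneq \Phi_{\substitute{x}{u}{t}}$, all counters unchanged. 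For the unary inductive constructors ($t = \termabs{y}{s}$, $t = \termder{s}$, using $\alpha$-conversion to assume $y \neq x$ and $y \notin \fv{u}$), the root is the persistent or consuming variant ($\ruleTArrowI$/$\ruleDArrowI$, $\ruleTDer$/$\ruleDDer$); in each case I apply the \ih to the unique premise typing $\substitute{x}{u}{s}$, recovering $I$ and $\many{\Phi^i_u}{\iI}$, and rebuild with the same rule. For the binary constructors ($t = \termapp{s}{r}$, $t = \termsubs{y}{r}{s}$, roots $\ruleTArrowE$/$\ruleDArrowE$ or $\ruleTESubs$/$\ruleDESubs$) I apply the \ih to both premises, obtaining index sets $I_s$ and $I_r$, set $I \coloneq I_s \uplus I_r$, take the union of the two $u$-families, and reassemble. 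As every rule adds the counters of its premises (the consuming abstraction adding $1$ to $\cbeta$), the global identities follow by routine rearrangement of finite sums.

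The case carrying the real content is $t = \termbang{s}$, where $\substitute{x}{u}{t} = \termbang{\substitute{x}{u}{s}}$. If the root is the persistent rule $\ruleTBang$, then $\substitute{x}{u}{s}$ is untyped, so there is nothing to recurse into: $\Gamma' = \emptyset$, and I take $I = \emptyset$ with $\Phi_t$ the $\ruleTBang$-instance typing $\termbang{s}$ with $\typebang$ and counter $(0,0,0)$. If instead the root is $\ruleDBang$ with premise family $\many{\sequT{\Gamma_j}{\assign{\substitute{x}{u}{s}}{\sigma_j}}{\cbeta_j}{\cexp_j}{\csize_j}}{j \in J}$, I apply the \ih \emph{to each} premise, obtaining for every $j \in J$ an index set $I_j$, a derivation $\Phi_{s,j}$ assigning $x$ the multiset $\intertype{\sigma_i}{i \in I_j}$, and a family $\many{\Phi^i_u}{i \in I_j}$. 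I then set $I \coloneq \biguplus_{j \in J} I_j$, collect the disjoint families into a single $\many{\Phi^i_u}{\iI}$, and rebuild $\termbang{s}$ with $\ruleDBang$ over the $\Phi_{s,j}$, so that $x$ receives $\intertype{\sigma_i}{\iI}$. The counter identities then amount to $\cexp' = (1 +_{j \in J}\cexp_{s,j}) +_{\iI}\cexp_i = 1 +_{j \in J}(\cexp_{s,j} +_{i \in I_j}\cexp_i) = 1 +_{j \in J}\cexp_j$, and analogously for $\cbeta$ and $\csize$ without the extra $1$.

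Thus the main obstacle is precisely this $\ruleDBang$ case: one must invoke the induction hypothesis once per premise of the bang and glue the per-premise index sets into a single disjoint union $I$, merging the corresponding typing contexts and $u$-derivations coherently. Everything else is bookkeeping, namely confirming that the persistent/consuming case split never obstructs reassembly and that the finite-sum rearrangements deliver the three additive counter equations.
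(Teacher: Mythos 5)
Your proposal is correct and follows essentially the same route as the paper: induction on the structure of $t$, with the two variable base cases handled exactly as you do (taking $I=\{1\}$, $\sigma_1=\tau$ for $t=x$, and $I=\emptyset$ for $t=y\neq x$), the paper then dismissing all remaining cases as straightforward applications of the induction hypothesis. Your additional detail on the $\ruleDBang$ case (invoking the i.h.\ once per premise and gluing the index sets by disjoint union, with the associated counter rearrangement) is exactly the bookkeeping the paper leaves implicit, so there is no divergence in method.
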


\begin{proof}
By induction on $t$.
\begin{itemize}
  \item $t = x$. Then, $\substitute{x}{u}{t} = u$ and we set $I = \{\ast\}$,
  $\sigma_\ast = \tau$, $\Gamma = \emptyset$, $\Delta_\ast = \Gamma'$, $\Phi^{\ast}_{u} =
  \Phi_{\substitute{x}{u}{t}}$, and
  $\derivable{\Phi_{t}}{\sequT{\assign{x}{\intertype{\tau}{}}}{\assign{x}{\tau}}{0}{0}{0}}{\SysTight}$
  by rule $\ruleDAxiom$.
  
  \item $t = y \neq x$. Then, $\substitute{x}{u}{t} = y$ and we conclude with
  $I = \emptyset$ (hence, $\intertype{\sigma_i}{i \in I} = \emul$), $\Gamma =
  \Gamma'$ and $\Phi_{t} = \Phi_{\substitute{x}{u}{t}}$. 
  
  \item $t = \termapp{t_1}{t_2}$. Then, $\substitute{x}{u}{t} =
  \termapp{\substitute{x}{u}{t_1}}{\substitute{x}{u}{t_2}}$ and there are three
  possible cases.
  \begin{enumerate}
    \item $\ruleTArrowE$. Then, $\tau = \typeneutral$,  $\Gamma' =
    \ctxtsum{\Gamma'_1}{\Gamma'_2}{}$, $\cbeta' = \cbeta'_1 + \cbeta'_2$,
    $\cexp' = \cexp'_1 + \cexp'_2$ and $\csize' = \csize'_1 + \csize'_2 + 1$
    with premises
    $\derivable{\Phi_{\substitute{x}{u}{t_1}}}{\sequT{\Gamma'_1}{\assign{\substitute{x}{u}{t_1}}{\typeneutral}}{\cbeta'_1}{\cexp'_1}{\csize'_1}}{\SysTight}$
    and
    $\derivable{\Phi_{\substitute{x}{u}{t_2}}}{\sequT{\Gamma'_2}{\assign{\substitute{x}{u}{t_2}}{\overline{\typeabs}}}{\cbeta'_2}{\cexp'_2}{\csize'_2}}{\SysTight}$.
    By \ih on both, there exist type derivations
    $\derivable{\Phi_{t_1}}{\sequT{\Gamma_1;\assign{x}{\intertype{\sigma_i}{i \in I_1}}}{\assign{t_1}{\typeneutral}}{\cbeta_1}{\cexp_1}{\csize_1}}{\SysTight}$, \ \ 
    $\many{\derivable{\Phi^{i}_{u}}{\sequT{\Delta_i}{\assign{u}{\sigma_i}}{\cbeta_i}{\cexp_i}{\csize_i}}{\SysTight}}{i \in I_1}$, \ \ 
    $\derivable{\Phi_{t_2}}{\sequT{\Gamma_2;\assign{x}{\intertype{\sigma_i}{i \in I_2}}}{\assign{t_2}{\overline{\typeabs}}}{\cbeta_2}{\cexp_2}{\csize_2}}{\SysTight}$ \ \
    and
    $\many{\derivable{\Phi^{i}_{u}}{\sequT{\Delta_i}{\assign{u}{\sigma_i}}{\cbeta_i}{\cexp_i}{\csize_i}}{\SysTight}}{i \in I_2}$
    (note that $I_1$ and $I_2$ can be assumed disjoint) such that $\Gamma'_1 =
    \ctxtsum{\Gamma_1}{\Delta_i}{i \in I_1}$, $\Gamma'_2 =
    \ctxtsum{\Gamma_2}{\Delta_i}{i \in I_2}$, $\cbeta'_1 = \cbeta_1
    +_{i \in I_1}{\cbeta_i}$, $\cbeta'_2 = \cbeta_2 +_{i \in I_2}{\cbeta_i}$,
    $\cexp'_1 = \cexp_1 +_{i \in I_1}{\cexp_i}$, $\cexp'_2 = \cexp_2
    +_{i \in I_2}{\cexp_i}$, $\csize'_1 = \csize_1 +_{i \in I_1}{\csize_i}$ and
    $\csize'_2 = \csize_2 +_{i \in I_2}{\csize_i}$. Then, by $\ruleTArrowE$ we
    have
    $\derivable{\Phi_{t}}{\sequT{\ctxtsum{\Gamma_1}{\Gamma_2}{};\assign{x}{\intertype{\sigma_i}{i \in I_1 \cup I_2}}}{\assign{\termapp{t_1}{t_2}}{\typeneutral}}{\cbeta_1+\cbeta_2}{\cexp_1+\cexp_2}{\csize_1+\csize_2+1}}{\SysTight}$
    and we conclude with $\Gamma = \ctxtsum{\Gamma_1}{\Gamma_2}{}$, $I = I_1
    \cup I_2$, $\cbeta = \cbeta_1 + \cbeta_2$, $\cexp = \cexp_1 + \cexp_2$ and
    $\csize = \csize_1 + \csize_2 + 1$ since:
    \begin{itemize}
       \item $\Gamma' = \ctxtsum{\Gamma'_1}{\Gamma'_2}{} =
       \ctxtsum{\ctxtsum{\Gamma_1}{\Delta_i}{i \in I_1}}{\ctxtsum{\Gamma_2}{\Delta_i}{i \in I_2}}{}
       = \ctxtsum{\Gamma}{\Delta_i}{i \in I}$.
       
       \item $\cbeta' = \cbeta'_1 + \cbeta'_2 = \cbeta_1
       +_{i \in I_1}{\cbeta_i} + \cbeta_2 +_{i \in I_2}{\cbeta_i} = \cbeta
       +_{i \in I}{\cbeta_i}$.
       
       \item $\cexp' = \cexp'_1 + \cexp'_2 = \cexp_1
       +_{i \in I_1}{\cexp_i} + \cexp_2 +_{i \in I_2}{\cexp_i} = \cexp
       +_{i \in I}{\cexp_i}$.
       
       \item $\csize' = \csize'_1 + \csize'_2 + 1 = \csize_1
       +_{i \in I_1}{\csize_i} + \csize_2 +_{i \in I_2}{\csize_i} + 1 = \csize
       +_{i \in I}{\csize_i}$.
    \end{itemize}
    
    \item $\ruleBTApp$. Then, $\Gamma' = \ctxtsum{\Gamma'_1}{\Gamma'_2}{}$,
    $\cbeta' = \cbeta'_1 + \cbeta'_2 + 1$, $\cexp' = \cexp'_1 + \cexp'_2$ and
    $\csize' = \csize'_1 + \csize'_2$
    with premises
    $\derivable{\Phi_{\substitute{x}{u}{t_1}}}{\sequT{\Gamma'_1}{\assign{\substitute{x}{u}{t_1}}{\functtype{\M}{\tau}}}{\cbeta'_1}{\cexp'_1}{\csize'_1}}{\SysTight}$
    and
    $\derivable{\Phi_{\substitute{x}{u}{t_2}}}{\sequT{\Gamma'_2}{\assign{\substitute{x}{u}{t_2}}{\typeneutral}}{\cbeta'_2}{\cexp'_2}{\csize'_2}}{\SysTight}$
    for some multiset type $\M$ such that $\ptight{\M}$. By \ih there
   exist 
    $\derivable{\Phi_{t_1}}{\sequT{\Gamma_1;\assign{x}{\intertype{\sigma_i}{i \in I_1}}}{\assign{t_1}{\functtype{\M}{\tau}}}{\cbeta_1}{\cexp_1}{\csize_1}}{\SysTight}$,
    $\many{\derivable{\Phi^{i}_{u}}{\sequT{\Delta_i}{\assign{u}{\sigma_i}}{\cbeta_i}{\cexp_i}{\csize_i}}{\SysTight}}{i \in I_1}$,
    $\derivable{\Phi_{t_2}}{\sequT{\Gamma_2;\assign{x}{\intertype{\sigma_i}{i \in I_2}}}{\assign{t_2}{\typeneutral}}{\cbeta_2}{\cexp_2}{\csize_2}}{\SysTight}$
    and
    $\many{\derivable{\Phi^{i}_{u}}{\sequT{\Delta_i}{\assign{u}{\sigma_i}}{\cbeta_i}{\cexp_i}{\csize_i}}{\SysTight}}{i \in I_2}$
    (note that $I_1$ and $I_2$ can be assumed disjoint)
    such that $\Gamma'_1 = \ctxtsum{\Gamma_1}{\Delta_i}{i \in I_1}$, $\Gamma'_2
    = \ctxtsum{\Gamma_2}{\Delta_i}{i \in I_2}$, $\cbeta'_1 = \cbeta_1
    +_{i \in I_1}{\cbeta_i}$, $\cbeta'_2 = \cbeta_2 +_{i \in I_2}{\cbeta_i}$,
    $\cexp'_1 = \cexp_1 +_{i \in I_1}{\cexp_i}$, $\cexp'_2 = \cexp_2
    +_{i \in I_2}{\cexp_i}$, $\csize'_1 = \csize_1 +_{i \in I_1}{\csize_i}$ and
    $\csize'_2 = \csize_2 +_{i \in I_2}{\csize_i}$. Then, by $\ruleBTApp$ we
    have the type derivation
    $\derivable{\Phi_{t}}{\sequT{\ctxtsum{\Gamma_1}{\Gamma_2}{};\assign{x}{\intertype{\sigma_i}{i \in I_1 \cup I_2}}}{\assign{\termapp{t_1}{t_2}}{\tau}}{\cbeta_1+\cbeta_2+1}{\cexp_1+\cexp_2}{\csize_1+\csize_2}}{\SysTight}$
    and we conclude with $\Gamma = \ctxtsum{\Gamma_1}{\Gamma_2}{}$, $I = I_1
    \cup I_2$, $\cbeta = \cbeta_1 + \cbeta_2 + 1$, $\cexp = \cexp_1 + \cexp_2$
    and $\csize = \csize_1 + \csize_2$ since:
    \begin{itemize}
       \item $\Gamma' = \ctxtsum{\Gamma'_1}{\Gamma'_2}{} =
       \ctxtsum{\ctxtsum{\Gamma_1}{\Delta_i}{i \in I_1}}{\ctxtsum{\Gamma_2}{\Delta_i}{i \in I_2}}{}
       = \ctxtsum{\Gamma}{\Delta_i}{i \in I}$.
       
       \item $\cbeta' = \cbeta'_1 + \cbeta'_2 + 1 = \cbeta_1
       +_{i \in I_1}{\cbeta_i} + \cbeta_2 +_{i \in I_2}{\cbeta_i} + 1 = \cbeta
       +_{i \in I}{\cbeta_i}$.
       
       \item $\cexp' = \cexp'_1 + \cexp'_2 = \cexp_1
       +_{i \in I_1}{\cexp_i} + \cexp_2 +_{i \in I_2}{\cexp_i} = \cexp
       +_{i \in I}{\cexp_i}$.
       
       \item $\csize' = \csize'_1 + \csize'_2 = \csize_1
       +_{i \in I_1}{\csize_i} + \csize_2 +_{i \in I_2}{\csize_i} = \csize
       +_{i \in I}{\csize_i}$.
    \end{itemize}
    
    \item $\ruleDArrowE$. Then, $\Gamma' = \ctxtsum{\Gamma'_1}{\Gamma'_2}{}$,
    $\cbeta' = \cbeta'_1 + \cbeta'_2 + 1$, $\cexp' = \cexp'_1 + \cexp'_2$ and
    $\csize' = \csize'_1 + \csize'_2$ with premises
    $\derivable{\Phi_{\substitute{x}{u}{t_1}}}{\sequT{\Gamma'_1}{\assign{\substitute{x}{u}{t_1}}{\functtype{\M}{\tau}}}{\cbeta'_1}{\cexp'_1}{\csize'_1}}{\SysTight}$
    and
    $\derivable{\Phi_{\substitute{x}{u}{t_2}}}{\sequT{\Gamma'_2}{\assign{\substitute{x}{u}{t_2}}{\M}}{\cbeta'_2}{\cexp'_2}{\csize'_2}}{\SysTight}$
    for some multiset type $\M$. By \ih there exist derivations
    $\derivable{\Phi_{t_1}}{\sequT{\Gamma_1;\assign{x}{\intertype{\sigma_i}{i \in I_1}}}{\assign{t_1}{\functtype{\M}{\tau}}}{\cbeta_1}{\cexp_1}{\csize_1}}{\SysTight}$, \ \ 
    $\many{\derivable{\Phi^{i}_{u}}{\sequT{\Delta_i}{\assign{u}{\sigma_i}}{\cbeta_i}{\cexp_i}{\csize_i}}{\SysTight}}{i \in I_1}$, \ \ 
    $\derivable{\Phi_{t_2}}{\sequT{\Gamma_2;\assign{x}{\intertype{\sigma_i}{i \in I_2}}}{\assign{t_2}{\M}}{\cbeta_2}{\cexp_2}{\csize_2}}{\SysTight}$ \ \ 
    and \ \ 
    $\many{\derivable{\Phi^{i}_{u}}{\sequT{\Delta_i}{\assign{u}{\sigma_i}}{\cbeta_i}{\cexp_i}{\csize_i}}{\SysTight}}{i \in I_2}$
    (note that $I_1$ and $I_2$ can be assumed disjoint)
    such that $\Gamma'_1 = \ctxtsum{\Gamma_1}{\Delta_i}{i \in I_1}$, $\Gamma'_2
    = \ctxtsum{\Gamma_2}{\Delta_i}{i \in I_2}$, $\cbeta'_1 = \cbeta_1
    +_{i \in I_1}{\cbeta_i}$, $\cbeta'_2 = \cbeta_2 +_{i \in I_2}{\cbeta_i}$,
    $\cexp'_1 = \cexp_1 +_{i \in I_1}{\cexp_i}$, $\cexp'_2 = \cexp_2
    +_{i \in I_2}{\cexp_i}$, $\csize'_1 = \csize_1 +_{i \in I_1}{\csize_i}$ and
    $\csize'_2 = \csize_2 +_{i \in I_2}{\csize_i}$. Then, by $\ruleDArrowE$ we
    obtain the type derivation
    $\derivable{\Phi_{t}}{\sequT{\ctxtsum{\Gamma_1}{\Gamma_2}{};\assign{x}{\intertype{\sigma_i}{i \in I_1 \cup I_2}}}{\assign{\termapp{t_1}{t_2}}{\tau}}{\cbeta_1+\cbeta_2+1}{\cexp_1+\cexp_2}{\csize_1+\csize_2}}{\SysTight}$
    and we conclude with $\Gamma = \ctxtsum{\Gamma_1}{\Gamma_2}{}$, $I = I_1
    \cup I_2$, $\cbeta = \cbeta_1 + \cbeta_2 + 1$, $\cexp = \cexp_1 + \cexp_2$
    and $\csize = \csize_1 + \csize_2$ since:
    \begin{itemize}
       \item $\Gamma' = \ctxtsum{\Gamma'_1}{\Gamma'_2}{} =
       \ctxtsum{\ctxtsum{\Gamma_1}{\Delta_i}{i \in I_1}}{\ctxtsum{\Gamma_2}{\Delta_i}{i \in I_2}}{}
       = \ctxtsum{\Gamma}{\Delta_i}{i \in I}$.
       
       \item $\cbeta' = \cbeta'_1 + \cbeta'_2 + 1 = \cbeta_1
       +_{i \in I_1}{\cbeta_i} + \cbeta_2 +_{i \in I_2}{\cbeta_i} + 1 = \cbeta
       +_{i \in I}{\cbeta_i}$.
       
       \item $\cexp' = \cexp'_1 + \cexp'_2 = \cexp_1
       +_{i \in I_1}{\cexp_i} + \cexp_2 +_{i \in I_2}{\cexp_i} = \cexp
       +_{i \in I}{\cexp_i}$.
       
       \item $\csize' = \csize'_1 + \csize'_2 = \csize_1
       +_{i \in I_1}{\csize_i} + \csize_2 +_{i \in I_2}{\csize_i} = \csize
       +_{i \in I}{\csize_i}$.
    \end{itemize}
  \end{enumerate}
  
  \item $t = \termabs{y}{t'}$. By $\alpha$-conversion we assume $y \neq x$ and $y
  \notin \fv{u}$. Then, $\substitute{x}{u}{t} =
  \termabs{y}{\substitute{x}{u}{t'}}$ and there are two possible cases.
  \begin{enumerate}
    \item $\ruleTArrowI$. Then, $\tau = \typeabs$, $\Gamma' =
    \ctxtres{\Gamma'_1}{y}{}$ and $\csize' = \csize'_1 + 1$ with premise
    $\derivable{\Phi_{\substitute{x}{u}{t'}}}{\sequT{\Gamma'_1}{\assign{\substitute{x}{u}{t'}}{\typetight}}{\cbeta'}{\cexp'}{\csize'_1}}{\SysTight}$
    where $\ptight{\Gamma'_1(y)}$ holds. By \ih there exist
    $\derivable{\Phi_{t'}}{\sequT{\Gamma_1;\assign{x}{\intertype{\sigma_i}{i \in I}}}{\assign{t'}{\typetight}}{\cbeta}{\cexp}{\csize_1}}{\SysTight}$ \ \ 
    and \ \ 
    $\many{\derivable{\Phi^{i}_{u}}{\sequT{\Delta_i}{\assign{u}{\sigma_i}}{\cbeta_i}{\cexp_i}{\csize_i}}{\SysTight}}{i \in I}$ \ \ 
    such that $\Gamma'_1 = \ctxtsum{\Gamma_1}{\Delta_i}{i \in I}$, $\cbeta' =
    \cbeta +_{i \in I}{\cbeta_i}$, $\cexp' = \cexp +_{i \in I}{\cexp_i}$ and
    $\csize'_1 = \csize_1 +_{i \in I}{\csize_i}$. Moreover, $y \notin \fv{u}$
    implies $y \notin \dom{\Delta_i}$ for all $i \in I$ and, hence, $\Gamma'_1(y) =
    \Gamma_1(y)$. Then, $\ptight{\Gamma_1(y)}$ holds as well. Finally, by
    $\ruleTArrowE$ with $y \neq x$ we have
    $\derivable{\Phi_{t}}{\sequT{\ctxtres{\Gamma_1}{y}{};\assign{x}{\intertype{\sigma_i}{i \in I}}}{\assign{\termabs{y}{t'}}{\typeabs}}{\cbeta}{\cexp}{\csize_1+1}}{\SysTight}$
    and we conclude with $\Gamma = \ctxtres{\Gamma_1}{y}{}$ and $\csize =
    \csize_1 + 1$ since:
    \begin{itemize}
       \item $\Gamma' = \ctxtres{\Gamma'_1}{y}{} =
       \ctxtres{(\ctxtsum{\Gamma_1}{\Delta_i}{i \in I})}{y}{} =
       \ctxtsum{\Gamma}{\Delta_i}{i \in I}$.
       
       \item $\cbeta' = \cbeta +_{i \in I}{\cbeta_i}$.
       
       \item $\cexp' = \cexp +_{i \in I}{\cexp_i}$.
       
       \item $\csize' = \csize'_1 + 1 = \csize_1 +_{i \in I}{\csize_i} + 1 =
       \csize +_{i \in I}{\csize_i}$.
    \end{itemize}
    
    \item $\ruleDArrowI$. Then, $\tau = \functtype{\Gamma'_1(y)}{\tau'}$,
    $\Gamma' = \ctxtres{\Gamma'_1}{y}{}$ with premise
    $\derivable{\Phi_{\substitute{x}{u}{t'}}}{\sequT{\Gamma'_1}{\assign{\substitute{x}{u}{t'}}{\tau'}}{\cbeta'}{\cexp'}{\csize'}}{\SysTight}$.
    Then, by \ih there exist type derivations
    $\derivable{\Phi_{t'}}{\sequT{\Gamma_1;\assign{x}{\intertype{\sigma_i}{i \in I}}}{\assign{t'}{\tau'}}{\cbeta}{\cexp}{\csize}}{\SysTight}$
    and
    $\many{\derivable{\Phi^{i}_{u}}{\sequT{\Delta_i}{\assign{u}{\sigma_i}}{\cbeta_i}{\cexp_i}{\csize_i}}{\SysTight}}{i \in I}$
    such that $\Gamma'_1 = \ctxtsum{\Gamma_1}{\Delta_i}{i \in I}$, $\cbeta' =
    \cbeta +_{i \in I}{\cbeta_i}$, $\cexp' = \cexp +_{i \in I}{\cexp_i}$ and
    $\csize' = \csize +_{i \in I}{\csize_i}$. Moreover, $y \notin \fv{u}$
    implies $y \notin \dom{\Delta_i}$ for all $i \in I$ and, hence,
    $\Gamma'_1(y) = \Gamma_1(y)$. Finally, by $\ruleDArrowE$ with $y \neq x$ we
    construct the type derivation
    $\derivable{\Phi_{t}}{\sequT{\ctxtres{\Gamma_1}{y}{};\assign{x}{\intertype{\sigma_i}{i \in I}}}{\assign{\termabs{y}{t'}}{\functtype{\Gamma'_1(y)}{\tau'}}}{\cbeta}{\cexp}{\csize}}{\SysTight}$
    and we conclude with $\Gamma = \ctxtres{\Gamma_1}{y}{}$ since:
    \begin{itemize}
       \item $\Gamma' = \ctxtres{\Gamma'_1}{y}{} =
       \ctxtres{(\ctxtsum{\Gamma_1}{\Delta_i}{i \in I})}{y}{} =
       \ctxtsum{\Gamma}{\Delta_i}{i \in I}$.
       
       \item $\cbeta' = \cbeta +_{i \in I}{\cbeta_i}$.
       
       \item $\cexp' = \cexp +_{i \in I}{\cexp_i}$.
       
       \item $\csize' = \csize +_{i \in I}{\csize_i}$.
    \end{itemize}
  \end{enumerate}
  
  \item $t = \termbang{t'}$. Then, $\substitute{x}{u}{t} =
  \termbang{\substitute{x}{u}{t'}}$ and there are two possible cases.
  \begin{enumerate}
    \item $\ruleTBang$. Then, $\tau = \typebang$, $\Gamma' = \emptyset$ and 
    $\cbeta' = \cexp' = \csize' = 0$. We set $I = \emptyset$ (hence,
    $\intertype{\sigma_i}{i \in I} = \emul$), $\Gamma = \emptyset$ and conclude
    by $\ruleTBang$ with
    $\derivable{\Phi_{t}}{\sequT{}{\assign{t}{\typebang}}{0}{0}{0}}{\SysTight}$.
    
    \item $\ruleDBang$. Then, $\tau = \intertype{\tau_j}{j \in J}$,
    $\Gamma' = \ctxtsum{}{\Gamma'_j}{j \in J}$, $\cbeta' =
    +_{j \in J}{\cbeta'_j}$, $\cexp' = 1 +_{j \in J}{\cexp'_j}$, $\csize' =
    +_{j \in J}{\csize'_j}$ with premise
    $\many{\derivable{\Phi^{j}_{\substitute{x}{u}{t'}}}{\sequT{\Gamma'_j}{\assign{\substitute{x}{u}{t'}}{\tau_j}}{\cbeta'_j}{\cexp'_j}{\csize'_j}}{\SysTight}}{j \in J}$.
    By \ih there exist type derivations
    $\derivable{\Phi^{j}_{t'}}{\sequT{\Gamma_j;\assign{x}{\intertype{\sigma_i}{i \in I_j}}}{\assign{t'}{\tau_j}}{\cbeta_j}{\cexp_j}{\csize_j}}{\SysTight}$
    and
    $\many{\derivable{\Phi^{i}_{u}}{\sequT{\Delta_i}{\assign{u}{\sigma_i}}{\cbeta_i}{\cexp_i}{\csize_i}}{\SysTight}}{i \in I_j}$
    (note that all $I_j$ can be assumed pairwise disjoint)
    such that $\Gamma'_j = \ctxtsum{\Gamma_j}{\Delta_i}{i \in I_j}$, $\cbeta'_j
    = \cbeta_j +_{i \in I_j}{\cbeta_i}$, $\cexp'_j = \cexp_j
    +_{i \in I_j}{\cexp_i}$ and $\csize'_j = \csize_j +_{i \in I_j}{\csize_i}$
    for each $j \in J$. Then, by $\ruleDBang$ we have
    $\derivable{\Phi_{t}}{\sequT{\ctxtsum{}{\Gamma_j;\assign{x}{\intertype{\sigma_i}{i \in I_j}}}{j \in J}}{\assign{\termbang{t'}}{\intertype{\tau_j}{j \in J}}}{+_{j \in J}{\cbeta_j}}{1+_{j \in J}{\cexp_j}}{+_{j \in J}{\csize_j}}}{\SysTight}$
    and we conclude with $\Gamma = \ctxtsum{}{\Gamma_j}{j \in J}$,
    $I = \bigcup_{j \in J}{I_j}$, $\cbeta = +_{j \in J}{\cbeta_j}$,
    $\cexp = 1 +_{j \in J}{\cexp_j}$, $\csize = +_{j \in J}{\csize_j}$ since:
    \begin{itemize}
       \item $\Gamma' = \ctxtsum{}{\Gamma'_j}{j \in J} =
       \ctxtsum{}{(\ctxtsum{\Gamma_j}{\Delta_i}{i \in I_j})}{j \in J} =
       \ctxtsum{\Gamma}{\Delta_i}{i \in I}$.
       
       \item $\cbeta' = +_{j \in J}{\cbeta'_j} =
       +_{j \in J}{(\cbeta_j +_{i \in I_j}{\cbeta_i})} = \cbeta
       +_{i \in I}{\cbeta_i}$.
       
       \item $\cexp' = 1 +_{j \in J}{\cexp'_j} = 1
       +_{j \in J}{(\cexp_j +_{i \in I_j}{\cexp_i})} = \cexp
       +_{i \in I}{\cexp_i}$.
       
       \item $\csize' = +_{j \in J}{\csize'_j} =
       +_{j \in J}{(\csize_j +_{i \in I_j}{\csize_i})} = \csize
       +_{i \in I}{\csize_i}$.
    \end{itemize}
  \end{enumerate}
  
  \item $t = \termder{t'}$. Then, $\substitute{x}{u}{t} =
  \termder{(\substitute{x}{u}{t'})}$ and there are two possible cases.
  \begin{enumerate}
    \item $\ruleTDer$. Then, $\tau = \typeneutral$ and $\csize' = \csize'_1 +
    1$ with a type derivation for the premise
    $\derivable{\Phi_{\substitute{x}{u}{t'}}}{\sequT{\Gamma'}{\assign{\substitute{x}{u}{t'}}{\typeneutral}}{\cbeta'}{\cexp'}{\csize'_1}}{\SysTight}$.
    By \ih there exist type derivations
    $\derivable{\Phi_{t'}}{\sequT{\Gamma;\assign{x}{\intertype{\sigma_i}{i \in I}}}{\assign{t'}{\typeneutral}}{\cbeta}{\cexp}{\csize_1}}{\SysTight}$
    and
    $\many{\derivable{\Phi^{i}_{u}}{\sequT{\Delta_i}{\assign{u}{\sigma_i}}{\cbeta_i}{\cexp_i}{\csize_i}}{\SysTight}}{i \in I}$
    such that $\Gamma' = \ctxtsum{\Gamma}{\Delta_i}{i \in I}$, $\cbeta' =
    \cbeta +_{i \in I}{\cbeta_i}$, $\cexp' = \cexp +_{i \in I}{\cexp_i}$ and
    $\csize'_1 = \csize_1 +_{i \in I}{\csize_i}$. Finally, we conclude by
    $\ruleTDer$ with
    $\derivable{\Phi_{t}}{\sequT{\Gamma;\assign{x}{\intertype{\sigma_i}{i \in I}}}{\assign{\termder{t'}}{\typeneutral}}{\cbeta}{\cexp}{\csize_1+1}}{\SysTight}$
    and $\csize = \csize_1 + 1$.
    
    \item $\ruleDDer$. Then, we have the premise
    $\derivable{\Phi_{\substitute{x}{u}{t'}}}{\sequT{\Gamma'}{\assign{\substitute{x}{u}{t'}}{\multiset{\tau}}}{\cbeta'}{\cexp'}{\csize'}}{\SysTight}$.
    By \ih there exist type derivations
    $\derivable{\Phi_{t'}}{\sequT{\Gamma;\assign{x}{\intertype{\sigma_i}{i \in I}}}{\assign{t'}{\multiset{\tau}}}{\cbeta}{\cexp}{\csize}}{\SysTight}$
    and
    $\many{\derivable{\Phi^{i}_{u}}{\sequT{\Delta_i}{\assign{u}{\sigma_i}}{\cbeta_i}{\cexp_i}{\csize_i}}{\SysTight}}{i \in I}$
    such that $\Gamma' = \ctxtsum{\Gamma}{\Delta_i}{i \in I}$, $\cbeta' =
    \cbeta +_{i \in I}{\cbeta_i}$, $\cexp' = \cexp +_{i \in I}{\cexp_i}$ and
    $\csize' = \csize +_{i \in I}{\csize_i}$. Finally, we conclude by
    $\ruleDDer$ with
    $\derivable{\Phi_{t}}{\sequT{\Gamma;\assign{x}{\intertype{\sigma_i}{i \in I}}}{\assign{\termder{t'}}{\tau}}{\cbeta}{\cexp}{\csize}}{\SysTight}$.
  \end{enumerate}
  
  \item $t = \termsubs{y}{t_2}{t_1}$. By $\alpha$-conversion we assume $y \neq x$,
  $y \notin \fv{t_2}$ and $y \notin \fv{u}$. Then, $\substitute{x}{u}{t} =
  \termsubs{y}{\substitute{x}{u}{t_2}}{\substitute{x}{u}{t_1}}$ and there are two
  possible cases.
  \begin{enumerate}
    \item $\ruleTESubs$. Then, $\Gamma' =
    \ctxtsum{\ctxtres{\Gamma'_1}{y}{}}{\Gamma'_2}{}$, $\cbeta' = \cbeta'_1 +
    \cbeta'_2$, $\cexp' = \cexp'_1 + \cexp'_2$ and $\csize' = \csize'_1 +
    \csize'_2 $ with premises
    $\derivable{\Phi_{\substitute{x}{u}{t_1}}}{\sequT{\Gamma'_1}{\assign{\substitute{x}{u}{t_1}}{\tau}}{\cbeta'_1}{\cexp'_1}{\csize'_1}}{\SysTight}$
    and
    $\derivable{\Phi_{\substitute{x}{u}{t_2}}}{\sequT{\Gamma'_2}{\assign{\substitute{x}{u}{t_2}}{\typeneutral}}{\cbeta'_2}{\cexp'_2}{\csize'_2}}{\SysTight}$
    where $\ptight{\Gamma'_1(y)}$ holds. By \ih there exist
    $\derivable{\Phi_{t_1}}{\sequT{\Gamma_1;\assign{x}{\intertype{\sigma_i}{i \in I_1}}}{\assign{t_1}{\tau}}{\cbeta_1}{\cexp_1}{\csize_1}}{\SysTight}$, \ \ 
    $\many{\derivable{\Phi^{i}_{u}}{\sequT{\Delta_i}{\assign{u}{\sigma_i}}{\cbeta_i}{\cexp_i}{\csize_i}}{\SysTight}}{i \in I_1}$, \ \ 
    $\derivable{\Phi_{t_2}}{\sequT{\Gamma_2;\assign{x}{\intertype{\sigma_i}{i \in I_2}}}{\assign{t_2}{\typeneutral}}{\cbeta_2}{\cexp_2}{\csize_2}}{\SysTight}$ \ \ 
    and
    $\many{\derivable{\Phi^{i}_{u}}{\sequT{\Delta_i}{\assign{u}{\sigma_i}}{\cbeta_i}{\cexp_i}{\csize_i}}{\SysTight}}{i \in I_2}$
    (note that $I_1$ and $I_2$ can be assumed disjoint)
    such that $\Gamma'_1 = \ctxtsum{\Gamma_1}{\Delta_i}{i \in I_1}$, $\Gamma'_2
    = \ctxtsum{\Gamma_2}{\Delta_i}{i \in I_2}$, $\cbeta'_1 = \cbeta_1
    +_{i \in I_1}{\cbeta_i}$, $\cbeta'_2 = \cbeta_2 +_{i \in I_2}{\cbeta_i}$,
    $\cexp'_1 = \cexp_1 +_{i \in I_1}{\cexp_i}$, $\cexp'_2 = \cexp_2
    +_{i \in I_2}{\cexp_i}$, $\csize'_1 = \csize_1 +_{i \in I_1}{\csize_i}$ and
    $\csize'_2 = \csize_2 +_{i \in I_2}{\csize_i}$. Moreover, $y \notin \fv{u}$
    implies $y \notin \dom{\Delta_i}$ for all $i \in I_1 \cup I_2$ while $y \notin
    \fv{t_2}$ implies $y \notin \dom{\Gamma_2}$. Hence, $\Gamma'_1(y) = \Gamma_1(y)$
    and $\ptight{\Gamma_1(y)}$ holds as well. Finally, by $\ruleTESubs$ with
    $y \neq x$ we construct the type derivation
    $\derivable{\Phi_{t}}{\sequT{\ctxtsum{\ctxtres{\Gamma_1}{y}{}}{\Gamma_2}{};\assign{x}{\intertype{\sigma_i}{i \in I_1 \cup I_2}}}{\assign{\termsubs{y}{t_2}{t_1}}{\tau}}{\cbeta_1+\cbeta_2}{\cexp_1+\cexp_2}{\csize_1+\csize_2}}{\SysTight}$
    and we conclude with $\Gamma =
    \ctxtsum{\ctxtres{\Gamma_1}{y}{}}{\Gamma_2}{}$, $I = I_1 \cup I_2$, $\cbeta =
    \cbeta_1 + \cbeta_2$, $\cexp = \cexp_1 + \cexp_2$ and $\csize = \csize_1 +
    \csize_2 $ since:
    \begin{itemize}
       \item $\Gamma' = \ctxtsum{\ctxtres{\Gamma'_1}{y}{}}{\Gamma'_2}{} =
       \ctxtsum{\ctxtres{(\ctxtsum{\Gamma_1}{\Delta_i}{i \in I_1})}{y}{}}{\ctxtsum{\Gamma_2}{\Delta_i}{i \in I_2}}{}
       $, and this last context is equal to $\ctxtsum{\ctxtsum{\ctxtres{\Gamma_1}{y}{}}{\Gamma_2}{}}{\Delta_i}{i \in I}
       = \ctxtsum{\Gamma}{\Delta_i}{i \in I}$.
       
       \item $\cbeta' = \cbeta'_1 + \cbeta'_2 = \cbeta_1
       +_{i \in I_1}{\cbeta_i} + \cbeta_2 +_{i \in I_2}{\cbeta_i} = \cbeta
       +_{i \in I}{\cbeta_i}$.
       
       \item $\cexp' = \cexp'_1 + \cexp'_2 = \cexp_1
       +_{i \in I_1}{\cexp_i} + \cexp_2 +_{i \in I_2}{\cexp_i} = \cexp
       +_{i \in I}{\cexp_i}$.
       
       \item $\csize' = \csize'_1 + \csize'_2 = \csize_1
       +_{i \in I_1}{\csize_i} + \csize_2 +_{i \in I_2}{\csize_i} = \csize
       +_{i \in I}{\csize_i}$.
    \end{itemize}
    
    \item $\ruleDESubs$. Then, $\Gamma' =
    \ctxtsum{\ctxtres{\Gamma'_1}{y}{}}{\Gamma'_2}{}$, $\cbeta' = \cbeta'_1 +
    \cbeta'_2$, $\cexp' = \cexp'_1 + \cexp'_2$ and $\csize' = \csize'_1 +
    \csize'_2$ with premises
    $\derivable{\Phi_{\substitute{x}{u}{t_1}}}{\sequT{\Gamma'_1}{\assign{\substitute{x}{u}{t_1}}{\tau}}{\cbeta'_1}{\cexp'_1}{\csize'_1}}{\SysTight}$
    and
    $\derivable{\Phi_{\substitute{x}{u}{t_2}}}{\sequT{\Gamma'_2}{\assign{\substitute{x}{u}{t_2}}{\Gamma'_1(y)}}{\cbeta'_2}{\cexp'_2}{\csize'_2}}{\SysTight}$.
    Thus, by \ih there exist type derivations
    $\derivable{\Phi_{t_1}}{\sequT{\Gamma_1;\assign{x}{\intertype{\sigma_i}{i \in I_1}}}{\assign{t_1}{\tau}}{\cbeta_1}{\cexp_1}{\csize_1}}{\SysTight}$, \ \ 
    $\many{\derivable{\Phi^{i}_{u}}{\sequT{\Delta_i}{\assign{u}{\sigma_i}}{\cbeta_i}{\cexp_i}{\csize_i}}{\SysTight}}{i \in I_1}$, \ \ 
    $\derivable{\Phi_{t_2}}{\sequT{\Gamma_2;\assign{x}{\intertype{\sigma_i}{i \in I_2}}}{\assign{t_2}{\Gamma'_1(y)}}{\cbeta_2}{\cexp_2}{\csize_2}}{\SysTight}$
    and
    $\many{\derivable{\Phi^{i}_{u}}{\sequT{\Delta_i}{\assign{u}{\sigma_i}}{\cbeta_i}{\cexp_i}{\csize_i}}{\SysTight}}{i \in I_2}$
    (note that $I_1$ and $I_2$ can be assumed disjoint)
    such that $\Gamma'_1 = \ctxtsum{\Gamma_1}{\Delta_i}{i \in I_1}$, $\Gamma'_2
    = \ctxtsum{\Gamma_2}{\Delta_i}{i \in I_2}$, $\cbeta'_1 = \cbeta_1
    +_{i \in I_1}{\cbeta_i}$, $\cbeta'_2 = \cbeta_2 +_{i \in I_2}{\cbeta_i}$,
    $\cexp'_1 = \cexp_1 +_{i \in I_1}{\cexp_i}$, $\cexp'_2 = \cexp_2
    +_{i \in I_2}{\cexp_i}$, $\csize'_1 = \csize_1 +_{i \in I_1}{\csize_i}$ and
    $\csize'_2 = \csize_2 +_{i \in I_2}{\csize_i}$. Moreover, $y \notin \fv{u}$
    implies $y \notin \dom{\Delta_i}$ for all $i \in I_1 \cup I_2$ while $y \notin
    \fv{t_2}$ implies $y \notin \dom{\Gamma_2}$. Hence, $\Gamma'_1(y) = \Gamma_1(y)$.
    Finally, by $\ruleDESubs$ with $y \neq x$ we have
    $\derivable{\Phi_{t}}{\sequT{\ctxtsum{\ctxtres{\Gamma_1}{y}{}}{\Gamma_2}{};\assign{x}{\intertype{\sigma_i}{i \in I_1 \cup I_2}}}{\assign{\termsubs{y}{t_2}{t_1}}{\tau}}{\cbeta_1+\cbeta_2}{\cexp_1+\cexp_2}{\csize_1+\csize_2}}{\SysTight}$
    and we conclude with $\Gamma =
    \ctxtsum{\ctxtres{\Gamma_1}{y}{}}{\Gamma_2}{}$, $I = I_1 \cup I_2$, $\cbeta =
    \cbeta_1 + \cbeta_2$, $\cexp = \cexp_1 + \cexp_2$ and $\csize = \csize_1 +
    \csize_2$ since:
    \begin{itemize}
       \item $\Gamma' = \ctxtsum{\ctxtres{\Gamma'_1}{y}{}}{\Gamma'_2}{} =
       \ctxtsum{\ctxtres{(\ctxtsum{\Gamma_1}{\Delta_i}{i \in I_1})}{y}{}}{\ctxtsum{\Gamma_2}{\Delta_i}{i \in I_2}}{}
       $ and this last context is equal to $
       \ctxtsum{\ctxtsum{\ctxtres{\Gamma_1}{y}{}}{\Gamma_2}{}}{\Delta_i}{i \in I}
       = \ctxtsum{\Gamma}{\Delta_i}{i \in I}$.
       
       \item $\cbeta' = \cbeta'_1 + \cbeta'_2 = \cbeta_1
       +_{i \in I_1}{\cbeta_i} + \cbeta_2 +_{i \in I_2}{\cbeta_i} = \cbeta
       +_{i \in I}{\cbeta_i}$.
       
       \item $\cexp' = \cexp'_1 + \cexp'_2 = \cexp_1
       +_{i \in I_1}{\cexp_i} + \cexp_2 +_{i \in I_2}{\cexp_i} = \cexp
       +_{i \in I}{\cexp_i}$.
       
       \item $\csize' = \csize'_1 + \csize'_2 = \csize_1
       +_{i \in I_1}{\csize_i} + \csize_2 +_{i \in I_2}{\csize_i} = \csize
       +_{i \in I}{\csize_i}$.
    \end{itemize}
  \end{enumerate}
\end{itemize}
\end{proof}

\begin{lemma}[Exact Subject Expansion]
\label{l:QSE-tight}
Let
$\derivable{\Phi'}{\sequT{\Gamma}{\assign{t'}{\sigma}}{\cbeta'}{\cexp'}{\csize}}{\SysTight}$
be a tight derivation. If $t \lored t'$, then there exists
$\derivable{\Phi}{\sequT{\Gamma}{\assign{t}{\sigma}}{\cbeta}{\cexp}{\csize}}{\SysTight}$
such that
\begin{itemize}
  \item $\cbeta' = \cbeta - 1$ and $\cexp' = \cexp$ if $t \lored t'$ is an
  $\mStep$-step.
  \item $\cexp' = \cexp - 1$ and $\cbeta' = \cbeta$ if $t \lored t'$ is an
  $\eStep$-step.
\end{itemize}
\end{lemma}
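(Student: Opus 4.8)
The plan is to prove Exact Subject Expansion by induction on the derivation of $t \lored t'$, mirroring the Exact Subject Reduction proof (Lem.~\ref{l:QSR}) step by step but reading every typing rule backwards; the preceding Anti-Substitution lemma now plays the role that the Substitution lemma (Lem.~\ref{l:substitution-tight}) played for subject reduction. As there, the cases split into the three root reductions---each with an inner induction on the list context $\ctxt{L}$---and the congruence cases.

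First I would treat the root reductions. For the $\dBeta$-step ($\mStep$-step) $t = \termapp{\ctxtapp{\ctxt{L}}{\termabs{x}{u}}}{v} \lored \ctxtapp{\ctxt{L}}{\termsubs{x}{v}{u}} = t'$ with $\ctxt{L} = \Box$, I invert the tight derivation $\Phi'$ of $\termsubs{x}{v}{u}$ through the consuming rule $\ruleDESubs$, extracting a derivation of $u$ (with $x$ typed by some multiset $\M$) and one of $v : \M$; I then reassemble by inserting $\ruleDArrowI$ on the premise typing $u$ (which produces $\termabs{x}{u} : \functtype{\M}{\tau}$ and raises $\cbeta$ by exactly one) and closing with $\ruleDArrowE$, so that $\cbeta = \cbeta'+1$ and $\cexp = \cexp'$, with $\csize$ unchanged. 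The exponential root cases are smoother, since the reassembly is forced to be consuming: for $\sBang$ with $\ctxt{L} = \Box$, $t = \termsubs{x}{\termbang{u}}{s} \lored \substitute{x}{u}{s} = t'$, I apply the Anti-Substitution lemma to $\Phi'$ to obtain $\Phi_s$ typing $s$ with $x : \intertype{\sigma_i}{\iI}$ and a family typing $u : \sigma_i$, with additive splitting of the counters, then rebuild $\termbang{u} : \intertype{\sigma_i}{\iI}$ with $\ruleDBang$ (which adds exactly one to $\cexp$) and close with $\ruleDESubs$; for $\dBang$ I simply wrap the tight derivation of the body under $\ruleDBang$ and $\ruleDDer$, again raising $\cexp$ by one. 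In each root case the layer $\ctxt{L} = \termsubs{y}{r}{\ctxt{L'}}$ follows immediately from the inner induction hypothesis after peeling the outer explicit substitution, exactly as in the $\ruleDESubs$-layering of Lem.~\ref{l:QSR}.

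The congruence cases are then dispatched by the outer induction hypothesis. For each, the side condition carried by the corresponding strategy rule ($\neg\pbang{t}$, $\neg\pabs{t}$, $\pnabs{v}$ or $\pnbang{v}$) selects which persistent or consuming rule can conclude $\Phi'$, and Rem.~\ref{r:typing} together with Tight Spreading (Lem.~\ref{l:tight-spreading}) discards the impossible typings---just as in the corresponding cases of Lem.~\ref{l:QSR}---so that the derivation of the reduced subterm can be expanded in place and the counters propagate unchanged.

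The hard part will be the root $\dBeta$-case, where I must guarantee that $\Phi'$ is inverted through the consuming rule $\ruleDESubs$ and not through the persistent $\ruleTESubs$. Only a consuming substitution can be un-reduced into a consuming $\dBeta$-redex, because the abstraction reappearing on the left of the application can never carry the type $\typeneutral$ that the persistent application rule $\ruleTArrowE$ would require (Rem.~\ref{r:typing}), forcing the argument $v$ to receive the multiset type needed to rebuild the redex. Showing that the persistent reading is excluded here is the exact mirror of the ``$\Phi$ has necessarily the following form'' step in the proof of Lem.~\ref{l:QSR}, and it is where the interplay between the tightness constraints on $\Gamma$ and $\sigma$ and the consuming/persistent dichotomy must be pinned down most carefully.
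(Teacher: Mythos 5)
Your overall route is the one the paper intends --- induction on $t \lored t'$, root cases layered over $\ctxt{L}$, reassembly through the consuming rules, congruence cases by the \ih --- and in one respect your write-up is more coherent than the paper's: the printed proof of this lemma in the paper is essentially a verbatim copy of the proof of Lem.~\ref{l:QSR} (its displayed derivations go in the reduction direction, and in the $\sBang$ root case it even invokes the Substitution lemma, Lem.~\ref{l:substitution-tight}, where expansion needs exactly the Anti-Substitution lemma you use). Your $\sBang$, $\dBang$ and congruence cases are fine.

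The genuine gap is precisely where you locate ``the hard part'', and the argument you sketch for it cannot be made to work. You need: \emph{every} tight derivation $\Phi'$ of $t' = \termsubs{x}{v}{u}$ ends with the consuming rule $\ruleDESubs$. What you offer is that the redex $\termapp{(\termabs{x}{u})}{v}$ can only be retyped through $\ruleDArrowI$/$\ruleDArrowE$, since by Rem.~\ref{r:typing} an abstraction never has type $\typeneutral$, so $v$ is ``forced'' to receive a multiset type. But this reasons about the derivation $\Phi$ you are trying to build, not about the given $\Phi'$: it shows that \emph{if} $\Phi$ exists then the induced derivation of $t'$ is consuming, which is the converse of what the expansion step needs. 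The exclusion of $\ruleTESubs$ is in fact not provable, because it is false. Take $t = \termapp{(\termabs{x}{x})}{y} \lored \termsubs{x}{y}{x} = t'$, and let $\Phi'$ be the derivation of $\sequT{\assign{y}{\intertype{\typeneutral}{}}}{\assign{\termsubs{x}{y}{x}}{\typeneutral}}{0}{0}{1}$ obtained from two instances of $\ruleAxiom$ by the persistent rule $\ruleTESubs$; it is tight. Yet there is \emph{no} derivation of $\termapp{(\termabs{x}{x})}{y}$ with context $\assign{y}{\intertype{\typeneutral}{}}$ and type $\typeneutral$ at all: $\ruleTArrowE$ would require $\termabs{x}{x}$ to have type $\typeneutral$, impossible for an abstraction, while $\ruleDArrowE$ would require $y$ to be typed with a multiset type, which for a variable can only come from $\ruleAxiom$ and therefore places that multiset --- not $\typeneutral$ --- in the context at $y$. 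So in this case the $\Phi$ demanded by the statement simply does not exist, and no refinement of the inversion step can repair the proof as written; the problematic terms are $\dBeta$-redexes whose argument is not a bang, which the weak-clash-free machinery does not exclude. Note that this defect is inherited from the paper itself, whose proof silently performs the same consuming inversion; fixing it requires strengthening the hypotheses of the lemma (and of Thm.~\ref{t:completeness}) or changing the system, not just more care in your case analysis.
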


\begin{proof}
By induction on $t \lored t'$.
\begin{itemize}
  \item $t = \termapp{\ctxtapp{\ctxt{L}}{\termabs{x}{s}}}{u} \lored
  \ctxtapp{\ctxt{L}}{\termsubs{x}{u}{s}} = t'$. We reason by induction on
  $\ctxt{L}$.
  \begin{itemize}
    \item $\ctxt{L} = \Box$. There are two cases depending on the last rule
    of $\Phi'$.
    \begin{enumerate}
      \item $\ruleTESubs$. Then, $\Phi'$ has the following form: \[
\Rule{
  \sequT{\Gamma'}{\assign{s}{\sigma}}{\cbeta_s}{\cexp_s}{\csize_s}
  \quad
  \sequT{\Delta}{\assign{u}{\typeneutral}}{\cbeta_u}{\cexp_u}{\csize_u}
  \quad
  \ptight{\Gamma'(x)}
}{
  \sequT{\ctxtsum{\ctxtres{\Gamma'}{x}{}}{\Delta}{}}{\assign{\termsubs{x}{u}{s}}{\sigma}}{\cbeta_s+\cbeta_u}{\cexp_s+\cexp_u}{\csize_s+\csize_u}
}{\ruleTESubs}
      \] We can then construct the following derivation $\Phi$: \[\kern-4em
\Rule{
  \Rule{
    \sequT{\Gamma'}{\assign{s}{\sigma}}{\cbeta_s}{\cexp_s}{\csize_s}
  }{
    \sequT{\ctxtres{\Gamma'}{x}{}}{\assign{\termabs{x}{s}}{\functtype{\Gamma'(x)}{\sigma}}}{\cbeta_s}{\cexp_s}{\csize_s}
  }{\ruleDArrowI}
  \quad
  \sequT{\Delta}{\assign{u}{\typeneutral}}{\cbeta_u}{\cexp_u}{\csize_u}
  \quad
  \ptight{\Gamma'(x)}
}{
  \sequT{\ctxtsum{\ctxtres{\Gamma'}{x}{}}{\Delta}{}}{\assign{\termapp{(\termabs{x}{s})}{u}}{\sigma}}{\cbeta_s+\cbeta_u+1}{\cexp_s+\cexp_u}{\csize_s+\csize_u}
}{\ruleBTApp}
      \] The counters verify the expected property.
      
      \item $\ruleDESubs$. Then, $\Phi'$ has the following form: \[
\Rule{
  \sequT{\Gamma'}{\assign{s}{\sigma}}{\cbeta_s}{\cexp_s}{\csize_s}
  \quad
  \sequT{\Delta}{\assign{u}{\Gamma'(x)}}{\cbeta_u}{\cexp_u}{\csize_u}
}{
  \sequT{\ctxtsum{\ctxtres{\Gamma'}{x}{}}{\Delta}{}}{\assign{\termsubs{x}{u}{s}}{\sigma}}{\cbeta_s+\cbeta_u}{\cexp_s+\cexp_u}{\csize_s+\csize_u}
}{\ruleDESubs}
      \] We can then construct the following derivation $\Phi$: \[
\Rule{
  \Rule{
    \sequT{\Gamma'}{\assign{s}{\sigma}}{\cbeta_s}{\cexp_s}{\csize_s}
  }{
    \sequT{\ctxtres{\Gamma'}{x}{}}{\assign{\termabs{x}{s}}{\functtype{\Gamma'(x)}{\sigma}}}{\cbeta_s}{\cexp_s}{\csize_s}
  }{\ruleDArrowI}
  \quad
  \sequT{\Delta}{\assign{u}{\Gamma'(x)}}{\cbeta_u}{\cexp_u}{\csize_u}
}{
  \sequT{\ctxtsum{\ctxtres{\Gamma'}{x}{}}{\Delta}{}}{\assign{\termapp{(\termabs{x}{s})}{u}}{\sigma}}{\cbeta_s+\cbeta_u+1}{\cexp_s+\cexp_u}{\csize_s+\csize_u}
}{\ruleDArrowE}
      \] The counters verify the expected property.
    \end{enumerate}
    
    \item $\ctxt{L}= \termsubs{y}{r}{\ctxt{L'}}$. Immediate from the \ih
  \end{itemize}
  
  \item $t = \termsubs{x}{\ctxtapp{\ctxt{L}}{\termbang{u}}}{s} \lored
  \ctxtapp{\ctxt{L}}{\substitute{x}{u}{s}} = t'$. We reason by induction on
  $\ctxt{L}$.
  \begin{itemize}
    \item $\ctxt{L} = \Box$. Then $\Phi$ has the form
    $\sequT{\Gamma}{\assign{\substitute{x}{u}{s}}{\sigma}}{\cbeta'}{\cexp'}{\csize}$.
    By Lemma~\ref{l:bang:anti-substitution-tight} there exist
    $\sequT{\Gamma';\assign{x}{\intertype{\sigma_i}{i \in I}}}{\assign{s}{\sigma}}{\cbeta_s}{\cexp_s}{\csize_s}$
    and
    $\many{\sequT{\Delta_i}{\assign{u}{\sigma_i}}{\cbeta_i}{\cexp_i}{\csize_i}}{i \in I}$
    such that $\Gamma = \ctxtsum{\Gamma'}{\Delta_i}{i \in I}$,
    $\cbeta' = \cbeta_s +_{i \in I}{\cbeta_i}$, $\cexp' = \cexp_s +_{i \in I}{\cexp_i}$ and
    $\csize = \csize_s +_{i \in I}{\csize_i}$.
    We can then construct the following derivation $\Phi$: \[\kern-2em
\Rule{
  \sequT{\Gamma';\assign{x}{\intertype{\sigma_i}{i \in I}}}{\assign{s}{\sigma}}{\cbeta_s}{\cexp_s}{\csize_s}
  \quad
  \Rule{
    (\sequT{\Delta_i}{\assign{u}{\sigma_i}}{\cbeta_i}{\cexp_i}{\csize_i})_{i \in I}
  }{
    \sequT{\ctxtsum{}{\Delta_i}{i \in I}}{\assign{\termbang{u}}{\intertype{\sigma_i}{i \in I}}}{+_{i \in I}{\cbeta_i}}{1+_{i \in I}{\cexp_i}}{+_{i \in I}{\csize_i}}
  }{\ruleDBang}
}{
  \sequT{\ctxtsum{\Gamma'}{\Delta_i}{i \in I}}{\assign{\termsubs{x}{\termbang{u}}{s}}{\sigma}}{\cbeta_s+_{i \in I}{\cbeta_i}}{1+\cexp_s+_{i \in I}{\cexp_i}}{\csize_s+_{i \in I}{\csize_i}}
}{\ruleDESubs}
    \] The counters verify the expected property.
    
    \item $\ctxt{L}= \termsubs{y}{r}{\ctxt{L'}}$. Immediate from the \ih
  \end{itemize}
  
  \item $t=\termder{(\ctxtapp{\ctxt{L}}{\termbang{s}})} \lored \ctxtapp{\ctxt{L}}{s} = t'$. We reason by induction on $\ctxt{L}$.
  \begin{itemize}
    \item $\ctxt{L} = \Box$. Then, $t' = s$ and from $\Phi'$ we construct the
    following derivation: \[
\Rule{
  \Rule{
    \sequT{\Gamma}{\assign{s}{\sigma}}{\cbeta'}{\cexp'}{\csize}
  }{
    \sequT{\Gamma}{\assign{\termbang{s}}{\intertype{\sigma}{}}}{\cbeta'}{1+\cexp'}{\csize}
  }{\ruleDBang}
}{
  \sequT{\Gamma}{\assign{\termder{\termbang{s}}}{\sigma}}{\cbeta'}{1+\cexp'}{\csize}}{\ruleDDer}
    \] We conclude since the counters verify the expected property.
    
    \item $\ctxt{L}= \termsubs{y}{r}{\ctxt{L'}}$. Immediate from the \ih
  \end{itemize}
  
  \item All the inductive cases for internal reductions are straightforward by \ih
  \end{itemize}
\end{proof}

\begin{theorem}[Completeness for System $\SysTight$]
\label{t:completeness}
If $t \rewriten{\bangweak}^{(\cbeta,\cexp)} p$ with
$\cfnrml{p}$, then there exists a tight type
derivation
$\derivable{\Phi}{\sequT{\Gamma}{\assign{t}{\sigma}}{\cbeta}{\cexp}{\wsize{p}}}{\SysTight}$.
\end{theorem}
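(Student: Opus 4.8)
The plan is to mirror the structure of the soundness proof (Thm.~\ref{t:correctness}), running the argument backwards along the reduction and replacing exact subject reduction by exact subject expansion. First I would reduce the statement to the deterministic strategy $\lored$: although the hypothesis is phrased with the general relation $\rewrite{\bangweak}$, Thm.~\ref{t:confluence} guarantees both that $p$ is the unique normal form reachable from $t$ and that every reduction path to it has the same length, so that $t \loredn^{(\cbeta,\cexp)} p$ with exactly the same pair of counters. The separate split between $\mStep$- and $\eStep$-steps is preserved because Lem.~\ref{l:reduction-length} tracks both components of the counter, not merely their sum. This lets me work with $\lored$, for which exact subject expansion (Lem.~\ref{l:QSE-tight}) is available.

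I would then proceed by induction on $\cbeta + \cexp$, i.e.\ on the length of the deterministic reduction $t \loredn^{(\cbeta,\cexp)} p$. For the base case $\cbeta + \cexp = 0$, we have $t = p$ and $\cfnrml{t}$ by hypothesis, so Lem.~\ref{l:typing-normal-clash-free} directly provides a tight derivation $\derivable{\Phi}{\sequT{\Gamma}{\assign{t}{\sigma}}{0}{0}{\wsize{t}}}{\SysTight}$, which is exactly what is required since $\wsize{p} = \wsize{t}$.

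For the inductive case $\cbeta + \cexp > 0$, the term $t$ is not $\lored$-normal (by Prop.~\ref{l:equivalence}), so the deterministic strategy fires a unique first step $t \lored t'$, and $t' \loredn^{(\cbeta',\cexp')} p$ where $(\cbeta',\cexp')$ equals $(\cbeta-1,\cexp)$ if the step is an $\mStep$-step and $(\cbeta,\cexp-1)$ if it is an $\eStep$-step. The induction hypothesis applied to $t'$ (which still normalises to the same $p$, hence still satisfies $\cfnrml{p}$) yields a tight derivation $\derivable{\Phi'}{\sequT{\Gamma}{\assign{t'}{\sigma}}{\cbeta'}{\cexp'}{\wsize{p}}}{\SysTight}$. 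Feeding this into Lem.~\ref{l:QSE-tight} for the step $t \lored t'$ produces $\derivable{\Phi}{\sequT{\Gamma}{\assign{t}{\sigma}}{\cbeta}{\cexp}{\wsize{p}}}{\SysTight}$, with the first two counters incremented back by exactly one in the right component; since the context $\Gamma$ and the type $\sigma$ are unchanged, $\Phi$ remains tight, closing the induction.

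The hard part will be the first paragraph rather than the induction itself: one must check that passing from an arbitrary $\rewrite{\bangweak}$-path to the canonical $\lored$-path preserves not just the total number of steps but the separate counts of multiplicative and exponential steps, and that the deterministic strategy terminates at the very same $p$. Both points rest on the quantitative diamond-style result Lem.~\ref{l:reduction-length} and its corollary Thm.~\ref{t:confluence}. Once this reduction-to-strategy step is justified, the remaining work is the routine induction above combining Lem.~\ref{l:typing-normal-clash-free} and Lem.~\ref{l:QSE-tight}.
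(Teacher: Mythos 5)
Your proposal is correct and takes essentially the same route as the paper's own proof: reduce to the deterministic strategy $\lored$ via Thm.~\ref{t:confluence} (whose counter-preservation rests on Lem.~\ref{l:reduction-length} tracking the pair $(\cbeta,\cexp)$, not just the sum), then induct on $\cbeta+\cexp$, using Lem.~\ref{l:typing-normal-clash-free} in the base case and exact subject expansion (Lem.~\ref{l:QSE-tight}) in the inductive step. If anything, your explicit case analysis on whether the first step is an $\mStep$- or $\eStep$-step is slightly more detailed than the paper's, which only records the equality of sums and defers the componentwise equality to ``a simple case analysis.''
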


\begin{proof}
We prove the statement for $\loredn$ and then conclude for the general notion
of reduction $\rewrite{\bangweak}$ by Theorem~\ref{t:confluence}. Let
$t \loredn^{(\cbeta,\cexp)} p$. We proceed by induction on $\cbeta+\cexp$.

If $\cbeta+\cexp = 0$, then $\cbeta = \cexp = 0$ and thus $t = p$, which
implies $\cfnrml{t}$. Lemma~\ref{l:typing-normal-clash-free} allows us to
conclude.

If $\cbeta+\cexp > 0$, then there exists $t'$ such that $t \loredn^{(1,0)} t'
\loredn^{(\cbeta-1,\cexp)} p$ or $t \loredn^{(0,1)} t'
\loredn^{(\cbeta,\cexp-1)} p$. By \ih there exists a tight derivation
$\derivable{\Phi'}{\sequT{\Gamma}{\assign{t'}{\sigma}}{\cbeta'}{\cexp'}{\wsize{p}}}{\SysTight}$
such that $\cbeta' + \cexp' = \cbeta + \cexp -1$. Lemma~\ref{l:QSE-tight} gives a
tight derivation
$\derivable{\Phi}{\sequT{\Gamma}{\assign{t}{\sigma}}{\cbeta''}{\cexp''}{\wsize{p}}}{\SysTight}$
such that $\cbeta'' + \cexp'' = \cbeta' + \cexp' + 1$. We then have $\cbeta'' +
\cexp'' = \cbeta + \cexp$. The fact that $\cbeta'' = \cbeta$ and $\cexp'' =
\cexp$ holds by a simple case analysis.
\end{proof}

The main results can be illustrated by the term $t_0 =
\termapp{\termapp{\termder{(\termbang{\Kterm})}}{(\termbang{\id})}}{(\termbang{\Omega})}$
in Sec.~\ref{s:bang}, which normalises in $2$ multiplicative steps and $3$
exponential steps to a $\bangweak$-normal form of $\bangweak$-size $1$. A tight
derivation for $t_0$ with appropriate counters $(2,3,1)$ is given in
Example~\ref{example:t0-tight}.


\section{Conclusion}
\label{s:conclusion}

This paper gives a fresh view of the Bang Calculus, a formalism introduced by
T.~Ehrhard to study the relation between CBPV and Linear Logic.

Our reduction relation integrates permutative conversions inside the logical
original formulation of~\cite{Ehrhard16}, thus recovering soundness, \ie
avoiding mismatches between terms in normal form that are semantically
non-terminating. In contrast to~\cite{EhrhardG16}, which models permutative
conversions as $\sigma$-reduction rules by paying the cost of losing
confluence, our \emph{at a distance} formulation yields a confluent reduction
system.

We then define two non-idempotent intersection type systems
for our calculus. On the one hand, system $\SysBang$
provides upper bounds for the length of normalising sequences plus the
size of normal forms. Moreover, it captures typed CBN and CBV.
On the other hand, the quantitative
system $\SysBang$ is further refined into system $\SysTight$, being
able to provide \emph{exact} bounds for normalising sequences and size
of normal forms, independently. Moreover, our tight system $\SysTight$
is able to \emph{discriminate} between different kind of steps
performed to normalise terms.

Concerning related works, several points should be noticed respect to
the closest~\cite{GuerrieriM18}. First of all, our CBV translation recovers the
expected property of preserving the normal forms, as stated in
Lemma~\ref{l:preservation-cbn-cbv}, whereas normal forms in CBV do not
necessarily translate to normal forms in the Bang calculus
in~\cite{GuerrieriM18}. Moreover the CBV embedding in~\cite{GuerrieriM18} is
not complete w.r.t. their CBV type system, \ie there exists a $\lambda$-term $t$
such that $\sequ{\Gamma}{\assign{\cbv{t}}{\sigma}}$ is derivable in $\SysBang$
but $\sequ{\Gamma}{\assign{t}{\sigma}}$ is not derivable in their CBV system
(see~\cite{GuerrieriM18}, Proposition 16). The completeness property in our
framework is stated as Theorem~\ref{t:soundness-completeness-cbn-cbv}. As a
matter of fact in~\cite{GuerrieriM18} the authors remark that the
incompleteness of the CBV embedding is due in particular to the fact that the
CBV type system they use, which is the canonical one stemming from the
relational model of the CBV $\lambda$-calculus defined in \cite{EhrhardG16},
assigns multiset types to \emph{all} the $\lambda$-terms, and in particular to
all the applications, whereas an application in the target of the CBV
translation may well get a non-multiset type in their Bang calculus type
system. They propose at the end of the paper an alternative type system,
targeting the range of their CBV translation, which is essentialy the one we
have adopted here. Nevertheless, they leave (the model theoretic version of)
this question of incompleteness for future work.

Several topics deserve future attention. One of them is the study of
\emph{strong} reduction for the $\BangRev$-calculus, which allows us to
reduce terms under \emph{all} the constructors, including
$\termbang{}$. Another challenging problem is to relate \emph{tight}
typing in CBN/CBV with \emph{tight} typing in our calculus, thus
providing an exact correspondence between (CBN/CBV) reduction steps
and $\BangRev$-reduction steps.

\section{Acknowledgments}

We are grateful to Giulio Guerrieri, Giulio Manzonetto and Victor
Arrial who provided valuable feedback and insightful discussions
during the course of this study. Also, we would like to thank the
reviewers for their constructive comments and suggestions which helped
to improve the manuscript. This work has been partially supported by
LIA INFINIS/IRP SINFIN, the ECOS-Sud program PA17C01, and the ANR COCA
HOLA.

\bibliographystyle{plain}
\bibliography{biblio} 

\end{document}